\title{La recherche approchée de motifs : théorie et applications}
  \author{\href{Ibrahim Chegrane ibra.chegrane@gmail.com}{Ibrahim Chegrane} \\ \href{}{Directeur de thèse : Bensaou Nacéra}}
    \author{Ibrahim Chegrane \\ Directeur de thèse : Bensaou Nacéra}
\newtheorem{definition}{Définition}
\newtheorem{theorem}{Théorème}
\newtheorem{proof}{Preuve}
\newenvironment{remarque}{\textbf{Remarque:}}{}
\begin{document}




\includepdf[pages={1}]{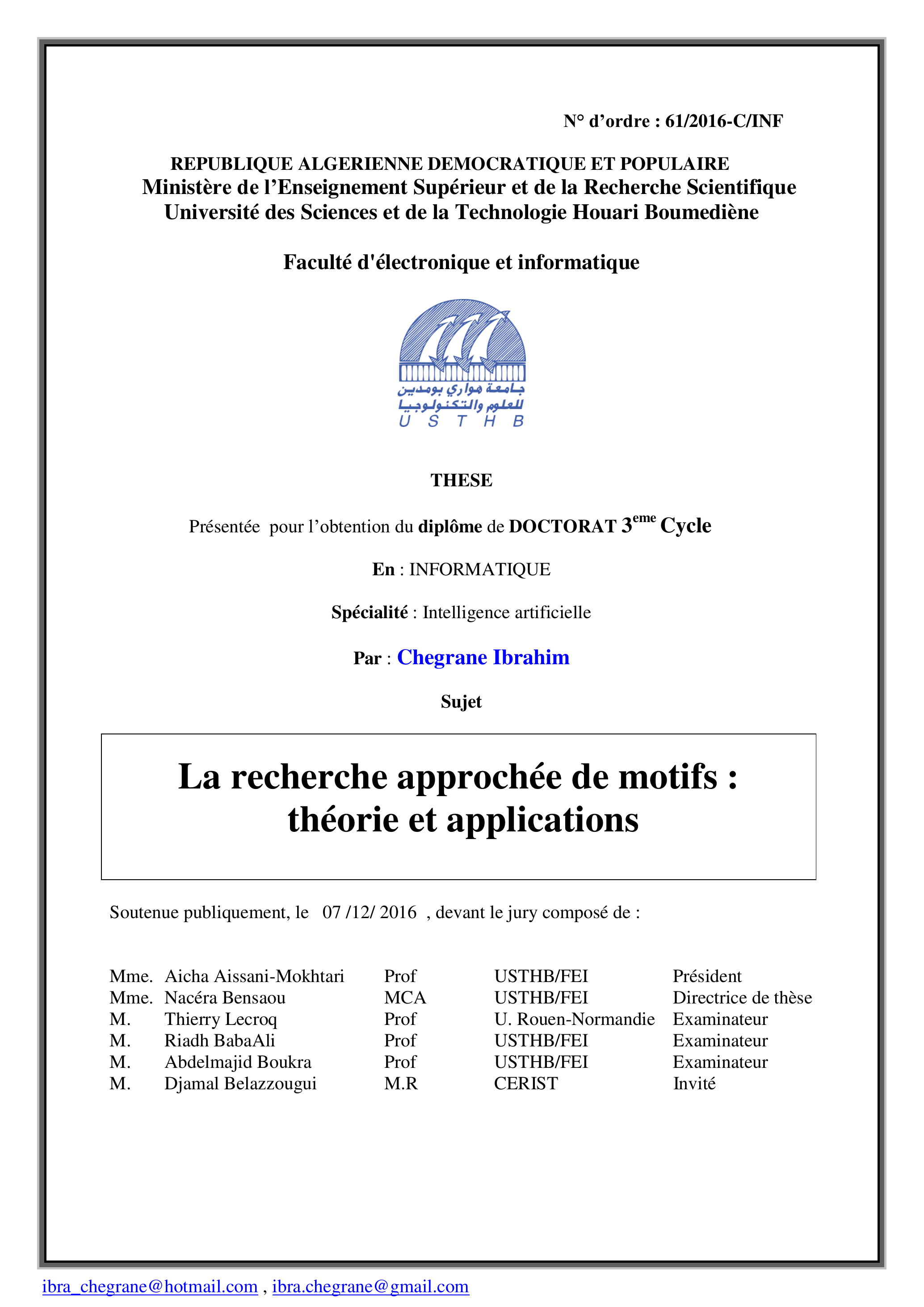}


\includepdf[pages={1}]{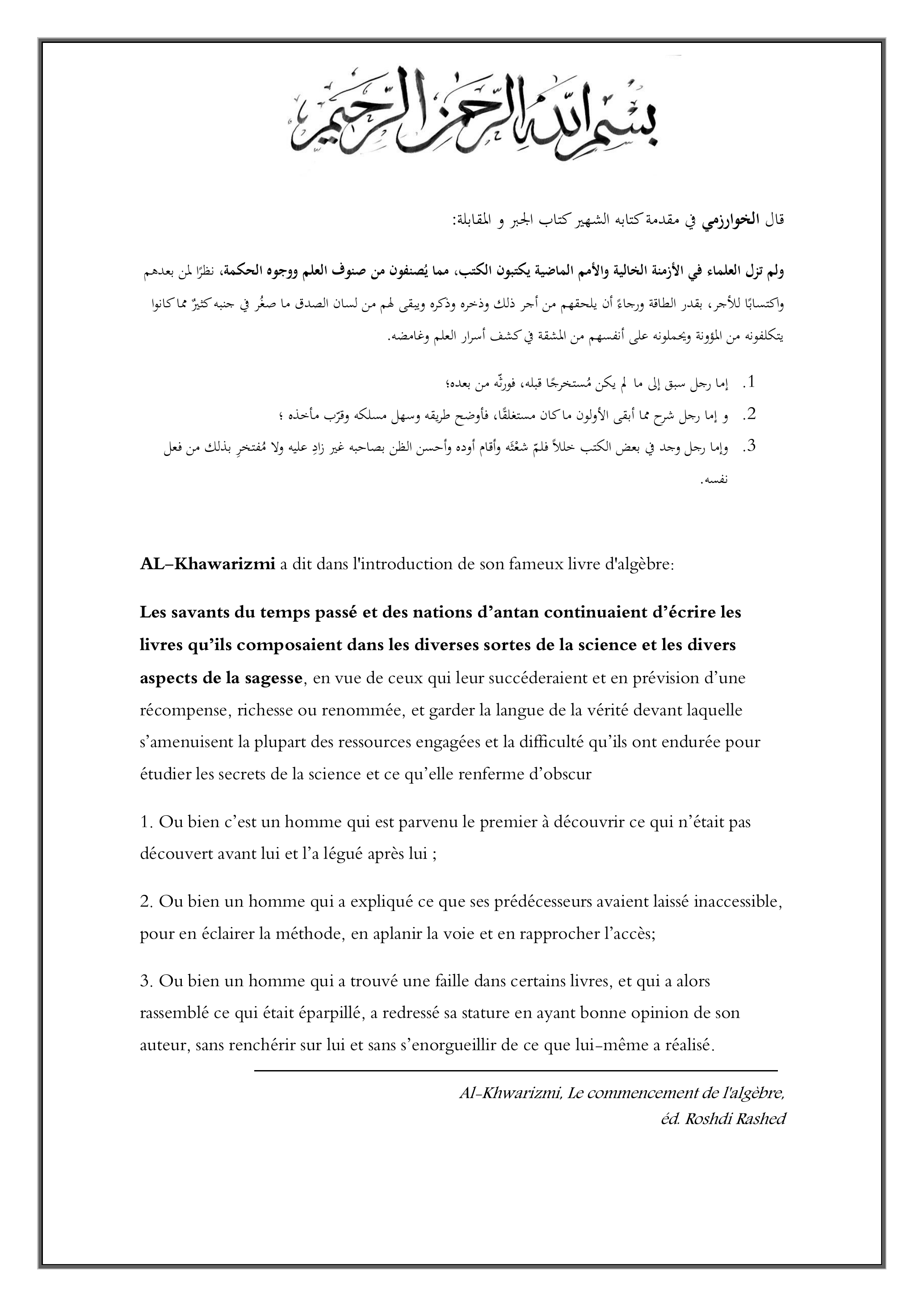}

\setcounter{secnumdepth}{3}
\setcounter{tocdepth}{3}

\frontmatter 
\pagenumbering{roman}

\ifpdf
    \graphicspath{{Remerciements/}}
\else
    \graphicspath{{Remerciements/}}
\fi

~\\
~\\
{\Huge \textbf{Remerciements}}
~\\
\begin{quote}
Le Prophète  Mohammed (que la paix et le salut soient sur lui) a dit : "\textit{\textbf{Ne remercie pas Dieu, celui qui ne remercie pas les gens}}."~\footnote{Rapporté par Ahmad, Boukhari dans Al-Adab al-Moufrad}
\begin{figure}[H]
\centering
\includegraphics[width=0.5\linewidth]{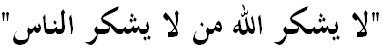}
\label{fig:Merci}
\end{figure}

\end{quote}

\bigskip

En tout premier lieu, je remercie Dieu (Allah), Alhamdulillah, pour la force qu'il m'a donnée d'arriver au bout de cette thèse, de m'avoir entouré de personnes formidables qui m'ont aidé d'une manière ou d'une autre, tout au long de ce travail. Merci à Dieu (Allah) pour tout ce qu'Il m'a accordé.
\begin{figure}[H]
\centering
\includegraphics[width=0.5\linewidth]{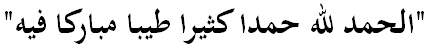}
\label{fig:el_hamd}
\end{figure}

\bigskip

Je remercie très chaleureusement ma directrice de thèse, madame Nacéra Bensaou, d'abord d'avoir accepté de m'encadrer.
Je la remercie pour ces nombreux conseils, pour son soutien et son encouragement durant toute la durée de la thèse. Je la remercie d'avoir partagé son propre bureau (335) avec moi pour que je puisse travailler. Je la remercie de m'avoir mis en contact avec des chercheurs spécialistes du domaine, et de m'avoir encouragé fortement à travailler et à publier avec eux. Je la remercie de m'avoir fait confiance, et pour la liberté qu'elle m'a accordée. Mille Mille Merci Madame Bensaou \smiley .

\bigskip


Je remercie profondément monsieur Mathieu Raffinot de m'avoir accueilli au laboratoire LIAFA (université de Paris 7) à plusieurs reprises (la première fois pour un mois, la deuxième fois pour trois mois, et la troisième fois pour une simple visite), de m'avoir pris en charge en stage de deux mois, de m'avoir hébergé à Paris dans le domicile de ses parents pendant 8 jours. Je remercie aussi beaucoup les parents de Mathieu Raffinot pour toute leur gentillesse durant cette période. Merci pour tous tes conseils et tes directives, (Merci aussi pour le dîner du sushi \smiley). Merci Mathieu, tu as été mon deuxième encadreur de thèse.

\medskip

Djamal Belazzougui, mon co-encadreur, mon exemple, mon ami. Je ne pourrais jamais te remercier assez. Dès les premiers jours de notre rencontre, tu n'a pas cessé de me guider, de m'aider, de me donner des directives et des conseils.
Merci pour tout le temps et séances de travail que tu m'a accordé au LIAFA durant mon premier stage, mais aussi à l'USTHB, à chaque fois que tu rentres à Alger et que tu viens faire une séance de travail avec moi, et jusqu'à maintenant où tu me reçois et m'accordes du temps au CERIST pour m'aider malgré ton importante charge de travail. Je suis très chanceux de ma rencontre avec toi.

\medskip
Je remercie également très chaleureusement madame C.Ighilaza mon enseignante d'algorithmique, mon encadreur du mémoire de Licence et de Master, pour l'aide et les conseils qu'elle m'a donnés, pour m'avoir aidé dans mon travail, et pour sa participation à la correction de la thèse.

\bigskip

Je remercie le professeur Slimane Larabi et le Professeur Aicha Aissani-Mokhtari d'avoir accepté d'expertiser mes travaux.

\bigskip

Je remercie le professeur Aicha Aissani-Mokhtari d'avoir accepté de présider le jury de cette thèse. 

Je remercie également le professeur Thierry Lecroq d'avoir accepté de participer à ce jury en tant qu'examinateur et de venir de loin pour participer à ce jury.

Je remercie le professeur Riadh BabaAli d'avoir accepté de participer au jury de cette thèse.

Je remercie le professeur Abdelmajid Boukra d'avoir accepté de participer au jury de cette thèse.

\bigskip
Mes remerciements vont aussi à Athmane Seghier, Meriem Beloucif et Aïcha Botorh pour leurs collaborations.

\bigskip

Mes remerciements vont aussi à tous mes amis, nombreux pour être cités individuellement, et qui se reconnaitront certainement. Je pense à mes amis de l'université USTHB, mes amis de la cité universitaire RUB 1 et RUB 3, mes amis les doctorant(e)s de la salle Poste Graduation du département informatique, mes amis du bureau 335 de madame Bensaou, mes amis du club informatique Micro-club et mes amis du sport.

\bigskip
Merci à mon ami et mon prof de sport Nourdine Zirara (Hamdane), pour son soutien moral et matériel, et pour son encouragement.

\bigskip

Je tiens à exprimer toute ma gratitude à ma famille pour leur soutien moral et matériel et sans qui je ne serais jamais allé aussi loin dans mes études.

\bigskip

Mon épouse Nacera, je te remercie pour ta présence, ton soutien, tes prières, tes pensées, tes encouragements et ta patience qui m'ont été indispensables.

\bigskip
Que tous ceux qui m'ont aidé et qui ont contribué de près ou de loin à ma formation trouvent ici le témoignage de ma sincère gratitude.

\bigskip

Mes remerciements vont également à mon pays l'Algérie qui, malgré ses insuffisances et ses problèmes, m'a permis d'arriver à ce niveau et de terminer mes études gratuitement.


\chapter{Résumé}
\label{chap:resume}

La recherche approchée de motifs est un problème fondamental et récurrent qui se pose dans la plupart des domaines informatiques. Ce problème peut être défini de la manière suivante:

\begin{quote}
\textit{Soit $D=\{x_1,x_2,\ldots x_d\}$ un ensemble de $d$ mots définis sur un alphabet $\Sigma$, soit $q$ une requête définie aussi sur $\Sigma$, et soit $k$ un entier positive.\\
On veut construire une structure de données pour $D$ capable de répondre à la requête suivante :  trouver tous les mots de $D$ distants d'au plus $k$ erreurs de $q$.}
\end{quote}

Dans cette thèse nous étudions les méthodes de la recherche approchée dans les dictionnaires, les textes et les index, pour proposer des méthodes pratiques qui résolvent de façon efficace ce problème. Nous explorons ce problème dans trois directions complémentaires :

1) La recherche approchée dans un dictionnaire. Nous proposons deux solutions à ce problème, la première utilise les tableaux de hachage pour $k \geq 2$, la deuxième, utilise le Trie et le Trie inversé  et se restreint à $k=1$. Les deux solutions sont applicables, sans perte de performances, à la recherche approchée dans un texte.

2) La recherche approchée pour le problème de \textit{l'auto-complétion} consiste à trouver tous les suffixes d'un préfixe contenant éventuellement des erreurs. Nous apportons une nouvelle solution meilleure, en pratique, que toutes les solutions précédentes.

3) Le problème de l'alignement de séquences biologiques peut être interprété comme un problème de recherche approchée. Nous proposons à ce problème une solution pour l'alignement par pairs et pour l'alignement multiple.

\medskip
Tous les résultats obtenus montrent que nos algorithmes  donnent les meilleurs performances sur des ensembles de données pratiques. Toutes nos méthodes sont proposées comme des bibliothèques et sont publiées en ligne.




\includepdf[pages={1}]{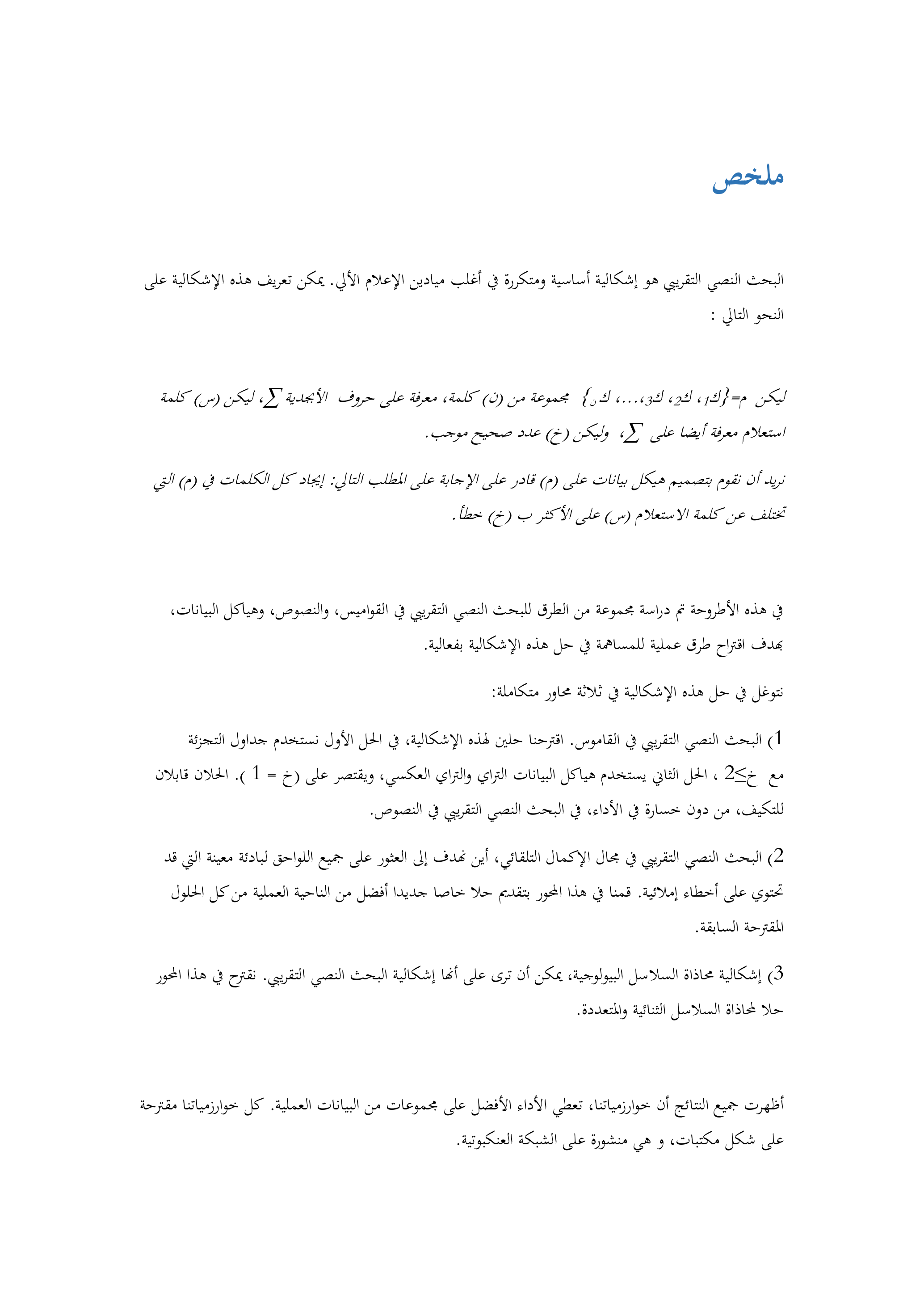}


\chapter*{Abstract}
\label{chap:resume_en}

The approximate string matching is a fundamental and recurrent problem that arises in most computer science fields. This problem can be defined as follows:

\begin{quote}
\textit{Let $D=\{x_1,x_2,\ldots x_d\}$ be a set of $d$ words defined on an alphabet $\Sigma$, let $q$ be a query defined also on $\Sigma$, and let $k$ be a positive integer.\\
We want to build a data structure on $D$ capable of answering the following query: find all words in $D$ that are at most different from the query word $q$ with $k$ errors.}
\end{quote}

In this thesis, we study the approximate string matching methods in dictionaries, texts, and indexes, to propose practical methods that solve this problem efficiently. We explore this problem in three complementary directions:

1) The approximate string matching in the dictionary. We propose two solutions to this problem, the first one uses hash tables for $k \geq 2$, the second uses the Trie and reverse Trie, and it is restricted to (k = 1). The two solutions are adaptable, without loss of performance, to the approximate string matching in a text.

2) The approximate string matching for \textit{autocompletion}, which is, find all suffixes of a given prefix that may contain errors. We give a new solution better in practice than all the previous proposed solutions.

3) The  problem of the alignment of biological sequences can be interpreted as an approximate string matching problem. We propose a solution for peers and multiple sequences alignment.

\medskip
All the results obtained showed that our algorithms, give the best performance on sets of practical data (benchmark from the real world). All our methods are proposed as libraries, and they are published online.




\tableofcontents
\listoffigures

\printnomenclature  
\addcontentsline{toc}{chapter}{Nomenclature}

\mainmatter 

\ifpdf
    \graphicspath{{Introduction/}}
\else
    \graphicspath{{Introduction/}}
\fi

\chapter{Introduction générale}
\label{chap:Introduction_generale}

\begin{figure}[h]
\includegraphics[width=0.7\linewidth]{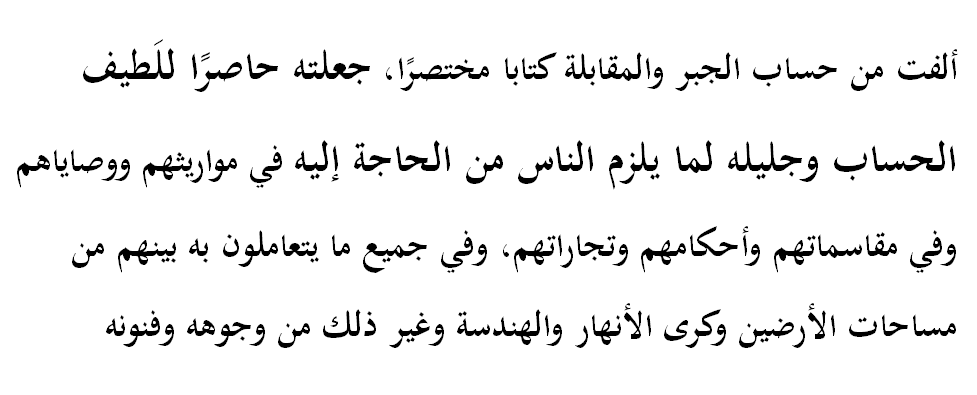}
\end{figure}

\setlength{\epigraphwidth}{0.8\textwidth}
\epigraph{
\justify 
\textit{J'ai composé dans le calcul de l'algèbre et d'al-muqabala un livre concis ; \textbf{j'ai voulu qu'il enferme ce qui est subtil dans le calcul et ce qui en lui est le plus noble, ce dont les gens ont nécessairement besoin} dans leur héritages, leurs legs, leurs partages, leurs arbitrages, leurs commerces, et dans tout ce qu'ils traitent les uns avec les autres lorsqu'il s'agit de l'arpentage des terres, de la percée des canaux, de la mensuration, et d'autres choses relevant du calcul et de ses sortes.}}{\textit{Al-Khwarizmi, Le commencement de l'algèbre,\\ éd. Roshdi Rashed}}


\section{Le contexte}
Le Dictionnaire est une des structures de données les plus fondamentales et les plus utilisées dans les problèmes informatiques.

\medskip
Cette structure organise les données comme un ensemble sur lequel il s'agit d'effectuer une \textit{recherche efficace}. Soit $D$ cet ensemble et $n$ sa cardinalité (ou taille): \textit{la recherche} dans $D$ consiste à savoir si un élément $x$ donné appartient ou non à $D$. \textit{L'efficacité} mesure les ressources (temps et espace) nécessaires pour obtenir la réponse. 

\medskip

Le contexte d'une recherche où les données dans $D$ ainsi que $x$ sont considérés comme ne contenant aucune erreur est celui d'une \textit{recherche exacte}. Lorsqu'on admet l'existence d'erreurs dans les données ($D$ et/ou $x$) c'est le contexte d'une \textit{recherche  approchée}.

\medskip

Les performances de la recherche dépendent, en général, du contexte (recherche exacte ou approchée), des tailles de $D$ et $x$ et de l'organisation de $D$.

Lorsque $D$ est de grande taille et la recherche est récurrente, une "\textit{bonne organisation}" de $D$ devient cruciale. L'idée serait d'organiser $D$ de manière à ce que la recherche de $x$ dépende le moins possible de la taille de $D$.

\medskip
Dans sa forme de base, ce problème considère des données textuelles et s'apparente au très ancien domaine de l'algorithmique du texte: $D$ est un ensemble de mots et $x$ un mot. Les premiers travaux ont porté sur la recherche exacte pour laquelle de nombreuses solutions satisfaisantes ont été obtenues.

\medskip

Avec l'explosion de la quantité d'informations textuelles due la croissance des données de l'Internet (bases de données, articles, livres, les titres des images, vidéos, musicales), la recherche exacte est devenue, dans la plupart des cas, moins pertinente: le mot requête et/ou les mots du dictionnaire ou du texte peuvent contenir  des erreurs typographiques. L'erreur peut-être due à plusieurs raisons (selon le domaine): par exemple, une erreur de frappe lors d'une saisie rapide ou due à la méconnaissance par l'utilisateur de l'orthographe correcte (nom de personne, nom d'un produit ... etc.) ou due à la multiplicité de formes d'écriture de ce mot. Les textes qui proviennent d'un outil de reconnaissance de caractère sont susceptibles de contenir des erreurs de mauvaise reconnaissance. Une altération des données lors de leurs transmissions est donc possible.

\medskip

La recherche approchée résout le problème en permettant l'existence d'erreurs entre le motif et ses occurrences. 

\paragraph{}

La recherche approchée dans un dictionnaire ou dans un texte est définie comme suit : 

\begin{quote}
Étant donné un dictionnaire/texte $D=\{{w_1,w_2,...,w_{d}}\}$ de $d$ mots et $n$ caractères, défini sur un alphabet spécifique $\Sigma$,  étant donné un mot requête $q$ de  longueur $m$ et un entier  $k$ représentant un seuil maximal d'erreurs,  étant donnée une fonction  $\mathrm{dist}$ qui mesure la distance entre deux mots, alors la recherche approchée de $q$ dans $D$ consiste à trouver toutes les occurrences dans $D$ qui diffèrent d'au plus $k$  erreurs de $q$. 
\end{quote}

\medskip
La fonction la plus couramment utilisée pour déterminer la distance entre deux mots est appelée la "\textit{distance d'édition}"~\cite{Le66}.

\medskip

Il existe de nombreuses solutions à ce problème dans la littérature \cite{Na01,Bo11}.

\paragraph{}

La recherche approchée de motifs intervient dans plusieurs domaines, principalement dans les moteurs de recherche~\cite{cucerzan2004spelling}, la correction d'orthographe dans les éditeurs de texte~\cite{pollock1984automatic}, dans les systèmes OCR~\cite{reynaert2008non}, dans le craquage de mots de passe~\cite{manber1994algorithm}, dans le nettoyage des données en ligne, dans les bases de données~\cite{chaudhuri2003robust}...etc.

L'une de ces applications  importantes est dans le domaine de la bio-informatique où elle permet de rechercher des séquences biologiques, de les assembler, ou encore de les aligner~\cite{bioinformatics_waterman1995,Gusfield1997,bioinformatics_lesk2013}.

\medskip

D'une manière générale, la recherche approchée de motifs est utilisée dans tous les domaines qui traitent un ensemble de données constitué et défini sur un alphabet $\Sigma$ donné.

\section{La motivation de notre travail}

Dans son article survey sur le problème de la recherche approchée \cite{Na01}, Gonzalo Navarro  constate le nombre important d'\textit{améliorations} apportées à plusieurs méthodes de résolution de ce problème sur le plan de leur complexité théorique sans considérer leur aspect pratique. Ce qui a aboutit à une dichotomie indésirable, à savoir l'existence d'une part, d'un ensemble d'algorithmes performants en théorie et très lents en pratique et d'autre part, des algorithmes rapides en pratique mais dont la complexité théorique n'est pas satisfaisante. 

\medskip

Le fait que les algorithmes dont la complexité théorique performante ne coïncident pas avec les algorithmes efficaces en pratique pose un important problème, en particulier à la communauté des développeurs de logiciels ou des chercheurs en bio-informatique, mais aussi à tous les utilisateurs non spécialistes d'algorithmique de texte, lorsqu'il s'agit de choisir et d'utiliser un bon algorithme de recherche de motifs.  Ce choix a toutes les chances de se porter sur l'algorithme le plus simple à implémenter qui est en général l'algorithme naïf qui a les performances les plus mauvaises, ce qui aura une incidence sur les performances de l'outil qui utilise un tel algorithme   \cite{navarro2002flexible}.

Le travail dans cette thèse est fondamentalement motivé par quelques-unes des questions  posées  à la fin de ce constat, à savoir:

\begin{enumerate}
\item est-il possible de trouver un algorithme en $O(kn)$ pour la complexité du cas pire et efficace en pratique?

\item la complexité minimale du cas moyen est connue comme étant en $O(n+(k+\log \sigma m)/m)$ et il existe un algorithme qui le réalise. Est-il possible de trouver une solution efficace en pratique ayant cette performance théorique?

\item est-il possible, d'améliorer en pratique, la plus performante actuellement des méthodes de filtrage  \cite{Na01}.
\end{enumerate}

Un débat récent \cite{vardi2014boolean}, entre théoriciens de la complexité et  chercheurs impliqués dans le développement de logiciels, pose la question de l'écart qui existe, entre la solution théorique exponentielle du pire cas du problème SAT et ses nombreuses solutions pratiques efficaces. 

\medskip

Le problème est similaire pour les algorithmes non exponentiels. En effet, bien que la complexité des algorithmes de texte soit l'un des problèmes les plus étudiés en informatique théorique, elle reste insuffisante pour juger si un algorithme est efficace en pratique ou non, s'il donne de bons résultats ou non, s'il est optimal ou non en pratique en temps d'exécution et en occupation mémoire.

\medskip

Sans pouvoir encore répondre à la question générale de savoir que manque t-il à la théorie de la complexité pour capturer l'efficacité des solutions pratiques que pose ce débat, cette thèse s'intéresse à la recherche de  solutions efficaces en pratique, sans détériorer les performances des complexités théoriques  du problème de la recherche approchée de motifs dans différents contextes de son application.

\medskip
Les paramètres quantitatifs importants pour juger la qualité des solutions de ce problème sont : l'utilisation de l'espace mémoire de la solution, le temps nécessaire pour construire la structure de données du dictionnaire, et le temps nécessaire pour répondre à une requête.

\medskip

En plus de ces trois paramètres, d'autres considérations importantes doivent être prises en compte, comme le fait que la structure de données de la solution soit dynamique ou non (elle accepte ou non l'addition rapide d'autres chaînes de caractères à l'index du dictionnaire), et que la solution soit \textit{simple} ou non (utilise ou non des structures de données complexes).

\bigskip
Cette thèse étudie le problème de la recherche approchée dans les dictionnaires, les textes et les index et propose des algorithmes qui ne se limitent pas à l'efficacité selon une complexité théorique mais qui  résolvent aussi, en pratique de façon efficace, ce problème.

Dans notre travail, nous explorons la recherche approchée dans trois directions. Nous nous intéressons à la résolution du problème théorique d'une façon générale et à ses applications dans les moteurs de recherche, dans l'alignement des séquences biologiques, et dans l'auto-complétion dans le web du côté serveur et du côté client.

Le but de ce travail est de fournir des algorithmes validés aussi en pratique et très performants et très compétitifs vis-à-vis de toute autre solution qui existe aujourd'hui et de fournir comme résultat des bibliothèques utilisables par les autres développeurs/chercheurs dans leurs travaux.

\section{Notre contribution}
Dans cette thèse, nous apportons de nouvelles solutions au problème de la recherche approchée, toutes fondées sur l'idée d'associer une implémentation pratique au moins aussi efficace que la solution théorique optimale: la première solution contribue à la résolution du problème pour un seuil d'erreurs $k\geq 2$. La seconde, explore de nouveau la possibilité d'utiliser de manière efficace l'utilisation de la structure de Trie associée à une recherche bidirectionnelle et d'observer en pratique pour quel seuil d'erreurs cette approche est satisfaisante. Les deux autres contributions de cette thèse sont deux applications dans des domaines spécifiques et importants de la recherche approchée: la recherche dans le Web et la recherche dans les structures du génome.

\paragraph{La recherche approchée pour $k \geq 2$ en utilisant les tableaux de hachage.}

Nous proposons un algorithme qui utilise une structure de donnée basée sur le hachage avec sondage linéaire et des signatures de hachage associées pour optimiser le temps de la recherche. Pour le dictionnaire exact, nous utilisons une structure de donnée basée sur plusieurs tableaux de hachage organisée par longueur des mots et exploitons l'idée d'utiliser un dictionnaire des listes de substitution \cite{Be09}.

Afin d'optimiser l'espace mémoire, nous compressons ces deux structures de données en utilisant une autre structure de données (le vecteur des bits).
La version non compressée est \textit{progressive}: nous pouvons ajouter de nouveaux mots jusqu'à un certain facteur.

\medskip
La structure de données de cette solution occupe un espace $O(n)$, où $n$ est la taille totale du dictionnaire et en ($O(n\log\sigma)$) bits. Si le compactage de la table de hachage est utilisé, le dictionnaire final occupe au plus un espace mémoire de $n(2\log\sigma+O(1))$ bits. La recherche d'un mot $q$ de $m$ lettres et qui a un nombre d'occurrences $occrs$ s'effectue en un temps proche de $(m+occrs)$ et plus exactement en ($O(m\left\lceil\frac{m\log\sigma}{w}\right\rceil)$ où $w$ représente la taille d'un mot mémoire (qui vaut 32 ou 64 bits).  

\medskip

Les expérimentations montrent que cette solution est en concurrence avec les solutions précédemment proposées et ce travail  est publié dans \cite{ibra.Chegrane.simple}, et son implémentation est disponible sur :\\
\url{https://code.google.com/p/compact-approximate-string-dictionary/}~\footnote{Visité le: 02-07-2016.}

\paragraph{Une recherche approchée en utilisant le Trie et le Trie inversé.}

La structure de Trie est particulièrement intéressante pour représenter un dictionnaire dans le contexte d'une recherche exacte. Il permet de rechercher les mots par préfixe commun. Ainsi le Trie permet de savoir si un mot $q$ de $m$ lettres appartient ou non au dictionnaire en temps proportionnel à la longueur du mot recherché, c'est-à-dire en $O(m)$. 

\medskip

Pour la recherche approchée, l'existence de $k$ erreurs possibles doit être prise en compte tout en préservant une complexité indépendante de la taille du dictionnaire.

\medskip

Dans ce travail, nous utilisons une structure de données bidirectionnelle (Trie, et Trie inversé), pour proposer une méthode qui résout la recherche approchée pour $k=1$.  Si on admet l'existence d'une seule erreur, le mot $q$ peut s'écrire comme étant égal à la concaténation d'un préfixe exact $P_1$, d'une lettre erronée $c$ et d'un suffixe exact $P_2$,  c'est-à-dire que $q= P_1cP_2$. L'idée consiste à rechercher le préfixe $P_1$ dans le Trie et le suffixe $P_2$ dans le Trie inversé en parallèle. À la rencontre de l'erreur $c$ il faut s'assurer que $P_1$ et $P_2$ appartiennent au même mot en réduisant au maximum le nombre de vérifications.

\medskip
Notre méthode réduit, lors de la recherche dans le Trie, le nombre de branches sortantes testées dans chaque n\oe{}ud. Elle ne choisit que les chemins qui peuvent mener à des solutions. La complexité du pire cas de cette recherche s'effectue en $O(m^2 \times \sigma)$. La complexité moyenne est en $O(m^2)$ si $m \geq 2 \frac{log d}{log \sigma}$, et elle est en $O(m + m.occrs)$ si $m \geq 2 ( \frac{log d}{log \sigma} + \frac{log m}{log \sigma} )$ où $occrs$ représente le nombre d'occurrences de la solution.

\medskip
Les résultats expérimentaux montrent que cette méthode surpasse, en temps d'exécution, toutes les autres implémentations testées à ce jour.
La performance de cette solution est en proportion constante par rapport à la recherche exacte, indépendante de la taille du dictionnaire.

\medskip
Le résultat de ce travail est disponible sur : \url{https://github.com/chegrane/TrieRTrie}~\footnote{Visité le: 02-07-2016.}. 

\paragraph{L'auto-complétion approchée.}

L'auto-complétion de requêtes sur le Web est un mode de travail qui permet d'assister un utilisateur lors de la saisie de sa requête. Elle consiste à lui proposer une liste de mots, les plus pertinents, pour compléter sa requête. Elle facilite et accélère la saisie en offrant une liste de suggestions pour compléter les caractères introduits/tapés par l'utilisateur dans le champ de texte.

\medskip
L'auto-complétion approchée est un problème de recherche approchée où il s'agit de trouver tous les suffixes d'un préfixe contenant des erreurs. 

\medskip

Dans ce travail, nous présentons une méthode basée sur la structure de données Trie pour effectuer une auto-complétion approchée efficace admettant une seule erreur de type distance d'édition, dans une architecture client/serveur. 

Nous proposons également une méthode qui réduit le nombre de transitions sortantes testées dans chaque n\oe{}ud \-- en particulier dans le premier niveau \-- du Trie.
La complexité temporelle moyenne de cette recherche approchée est $O(m^2)$.

\medskip

Ce travail est publié dans~\cite{chegrane2015jquery}, et son implémentation, une bibliothèque (nommée $\mathtt{appacolib}$), est disponible sur : \url{https://github.com/AppacoLib/api.appacoLib}~\footnote{Visité le: 02-07-2016.}

\paragraph{Alignement multiple des séquences d'ADN :}

L'une des applications importantes des algorithmes de texte et de la recherche approchée est l'alignement des séquences biologiques.

En bio-informatique, l'alignement des séquences biologiques permet de rechercher dans les séquences (ADN, ARN ou protéines) des régions similaires. 

Trouver le meilleur alignement entre la paire de séquences $x$ et $y$ revient à trouver le plus petit nombre d'erreurs $k$ entre $x$ et $y$, autrement dit la meilleure façon de mettre $x$ face à $y$ avec un nombre minimum de différences.

Dans ce travail, nous présentons un nouvel algorithme d'alignement par paire de séquences biologiques et un algorithme d'alignement multiple basé sur l'algorithme connu DIALIGN.

\medskip
Les résultats des différents tests montrent que notre approche est très efficace dans les deux cas et donne un très bon temps d'exécution comparé à (DIALIGN 2.2).

Ces résultats sont publiés dans deux articles~\cite{DIAWAY,DIAWAY_b}, et les codes sources et les logiciels sont disponibles sur : \url{https://github.com/chegrane/diaway}~\footnote{Visité le: 02-07-2016.} pour la première version, et sur \url{https://github.com/chegrane/DiaWay_2.0}~\footnote{Visité le: 02-07-2016.} pour la deuxième version.

\bigskip

Toutes nos implémentations sont distribuées sous la licence publique générale limitée GNU (GNU LGPL).

Pour évaluer nos méthodes en pratique, nous les avons comparées avec les algorithmes existants qui donnent les meilleurs résultats sur des ensembles de données pratiques.
Les expérimentations sont réalisées sur des dictionnaires des langues (Anglais, WikiTitle, \dots). Ceci pourrait éventuellement expliquer certaines améliorations pratiques obtenues (par rapport à la complexité théorique du pire cas) étant donné que pour les langues les combinaisons du pire cas n'existent pas ou sont très rares.

\section{L'organisation de la thèse}
Le reste de cette thèse est organisé comme suit :

Dans le chapitre qui suit cette introduction, nous donnons quelques préliminaires et définitions nécessaires à la compréhension du domaine et des chapitres qui suivent. 

Ensuite, nous donnons un état de l'art orienté où nous expliquons les concepts généraux des méthodes de la recherche approchée et nous détaillons quelques méthodes qui sont liées à notre travail.

Dans les quatre chapitres suivants nous présentons respectivement nos quatre travaux : la recherche approchée en utilisant les tableaux de hachage, puis la recherche approchée en utilisant le Trie et le Trie inversé, puis l'auto-complétion approchée et enfin l'alignement multiples des séquences biologiques.

Le dernier chapitre conclut cette thèse et donne quelques perspectives de recherche.

\chapter{Préliminaires}
\label{chap:Preliminaires_definitions}

Ce chapitre introduit les définitions de base de l'algorithmique de texte et de la théorie des langages. Il est basé sur les références fondamentales, parmi lesquelles nous citons:  \cite{carton2008langages,cormen2009introduction,Knuth1998_book,Gusfield1997}.
Il rappelle aussi quelques structures de données fondamentales utilisées comme dictionnaire ou structure auxiliaire pour représenter un dictionnaire. Nous donnons les algorithmes de ces structures dans le chapitre suivant.

\section{Alphabet, mot}

\begin{itemize}

\item[\textbf{Alphabet}] Un \textit{alphabet} $\Sigma$ est un ensemble fini non vide d'éléments appelés symboles ou  lettres. La cardinalité de cet ensemble est notée $|\Sigma|=\sigma$.

\noindent Exemples : $\Sigma_1=\{a,b,...,z\}$, 
$\Sigma_2=\{0,1\}$, $\Sigma_3=\{A,C,T,G\}$.

\item[\textbf{Mot}] Un \textit{mot} $v$ défini sur un alphabet $\Sigma$ est une concaténation d'éléments de $\Sigma$, $v=e_1 e_2...e_m$. La \textit{longueur} d'un mot est le nombre de lettres qui le composent, et on la note $|v|=m$. Exemple: $\Sigma_2=\{0,1\}$, $v=01101001$ est défini sur $\Sigma_2$ et $|v|=8$.

Le \textit{mot vide} (sa longueur vaut zéro) est noté par $\epsilon$.

L'ensemble de tous les mots définis sur l'alphabet $\Sigma$ est noté par $\Sigma^*$. $\Sigma^+ = \Sigma^* - \epsilon$.

La concaténation de deux mots $u=aax$ , et $v=bb$ défini sur l'alphabet $\Sigma=\{a,b,x\}$ est le mot noté $uv$ obtenu en mettant bout à bout $u$ et $v$, $uv=aaxbb$.

\end{itemize}

\section{Préfixe, suffixe, facteur, miroir}

\begin{itemize}

\item[\textbf{Préfixe}] Un mot $u$ est un préfixe d'un mot $w$ s'il existe un mot $v$ tel que $w=uv$.  
Soit $w$ un mot dans $\Sigma^*$, $w= a_1\cdots a_n$,  alors tout mot $u \in \{\epsilon, a_1, a_1 a_2,..., a_1\cdots a_n\}$ est préfixe de $w$. 
On note $ Pref(w)$ l'ensemble $\{\epsilon, a_1, a_1 a_2,..., a_1\cdots a_n\}$ de tous les préfixes de $w$. 
L'ensemble $ Pref(w) - \{w\}$ est dit l'ensemble des \textit{préfixes propres} de $w$.


\item[\textbf{Suffixe}] Un mot $u$ est un suffixe du mot $w$ s'il existe un mot $v$ tel que $w=vu$.  Soit $w= a_1\cdots a_n$ un mot dans $\Sigma^*$, alors tout mot $u \in  \{\epsilon, a_n, a_{n-1} a_{n},..., a_1\cdots a_n\}$ est suffixe de $w$. 
On note $ \mathrm{\textit{Suff(w)}}$ l'ensemble de tous les suffixes de $w$.  L'ensemble $ \mathrm{\textit{Suff(w)}} - \{w\}$ est dit l'ensemble des \textit{suffixes propres} de $w$.

\item[\textbf{Facteur}] Soient $w, u, v, v'$ des mots dans $\Sigma^*$ tels que $w=vuv'$. Alors $u$ est dit  facteur de $w$. Remarquons que si $v=\epsilon$ alors $u$ est préfixe de $w$ et si $v'=\epsilon$ alors $u$ est suffixe de $w$.

L'ensemble des facteurs de $T$ que l'on note $Fact(T)$   est défini par l'expression suivante:

$$Fact(T) = Pref( \mathrm{\textit{Suff(T)}})$$

\item[\textbf{Mot miroir}] Le mot miroir du mot $w= a_1\cdots a_n$ est le mot $\bar{w}= a_n\cdots a_1$.

\end{itemize}

\section{Fonctions de distance}

Soit $E$ un ensemble donné. Une métrique $dist$ sur $E$ est une application:

$$dist:  E \times E\Longrightarrow \mathbb{R^+} \quad\text{ telle que } \forall x, y, z$$
\begin{center}
$\begin{array}{lll}
dist(x,y) &= dist(y,x) \\
dist(x,y) &= 0 \Longleftrightarrow x=y \\
dist(x,y) &\leq dist(x,z)+dist(z,y)\\
\end{array}
$
\end{center}

\subsection{Distance entre mots}

Lorsque $x$ et $y$ sont des chaînes de caractères, alors $dist(x,y)$ mesure le degré de ressemblance entre $x$ et $y$ et est définie comme étant le nombre minimum d'opérations telles que l'insertion d'un caractère, la suppression d'un caractère, la substitution d'un caractère par un autre, etc.,  nécessaires pour transformer le mot $x$ en le mot $y$.

\medskip

Dans ce calcul, le coût de chaque opération vaut 1 et la distance entre mots est donc un nombre entier.

\subsection{Quelques fonctions de distance entre mots}

Il existe plusieurs fonctions de distance qui se distinguent par les opérations que l'on peut appliquer sur les mots, les plus connues en recherche de motifs sont:

\smallskip
\begin{itemize}
\item[\textbf{La distance de Levenshtein \cite{Le66}}] (ou distance d'édition) : elle autorise trois opérations: la substitution ($a\longrightarrow b$), la suppression $a\longrightarrow \epsilon$, et l'insertion ($\epsilon\longrightarrow a$), où $a,b \in \Sigma$, $a\neq b$, et $\epsilon$ : le mot vide. La distance de Levenshtein est la métrique la plus utilisée pour déterminer la différence entre deux mots. 

Exemple : Soit $V=ABCjEF$ et $W=xBCEFy$, pour transformer $V$ en $W$, on substitue le $1^{\text{er}}$ caractère $A$ par $x$, et on supprime le $4^{\text{ème}}$ caractère $j$, et on insère le $y$ à la fin.

\item[\textbf{La distance de Hamming \cite{Hamming50}}] Cette fonction autorise une seule opération: la substitution.

\item[\textbf{La distance de Damerau \cite{damerau1964technique}}] En plus des opérations de la distance de Levenshtein, elle autorise l'opération de transposition ($ab\longrightarrow ba$). Exemple : $V=ABCxyEF$ et $W=ABCyxEF$, on a une seule opération, la transposition de ($xy\longrightarrow yx$).

\end{itemize}

\section{Les structures de données pour dictionnaires et textes}
\label{def:Les structures_donnees}

\noindent
\textbf{Un dictionnaire}  est un ensemble $D=\{w_1,w_2,...,w_d\}$ de $d$ mots, tels que $w_i \neq w_j $ ($i \neq j$). La taille du dictionnaire est $|D|=\sum_{i=1}^{i=d}|w_i| = n$.

\smallskip

\noindent
\textbf{Un texte} $T$ est une suite de $n$ caractères. Il peut être vu aussi comme étant un ensemble de $d$ mots $T=\{w_1,w_2,...,w_d\}$, où chaque mot peut avoir plusieurs occurrences.\\

\smallskip

Pour une recherche efficace dans le texte et/ou dans le dictionnaire, il est important d'organiser et de stocker les données en une structure qui en facilite l'exploitation (l'accès et les modifications).

\smallskip

Généralement une structure de données occupe plus d'espace mémoire que les données sur lesquelles elle est construite. Une \textit{structure de données succincte} occupe un espace proche de la taille des données qu'elle doit représenter. Elle occupe un espace plus petit si elle exploite des techniques de compression.

\medskip

Dans la bibliographie classique de l'algorithmique de texte \cite{Gusfield1997, cormen2009introduction, crochemore2001algorithmique, crochemore2002jewels}, il existe plusieurs structures de données pour index et dictionnaires.
Nous nous limitons dans ce qui suit au rappel de quelques structures de données utilisées dans la représentation d'un dictionnaire/index et exploitées dans cette thèse.

\subsection{Arbre de suffixes}

Soit $T$ un mot défini sur un alphabet $\Sigma$ tel que $|T|=n$.

L'arbre des suffixes~\cite{Gusfield1997,crochemore2009suffix} de $T$ est un arbre enraciné à  $n$ feuilles terminateur tel que :
\begin{enumerate}
\item Chaque feuille $i$ représente un seul et unique suffixe, c'est le mot  $T[i..n]$;
\item Ses branches sont étiquetées par des mots non vides.
\item Ses n{\oe}uds internes sont de degrés (nombre de branches sortantes) \ $>1$ .
\item Les étiquettes des branches sortantes d'un n{\oe}uds ne peuvent pas commencer par le même caractère.
\item La concaténation des étiquettes sur le chemin qui commence à la racine $R$ et qui se
termine à  une feuille $i$ forme le suffixe $T[i..n]$.\\
\end{enumerate}


Il existe différents algorithmes permettant la construction de l'arbre de suffixes dans un temps et espace linéaire \cite{Weiner1973,McCreight1976,Ukkonen1995}.

\subsection{Trie}
Soit $D$ un dictionnaire de $d$ mots. Un Trie \cite{crochemore2009trie,de1959file,fredkin1960trie}) pour $D$ est un arbre enraciné de préfixes communs et qui a $d$ feuilles. Chaque n\oe{}ud représente un préfixe commun des mots de $D$ et a un ou plusieurs n\oe{}uds fils. 
  Chaque arête est étiquetée par un caractère des mots de $D$. Deux arêtes qui sortent du même n\oe{}ud ne peuvent pas être étiquetées par le même caractère.
Chaque chemin de la racine à une feuille est un mot du dictionnaire.\\

\noindent La construction d'un Trie est en $O(n)$ où $n$ est la taille du dictionnaire. \\

\noindent La recherche d'un mot $p$ de longueur $|p|=m$ consiste à parcourir le Trie de la racine en suivant les lettres de $p$ sur les arêtes et est en $O(m)$.
 
\medskip


\subsubsection{Trie inversé}
On note par $\bar{w}$ le miroir de $w$ et par $\bar{D}= \{\bar{w_1},\bar{w_2},...,\bar{w_d}\}$ le dictionnaire des mots miroirs de $D$. 
Le Trie inversé de $D$ est le Trie de $\bar{D}$.  Il permet de faire la recherche de droite à gauche.

\subsubsection{Un Trie compact}

Un Trie compact \cite{Gusfield1997,crochemore2009trie}
est une structure de données qui optimise l'espace mémoire par le compactage des chemins du Trie. 

\smallskip

Soient  les n\oe{}uds  ${\{nd_i,nd_{i+1},...,nd_{i+j}\}}$  étiquetés respectivement par les caractères suivant $\{c_i,c_{i+1},...,c_{i+j}\}$, tous sur le même chemin, et chaque $nd_{k+1}$ est un fils unique de $nd_k$ , avec $k \in [i..i+j-1]$. Le compactage de ce chemin consiste à  fusionner ces n\oe{}uds en un seul n\oe{}ud $nd$, et en concaténant tous leurs caractères en une seule sous-chaine $\{c_ic_{i+1}...c_{i+j}\}$ sur la transition sortante du n\oe{}ud parent.\\

\smallskip

Remarquons enfin que l'arbre de suffixes utilisé dans l'indexation d'un texte $T$ est un Trie compact construit sur tous les suffixes de $T$. Chaque chemin de la racine à une feuille représente un suffixe du texte $T$.

\smallskip

\subsection{Le tableau LCP}

Le tableau des plus long préfixes communs d'un dictionnaire $D$ (\cite{Manber1993,Kasai2001}) est une structure de données qui permet de stocker les longueurs des plus longs préfixes communs entre les paires de mots consécutifs de $D$ dans l'ordre lexicographique.

Soient $D=\{w_1,w_2,...,w_n\}$ le dictionnaire et $LCP(x,y)$ la fonction  qui calcule la longueur du plus grand préfixe commun entre deux mots $x$ et $y$. Le tableau $Tab\_LCP$ contient l'ensemble des entiers $LCP(w_i, w_{i+1})$, pour $i=1, \dots n-1$.
$$Tab\_LCP=\{LCP(w_1,w_2),LCP(w_2,w_3),...,LCP(w_{n-1},w_n)\}$$


\subsection{Le Tas}

La structure de données Tas (\cite{williams1964algorithm, cormen2009introduction}) est un arbre binaire presque complet:  l'arbre est complètement rempli à tous les niveaux, sauf (dans certains cas) le niveau le plus bas,  rempli de la gauche jusqu'à un certain point. Il y a deux types de Tas, Tas Max et Tas Min. Dans le Tas Max, la clé dans chaque n\oe{}ud a une valeur supérieure à celle de ses n\oe{}uds fils, et l'inverse dans le Tas Min.

\smallskip

Le Tas peut être implémenté par un tableau d'une taille de $2n+1$ où $n$ est le nombre de n\oe{}uds. Les deux fils d'un n\oe{}ud $i$ sont dans les positions ($2i, 2i+1$).

\smallskip


\subsection{La file de priorité}

La file de priorité~\cite{EmdeBoas1976,Jones1986} est une structure de données qui permet l'accès en premier à l'élément ayant la plus grande priorité. Il existe deux types de file de priorité, celle qui donne la plus grande clé en premier (la plus grande priorité = la plus grande clé), et l'inverse, celle qui donne la plus petite clé en premier (la plus grande priorité = la plus petite clé).

\smallskip

La file de priorité permet trois opérations: 1) ajouter un élément, 2) accéder à l'élément qui a la plus grande priorité, 3) supprimer l'élément qui a la plus grande priorité.

\smallskip

La file de priorité peut être implémentée avec une variété de structure de données comme par exemple un tableau ou une liste chainée, mais généralement, elle est implémentée avec la structure du Tas.

\subsection{Prefix-sum}

La structure de données prefix-sum ~\cite{elias1974,fano1971,fenwick1994new} permet de stocker des éléments numériques (généralement dans un tableau) tels que la valeur stockée dans chaque position est la somme de toutes les valeurs des éléments précédents. $val(i) = \sum_{j=0}^{j=i} val(j)$.

La méthode consiste à transformer un ensemble de valeurs $S=\{v_1,v_2,...,v_n\}$ en un autre ensemble $S'=\{v'_1,v'_2,...,v'_n\}$, tel que $v'_i = \sum_{j=0}^{j=i} v_i = v'_{i-1} + v_i$.

\noindent
Exemple : 

\begin{tabular}{ll}
 S : & \{1,1,1,2,5,5,5\} \\ 
 prefix-sum (S) : &  \{1,2,3,5,10,15,20\}\\  
\end{tabular}

\subsection{Le tableau de bits}

La structure de données tableau de bits (bit-vector) est un tableau utilisé simplement pour stocker et retrouver les bits (1 ou 0) dans une position donnée. Elle permet de savoir si l'élément dans une position existe ou non.

Généralement, chaque case du tableau a une taille d'un mot mémoire $w$. Dans les machines actuelles $w=32$ ou $64$ selon le système utilisé de 32 ou 64 bits respectivement.

Dans une structure de dictionnaire, le tableau de bits est augmenté avec des informations supplémentaires afin de supporter les opérations de rang et de sélection (rank/select)~\cite{jacobson1989space, raman2002succinct, munro1996tables}. L'opération de rang (rank(1/0,i)) calcule le nombre de 1 (ou 0) depuis le début du tableau jusqu'à une position donnée $i$, et de même, l'opération de la sélection (select(1/0,i)) permet de retourner la position de l'occurrence numéro $i$ de l'élément 1 (ou 0).

\section{Les tables de hachage}
\label{def:Le_hachage}

Une table de hachage est une structure de données qui généralise la notion simple d'un tableau ordinaire pour implémenter les dictionnaires.
On utilise une fonction de hachage $h$ pour appliquer l'ensemble des clés $U=\{cl_1,cl_2,...,cl_n\}$ vers les positions $\{1,2,..,n\}$ des cases de la table de hachage $Tab$  de $n$ cases\\


Dans le cas idéal, lorsque la fonction de hachage $h$ est bijective, elle associe à chaque clé une position unique et toute position est associée à une seule clé. Ce hachage est dit \textit{\og parfait \fg }.
Lorsque $h$ n'est pas injective elle retourne la même valeur de position pour  des clés différentes. On dit qu'il y a \textit{\og collision \fg }. 

Il existe plusieurs solutions pour résoudre le problème de collision comme le hachage avec chaînage, l'adressage ouvert, le sondage linéaire, le double hachage, etc.~\cite{cormen2009introduction,Knuth1998_book}.

Les deux sous-sections suivantes présentent respectivement le hachage parfait minimal et le hachage avec sondage linéaire.

\subsection{Le hachage parfait minimal}

Une fonction de hachage parfait est une fonction bijective qui ne donne donc pas de collision.
Étant donné un ensemble $U=\{w_1, w_2, \dots, w_n\}$ de $n$ éléments, la fonction associe chaque élément de $U$ à un numéro distinct dans l'intervalle [0..m-1] :\\

$\begin{array}{llll}
h: & U    &  \longrightarrow & \{1,2,...,m-1\} \text{ avec } m \geq n \\
   & w_i  &  \longmapsto & p \\
\end{array}$

\smallskip

$p=h(w_i)$  est la position de  $w_i$ dans  $D$. \\

\medskip
Une fonction de hachage est dite minimale $mphf$\footnote{Minimal perfect hashing.} (\cite{cichelli1980minimal,jaeschke1981reciprocal,czech1992optimal}) si $m=n$. Une fonction $mphf$ associe $n$ éléments à $n$ valeurs entières successives. Le but est de réduire l'espace mémoire de stockage utilisé par la fonction de hachage parfait.

Habituellement, la construction de la $mphf$ implique des étapes qui consistent à utiliser une fonction de hachage qui applique les mots à des nombres distincts de $O(w)$ bits, puis de ranger ces nombres dans un tableau de $n$ cases en utilisant une autre fonction de hachage.

\subsection{Table de hachage avec sondage linéaire}

Le hachage avec sondage linéaire~\cite{Knuth63noteson} est une méthode (parmi d'autres) pour résoudre le problème de collision. La méthode consiste à placer la donnée $clé$ dans la première case vide dans la table de hachage si la position $h(clé)$ est non vide.

La taille $T$ du tableau est plus grande que le nombre d'éléments $n$. Le taux de remplissage du tableau ou le facteur de chargement (Load Factor) doit être inférieur à 1 ($\alpha<1$). Si un tableau de hachage stocke $n$ éléments, alors sa capacité est $T=\lceil n/\alpha\rceil$.

\medskip
Pour insérer un nouvel élément, si la position $h(clé)$ est vide, alors il n'y a pas de collision et l'élément est inséré à la position $h(clé)$.
Dans le cas contraire, on cherche la première case vide après $h(clé)$ pour insérer l'élément, on vérifie dans les positions $h(clé)+1,h(clé)+2\ldots$, jusqu'à trouver une case vide (si l'une des positions est égale à $T-1$, alors la prochaine position sera la position $0$).

La recherche d'un élément se fait par la comparaison avec le contenu de la table de hachage à la position $h(clé)$. Si l'élément n'est pas à cette position, alors on parcourt toutes les cases consécutives à partir de cette position jusqu'à ce qu'on trouve l'élément recherché, ou on atteint une case vide ce qui implique que l'élément recherché n'existe pas dans la table de hachage.

\medskip
Pour supprimer un élément, on doit faire un décalage (vers la gauche ) aux éléments qui se trouvent directement après la case supprimée, et qui ont la même valeur de hachage de la case supprimée. Cette opération est nécessaire afin de ne pas laisser une case vide et donc perdre l'accès aux éléments qui ont la valeur de hachage que la case supprimée. 

\section{Notion de complexité algorithmique}

La complexité algorithmique~\cite{knuth1997art} est une fonction qui permet d'évaluer la quantité de ressources (temps et espace mémoire) nécessaire utilisée pour le fonctionnement d'un algorithme donné.
Pour calculer la complexité du cas pire, nous utilisons une évaluation asymptotique, où le nombre des éléments $n$ manipulés est assez grand (tend vers l'infini), et nous ignorons les constantes.

\subsection{Les notations de la complexité}
Il existe plusieurs notations pour designer la complexité d'un algorithme donné. Parmi les notations les plus utilisées, nous avons les notations : $O(n)$, $\Omega (n)$, $\Theta (n)$~\cite{knuth1976big_O}.
Soient $f$ et $g$ deux fonctions mathématiques et $n$ le nombre d'éléments manipulés, supposé  suffisamment grand (tend vers l'infini).

\medskip
\begin{itemize}

\item[\textbf{Grand O}] $f(n) = O(g(n))$, la fonction $f$ est bornée asymptotiquement par  la fonction $g$. $|f(n)| \leq k \times |(g(n))|$ avec $k$ une constante strictement positive. Cela est équivalent à dire que $ \frac{f(n)}{g(n)} \leq k$. La notation $O$ décrit une borne supérieure asymptotique.

\item[\textbf{Grand Omega}] $f(n) = \Omega (g(n))$, la fonction $f$ est minorée asymptotiquement par la fonction $g$ (à un facteur près), $|f(n)| \geq k \times |(g(n))|$, pour un $k>0$. La notation $\Omega$ décrit une borne inférieure asymptotique.

\item[\textbf{Grand Theta}] $f(n) = \Theta (g(n))$, la fonction $f$ est dominée et soumise asymptotiquement par la fonction $g$, $k_1 \times |(g(n)) \leq |f(n)| \leq k_2 \times |(g(n))|$, pour un $k_1>0$ et un $k_2>0$. 

La notation $\Theta$ décrit une borne supérieure et inférieure en même temps, $\Theta (g(n)) = \Omega (g(n)) \text{ et } O (g(n))$.

\end{itemize}

\subsection{Les modèles de calcul de la complexité}

L'évaluation des ressources  nécessaires à l'exécution d'un algorithme se fait dans un  modèle abstrait de machines.

Il existe plusieurs modèles de calcul dans la littérature, comme la machine de Turing, la machine RAM (Random Access Machine), la machine avec mémoire externe EMM~\cite{aggarwal1988input}, le modèle Cache-Oblivious (machine à plusieurs niveaux de mémoire)~\cite{frigo1999cache}, la machine PRAM (Parallel Random Access Machine)~\cite{karp1988survey}, etc.

Dans ce qui suit, nous décrivons brièvement les deux modèles les plus utilisés qui sont la machine de Turing et la machine RAM.
 
\medskip
\begin{itemize}

\item[\textbf{La machine de Turing}] 
Une machine de Turing~\cite{turing2004essential} est une machine abstraite composée d'un ruban infini divisé en cases consécutives sur lesquelles la machine écrit/efface des symboles (un alphabet fini) en déplaçant une tête de lecture/écriture vers l'avant ou vers l'arrière.

La machine de Turing est caractérisée par la position de la tête dans le ruban, le contenu de la case sur laquelle pointe la tête de lecture/écriture, et une table des actions à faire (les règles de transition entre les cases) qui sert comme un programme à la machine. Le ruban est initialement vide, toutes les cases contiennent le caractère $\sqcup \notin \Sigma$.

\medskip
Une machine de Turing est formellement définie par : $(Q,\Sigma,\Gamma,\sqcup,q_0,q_f,\delta)$

\begin{itemize}
\item $Q$ un ensemble fini des états.
\item $\Sigma$ un ensemble fini d'alphabet (les entrées-sorties); le choix typique est $\Sigma=\{0,1\}$.

\item $\Gamma$ un ensemble fini d'alphabet du ruban, $\Sigma\subseteq\Gamma$.

\item $\sqcup\in \Gamma$ un symbole (caractère) blanc, avec $\sqcup \notin \Sigma$.

\item $q_0 \in Q$ un état initial.
\item $q_f \in Q$ un état final.

\item $\delta: Q\setminus \{q_f\}\times\Gamma \rightarrow Q\times\Gamma\times\{L, R, N\}$, une fonction de transition partiellement définie. $\{L, R, N\}$ signifie que la tête de lecture peut se déplacer vers la gauche ou vers la droite par une seule case, ou ne se déplace pas.

Si $\delta$ n'est pas défini sur l'état actuel et le symbole actuel du ruban alors la machine s'arrête.

\end{itemize}

\medskip
\item[\textbf{La machine RAM}] 

Le modèle RAM (Random Access Machine)~\cite{luginbuhl1990computational,cook1972time} est un modèle d'une machine abstraite assez proche de l'architecture d'un ordinateur. La machine abstrait RAM est composée principalement d'une unité du calcul, des registres, d'une mémoire divisée en deux parties, l'une, pour stocker les données, et l'autre pour stocker les instructions (le programme).

La machine RAM utilise l'adressage indirect pour accéder aux différentes cases de la mémoire. 
La taille d'une cellule mémoire est notée par $w$, toutes les cellules ont le même coût d'accès; le coût pour accéder à une case mémoire est unitaire.

Dans ce modèle, chaque opération arithmétique ou logique à un coût unitaire, sauf les deux opérations la multiplication et la division. Certains chercheurs les considèrent comme deux opérations à coût unitaire, et d'autres non.

\end{itemize}


\subsection{La notion de la complexité moyenne}

La complexité en moyenne~\cite{levin1986average,bogdanov2006average} calcule la quantité des ressources (typiquement, le temps) nécessaire utilisée par un algorithme qui agit sur des données en entrée qui sont équiprobables, c'est-à-dire que les données en entrée ont une distribution qui ne provoque pas le pire cas.

La complexité en moyenne est nécessaire pour calculer la performance qui est proche de la réalité, lorsque les données en entrée ne provoquent pas le pire cas (ou le pire cas se produit rarement).
La complexité en moyenne permet de distinguer l'algorithme le plus efficace en pratique entre l'ensemble de tous les autres algorithmes qui peuvent avoir la même complexité dans le pire des cas.

L'analyse de la complexité moyenne dans certains domaines (comme la cryptographie) permet de bien comprendre le comportement de l'algorithme et le réadapter par exemple pour générer des instances difficiles dans le domaine de la cryptographie.

\chapter{Algorithmes et structures d'index pour le problème de la recherche approchée.}
\label{chap:etat_art}


L'algorithmique du texte, en tant que domaine qui porte sur l'étude des structures de données textuelles et d'algorithmes les traitants, a donné lieu à de nombreuses contributions, en particuliers  les articles survey de Navarro~\cite{Na01,Na01b} et Boytsov~\cite{Bo11} ou bien les livres : de M.Crochemore \cite{crochemore2001algorithmique,crochemore2002jewels}, de D. Gusfield \cite{Gusfield1997}, de C.Charras et T.Lecroq \cite{charras2004handbook} et de G.Navarro et M.Raffinot \cite{navarro2002flexible}.


Ce chapitre présente quelques-uns de ces travaux, en particuliers ceux fondés sur quelques idées que nous exploitons pour proposer de nouvelles solutions présentées dans les prochaines chapitres.

\paragraph{}

Les algorithmes utilisés pour résoudre le problème de la recherche approchée peuvent être regroupés selon le type de données traitées, leurs tailles, le scénario choisi etc. Ils peuvent être divisés en deux grandes catégories: 1) les algorithmes en-ligne (online), 2) les algorithmes hors-ligne (offline). On peut aussi les classer en deux autres catégories: 1) algorithmes traitants le texte, 2) algorithmes traitants le dictionnaire.

\paragraph{Les algorithmes en-ligne et hors-ligne :} 

Les méthodes en-ligne sont utilisées lorsque le texte ne peut être pré-traité et indexé. C'est le cas lorsque ce texte n'est par exemple pas connu à l'avance ou s'il est susceptible d'être modifié dans le temps de très nombreuses fois (par exemple l'édition de texte et l'OCR). La complexité temporelle de la recherche par ces méthodes est proportionnelle à la taille du texte.

\medskip

Les méthodes hors-ligne sont utilisées lorsque la quantité d'informations textuelles est importante et le texte est connu à l'avance. Les méthodes hors-ligne construisent des structures de données pour stocker et pré-traiter le texte afin de répondre rapidement à une requête dans un temps proportionnel à la longueur du mot requête (linéaire dans le meilleur des cas).

\paragraph{Les algorithmes du texte et de dictionnaire :} 

La recherche dans un dictionnaire est différente de la recherche dans un texte car dans un dictionnaire, les mots sont ordonnés et ils n'apparaissent qu'une seule fois contrairement à un texte où les mêmes mots apparaissent plusieurs fois dans différentes positions, en plus de l'utilisation de différents types de mots vides (stopwords) et les signes de ponctuation...etc. 

Il arrive que le texte soit une seule chaîne de caractères comme par exemple les chaînes génomiques comme l'ADN.

\medskip

La recherche approchée dans un dictionnaire doit retourner tous les mots du dictionnaire qui diffèrent d'au plus $k$ erreurs du mot requête $q$. La recherche dans un texte consiste à trouver toutes les solutions approchées et leurs occurrences (leurs positions).

En général, toutes les méthodes peuvent être appliquées dans les deux contextes (texte/dictionnaire) avec de légères modifications ou bien par l'ajout de simples structures de données (par exemple, un index inversé). Comme exemple typique de structures de données pour la recherche dans un dictionnaire, on peut citer, le Trie. Pour la recherche dans un texte, on peut citer l'arbre des suffixes qui est un Trie (compact) contenant tous les suffixes du texte.\\


Dans les sections de ce chapitre, nous donnons un aperçu sur quelques concepts de la recherche approchée, ensuite nous présentons les travaux qui sont fortement liés à notre travail.

\section{Classification des méthodes de la recherche approchée}

Les différents algorithmes utilisés pour résoudre le problème de la recherche approchée dans ses deux versions (en-ligne / hors-ligne) peuvent être regroupés dans des catégories selon leurs méthodes et techniques communes. Dans ce travail, toutes les méthodes sont regroupées comme suit :

\begin{enumerate}
\item La programmation dynamique.
\item Indexation (seulement dans le cas hors-ligne).
\item Les méthodes de filtrage.
\item La génération de voisinage.
\item Les méthodes qui utilisent le hachage.
\item Les méthodes de bit-parallélisme.
\item Les méthodes qui utilisent le parallélisme.
\item Les méthodes hybrides.
\end{enumerate}

Dans les sous-sections qui suivent, nous expliquons chaque catégorie et nous donnons comme exemples quelques algorithmes qui utilisent les méthodes et les techniques de cette catégorie.

\subsection{La programmation dynamique}

La programmation dynamique est un paradigme de programmation permettant de résoudre des problèmes complexes en les décomposant en une collection de sous-problèmes plus simples. La méthode examine les solutions des sous-problèmes déjà calculées et combine leurs résultats pour trouver la meilleure solution du problème donné  \cite{bellman1952theory,bellman1956dynamic}. La programmation dynamique est utilisée dans plusieurs domaines pour résoudre des problèmes d'optimisation.

Dans l'algorithmique du texte, la programmation dynamique est utilisée pour calculer la distance entre deux mots $dist(x,y)$, lors d'une recherche approchée des motifs ou d'un alignement de deux séquences.

Le principe de base est de construire une matrice $M$ de $|x+1| \times |y+1|$ cases où $M[i,j]$ représente le nombre minimum d'opérations nécessaires pour que la sous chaîne $x[1..i]$ ($x[1..i]=\{x_1, x_2,...,x_i\}$) corresponde à la sous-chaîne $y[1..j]$. La matrice est remplie selon les règles suivantes :

\bigskip
\noindent
$M[i,0] = i$\\
$M[0,j] = j$\\
$M[i,j] = M[i-1,j-1] \:\:\: si \:(x_i = y_j) $\\
$M[i,j] = 1+ min(M[i-1,j] , M[i,j-1] , M[i-1,j-1]) \:\:\: sinon $\\

Le résultat final (la distance d'édition) se trouve à la dernière case $M[|x|,|y|]$, la  distance des sous-chaînes ($x[1..i]$ et $y[1..j]$) se trouve dans la case $M[i,j]$. Nous donnons ci-dessous l'explication de ces règles.

\begin{itemize}

\item $M[i,0] = i$ ($M[0,j] = j$) : la distance d'édition entre une chaîne $x$ et la chaîne vide $\epsilon$ est $|x|$. L'opération pour transformer $x$ en $\epsilon$ est la suppression. La case $M[i,0]$ représente la distance d'édition entre une sous-chaîne $x[1..i]$ de longueur $i$ et la chaîne vide, donc $dist(x[1..i],\epsilon) = i$, (de même pour la case $M[0,j]$, $dist(\epsilon,y[1..j]) = j$).

\item $M[i,j] = M[i-1,j-1] $ : dans le cas où les deux caractères $x_i$ et $y_j$ sont égaux, alors nous n'avons aucune opération. Donc la valeur de la distance d'édition revient à la valeur de 2 sous-chaînes précédentes c'est-à-dire $x[1..i-1]$ et $y[1..j-1]$. Donc si $(x_i = y_j) $ alors $dist(x[1..i],y[1..j]) = dist(x[1..i-1],y_[1..j-1])$.

\item $M[i,j] = 1+ min(M[i-1,j] , M[i,j-1] , M[i-1,j-1])$ : dans le cas où ($x_i \neq y_j $), cela signifie que nous avons l'une des trois opérations (substitution, suppression, insertion), donc la distance d'édition de ($x_i, y_j$) revient à ajouter 1 à la valeur minimale des trois cases ($M[i-1,j] , M[i,j-1] , M[i-1,j-1]$) qui contient la valeur de la distance d'édition des trois cas suivants : ($x[1..i-1], y[1..j]$), ($x[1..i], y[1..j-1]$), ($x[1..i-1], y[1..j-1]$).

Si nous prenons  la valeur($M[i-1,j-1]$), alors l'opération de ($x_i,y_j$) est la substitution. Si nous prenons  $M[i-1,j]$ alors l'opération est la suppression, et pour $M[i,j-1]$ c'est l'insertion. 

Les deux opérations (la suppression et l'insertion) sont interchangeables, selon le sens de transformation des deux mots. Transformer $x$ en $y$ où l'inverse.
\end{itemize}

\medskip

Pour plus de détails le lecteur pourra consulter le survey de Navarro \cite{Na01}.

L'algorithme du calcul de la distance d'édition peut être adapté pour la recherche de motifs dans un texte~\cite{sellers1980theory,masek1980faster,Ukkonen1993,q-gram_sutinen1995using}.

Pour trouver l'alignement entre une séquence $x$ et une séquence $y$, il suffit juste de retrouver le chemin de la distance d'édition dans le sens inverse depuis la case $M[|x|,|y|]$ vers la case $M[0,0]$. Parmi les algorithmes les plus célèbres qui calculent l'alignement entre les séquences, nous citons: \cite{Needleman_Wunsch,Smith_waterman}.


\subsection{L'indexation}

L'idée de base est de simplement indexer le texte/dictionnaire et d'appliquer une méthode récursive pour vérifier, en un temps relativement rapide, toutes les occurrences qui sont des solutions approchées.

Il existe de nombreuses structures de données utilisées à la fois pour la recherche exacte et approchée. Parmi ces structures, le Trie est considéré comme une des structures les plus fondamentales pour représenter un dictionnaire, l'arbre de suffixes pour représenter le texte \cite{Weiner1973,McCreight1976,Ukkonen1995}, les structures de données compressées comme le Trie compact  \cite{Gusfield1997,Morrison1968}, le tableau des suffixes \cite{Manber1993} compressé~\cite{suffix_arrays_grossi2005}, les graphes orientés acycliques de mots  $DAWG$ (directed acyclic word graphs) \cite{DAWG_blumer1983,DAWG_blumer1985}, et les graphes orientés acycliques compacts de mots  $CDAWG$ (compact directed acyclic word graphs) \cite{CDAWG_blumer1984} ou 
le FM-index qui compresse les données et l'index en même temps~\cite{FM_index_ferragina2000,FM_index_ferragina2005}.



Un algorithme de recherche parcourt d'une façon récursive la hiérarchie de l'arbre à partir de la racine vers les feuilles. \`A chaque étape, on détermine si le chemin mène à une solution approchée ou non ; s'il n'y a pas de solution, l'algorithme retourne au n\oe{}ud précédent pour choisir une autre transition sortante pour la vérifier. Cet algorithme récursif, vérifie toutes les transitions sortantes dans chaque 
n\oe{}ud d'erreur possible.

Des algorithmes basés sur les différentes structures de données d'indexation proposent des solutions au problème de la recherche approchée de motifs, comme par exemple les algorithmes  connus et présentés dans les références suivantes: \cite{Ukkonen1993, Tries_shang1996, russo2009_compressed_indexes_approximate, bieganski1994generalized_suffix_trees, zobel1995finding_approximate_large_lexicons}.

\subsection{Les méthodes basées sur le filtrage}

Chaque mot ayant un certain nombre d'erreurs contient des sous-chaînes exactes (des parties qui ne contiennent pas d'erreurs). Afin d'accélérer la recherche approchée, on parcourt rapidement le dictionnaire/texte pour trouver des pièces qui peuvent être des solutions approchées possibles.
L'approche consiste à faire une recherche exacte sur certains morceaux du mot requête dans le dictionnaire/texte pour obtenir des motifs candidats, et ensuite, faire les vérifications pour s'assurer s'ils représentent des solutions approchées ou non.
La première étape utilise généralement des index (Trie, arbre des suffixes) pour localiser rapidement des parties exactes, puis des algorithmes classiques sont utilisés pour la vérification.

Les algorithmes de la recherche exacte sont beaucoup plus rapides que ceux de la recherche approchée. Par conséquent, les algorithmes de filtrage peuvent améliorer la recherche approchée d'une manière considérable \cite{Na01}.

Étant donnés un motif $P$ et un seuil $k$, le motif $P$ peut être divisé en $k+1$ morceaux  $P=\{P_1,P_2,...,P_{k+1} \}$. La première étape consiste à faire une recherche exacte sur ces $k+1$ morceaux. Au moins l'un des $P_i$ ($i\,\in\,[1..k+1]$) est une sous-chaîne exacte dans $P$. La deuxième étape, consiste à vérifier s'il y a des solutions approchées \cite{Na01b}.

Nous citons quelques algorithmes qui utilisent ce concept~\cite{Ri76,wu1992fast,T_R_T_brodal1996approximate,Amir2000}.

\subsection{La génération du voisinage}

Une approche de génération de voisinage génère, pour un mot requête, une liste de mots ayant une certaine distance d'édition avec le mot original, puis cette liste de mots générés est recherchée en utilisant des méthodes exactes. Dans la génération de voisinage, il y a trois méthodes différentes :

\paragraph{Le voisinage complet :}

La génération de voisinage complet implique le calcul de tous les mots possibles ayant un nombre d'erreurs de distance d'édition $k$ par rapport au mot requête. Chaque élément des mots générés est recherché dans le dictionnaire par la méthode exacte. Puisque la taille de la génération de voisinage complet est en $O(m^k\,\sigma^k)$ \cite{Ukkonen1993}, cet algorithme est seulement pratique lorsque les paramètres suivants sont de petites tailles : la taille de l'alphabet $\sigma$, le nombre maximum autorisé d'erreurs $k$, et la longueur du mot requête $m$.

\paragraph{Le voisinage avec joker :}

Diminuer la taille du voisinage complet en remplaçant certains caractères par un caractère joker (wild-card).  Un caractère joker peut remplacer  n'importe quel autre caractère dans la chaîne.

Soit le mot requête $ABCD$, et $\phi$ le caractère joker, on obtient seulement 4 mots par une seule substitution : ${\phi}BCD$, $A{\phi}CD$, $AB{\phi}D$, $ABC{\phi}$.

Donc au lieu de substituer tous les caractères de l'alphabet dans chaque position, il suffit juste de placer un caractère joker qui peut matcher tous les caractères.

\paragraph{Le voisinage réduit : }

Dans cette approche, les mots sur l'alphabet d'origine sont associés à des mots sur un alphabet réduit, plus petit.
L'objectif de cette approche consiste à compresser la génération de voisinage en diminuant la taille de l'alphabet. La technique consiste à associer l'alphabet d'origine $\Sigma$ à un alphabet réduit $\Sigma^{'}$ par l'intermédiaire d'une fonction de hachage.

\bigskip
Il existe plusieurs algorithmes basés sur cette approche \cite{mor1982hash,russo2005efficient,myers1994sublinear,Bocek2008,karch2010improved,Be09} (voir \cite{Bo12} pour plus de détails sur l'approche de génération de voisinage).

\subsection{Les algorithmes qui utilisent le hachage}
\label{sub:etat:methodes_hachage}

Les méthodes de hachage (voir la section \ref{def:Le_hachage}) sont utilisées généralement dans la version hors-ligne du problème de la recherche approchée bien qu'on puisse utiliser le hachage sans indexation et donc l'utiliser dans la version en-ligne du problème.

Parmi les algorithmes célèbres pour la recherche exacte et qui utilise le hachage: l'algorithme de ''Rabin-Karp''~\cite{KR87}.
Les performances de l'algorithme Rabin-Karp proviennent de l'utilisation d'une fonction de hachage efficace crée par Rabin-Karp. Cette fonction est utilisée par \cite{Be09}. Nous avons utilisé la même fonction dans notre travail. Les détails de l'utilisation de cette fonction sont présentés dans la sous-section \ref{sub:etat:dictionnaire_exacte_mphf}.

Nous donnons quelque autres algorithmes qui utilisent le hachage : \cite{Be09,q-gram_ukkonen1992approximate,wu1992agrep,mor1982hash,Bo12}.

\subsection{Les méthodes de bit-parallélisme}
\label{sub:etat:methodes_bit-parallelisme}

Ce type d'algorithmes est basé sur l'exploitation du fait qu'un ordinateur manipule des données par des blocs de $w$ bits, où $w$ est la taille d'un seul mot mémoire. Dans le modèle standard du calcul de complexité RAM, $w=\Omega(\log n)$ avec $n$ la taille totale du texte/dictionnaire. Généralement $w = 32 $ ou $ w = 64 $ bits dans les ordinateurs actuels.

Cette technique est appliquée pour tous les algorithmes qui utilisent les approches d'indexation directe, de génération de voisinage, de filtrage, de programmation dynamique, parce que l'idée est d'encoder plusieurs éléments de données d'un algorithme dans un seul mot mémoire pour traiter de nombreux éléments en même temps lors d'une seule opération du processeur. 

Le vocable \textit{bit-parallélisme} traduit l'idée de  traiter plusieurs éléments encodés en bits en même temps, c'est comme si on traitait plusieurs éléments en parallèle avec plusieurs processeurs.

Les algorithmes de bits parallélisme ne sont pas nouveaux dans le domaine de la recherche approchée de motifs, ils ont été proposés dans les années 1990.

Parmi les algorithmes qui utilisent cette technique, nous citons les travaux suivants:  \cite{baeza1992new,wu1992fast,wu1996subquadratic,myers1999fast,navarro2000fast,navarro1998bit,hyyro2001explaining,hyyro2002faster,Hyyro2003,hyyro2005increased}.

\subsection{Les méthodes hybrides}
\label{sub:etat:methodes_hybrides}

Il y a de nombreux algorithmes qui utilisent des approches hybrides (la programmation dynamique, l'indexation directe, la génération de voisinage, le filtrage) afin d'obtenir de meilleurs résultats.

Dans toutes les approches qui traitent le problème de la recherche approchée dans sa version hors-ligne, les méthodes (la génération de voisinage, le filtrage...) sont utilisées avec des index pour accélérer la recherche. Par exemple, l'indexation est utilisée avec la génération de voisinage, le filtrage, le hachage : \cite{Bocek2008, karch2010improved,Be09,CGL04,T_R_T_brodal1996approximate}.

Le bit-parallélisme peut aussi être combiné avec les autres approches afin d'accélérer le traitement comme nous l'avons expliqué dans la sous-section \ref{sub:etat:methodes_bit-parallelisme}. Par exemple, les algorithmes expliqués dans \cite{myers1999fast,hyyro2001explaining} combinent le bit-parallélisme avec la programmation dynamique.

Le calcul de la distance d'édition entre deux mots $x$ et $y$ se fait avec la programmation dynamique, cela implique généralement que les différentes méthodes (le filtrage, la génération de voisinage ...) vérifient la distance d'édition des derniers mots avec le mot requête. Nous citons quelques exemples pour les algorithmes qui utilisent la programmation dynamique avec les autres approches \cite{Ukkonen1993, q-gram_sutinen1995using,cole2002approximate}.

\subsection{Le parallélisme}
\label{sub:etat:parallelisme}

Le parallélisme permet de traiter des informations de manière simultanée sur plusieurs unités de calculs. Au lieu d'exécuter les tâches sur une seule unité de calcul, on les divise sur plusieurs. Le but est de réaliser le plus grand nombre d'opérations en un temps record (voir~\cite{pacheco2011introduction,grama2003introduction}).

Il existe un modèle de machine parallèle qui fût très en vogue dans les années 80 (un modèle théorique) appelée PRAM (Parallel Random Access Machine)~\cite{karp1988survey}. Nous citons quelque algorithmes datant de cette époque : \cite{landau1989fast,landau1986introducing} et quelques travaux récents : \cite{GPU_kouzinopoulos2009string,GPU_lin2010accelerating,GPU_zha2011multipattern}.

\section{Quelques résultats connus}





\subsection{K-errata trie de Cole et al}

Dans~\cite{CGL04} , Cole et al ont proposé une nouvelle structure de données appelée K-errata Trie. Cette structure permet de résoudre le problème de la recherche approchée pour n'importe quel nombre d'erreurs $k$.

La solution est basée sur le concept de la génération de voisinage et l'utilisation du Trie avec erreur pour créer l'index K-errata Trie.

La méthode consiste à créer un arbre de suffixes et ensuite de trouver les ''centroid path''\footnote{centroid path : Le barycentre, ou centre de masse, parfois appelé centre de gravité.} pour décomposer des sous-arbres en ''centroid path decompositions''. La méthode insère des erreurs dans l'arbre de suffixes, puis  fusionne récursivement les sous-arbres de ''centroid path decompositions''.\\

Rappelons les définitions du \textit{Centroid paths} et du \textit{Centroid path decompositions} avant d'expliquer brièvement la construction de K-errata Trie.

\textbf{\textit{Centroid paths (chemins barycentre)}} : le chemin commence à la racine de l'arbre $T$. Chaque n\oe{}ud $nd$ sur le chemin, se branche au n\oe{}ud fils qui a le plus grand nombre de feuilles dans son sous-arbre  (donc ce n\oe{}ud devient la racine de ce sous-arbre).


\textbf{\textit{Centroid path decompositions (décomposition de chemin barycentre)}} :
après l'identification des \textit{centroid path}, l'arbre est décomposé de façon récursive en sous-arbres. Chaque transition qui sort du \textit{centroid path} va représenter un sous-arbre.\\

Soit $C$  un des \textit{centoid path} résultants et soit $v$ un n\oe{}ud sur  le \textit{centroid path} $C$. On forme un nouveau sous-arbre qui commence avec un caractère joker qui sort de $v$, ce sous-arbre comprenant la fusion de tous les sous-arbres qui sortent de $v$ et qui ne sont pas sur le \textit{centroid path} en remplaçant le premier caractère de chaque sous-arbre par un nouveau symbole joker $\phi$ ($\phi \notin \Sigma$ ).

Ainsi,  1-errata Trie est créé. Pour créer le 2-errata Trie, on prend chaque sous-arbre qui commence par $\phi$, et on applique la même procédure, c-à-d, trouver les \textit{centroid path}, et créer des sous-arbres avec le caractère joker, ensuite, fusionner tous les sous-arbres qui sortent d'un n\oe{}ud sur le \textit{centroid path}.
On applique cette méthode d'une façon récursive afin de créer le k-errata Trie.\\

La complexité temporelle et spatiale atteinte est comme suit :

La recherche approchée dans un texte : l'espace mémoire de la structure de données : $O(n \frac{(c_1 log\, n)^{k} }{k!} )$. Le temps de la construction : $O(n \frac{(c_1 log\, n)^{k} }{k!} )$. Le temps de la requête (3 types d'erreurs): $O( \frac{(c_2 log\, n)^{k} \, log \, log \, n }{k!} + m + 3^{k}\times occ )$. $c_1 , \, c_2$ sont des constants $>1$, et le nombre d'erreurs $k$ est $k \leq log \, n $.
	
La recherche approchée dans un dictionnaire: l'espace mémoire de la structure de données : $O(n \, log^{k} \, n)$. Le temps de la construction : $O(n \, log^{k} \, n + n \, log\,\Sigma)$. Le temps de la recherche : $2^{k} log \, log \, n + m + occ $.

\subsection{La méthode de Buchsbaum et al}

Buchsbaum et al ont travaillé dans~\cite{buchsbaum2000range} sur le problème de la recherche dans une plage \cite{agarwal1999geometric} (rechercher un ensemble d'éléments $S$ dans un ensemble $S'$ plus grand). Ils ont introduit le produit vectoriel des arbres ''tree cross-product'', leur but est de faire un pré-traitement sur des arbres et des hypergraphes \cite{berge1973graphs} afin de rendre le traitement des requêtes plus efficace. Leur résultat est appliqué dans la visualisation des graphes, l'analyse des logiciels et la recherche de motifs.

Buchsbaum et al. améliorent la méthode de Amir et al~\cite{Amir2000} en augmentant les structures de données Trie et Trie inversé. La méthode consiste à ajouter des arcs entre les feuilles des deux arbres qui ont les mêmes étiquettes. Ensuite, ils utilisent la méthode sur les graphes pour faire l'intersection des n\oe{}uds trouvés dans le Trie et le Trie inversé dans l'étape 3 de Amir et al.

La complexité temporelle et spatiale atteinte par Buchsbaum et al. est comme suit :
Pour un texte $T$ de taille $n$ et un mot requête d'une longueur $m$, l'espace mémoire utilisé est de l'ordre de $O(n \, log \, n)$. Le temps de pré-traitement est de  $O(n \, log \, n)$. Le temps de la recherche avec $k=1$ erreur est de $O(m \, log \, log \, n + occrs)$ ($occrs$ représente le nombre d'occurrences de la solution).

\section{L'algorithme de D.Belazzougui}
\label{sec:etat:algorithme_Djamal_Belazzougui}

D.Belazzougui dans \cite{Be09} a proposé une nouvelle structure de données qui permet de résoudre le problème de la recherche approchée pour $k=1$ erreur dans un temps proportionnel à la longueur $m$ du mot requête $O(m+occrs)$ ($occrs$ le nombre d'occurrences de la solution), et dans un espace mémoire optimal où l'index occupe $O(n(lg(n) log log(n))^{2})$ avec $n$ la taille de texte. 

Pour arriver à ce résultat, il utilise plusieurs outils combinés entre eux : les fonctions de hachage parfait minimal (mphf), les tableaux de hachage dynamique, le Trie et le Trie inversé, les tableaux de bits, et la structure de données préfixe-sum.

Étant donnée une collection $S$ de $d$ motifs (clés) de longueur $n$ caractères.
On note le mot requête par $q$ et sa longueur par $m=|q|$.
On met $\sigma$ comme la taille de l'alphabet $\Sigma$, donc $\sigma = |\Sigma|$).

La solution simple (naïve) pour répondre aux requêtes approchées sur le mot $q$ est de chercher exhaustivement pour tous les mots possibles qui peuvent être obtenues par l'opération de distance d'édition avec une seule erreur sur le mot requête $q$.
Si un dictionnaire exact est utilisé donc cela prend un temps de $O(m)$ pour répondre à chaque requête candidats pour les différentes combinaisons des mots générés avec les 3 opérations de distance d'édition, donc cela donne au total un temps de $O(m^{2}\sigma)$ pour répondre à tous les mots candidats.

Afin de réduire le temps des requêtes, D.Bellazougui a réduit le nombre de mots candidats pour l'insertion et la substitution de $O(m\sigma)$ à juste $O(m)$, par l'utilisation d'une structure de données ''dictionnaire des listes de substitution'' qui donne un caractère candidat pour chaque position de l'insertion ou la substitution dans le mot $q$.
Ainsi, au lieu de tester pour chaque position $\sigma$ caractères, cette structure nous donne une liste de caractères qui lorsqu'on les teste donnent tous des solutions.

Il effectue un pré-traitement pour le mot $q$, avec un temps $O(m)$, pour pouvoir vérifier (avec la comparaison exacte) chaque mot candidat dans un temps de $O(1)$ au lieu $O(m)$.

L'idée utilisée pour améliorer le temps de comparaison d'une manière déterministe se base sur le fait que l'on peut comparer deux blocs mémoires dans un temps seulement $O(1)$, tel que chaque bloc est de taille $w$ (un mot mémoire), cela permet de comparer deux morceaux de chaîne de caractères tels que chaque morceau contient $u$ caractères où $u b = O(w)$ et $b$ est le nombre de bits pour un seul caractère. Ceci implique que les mots de moins de $u$ caractères peuvent être comparés dans un temps O(1).

Pour améliorer le temps de comparaison pour les mots longs, l'idée est de calculer les signatures de tous les préfixes et suffixes des mots dans le dictionnaire dont les longueurs sont multiples de $u$. Ces signatures occupent moins que $w$ bits, et l'espace total utilisé par les signatures est ainsi du même ordre que l'espace occupé par le mot. La comparaison d'un mot candidat comprend la comparaison des signatures des préfixes et des suffixes de $q$, avec les signatures des préfixes et des suffixes de mot $s$ de dictionnaire. Le nombre de comparaisons est constant, impliquant un nombre de blocs de moins de $|q|/u$.

Afin d'obtenir des signatures déterministes (non-collision) pour les préfixes et les suffixes, on utilise un Trie et un Trie inversé construit sur l'ensemble des mots du dictionnaire. Comme cela, on peut trouver des signatures déterministes pour n'importe quels préfixe et suffixe de chaque mot du dictionnaire.\\

\subsection{La construction de la structure de données}
\label{sub:la_structure_de_donnee}

La structure de données utilise les composants suivants :

\subsubsection{Un Trie et un Trie inversé}

Étant donnée une collection de mots $S$ de taille total $n$ caractères ($nb$ bits), on veut construire un Trie sur $S$ qui occupe un espace de $O(n)$, et qui peut être traversé dans un temps de $O(|q|)$ pour un mot requête $q$.

Pour la requête de mot $q$, on a besoin que le Trie renvoie un numéro identifiant unique pour chaque n\oe{}ud traversé pendant la recherche de motif $q$. Cet identifiant doit occuper un espace de $O(w)$ bits. Dans chaque n\oe{}ud on ajoute un identifiant unique (un nombre), la racine à comme indice 0, ensuite, on donne des indices au fils gauche, ensuite, les fils droite.

Ce problème peut être résolu par l'utilisation d'une structure de données décrite dans \cite{benoit2005}. La structure de données peut renvoyer un nombre entier unique dans l'intervalle $[1, Nb\_nd]$ pour chaque n\oe{}ud traversé, où $Nb\_nd$ est le nombre total des n\oe{}uds.\\

On a besoin de deux Tries :

\begin{itemize}

\item Un Trie $Tr$ construit sur l'ensemble des mots $S$. Le rôle principal de ce Trie est qu'il permet de comparer les préfixes des clefs dans $S$ avec les préfixes de n'importe quel mot requête $q$, et cela juste en comparant les identifiants retournés par le Trie pour les deux mots. C'est la partie essentielle de l'algorithme qui permet d'obtenir un temps de requête borné en $O(1)$ pour chaque mot candidat.

\item Un Trie inversé $\overline{Tr}$ construit sur l'ensemble $\bar{S}$, afin de comparer les suffixes des clefs de $S$ avec les suffixes de n'importe quel mot $q$, et cela juste on comparant les identifiants retournés par le Trie inversé pour les deux mots.
\end{itemize}

\medskip
On note que le parcours du Trie, ou du Trie renversé pour un motif $q$ retourne au plus $|q|$ identifiants dans l'intervalle $[1,Nb\_nd]$ correspondant aux n\oe{}uds traversés.

\subsubsection{Un dictionnaire exact de mot avec une fonction du hachage parfait minimal}
\label{sub:etat:dictionnaire_exacte_mphf}

Étant donné un ensemble $S$ de $d$ mots de taille totale $n$. En utilisant la fonction du hachage parfait minimal, on peut construire un dictionnaire (index du dictionnaire) qui occupe un espace optimal afin de répondre à des requêtes exactes dans un temps proportionnelle à la longueur de mot requête. Le dictionnaire occupe $O(nb)$ bits et la recherche prend $O(m)$, où $m$ est la longueur de mot requête.\\

La $1\iere{}$ étape de construction du dictionnaire consiste à construire la fonction du hachage parfait minimal $mphf$ de l'ensemble $S$.

La $mphf$ mappe chaque mot de $S$ vers un numéro distinct dans l'intervalle $[0,d-1]$.

On stocke les mots l'un après l'autre dans un tableau $T$ consécutif dans l'ordre donné par $mphf$. Le premier caractère du mot $s_i$ mappé par $mphf$ dans le tableau $T$ est donné par la formule suivante : \\

$Pos(s_i) = \sum_{j=0}^{j < i} |s_j|$ \\

Dans le but de trouver la position d'un mot dans $T$, on utilise la structure de données prefix-sum qui donne pour n'importe quelle position $i$ la somme des longueurs de tous les mots mappés dans $T$ par $mphf$ avant la position $i$.\\

\paragraph{La construction de la $mphf$ :} soit un nombre premier $P$ tel que $P > n\times d^{2}$, et $P > 2^{b}$ (n: le nombre total des caractères de l'ensemble $S$, d: le nombre de motifs de $S$, b: le nombre de bits d'un seul caractère).
Avant de construire la fonction de hachage parfait minimal $mphf$, on doit mapper l'ensemble de l'alphabet de l'ensemble $S$ vers des valeurs entières, chaque caractère $c$ va être mappé à une valeur unique $v$ avec $v \in [0, P-1]$.\\

La première étape dans la construction est de mapper l'ensemble $S$ de $d$ mots aux valeurs du hachage distinctes dans l'intervalle $[0, P-1]$.
Pour cela, on utilise une fonction de hachage $h$ paramétrée avec un nombre entier $t$ aléatoirement choisi tel que  $t \in [0, P-1]$.

On calcule la valeur de hachage pour un mot $s$ en utilisant la formule suivante :\\

$h(s) = (s[1] \otimes t) \oplus (s[2] \otimes t^{2}) \oplus ... \oplus (s[m] \otimes t^{m})$\\

où l'addition et la multiplication sont fait modulo $P$, (les caractères d'une chaine $s$ sont considérés comme des nombres entiers dans l'intervalle $[0, P-1]$).

Une fois qu'on a calculé les valeurs de hachage liées à tous les mots de $S$, on vérifie si on a des collisions entre les valeurs du hachage générées pour les clefs (les mots) de l'ensemble $S$.
Si c'est le cas, on répète le calcul de l'ensemble des valeurs de hachage en utilisant une nouvelle valeur $t$ aléatoirement choisie. Ce traitement est répété jusqu'à obtenir un ensemble sans collision.
Le temps total prévu pour générer les valeurs de hachage \textbf{sans collisions} est $O(n)$.

Une fois que l'on a mappé tous les mots de $S$ aux nombres distincts, on utilise ces nombres comme clefs pour construire la $mphf$. On utilise une fonction de hachage qui mappe ces $d$ valeurs vers un tableau de taille de $d$ éléments.

\paragraph{Comment stocker les mots dans le dictionnaire :}
Le détail final dans la construction du dictionnaire exact basé sur $mphf$, est comment stocker les mots dans le dictionnaire.

Les mots courts et longs sont traités différemment. Les mots courts (de longueur $m \leq w$) sont stockés tels quels sans modification dans le dictionnaire.

Pour un mot long $s$ de taille $m>w$, on stocke un mot modifié $s'$ de longueur $3m$ caractères. On divise le mot $s'$ en trois parties consécutives $s^{'}_{1}$, $s^{'}_{l}$ et $s^{'}_{r}$, chacun de longueur de $mb$ bits.

Le mot $s'_1$ va contenir une copie du mot $s$. Il reste les deux parties $s'_l$ et $s'_r$, on va expliquer comment les composer.

À cet effet, premièrement on met une valeur $u= \lceil log(n)/b \rceil$ qui représente un bloc (on peut le considérer comme un mot mémoire).

On considère le mot $s'_l$ et $s'_r$ comme des tableaux de $m' = \lfloor m/u \rfloor$ éléments avec $ub$ bits pour chacun (noter que $log(n) \leq ub < log(n)+b$ ), on ignore les bits de remplissage (le dernier $mb-m'ub$ bits).
Donc chaque tableau va être décomposé à un nombre de cases $m'$, tel que $m'$ représente le nombre de blocs ou les parties dans le mot $s$, (si on considère que $u=4$ caractères et on a un mot de 12 caractères, alors on aura 3 partie, $m'=3$ ). Les éléments du tableau de $s'_l$ et $s'_r$ sont numérotés en commençant par $1$.

On met $s'_l[i]$ à la valeur $L[ui]$ (l'identifiant du n\oe{}ud atteint à l'étape $u \times i$ en traversant le Trie $Tr$ pour le mot $s$), et on met $s'_r[i]$ la valeur $R[ui]$ (l'identifiant du n\oe{}ud atteint à l'étape $u \times i$ en traversant le Trie inversé $\overline{Tr}$ pour le mot $\bar{s}$) ($i \in [1..m']$). Cela signifie que dans chaque case on met l'identifiant correspond au longueur de préfixe du mot $i \times u$ (exemple: soit un mot de longueur 12 et $u=4$, donc lorsque $i=1$, on met l'identifiant des 4 premiers caractères de mot, ensuite, $i=2$, donc la partie avec 8 caractères, $i=3$ donc la partie avec 12 caractères).

Bien évidemment, on stocke les mots (les mots courts non modifiés, et les mots longs qui sont modifiés) dans un tableau contigu dans l'ordre donné par la $mphf$, et on utilise la structure de données prefix-sum pour stocker l'emplacement de chaque mot.

\paragraph{Utiliser la structure de données prefix-sum :} 
Tous les mots du dictionnaire sont rangés dans un tableau contigu l'un après l'autre selon l'ordre donné par $mphf$. Chaque mot court $i$ occupe juste sa taille qui est $|m_i|$, et chaque mot long $j$ occupe trois fois sa taille $3 \times |m_j|$. 

Noter que les mots courts et les mots longs peuvent avoir des longueurs différentes. Cela veut dire que le tableau qui stocke ces mots n'est pas divisé en blocs de même taille. Le problème est comment ranger ces mots dans le tableau, et le plus important comment les retrouver par la suite?
Pour cela, on utilise une structure de données supplémentaire appelée \textbf{prefix-sum} dans le but de garder la position donnée au mot par $mphf$ et retrouver sa position dans la mémoire.

On utilise un tableau intermédiaire $Tab\_1$ avant de créer le tableau prefix-sum. Pour chaque mot on récupère sa position depuis le tableau de hachage parfait minimal calculé précédemment, et on met sa taille dans la case correspondante à l'indice donnée par $mphf$ dans $Tab\_1$.

Comme cela, la position du mot $s_i$ dans la mémoire est après tous les mots ($s_j$, avec $j \in [1..i-1]$) précédents, donc la somme de toutes les tailles des mots précédents dans le tableau. Donc $Pos(s_i) = \sum_{j=0}^{j < i} |s_j|$.

On construit le tableau prefixe-sum  qui va contenir pour chaque case de $Tab\_1$ la somme de toutes les tailles précédentes.\\

Lorsqu'on cherche un mot $q$, on calcule son hachage parfait minimal pour trouver la position dans le tableau prefix-sum, ensuite, on récupère sa position dans la mémoire. La taille de mot dans la mémoire est la différence entre les deux cases successives de tableau prefix-sum, si l'indice de mot est $i$ donc sa taille est le contenu de la case $i+1$ moins le contenu de la case $i$.


\subsubsection{Le dictionnaire des listes de substitution}
\label{sub:djamal_dictionnaire_des_listes_de_substitution}

Le but de cette structure de données est de réduire le nombre de possibilité des mots candidat pour l'insertion et la substitution de $O(m\sigma)$ à juste $O(m)$ avec $m$ est la taille de mot requête $q$, donc on élimine toutes les combinaisons des caractères possibles de la solution naïve qui est $\sigma$ de chaque position.

Soit un ensemble $S^{'} = \{x_1, x_2,..., x_{d'}\}$ de $d'$ clés et de $n'$ caractères au total. Soit la collection non vide $L = \{l_1, l_2,..., l_{d'}\}$ où chaque $l_i$ est une liste d'éléments (caractères) qu'on l'associe avec une clef $x_i$ de $S'$. Le nombre total d'éléments stocker dans toutes les listes est $n'$ où chaque élément est de taille $b$ bits.

Le but est de construire une structure de données de taille $O(n'b)$ bits qui supporte les opérations suivantes dans un temps constant :

\begin{itemize}

\item $list\_element(x, i)$, cette opération retourne l'élément numéro $i$ (les éléments de la liste sont numérotés à partir de 1) de la liste associer avec la clef $x$ si $x \in S'$. L'élément retourné est non défini si $x \notin S'$.

\item $size\_of(x)$, cette opération retourne le nombre d'éléments dans la liste associée avec $x$ si $x \in S'$. Le résultat de l'opération est non défini si $x \notin S'$.

\end{itemize}

La solution est basée sur la structure de données prefix-som et le hachage parfait minimal, et elle supporte les deux opérations précédentes dans un temps constant. La solution a une certaine similarité avec la solution du dictionnaire exact avec hachage parfait minimal décrite précédemment.

L'idée est de stocker tous les caractères possibles qui peuvent donner des solutions approchées pour l'insertion et/ou la substitution dans une position d'erreur $i$. 
Soit un mot $x$ du dictionnaire qui a un préfixe $u$ et un suffixe $v$ et un caractère au milieu $c$, donc $x=u c v$. Si on suppose qu'on a un mot requête $q = u \phi v$, et $\phi$ la position de l'erreur (on prend le cas de la substitution). Le mot $x = u c v$ est une solution approchée, on substitue le caractère $c$ par $\phi$. Ainsi, on a juste un seul caractère possible pour cette position. Si dans le dictionnaire on a un autre mot de la même forme $y = u c_2 v$, alors ce mot aussi représente une $2\ieme{}$ solution (on substitue le caractère $c_2$ par $\phi$), dans ce cas on a deux caractères qui donnent deux solutions. Il y a des cas où on a juste une solution, et des cas où on trouve plusieurs solutions. Ces caractères vont être stockés dans un tableau contigu, l'ordre est donnée par une fonction de hachage parfait minimal, les caractères qui donnent des solutions à la même requête dans la même position (comme l'exemple précédent $c$ et $c_2$) sont rangés dans la même place et donc il vont représenter une liste de solutions. Pour trouver ces caractères lors de la recherche, on utilise la structure de données prefix-sum afin de trouver la position dans la mémoire donc le début et la fin de chaque liste de caractères qui donne des solutions pour la même position.\\

La première étape consiste à construire la fonction de hachage parfait minimal sur l'ensemble $S^{'}$. On note par $x_i$ l'élément mappé avec $mphf$ de $S^{'}$ vers la position $i$. On stocke les éléments de chaque liste d'une façon contiguë dans un tableau $T'$. La position d'élément numéro $j$ de la liste qui est associée à la clef $x_i$ est donnée par :\\

$Pos(i, j) = (\sum_{k=0}^{k < i} |x_k|) + j$ \\

Le parcours de la liste liée à une clef $q$, commence par le calcul de $mphf$ qui donne un numéro $i$. Ensuite, on interroge la structure de données prefix-sum qui donne la position dans la mémoire $pos_i$. L'élément numéro $j$ de la liste associée avec la clef $q$ est dans la position $T'[pos_i +j]$.

Pour avoir le nombre d'éléments qui est dans la liste associée avec la clef $q$, on interroge simplement la structure de données prefix-sum avec les numéros $i$ et $i+1$ pour nous donner $pos_i$ et $pos_{i+1}$. Le nombre d'éléments dans la liste associée avec $q$ est simplement $T'[pos_{i+1}] - T'[pos_i]$.

\bigskip
On détaille maintenant la construction de cette structure de données. On traite successivement chaque mot $s$ de l'ensemble $S$ (soit $m$ la longueur de mot $s$). On stocke deux tableaux temporaires de nombres entiers $L[0..m]$ et $R[0..m]$ pour chaque mot, les nombres sont dans l'intervalle $[0,n]$ avec $n$ le nombre total de tous les caractères du dictionnaire.

On met $L[0] = R[0] = 0$. En premier on parcourt le Trie $Tr$ pour le mot $s$, et on stocke dans la case $L[i]$ l'identifiant du n\oe{}ud atteint à l'étape $i$ (les étapes sont numérotées à partir de $1$). 

On fait la même chose pour générer les éléments $R[1..m]$. On parcourt le Trie renversé $\overline{Tr}$ pour le mot $\bar{s}$ (l'inverse du mot $s$) et on stocke l'identifiant du n\oe{}ud atteint à l'étape $i$ dans la case $R[i]$.

Pour chaque mot $s$ de longueur $m$ pour lequel on a calculé les tableaux $L$ et $R$, on ajoute le caractère $s[i]$ à la liste correspondante à la paire $(L[i-1], R[m-i])$ pour chaque $i$, tel que $1 \leq i \leq m$.

Donc on aura pour chaque mot une liste $\{(L[0],R[m-1],s[1]), (L[1],R[m-2],s[2]), ... , (L[m-1],R[0],s[m])\}$.

Ensuite, pour chaque paire $(L[i], R[j])$ on calcule le hachage parfait minimal. Chaque caractère $s[i]$ est stocké dans la position donnée par $mphf$ de $(L[i-1], R[m-i])$.

En fait, on insère les éléments dans le dictionnaire qui est implémenté en utilisant une table de hachage dynamique temporaire et des listes liées.

Si deux ou plusieurs mots ont la même valeur paire $(L[i], R[j])$ cela vaut dire qu'ils ont le même préfixe et le même suffixe, juste le caractère à la position $i$ qui est différent (comme on a expliqué dans un exemple précédent pour $c$ et $c_2$), donc leurs valeurs de hachages sont les mêmes, donc tous les caractères liés à la paire $(L[i], R[j])$ vont être stockés dans la même position (la même liste).

Par la suite, on peut construire le dictionnaire des listes de substitution depuis la table de hachage temporaire et les listes liées en utilisant la méthode décrite précédemment dans la construction de dictionnaire exact (avec la structure prefix-sum).

On va créer un seul tableau contigu, et on stocke les listes des éléments (des caractères) liste après liste, il y a des listes qui contiennent juste un seul élément, et d'autres qui en contiennent plusieurs.
L'ordre de ces listes est donné par $mphf$ (l'ordre des listes dans le tableau de hachage dynamique temporaire).

À la fin, on construit le tableau prefix-sum, pour cela on utilise un tableau temporaire qui va contenir dans chaque case $k$, la taille de la liste numéro $k$, ensuite, on construit le tableau prefix-sum, chaque case $k$ va indiquer le début de la liste $k$ dans la mémoire (la somme de toutes les tailles des listes précédentes sachant que chaque élément occupe $b$ bits).



\subsection{La recherche du mot requête}
\label{sub:la_requete_djamal}

Étant donné un mot requête $q$ de longueur $m$. 

En premier, on calcule $\bar{q}$ qui est le mot inverse de mot $q$.

Ensuite, on calcule les tableaux $L[0..m]$ et $R[0..m]$. Pour cela, on met $L[0] = R[0] = 0$.
Ensuite, on parcourt le Trie $Tr$ pour le mot $q$ et on met l'identifiant de n\oe{}ud atteint dans l'étape $i$ dans la case $L[i]$. Si la recherche s'arrête à l'étape $i$ alors on place une valeur spéciale $\bot$ dans les cases qui reste $L[j]$ pour $j \in [i + 1, m]$. De même, on remplit le tableau $R$ en parcourant le Trie inversé $\bar{Tr}$ avec le mot $\bar{q}$ et on met l'identifiant de n\oe{}ud atteint à l'étape $i$ dans la case $R[i]$. Si la recherche s'arrête à l'étape $i$ on place une valeur spéciale $\bot$ dans les cases qui reste $R[j]$ pour $j \in [i + 1, m]$.

On prépare aussi trois tableaux en plus, notés par $A_t[0, m + 1]$, $F[0..m]$ et $G[1..m + 1]$ :
\begin{enumerate}

\item Le tableau $A_t$ stocke toutes les puissances de $t$ jusqu'à $t^{m+1}$. En premier on met $A_t[0] = 1$, ensuite, on met $A_t[i] = A_t[i - 1] \otimes t$ pour chaque $i$ dans l'intervalle $[1, m+1]$.

\item Pour générer le tableau $F$, on met en premier $F[0] = 0$, ensuite, on met $F[i] = F[i-1] \oplus (q[i] \otimes A_t[i])$ pour chaque $i$ dans l'intervalle $[1..m]$. (F : tableau qui calcule les valeurs du hachage pour les préfixes de $q$).

\item Pour générer le tableau $G$, en premier on met $G[m+1] = 0$, ensuite, on met $G[i] = (G[i+1] \oplus q[i]) \otimes t$ pour chaque $i$ dans l'intervalle $[1..m]$. (G : tableau qui calcule les valeurs du hachage pour les suffixe de mot $q$).
\end{enumerate}

On a quatre types de mots dans le dictionnaire qui pourrait correspondre au mot requête :
un mot qui est à une distance $0$ (exact), et les mots du dictionnaire qui peuvent être obtenus par l'application de l'un des trois types d'erreurs de distance d'édition (suppression, insertion, et substitution).

\subsubsection{La recherche exacte}

La recherche exacte pour les mots avec une distance d'édition $0$. On fait simplement un accès au dictionnaire exact pour le mot requête $q$. Pour cela, on calcule la valeur du hachage de mot $q$.\\
On sait que $h(q) = (q[1] \otimes A_t[1]) \oplus (q[2] \otimes A_t[2]) \dots \oplus (q[k] \otimes A_t[k]) = F[k] = G[1]$. La valeur du hachage est déjà pré-calculée, lorsque on a préparé les deux tableaux $F$ et $G$.
Ensuite, avec cette valeur de hachage on calcule la valeur de hachage parfait minimal $mphf$ qui donne la position $i$ dans le tableau prefix-sum. On récupère la position d'accès au dictionnaire exact (le tableau contigu qui stocke les mots du dictionnaire). La position se trouve dans la case de tableau prefix-sum numéro $i$. La taille de mot dans la mémoire est calculée par le contenu du tableau prefix-sum $i+1$ moins le contenu de la case à la position $i$.

Si le mot trouvé est un mot court (sa longueur est $<w$), soit ce mot court $s$. Donc on le compare directement avec le mot requête $q$ , la comparaison prend un temps de $O(1)$, car la vérification d'un seul mot mémoire $w$ ce fait d'un seul coup. Si les deux mots sont égaux ($s=q$) alors on retourne $s$ comme un résultat exact.

Dans le cas contraire, on a un mot long, sa longueur est au moins $3w$. Il faut se rappeler que dans la phase de construction, si le mot du dictionnaire $s$ est un mot court on le stocke tel quel, et si sa longueur est $> w$, on le traite et on le stocke dans un nouveau mot $s'$ qui occupe un espace de 3 fois sa taille. Dans ce cas on a un mot $s'$ qui contient trois parties $s'_1$, $s'_l$ et $s'_r$, le mot est stocké dans la première partie $s'_1$.

On peut comparer directement $q$ avec $s'_1$  dans un temps $O(m)$ et si on trouve qu'ils sont égaux on retourne $s'_1$ comme résultat exact. Mais le but est de faire cette comparaison dans un temps de $O(1)$.
Pour cela, on calcule le nombre de blocs dans le mot requête $q$ donc $l_q'= \lfloor (m-1)/u \rfloor$. Ensuite, on vérifie les conditions suivantes :

\begin{itemize}

\item La longueur de $s'$ est $3m$. Car $s'$ contient trois parties égaux. Cette vérification prend un temps de $O(1)$.

\item $s'_l[l_q'] = L[u \times l_q']$, vérifier si l'identifiant de mot $s'_1$ (pour les caractères qui constituent $l_q'$ blocs) est le même que l'identifiant de mot $q$ pour le même nombre de blocs. Les deux identifiants sont récupérés du Trie, et donc on a une égalité si et seulement si les caractères sont égaux. Cette opération prend un temps de $O(1)$.

\item $q[u \times l_q' + 1..m] = s'_1[u \times l_q' + 1..m]$, tester les caractères qui restent qui ne sont pas dans les blocs $u \times l_q'$, le nombre de ces caractères est moins qu'un bloc $u$, donc la comparaison est en $O(1)$.

\end{itemize}

Le pré-traitement des tableaux $A_t ,\, F ,\, G ,\, L ,\, R$ est en $O(m)$ chacun. Si on considère qu'on a que la recherche exacte seule, le temps de recherche est en $O(m)$.

Le temps de comparaison est en $O(1)$ car on ne compte pas le temps de pré-traitement qui est fait spécialement pour qu'on puisse faire la comparaison des différents mots candidats dans la recherche approchée en $O(1)$.

\subsubsection{la recherche approchée}

\paragraph{L'erreur de type insertion :} on commence par l'erreur de type insertion, les mots obtenus par l'insertion d'un seul caractère dans $q$.

On va faire $m + 1$ étapes pour $i \in [0, m]$. À chaque étape $i$ on fait un accès au dictionnaire des listes de substitution pour trouver la liste associée avec la paire $(L[i],R[m-i])$ si et seulement si $L[i] \neq \bot$ et $R[m-i] \neq \bot$ (on suppose qu'on a inséré un caractère après la position $i$).

Pour trouver cette liste il suffit juste de calculer le hachage parfait minimal de la paire $(L[i],R[m-i])$ pour avoir une position dans le tableau prefix-sum qui nous donne la position de la liste dans la mémoire (et on déduit aussi sa taille).

Pour vérifier la validité de la liste retournée, on doit seulement vérifier son premier élément. Si le premier élément est valide, alors on conclut que la liste existe vraiment et que tous les éléments restants sont également valides.

Soit $c$ le premier élément (caractère) de la liste associée avec la paire $(L[i],R[m-i])$.
On doit faire un accès au dictionnaire pour vérifier l'existence de mot $q' = q[1..i] \,c\,q[i+1..m+1]$. Les étapes sont les mêmes que celles de la recherche exacte.

On a $h(q') = F[i] \oplus  (c \oplus G[i + 1]) \otimes A_t[i+1]$. On utilise cette valeur de hachage $h(q')$ afin de calculer le $mphf$, ensuite, on utilise la position retournée par $mphf$ pour accéder au dictionnaire à travers le tableau prefix-sum.

Si le mot retourné est un mot court (donc $s$), on peut directement le comparer à $q'$ dans un temps de $O(1)$, et on dit qu'on a trouvé une solution dans le cas où ils sont égaux.

Si c'est un mot long (donc $s'$ de longueur au moins $3w$), on divise $s'$ en trois parties égaux $(s'_1), (s'_l)$ et $(s'_r)$.
Ce dont on a besoin maintenant est de pouvoir comparer les mots $s'_1$ et $q'$, on peut le faire d'une façon naïve dans un temps $O(m)$. Cependant, on peut comparer $s'_1$ et $q'$ dans un temps de $O(1)$.

Pour pouvoir comparer $s'_1$ et $q'$ dans un temps constant, on utilise les parties $s'_l$ et $s'_r$ qui sont considérées comme des tableaux d'éléments de longueur $u \times b \geq log(n)$ bits chacun.  $u= \lceil log(n)/b \rceil$, $u$ représente un bloc qu'on peut vérifier d'un seul coup (un bloc de $u$ caractères représente un mot mémoire $w$).

On initialise la variable $l_q'$ par le nombre de blocs qu'on a dans le préfixe $q'[0..i]$ ($l_q' = \lfloor i/u \rfloor $), et de même, on initialise $r_q'$ par le nombre de blocs qu'on dans le suffixe $q'[m-i..m]$ ($r_q'= \lfloor (m-i)/u \rfloor $).

On retourne un match si et seulement si toutes les conditions suivantes sont réunies :

\begin{itemize}

\item La longueur de $s'$ est $3(m + 1)$, donc les tailles du mot requête et $s'$ sont les mêmes. La taille du mot requête est $m$, on met +1 car on a une insertion.
 
\item $l_q' = 0$, dans le cas où l'insertion est au début donc on a que la partie suffixe après l'insertion, \\
ou\\
$s'_l[l_q'] = L[u \times l_q']$\footnote{D.Belazzougui dans son article  \cite{Be09}, à fait une petite erreur d'indice dans la comparaison, il compare  $s'_l[l_q'] = L[l_q']$ au lieu de $s'_l[l_q'] = L[u \times l_q']$ (donc il a mis $L[l_q']$ au lieu de $L[u.l_q']$). Car dans le tableau $s'_l$ on a les identifiants multiples de $u$ (donc $i\, \times \,u$), alors que dans le tableau $L$ on a les identifiants de chaque position $i$.}.
Ce test est utilisé afin de vérifier si le préfixe avant la position de l'insertion ($u \times l_q'$ caractères de $s'_1$) et le préfixe de $q'$ sont les mêmes.  Ce test prend un temps de $O(1)$.
Il faut se rappeler que $s'_l$ est un tableau qui contient les identifiants des parties multiples de $u$ de $s'_1$ dans le Trie, et $L$ aussi contient tous les identifiants de mot requête pour chaque position, mais qui ne sont pas multiples de $u$.

On peut facilement voir que $s'_l[l_q'] = L[u \times l_q']$ est vrai si et seulement si l'identifiant retourné du Trie $Tr$ est identique pour le premier préfixe ($u \times l_q'$ caractères) de $s'_1$ et le premier préfixe $u \times l_q'$ de $q$, cela est vrai seulement si leurs caractères de ces deux parties sont les mêmes.

\item $q[u \times l_q' + 1..i] = s'_1[u \times l_q' + 1..i]$. Ce test se fait dans un temps de $O(1)$ car on compare deux mots qui ont une longueur au max $(u-1)b = O(w)$ bits.
Ce test est fait pour couvrir la partie des caractères qui reste jusqu'à la position $i$ et qui ne rentre pas dans les blocs $u \times l_q'$.\\
Exemple : si on a un préfixe de taille $10$ et $u=4$ , dans ce cas $l_q'=10/4=2$, donc on a 2 blocs qui couvrent 8 caractères $(u \times l_q'= 4 \times 2)$, ces blocs on les teste par $s'_l[l_q'] = L[u \times l_q']$, et il reste 2 caractères, donc on les teste séparément.

\item $s'_1[i + 1] = c$. Ce test prend clairement un temps de O(1). (le test de caractère  de l'insertion).

\item $q[i + 1..m - u \times r_q'] = s'_1[i + 2..m + 1 - u \times r_q']$. Ce test prend un temps de O(1) car on compare deux mots de taille $O(w)$ bits. Le test des caractères qui ne rentre pas dans les blocs de suffixe de $[m-i..i]$.

\item $r_q' = 0$ si l'insertion est à la fin, donc on a que le préfixe,\\
ou\\
$s'_r[r_q'] = R[u.r_q']$. Ce test est fait pour vérifier si les derniers $u \times r_q'$ caractères de $s'_1$ et $q$ sont les mêmes. Ce teste prend un temps de $O(1)$.
$s'_r[r_q'] = R[u \times r_q']$ est vrai si et seulement si l'identifiant retourné du Trie inversé $\overline{Tr}$ est identique pour les derniers $u \times r_q'$ caractères (suffixe) de $s'_1$ et les derniers $u \times r_q'$ caractères de $q$, et cela est vrai seulement si leurs caractères sont les mêmes.

\end{itemize}

\medskip
Noter qu'à chaque étape $i$, on a besoin de vérifier seulement le mot obtenu en insérant à la position $i$ le premier caractère de la liste retournée par le dictionnaire des listes de substitution.

Si on a une égalité, on peut continuer à récupérer les mots obtenus en insérant les caractères restants de la liste à la position $i$, et on est sûr d'avoir une égalité pour les mots obtenus.
 
La récupération de chaque élément supplémentaire de la liste prend un temps de O(1).

La vérification pour les deux types d'erreurs (la substitution et la suppression) prend aussi un temps de $O(m)$. La procédure pour vérifier ces erreurs est semblable à la procédure pour vérifier l'insertion. L'explication est décrite dans le paragraphe suivant.
La récupération des mots valides pour chaque erreur prend un temps supplémentaire de O(1) par élément. Ainsi le temps total de la requête est de $O(m + occrs)$.

\paragraph{L'erreur de type substitution :}

Le cas de la substitution est très semblable à celui de l'insertion.
En premier, on récupère une liste de caractères depuis le dictionnaire des listes de substitution pour chaque paire $(L[i-1],R[m-i])$ pour une position donnée $i \in [1, m]$, avec $L[i-1] \neq \bot$ et $R[k-i] \neq \bot$.

Soit $c$ le premier élément de la liste associé avec $(L[i-1],R[m-i])$, et soit $q' = q[1..i-1]\,c\,q[i + 1..m]$. On a $h(q') = F[i-1] \oplus (c \oplus G[i + 1]) \otimes A_t[i]$. 

On fait un accès au dictionnaire en utilisant $h(q')$. Si le mot retourné est un mot court (donc $s$), on le compare directement avec $q'$ en $O(1)$. Autrement (on a un mot long $s'$), on décompose $s'$ en trois parties égaux $s'_1$,$s'_l$ et $s'_r$. Et on compare $s'_1$ avec $q'$. Pour cela, on initialise les deux variables $l_q' = \lfloor (i-1)/u \rfloor $, et $r_q'= \lfloor (m-i)/u \rfloor $. On retourne une égalité si et seulement si les conditions suivantes sont réunies :

\begin{itemize}

\item La longueur de $s'$ est $3m$.

\item $l_q' = 0 \vee s'_l[l_q'] = L[u \times l_q']$.

\item $q[u \times l_q' + 1..i - 1] = s'_1[u \times l_q' + 1..i - 1]$.

\item $s'_1[i] = c$.

\item $q[i + 1..m - u \times r_q'] = s'_1[i + 1..m - u \times r_q']$.

\item $r_q' = 0 \vee s'_r[r_q'] = R[u \times r_q']$.
\end{itemize}

Il est clair que la vérification de chaque condition prend un temps de $O(1)$. Pour plus de détail sur ces conditions, voir la partie précédente (type d'erreur insertion).

De même dans le cas de l'insertion, on doit faire la vérification seulement pour le premier élément de la liste. Si on n'a pas une égalité pour le premier élément, on conclut que la liste n'existe pas et on arrête immédiatement. Autrement, on retourne le premier élément et on continue à récupérer les éléments restants de la liste sans les vérifier dans un temps $O(1)$ par élément, et on retourne chaque élément comme une égalité (on fait la substitution dans la position $i$ pour chaque élément).

\paragraph{L'erreur de type suppression :}

Les mots obtenus en supprimant un caractère de $q$. Dans ce cas, on essaye d'accéder au  dictionnaire exact avec chaque mot candidat possible qui peut être obtenu en supprimant un des caractères de $q$ à la position $i$. Cela revient à faire une recherche exacte pour chaque mot candidat. On a exactement $m$ mots candidats, qui signifie qu'on doit passer seulement un temps de $O(1)$ pour chaque accès pour vérifier s'il y a une égalité ou non.

Si on souhaite accéder au dictionnaire pour le mot $q'$ obtenu en supprimant le caractère numéro $i$ de $q$ ($q'= q[1..i-1]q[i + 1..m]$), on fait les étapes suivantes :

En premier, on vérifie que $L[i-1] \neq \bot$ et $R[m-i] \neq \bot$. Si ce n'est pas le cas, on conclut immédiatement qu'on a pas une égalité pour le mot $q'$.

Dans le cas contraire, on calcule la valeur de hachage $h(q') = F[i-1] \oplus G[i+1] \otimes At[i-1]$ dans un temps constant. Ensuite, on calcule la valeur $mphf$. En utilisant la position retournée par $mphf$, on fait la requête au dictionnaire exact, qui va retourner le mot $s$ si c'est un mot court, ou $s'$ dans le cas contraire.

Si le mot est un mot court, donc on peut faire la comparaison directement de $s$ et $q'$ dans un temps de $O(1)$ (avant de comparer les mots, on compare leur longueur) et on retourne une solution si les deux mots sont égaux.

Si le mot est long, on divise $s'$ en trois parties égaux $s'_1, s'_l$ et $s'_r$. On retourne une solution si $s'_1 = q'$. Pour faire cette vérification on met $l_q' = \lfloor  (i-1)/u \rfloor$ et $r_q' = \lfloor (m-i)/u \rfloor$, et on vérifie toutes les conditions suivantes :

\begin{itemize}

\item La longueur de $s'$ est $3(m-1)$.

\item $l_q' = 0 \vee s'_l[l_q'] = L[u \times l_q']$.

\item $q[u \times l_q' + 1..i - 1] = s'_1[u \times l_q' + 1..i - 1]$.

\item $q[i + 1..m - u \times r_q'] = s'_1[i..m - 1 - u \times r_q']$.

\item $r_q' = 0 \vee s'_r[r_q'] = R[u \times r_q']$.

\end{itemize}

\subsubsection{Le comptage du nombre des solutions d'une requête}

Le comptage du nombre d'occurrences de la solution peut être fait dans un temps de $O(m)$. Ceci est fait par la somme des mots solutions de chaque type d'erreur.
Pour la suppression, on peut avoir au maximum $m$ mots solutions, et la vérification de chaque mot candidat prend un temps de $O(1)$. Pour l'insertion et la substitution on a besoin de $m + 1$ et $m$ étapes respectivement.
Dans chaque étape on a besoin seulement de vérifier le premier élément de la liste retournée par le dictionnaire des listes de substitution, et si cet élément est valide, on ajoute la taille de la liste (obtenue par l'utilisation de l'opération $size\_of$) au nombre total des mots solutions $occrs$.

\subsection{La recherche approchée dans un texte indexé avec une distance d'édition $1$}

La structure de données de la recherche approchée dans le dictionnaire avec une distance d'édition "1" peut-être utilisée pour avoir une solution pour la recherche approchée dans le texte. Pour cela on combine cette solution avec celle décrite dans \cite{CGL04}. Donc on construit l'index de \cite{CGL04}.

On sélectionne toutes les sous-chaines de texte de longueur jusqu'à $log(n) log log(n)$ caractères, et on les stocke dans la structure de données (dictionnaire exact + dictionnaire des listes de substitution) qui va stocker au plus $n(log(n) log log(n))$ chaines de caractères.
Les chaines qui apparaissent plus qu'une seule fois dans le texte sont stockées seulement une fois dans la structure de données.
On associe avec chaque chaine (dans le dictionnaire exact) en tant que données satellites\footnote{Données satellites : des données supplémentaires qui ne sont pas une partie de la structure de données.} dans le dictionnaire un pointeur qui pointe vers un vecteur qui stocke tous les positions où la chaine apparaît dans le texte.
  
Chaque pointeur de vecteur occupe $log(n)$ bits, et chaque chaine occupe au maximum  $log(n) log log(n)$ caractères, cela rend l'espace total utilisé par le dictionnaire $O(n(log(n) log log(n))^{2}b)$ bits.

On peut stocker tous les vecteurs qui contiennent les positions des chaines dans un seul tableau contigu de $O(n log(n) log log(n))$ pointeurs, qui occupe un espace total de
 $O(n log(n)^{2} log log(n))$ bits. On fait la somme de l'espace utilisé par tous les composants de la structure de données , on obtient un espace total de $O(n(log(n) log log(n))^{2}b)$ bits.\\

Pour faire une requête avec une chaine de caractères $q$, on vérifie simplement si $|q| > log(n) log log(n)$ dans ce cas on utilise la structure de données de \cite{CGL04} pour répondre à la requête. Dans le cas contraire, on utilise cette structure de données pour répondre à la requête.

\section{La méthode de Amir et al.}

Amir \textit{et al.} \cite{Amir2000} proposent une solution pour résoudre le problème de la recherche approchée avec $k=1$ erreur dans un dictionnaire et dans un texte avec l'utilisation du Trie et du Trie inversé. Avec une seule erreur on a deux parties exactes dans le mot requête, ces deux parties sont vérifiées avec les deux Tries, ensuite une étape de vérification est faite afin de trouver les solutions approchées. Ils obtiennent un résultat dans un temps de pré-traitement de $O(n\,log^{2}n)$ où $n$ est la taille du texte dans le problème de l'indexation, et la taille de dictionnaire dans le problème de la recherche dans un dictionnaire. Pour le temps de la requête, ils obtiennent un temps de recherche de $O(m\,log^{2}n\,log\,log\,n + occrs)$ dans l'indexation de texte où $m$ est la longueur de mot requête et $occrs$ est le nombre des occurrences de la solution. Pour la recherche dans un dictionnaire ils obtiennent un temps de $O(m\,log^{3}d\,log\,log\,d + occrs)$ avec $n$ est la taille de texte et $d$ est la taille du dictionnaire.

Nous commençons par expliquer la construction de la structure de données bidirectionnelle (Trie et Trie inversé), ensuite, nous donnons l'algorithme de la recherche pour un mot requête de longueur $m$.

\subsection{La construction de la structure de données}

La construction de la structure de données se fait comme suit :

\begin{enumerate}
\item Construire le Trie du dictionnaire et enregistrer les numéros des mots dans les feuilles.
\item Construire le Trie inversé du dictionnaire et enregistrer les numéros des mots dans les feuilles.
\item Dans chaque structure de données, relier toutes les feuilles de la gauche vers la droite, pour obtenir une liste de numéros de mots.

\item Enregistrer dans chaque n\oe{}ud le pointeur de la feuille la plus à gauche et la plus à droite, donc le début et la fin de la liste des sous-listes qui contiennent tous les enfants sortant de ce n\oe{}ud.

\end{enumerate}

La construction de chaque Trie prend $O(n)$, où $n$ est le nombre total de caractères dans le dictionnaire.

\subsection{La recherche de mot requête }

Soit un mot requête $q=c_1\, c_2 \, ... \,c_{m}$ de longueur $m$.

La étapes pour rechercher toutes les solutions approchées sont comme suit :\\

\noindent
\textbf{Pour $i=1,...,m$  faire :}\\
 $i$ représente la position de l'erreur($\phi$) $q'=c_1\, c_2 \, ... \,c_{i-1} \, \phi \, c_{i+1}\,c_{i+2}\,...\,c_m$.

\begin{enumerate}
\item Trouver le n\oe{}ud $v$, qui représente l'emplacement de  $\{c_1,c_2,...,c_{i-1}\}$ dans le Trie, si ce n\oe{}ud existe.

\item Trouver le n\oe{}ud $w$, qui représente l'emplacement de  $\{c_{i+1},c_{i+2},...,c_m\}$ dans le Trie inversé, si ce n\oe{}ud existe.

\item Si $v$ et $w$ existent, alors trouver l'intersection entre les deux sous-listes enracinés par les deux n\oe{}uds, (l'intersection des numéros qui sont stockés dans les feuilles des deux n\oe{}uds $v$ et $w$).
\end{enumerate}

\paragraph{Comment faire l'intersection :}
Pour une solution efficace au problème de l'intersection entre les deux ensembles des nombres qui sont dans les feuilles de deux n\oe{}uds $v_{err}$ et $w$,
Amir \textit{et al.} utilisent la technique expliqué dans \cite{overmars1988efficient}.

\chapter{La recherche approchée dans un dictionnaire en utilisant des tables de hachage}

\label{chap:recherche_approchee_avec_hachage}

\ifpdf
    \graphicspath{{Work_with_djamal/}}
\else
    \graphicspath{{Work_with_djamal/}}
\fi


\section{Introduction}
\label{sec:Introduction}

Le problème de la recherche approchée dans les dictionnaires est largement étudié. Le problème est défini comme suit:
Étant donné (comme entrée donnée à l'avance) un dictionnaire $D=\{x_1,x_2,\ldots x_d\}$ avec $d$ mots de longueur totale $n$ sur un alphabet $\Sigma$ de taille $\sigma$~\footnote{Pour des raisons pratiques, on suppose que l'alphabet est le domaine entier $[1..\sigma]$.} et un seuil $k$. On veut construire une structure de données sur $D$ d'une manière à être en mesure de répondre aux requêtes suivantes: étant donné un mot requête $q$ de longueur $m$, retourner tous les mots du dictionnaire à une distance d'édition au plus $k$ du mot requête $q$.

Dans ce chapitre, nous nous intéressons aux algorithmes pratiques pour la recherche approchée des motifs dans un dictionnaire avec un nombre d'erreurs d'édition $k \geq 2$. 

Nous avons proposé une méthode qui utilise une structure de données basée sur le hachage avec sondage linéaire et des signatures de hachage associées afin d'optimiser le temps de recherche. Pour le dictionnaire exact, nous utilisons une structure de données basée sur plusieurs tables de hachage selon la longueur des mots. 

Dans notre travail nous utilisons l'idée de dictionnaire des listes de substitutions qui a été proposée dans~\cite{Be09}. Pour plus de détail voir dans le chapitre état de l'art, la sous-section \ref{sub:djamal_dictionnaire_des_listes_de_substitution}.

Notre solution pratique utilise un espace mémoire de $O(n\log\sigma)$ bits, et un temps de requête proche de $O(m+occrs)$ et plus exactement en $O(m\left\lceil\frac{m\log\sigma}{w}\right\rceil)$ où $w$ représente la taille d'un mot mémoire, et $occrs$ est le nombre d'occurrences trouvés. Le temps de construction est de $O(n)$. Nous montrons que la performance de notre solution est en $O(m\left\lceil\frac{m\log\sigma}{w}\right\rceil)$ sous des hypothèses qui sont susceptibles d'être vérifiées en pratique.
Nous vérifions expérimentalement que notre solution est compétitive avec les solutions précédemment proposées et que nos hypothèses sont réalistes (elles sont généralement vraies en pratique).

\medskip
\paragraph{Le reste de ce chapitre est organisé comme suit : }
Dans la section \ref{sec:structure_de_donne}, nous détaillons les étapes de la création de notre structure de données. Dans la section \ref{sec:Verification_des_Occurrences}, nous expliquons comment faire la recherche exacte et approchée avec k=1 erreur. Dans la section \ref{sec:recapitulatif_des_performances_de_notre_structure_de_donnees}, nous donnons un récapitulatif des performances de notre structure de données.
Dans la section \ref{sec:Extension_deux_erreurs_ou_plus}, nous expliquons comment étendre notre algorithme pour fonctionner avec deux erreurs ou plus.
La section \ref{sec:hash:experimentation} est dédiée aux tests et aux expérimentations. 
Dans la section \ref{sec:hash:application_notre_methode_dans_texte}, nous expliquons comment adapter notre méthode dans l'indexation de texte.
La dernière section \ref{sec:auto:conclusion} conclut ce chapitre.

\section{La structure de données}
\label{sec:structure_de_donne}

Dans notre travail, nous utilisons seulement deux ($2$) structures de données afin de minimiser le nombre d'accès mémoire. 
\begin{enumerate}
\item Un dictionnaire exact basé sur le hachage. 
\item Un dictionnaire des listes de substitutions implémenté avec une structure de données basée sur le hachage avec sondage linéaire et des signatures de hachage.
\end{enumerate}

Dans le but d'optimiser l'espace mémoire, nous compressons ces deux structures de données, en utilisant pour cela une autre structure de données (le vecteur des bits). La version non compressée est progressive, donc on peut rajouter des nouveaux mots au dictionnaire (sous réserve de ne pas dépasser la capacité maximale).\\

Dans notre travail, nous utilisons le hachage avec sondage linéaire~\cite{Knuth63noteson} pour implémenter les deux dictionnaires (index) exact et approché.
La raison pour laquelle nous utilisons ce type de hachage est qu'il a été démontré qu'il est parmi les méthodes de hachage les plus pratiques.
La bonne performance est essentiellement due a la bonne localité des accès mémoire~\cite{HL05}.

Tous nos tableaux de hachage avec sondage linéaire sont paramétrés avec un facteur de chargement (LF: LoadFactor) $\alpha<1$. Pour plus de simplicité, nous utilisons le même paramètre pour tous les tableaux de hachage.

Dans notre travail, nous utilisons une fonction de hachage polynomiale, la fonction de hachage de Rabin-Karp~\cite{KR87}) modulo un nombre premier $P$.
Les valeurs de hachage sont calculées modulo $P = 2^{32}-5$, le plus grand nombre premier qui est plus petit que $2^{32}$.
La fonction de hachage est très simple, on utilise un nombre $t$ choisi aléatoirement dans l'intervalle $[1..P-1]$.
La valeur de hachage pour une chaîne de caractère $x$ est calculée par la formule suivante: $h(x)=\sum_{i=1}^{m}x_i\cdot t^i$ \footnote{Il y a une petite erreur dans notre article \cite{ibra.Chegrane.simple}, nous avons mis $m$ à la place de $i$ et donc la formule suivante $h(x)=\sum_{i=1}^{m}x_m\cdot r^m$ est fausse.}.

La fonction de hachage polynomiale a comme propriété intéressante d'être incrémentale. On peut dans un temps $O(m)$ pré-traiter la chaîne $x$, en calculant un certain nombre de vecteurs d'entiers en temps $O(m)$, de telle sorte que le calcul de la valeur de hachage de n'importe quelle chaîne à une distance $1$ de $x$, prend un temps constant (en utilisant un nombre constant d'additions et de multiplications modulo le nombre premier $P$).
Voir les détails dans l'état de l'art dans la sous-section~\ref{sub:la_requete_djamal}, et dans ce chapitre section \ref{sec:Verification_des_Occurrences}.\\

Notre algorithme permet de traiter deux erreurs d'édition ou plus ($k \geq 2$). Dans les prochaines sous-sections, on donne plus de détails sur notre structure de données.

\subsection{Dictionnaire exact}
\label{sub:Dictionnaire_exact}

Le dictionnaire exact (basé sur le hachage avec sondage linéaire) est utilisé pour localiser les occurrences exactes et permet de vérifier si le motif candidat est dans le dictionnaire ou pas.

Dans notre travail, nous utilisons une seule structure de données pour la vérification exacte des mots, cette structure de données étant constituée de plusieurs tables de hachage selon les longueurs des mots du dictionnaire.

L'insertion d'un mot de longueur $m$ dans le dictionnaire exact prend un temps moyen de $O(m)$, et la recherche prend aussi un temps moyen de $O(m)$ si on applique la méthode naïve. Avec l'utilisation du \textit{bit-parallelism}, la vérification peut être améliorée en un temps de $O(\lceil\frac{m\log\sigma}{w}\rceil)$.

Nous notons que le nombre $t$ de longueurs de mots distinctes (les mots avec une longueur $l_1, l_2,\dots l_t$) dans le dictionnaire ne peut excéder $2\sqrt{n}$. Ceci peut être aisément prouvé comme suit. En premier lieu considérons 
les longueurs de mots excédant $\sqrt{n}+1$. On note que le nombre total des mots de longueur au moins égales à $\sqrt{n}$ ne peut dépasser $\sqrt{n}$, sinon, leur longueur totales serait au moins $\sqrt{n}\cdot (\sqrt{n}+1)>n$. Donc le nombre de 
longueurs distinctes supérieur à $\sqrt{n}$ ne peut dépasser $\sqrt{n}$, puisque qu'il ne peut y a voir plus de mots (de longueurs au moins $\sqrt{n}$) que de longueurs distinctes. Comme le nombre de longueurs distinctes au plus $\sqrt{n}$ est au maximum égal à $\sqrt{n}$, nous déduisons que le nombre maximal de longueurs distinctes est au plus égal à $2\sqrt{n}$.

Afin d'implémenter le dictionnaire exact, on pourrait partitionner les mots du dictionnaire en (au plus) $2\sqrt{n}$ groupes, contenant chacun des mots de même longueur, et on utilise ensuite une table de hachage séparé pour chaque groupe.

\paragraph{Le détail de notre schéma :} Pour mettre en place le dictionnaire exact, au lieu de stocker $2\sqrt{n}$ \footnote{ Dans notre article \cite{ibra.Chegrane.simple} il y a une petite erreur dans la section 3.7. Nous avons écrit $\sqrt{d}$ ($d$ le nombre total des mots) au lieu de $2\sqrt{n}$ ($n$ la somme de toutes les longueurs des $d$ mots de dictionnaire). De même pour la section 3.1, on doit mettre $2\sqrt{n}$ au lieu de $\sqrt{n}$.} différentes tables de hachage pour les $2\sqrt{n}$ groupes des longueurs de mots distincts, nous utilisons un seuil $\beta$ et on stocke $\beta-1$ tables de hachages.

Tous les mots de longueur $i<\beta$ sont stockés dans la table de hachage numéro $i-2$\footnote{Il y a une petite erreur d'indice dans notre article \cite{ibra.Chegrane.simple}, dans la section 3.7, nous avons écrit la table numéro $i$ au lieu du numéro $i-2$.} qui contient $\lceil n_i/\alpha\rceil$ cases, où chaque case est de longueur $i$ caractères.

Le mot de longueur $i$ est stocké dans le table numéro $i-2$ car on considère que le mot le plus petit est de longueur $2$. La table numéro $0$ va stocker les mots de longueur $2$ ($i=2$,donc $i-2=2-2=0$), et les mots de longueur $3$ vont être stockés dans la table numéro $1$, jusqu'à ce qu'on arrive à l'avant-dernière table dont le numéro est $(\beta-3)$.
Tous les mots de longueur $i\geq \beta$ sont stockés dans une table de hachage (la dernière table dont le numéro est $(\beta-2)$ ) qui stocke des pointeurs vers les mots originaux au lieu des mots eux-mêmes.\\

Dans notre implémentation, nous avons utilisé $\beta=16$ caractères.
Lorsque nous avons fait des calculs sur l'optimisation de l'espace mémoire, nous avons trouvé qu'il est préférable de stocker tous les mots de longueur $i \geq 16$ dans un seul tableau de pointeur. Pour plus de détails et d'explications, voir la sous-section \ref{subsub:explication_beta=16}.
 
Tous les $\beta -1$ tableaux sont gérés par le hachage avec sondage linéaire.
Chaque tableau $i-2$ à une taille multiple de la longueur $i$ des mots qu'il va stocker. Chaque élément de la table est un bloc de longueur $i$ permettant de stocker les mots de longueur $i$. Le numéro de bloc pour stocker un mot donné est calculé par une fonction de hachage. 
La taille de tableau $T$ en terme d'octets est donc $(\mathtt{NbMots\_long\_i}/\mathtt{LF})\times i$. On arrondit ensuite le résultat de la division à l'entier immédiatement supérieur. Donc la formule finale est: $T = \left\lceil \frac{\mathtt{NbMots\_long\_i}}{\mathtt{LF}} \right\rceil \times i$. La taille de la table en terme de blocs est $\mathtt{Tnb} =  \left\lceil \frac{\mathtt{NbMots\_long\_i}}{\mathtt{LF}} \right\rceil$.

La dernière table (celle contenant les pointeurs vers des mots de longueur $i \geq 16$), est constituées de cases de taille $w$, et sa taille est $T=\left\lceil \frac{\mathtt{Tab\_NbMots}[\beta-2]}{\mathtt{LF}}\right\rceil$.

La structure de données est illustrée dans la figure \ref{fig:tableau_hachage}.

\begin{figure}[h]
\centering
\includegraphics[width=0.9\linewidth]{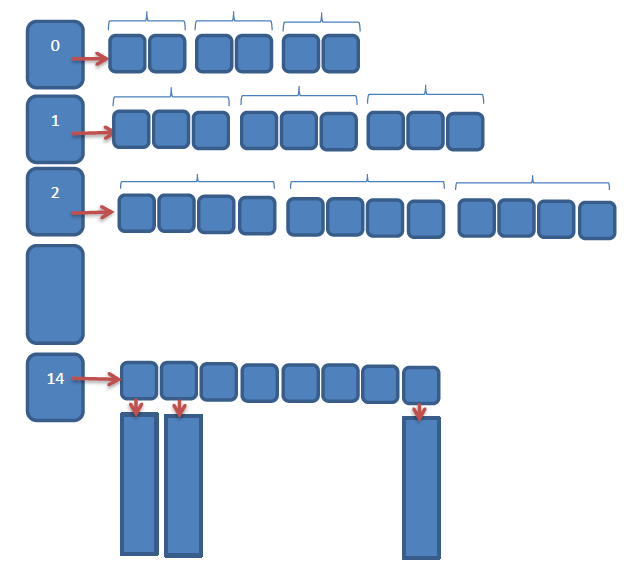}
\caption{La structure de données du dictionnaire exact. Cette structure comporte plusieurs tableaux pour stocker les mots selon leurs longueurs, le mot d'une longueur $m$ va être stocké sans son marqueur de fin de chaine dans le tableau $m-2$. Le dernier tableau stocke les mots de longueur $\geq 16$ avec leurs marque de fin. Tous tableaux sont gérés par le hachage avec sondage linéaire.}
\label{fig:tableau_hachage}
\end{figure}

La construction de la structure de données est donnée en résumé dans l'algorithme suivant:\\

\noindent
\rule{8cm}{0.1pt}\\
\textbf {Algorithme de construction de la structure de données exacte:}\\
\textbf{Entrée :} fichier qui contenant les mots de dictionnaire.\\
\textbf{Sortie :} structures de données composant le dictionnaire exact.\\

\begin{enumerate}

\item Mettre le nombre des mots de longueur $i$ dans la case $i-2$ du tableau $\mathtt{Tab\_NbMots}$, et le nombre des mots de longueur $i\geq \beta$ dans la case $\beta-2$.

\item Créer un tableau de pointeurs $\mathtt{Tab\_exact}$ de $\beta-2$ cases.

\item Pour chaque case $i$ ($i \in [2..\beta-3]$) de $\mathtt{Tab\_exact}$, créer un tableau de caractères de taille $T = \lceil \frac{\mathtt{Tab\_NbMots}[i-2]}{\mathtt{LF}} \rceil \times i$.

\item Créer un tableau de pointeurs de taille $T=\lceil \frac{\mathtt{Tab\_NbMots}[\beta-2]}{\mathtt{LF}} \rceil$, et le relier avec la dernière case $\mathtt{Tab\_exact}[\beta-2]$.

\item Pour chaque mot $x$ de longueur $i$ dans le dictionnaire:

	\begin{itemize}
	\item Calculer la valeur de hachage de $x$ donc $h(x)$. 
	
	\item Si $i<\beta$ insérer $x$ dans le tableau $\mathtt{Tab\_exact}[i-2]$, à la position $\mathtt{pos}=h(x)\bmod (\lceil \frac{\mathtt{Tab\_NbMots}[i-2]}{\mathtt{LF}}  \rceil)$ sans son marqueur de fin ($\backslash 0$).
	
	\item Si $i \geq \beta$, stocker le mot $x$ dans la mémoire avec son marqueur de fin, et garder son adresse dans le tableau pointé par $\mathtt{Tab\_exact}[\beta-2]$ à la position $\mathtt{pos}=h(x)\bmod (\lceil \frac{\mathtt{Tab\_NbMots}[\beta-2]}{\mathtt{LF}}  \rceil)$.

	\end{itemize}

\end{enumerate}
\rule{8cm}{0.1pt}\\

\subsubsection{ Explication du paramètre $\beta = 16$}
\label{subsub:explication_beta=16}

Le choix du seuil $\beta$ utilisé pour guider le choix du stockage des mots du dictionnaire (soit dans des tables de hachage stockant des caractères ou alors dans une table stockant des pointeurs vers les mots dans la mémoire) dépend de l'espace mémoire occupé par chaque structure. 
Le but étant de faire une comparaison afin de choisir le paramètres $\beta$ qui minimise l'utilisation de l'espace mémoire.

Dans une table de caractères, les mots sont stockés à l'intérieur de la table sans le marqueur de fin ($\backslash 0$). Par contre dans la table des pointeurs, les mots sont stockés dans la mémoire (avec leur marqueur de fin) et on garde leurs adresses dans la table des pointeurs.

Afin d'éviter toute confusion, on appelle \textit{longueur de mot}, le nombre des caractères du mot sans compter le marqueur de fin du mot.

Dans une table de hachage stockant des caractères, chaque mot $x$ de longueur $m$ occupe un espace mémoire de $\lceil m/\mathtt{LF} \rceil$. Lorsqu'on utilise une table de pointeurs, chaque mot $x$ occupe un espace de taille $m+1$ (la longueur du mot plus le marqueur de fin) auquel on ajoute la taille de pointeur $w$ (un mot mémoire) divisé par le taux de chargement :$ (m+1) + \lceil w/\mathtt{LF} \rceil$. Dans notre cas, on prend $\mathtt{LF} = 0,7$.

En utilisant ces deux informations, on peut trouver quelles structures utiliser, selon l'espace mémoire occupée pour le stockage des mots du dictionnaire.

On nomme la première structure (Table de caractères) \textbf{Tab\_char}, et la deuxième structure (table de pointeurs vers les mots) \textbf{Tab\_pointeur}.

Les calculs pour un seul mot montre que la structure \textbf{Tab\_char} occupe moins d'espace mémoire pour les mots de longueurs $2$ à $15$. Lorsque la longueur des mots est supérieure à $16$, la structure \textbf{Tab\_pointeur} donne de meilleurs résultats.\\

On va faire la même vérification pour les deux structures en utilisant des groupes de mots de $10$, $100$, et $1000$ mots pour toutes les longueurs, voir la figure~\ref{fig:ensemble_mots}.

\begin{figure}[h]
\centering
\includegraphics[width=0.32\linewidth]{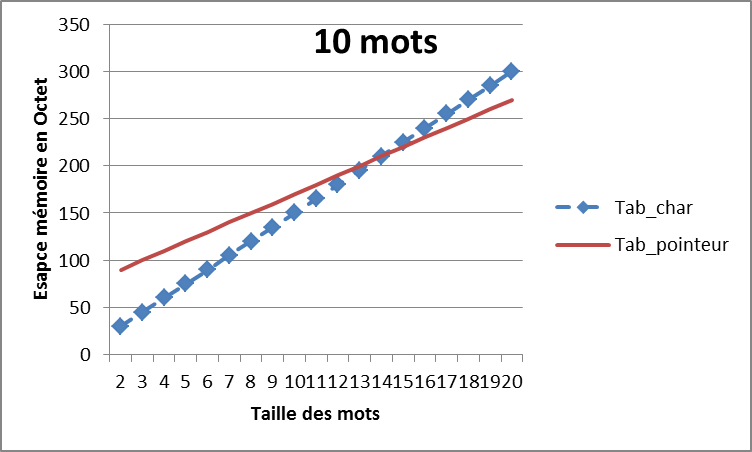}
\includegraphics[width=0.32\linewidth]{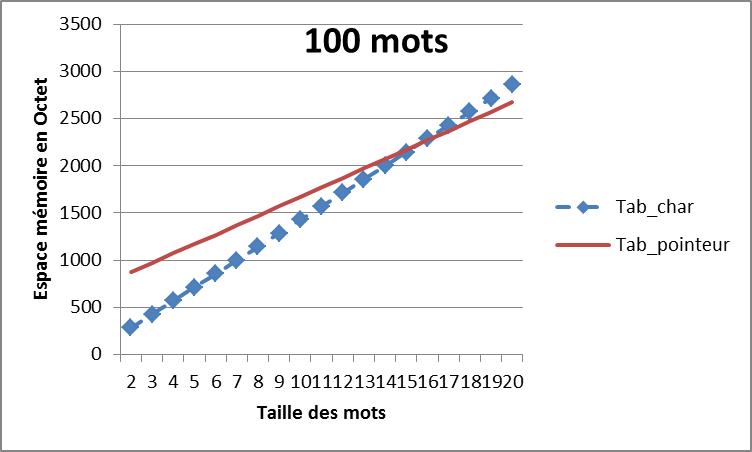}
\includegraphics[width=0.32\linewidth]{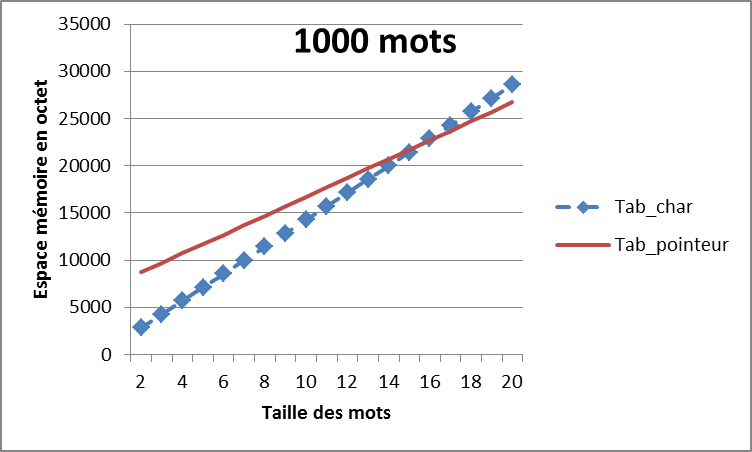}
\caption{Comparaison de l'occupation de l'espace mémoire par les deux structures de données avec différents nombres de mots.}
\label{fig:ensemble_mots}
\end{figure}

On remarque dans la figure \ref{fig:ensemble_mots}, qu'avec juste $10$ mots, la structure  $\mathtt{Tab\_pointeur}$ commence à donner des meilleurs résultats lorsque la longueur des mots devient $15$, mais il y a pas une grand différence avec la longueur $16$. Lorsque le nombre des mots est $100$, la structure $\mathtt{Tab\_pointeur}$ donne de meilleurs résultats lorsque la longueur des mots devient $16$, et c'est la même chose avec le nombre des mots égale à $1000$. Il est clair que la structure $\mathtt{Tab\_char}$ occupe moins d'espace mémoire pour les mots de longueur $2$ à $15$.

On fixe donc le seuil $\beta$ à $16$ car la structure avec pointeurs $\mathtt{Tab\_pointeur}$ donne de meilleurs résultats à partir de ce seuil.

En fin de compte, pour les mots de longueur $2$ à $15$ on utilise des tables de caractères, et pour les mots de longueur $\geq 16$, on stocke des pointeurs vers les mots stockés dans la mémoire.

\subsection{Dictionnaire des listes de substitutions}
\label{sub:dictionnaire_des_liste_de_substitution}

Pour pouvoir faire une recherche approchée, nous avons besoin d'ajouter une autre structure de données nommée \textit{dictionnaire des listes de substitutions}.

L'idée est de faire un pré-traitement, pour stocker pour chaque position donnée, la liste des caractères dont la substitution à cette position conduit à un mot existant dans le dictionnaire.

Donc, au lieu de tester pour chaque position tous les $\sigma$ caractères possibles, la structure nous donne une liste de caractères à insérer ou à substituer dans une position donnée du mot requête $q$. Cette liste contient généralement un seul caractère si on a juste une seule solution (ce qui est le cas pour la majorité des requêtes).\\

Pour implémenter le dictionnaire des listes de substitutions nous utilisons une table de hachage avec sondage linéaire, dans laquelle chaque élément est un caractère. le calcul des valeurs de hachage se base sur les mots du dictionnaire dans lesquels on substitue un caractère spécial dans la position de l'erreur.
Pour chaque mot $x_i$ de longueur $m$, et pour chaque position $j\in[1..m]$, on insère le caractère $x_i[j]$ dans la table de hachage à la position $i=h(x_{i,j})\bmod T$, où $x_{i,j}=x_i[1..j-1]\phi x_i[j+1..m]$ et $T$ est la taille de la table de hachage.
 
Tous les caractères qui appartiennent à la même liste de substitutions sont mappés vers la même position dans la table de hachage.
En d'autres termes, deux caractères $c_1$ et $c_2$ seront mappés vers une même position $i=h(p\phi v)$ dans la table de hachage si et seulement si il existe une paire de chaînes de caractères $(p,v)$, tels que les deux mots $(p c_1 v)$ et $(p c_2 v)$ existent dans le dictionnaire.

\paragraph{Diminuer les candidats avec des signatures de $r$ bits:}
Nous proposons une deuxième méthode d'implémentation du dictionnaire de listes de substitutions dans le but de diminuer le nombre des collisions.

On ajoute une signature de hachage de taille $r$ bits avec chaque caractère, pour permettre ainsi de filtrer une grande partie des caractères qui sont en collision.
La signature de hachage est calculée sur la base de la valeur de hachage de la chaîne de caractère  $x_i[1..j-1]\phi x_i[j+1..m]$, et sa longueur est seulement $r$ bits (pour un petit nombre $r$).

Quand un caractère $c$ est inséré dans la liste de substitutions, on stocke avec lui sa signature correspondante. Lors de la recherche, on considère qu'un caractère appartient à la liste de substitutions si et seulement si sa signature est la même que celle stockée dans la liste de substitutions.

\paragraph{Détail de la mise en place de la structure de données:}
Comme il a été expliqué précédemment, nous avons deux variantes pour l'implémentation de la table de hachage des caractères de substitution: celle dans laquelle chaque entrée est juste un caractère, et l'autre dans laquelle chaque entrée est un caractère et une signature de $r$ bits.

On code un caractère en utilisant un octet ($8 \: bits$), et on code la signature de hachage $r$ en utilisant $4$ bits ($r=4$). Le codage de la signature avec $4$ bits simplifie l'implémentation.

Dans la deuxième variante, la table de hachage est divisé en blocs de 3 octets. Le premier octet stocke le premier caractère, le troisième octet stocke le deuxième caractère, et le deuxième octet central stocke les 4 bits de chaque caractère. Les 4 bits de poids fort (celle à gauche) de l'octet central pour le premier caractère, et les 4 bits de poids faible (celle à droite) pour le deuxième caractère.\\

La table de hachage est constituée avec des blocs de 3 octets (soit 2 cases, deux caractères avec leurs signatures). Voir la figure \ref{fig:tab_char_4bits}.

\begin{figure}[h]
\centering
\includegraphics[width=0.9\linewidth]{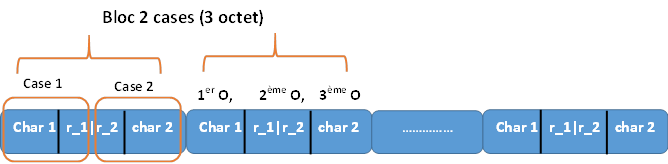}
\caption{La deuxième variante du dictionnaire des listes de substitutions.}
\label{fig:tab_char_4bits}
\end{figure}

Si le nombre des caractères est impair, donc il reste un caractère seul, on lui donne aussi un bloc de 3 octets. Au final, le tableau à une taille $T$ qui est multiple de $3$.
La taille de la table est donc $T =\left\lceil \frac{n+ \frac{n}{2}}{\mathtt{LF}} \right\rceil $, si $T \bmod 3 = 2$ alors $T$ devient $T+1$, et si $T \bmod 3 = 1$ alors $T$ devient $T+2$.\\

La signature de hachage de $r=4$ bits est constituée à partir de la valeur de hachage $h(x)$. Dans notre cas, on prend simplement les $4$ bits de poids faible de la valeur de hachage $h(x)$. Nous avons fait ce choix, car, quelle que soit la taille de la valeur de hachage (petite ou grande), généralement les bits de poids faible ne sont pas nuls, contrairement aux bits de poids fort, si la valeur de hachage est petite cela signifie qu'ils sont nuls.\\

Le calcul de la valeur de hachage pour les mots candidats générés depuis le mot $x_i$ se fait en $O(1)$. Pour cela, on fait le même pré-traitement utilisé par~\cite{Be09}. On prépare trois tableaux: 
\begin{enumerate}
\item $A_t[0, m + 1]$ pour stocker toutes les puissances de $t$ jusqu'à $t^{m+1}$ où $m$ est la longueur de mot $x_i$, et $t$ est une valeur aléatoire choisi entre $1$ et $P$, comme il est expliqué dans le début de cette section.
\item $F[0..m]$ un tableau qui stocke les valeurs de hachage pour les préfixes de mot $x_i$. 
\item $G[1..m + 1]$ un tableau qui stocke les valeurs de hachage pour les suffixes de mot $x_i$.
\end{enumerate}

Pour plus de détail sur ces trois tableaux, voir dans le chapitre état de l'art la sous-section \ref{sub:la_requete_djamal}.

Pour le mot $x_i$ de longueur $m$, pour chaque $j\in[1..m]$, on substitue le caractère  $x_i[j]$ avec un caractère spécial $\phi$ où $x_{i,j}=x_i[1..j-1]\phi x_i[j+1..m]$. La valeur de hachage de mot candidat $x_{i,j}$ est donnée en $O(1)$ par  $h(x_{i,j}) = F[j-1] \oplus (\phi \oplus G[j + 1]) \otimes A_t[j]$. Cette valeur va permettre de stocker le caractère $x_i[j]$ dans la structure de données.\\

Nous donnons l'algorithme qui résume la construction de notre structure de donnée \textbf{le dictionnaire des listes de substitutions}.\\

\noindent
\rule{8cm}{0.1pt}\\
\textbf {Algorithme de construction du dictionnaire des listes de substitutions (DLS) :}\\
\textbf{Entrée :} un fichier qui contient les mots du dictionnaire.\\
\textbf{Sortie :} la structure de données (DLS) pour la recherche approchée.\\

\begin{enumerate}

\item Calculer le nombre total des caractères des mots du dictionnaire $n$.

\item Créer une table de hachage avec sondage linéaire:

	\begin{itemize}
	\item La première variante: la taille de la table est simplement $ T = \lceil \frac{n}{\mathtt{LF}} \rceil$.
	
	\item La deuxième variante: la taille de la table est $T = \lceil \frac{n+ \frac{n}{2}}{\mathtt{LF}} \rceil $, si $T \bmod 3 = 2$ alors $T$ devient $T+1$, et si $T \bmod 3 = 1$ alors $T$ devient $T+2$.
	\end{itemize}

\item Préparer le tableau $A_t[0, m' + 1]$, pour stocker toutes les puissances de $t$ jusqu'à $t^{m'+1}$ ($m'$ le plus long mot du dictionnaire).
 
\item Pour chaque mot $x_i$ de longueur $m$ ($i\in [1..d]$ et $d$ le nombre des mots du dictionnaire): préparer les deux tableaux $F[0..m], G[1..m + 1]$ qui stockent les valeurs de hachage pour les préfixes et les suffixes de mot $x_i$, ensuite pour chaque position $j\in[1..m]$ :

	\begin{enumerate}
   
	\item Substituer le caractère $x_i[j]$ par un caractère spécial: $x_{i,j}=x_i[1..j-1]\phi x_i[j+1..m]$.

	\item Calculer la valeur de hachage de $x_{i,j}$ en $O(1)$ par la formule $h(x_{i,j}) = F[j-1] \oplus (\phi \oplus G[j + 1]) \otimes A_t[j]$.
  
	\item Trouver la position dans le tableau $\mathtt{pos}=h(x_{i,j})\bmod T$.
   
	\item Stocker le caractère $x_i[j]$

	\smallskip
	La première variante :
		\begin{itemize}
		\item Si la case numéro $\mathtt{pos}$ est vide, alors insérer $x_i[j]$.
		\item Sinon, avancer dans la table case par case jusqu'à ce qu'on trouve une case vide, $\mathtt{pos}=\mathtt{pos}+1 \bmod T$.
		\end{itemize}

	\medskip
	La deuxième variante :

		\begin{itemize}
        \item Si $\mathtt{pos} \bmod 3 = 0$, alors stocker le caractère dans cette position, et stocker les 4 bits de la signature dans $\mathtt{pos}+1$ dans la partie poids fort (partie gauche).
        
        \item Si $\mathtt{pos} \bmod 3 = 2$, alors stocker le caractère dans cette position, et stocker les 4 bits dans l'octet précédent dans les $4$ bits de poids faible qui est à droite dans $\mathtt{pos}-1$.
        
        \item Si $\mathtt{pos} \bmod 3 = 1$, on passe à l'octet suivant ($\mathtt{pos}=\mathtt{pos}+1$), et on applique le cas de $\mathtt{pos} \bmod 3 = 2$.

        \item Si la case est non vide, avancer vers la case suivante, si on est dans  $\mathtt{pos} \bmod 3 = 0$, alors $\mathtt{pos}=\mathtt{pos}+2$, et si $\mathtt{pos} \bmod 3 = 2$, alors $\mathtt{pos}=\mathtt{pos}+1 \bmod T$. 
        \end{itemize} 
        
	\end{enumerate}
	
\end{enumerate}        
\rule{8cm}{0.1pt}

\subsection{La compression de notre structure de données}
\label{sec:bit_compact}

Dans notre travail, toutes les tables ont le même facteur de chargement $\alpha = 0,7$, ce qui signifie que pour chaque table ($16$ en tout) on a $30 \%$ de cases vides. Le but est de trouver une méthode permettant de réduire l'espace mémoire utilisé par notre structure de données tout en gardant le même temps d'accès au données. L'inconvénient du compactage de la structure de données est qu'elle devient non incrémentale.\\

Nous utilisons la compression basée sur un tableau de bits. Le résultat du compactage donne deux vecteurs: un table de hachage compact de $n$ cases, et un vecteur de bits de $n'= \left \lceil \frac{n}{\alpha} \right \rceil$ bits.
 
La compression est faite de la manière suivante: on parcourt toutes les cases de la table de hachage, et on génère le vecteur de bits. On met dans le vecteur de bits un $0$ pour chaque case vide de la table de hachage originale, et un $1$ pour chaque case non-vide.
Durant le parcours, on rajoute toutes les cases non-vides dans la table de hachage compact (qui était initialement vide).

Afin de prendre en charge les requêtes dans la table de hachage compactée, on indexe le vecteur de bits pour supporter les opérations de rangs ~\cite{jacobson1989space,munro1996tables} (Le rang d'une position $i$ dans un vecteur de bits $b$ est défini comme étant le nombre de $1$ dans $b[1..i]$).

L'idée est très simple, une position dans le tableau de bits est la même que celle dans la table de hachage originale. On marque les cases non-vides avec $1$, donc la position de ce même élément dans la table compactée est le nombre des $1$ dans le tableau de bits depuis le premier bit jusqu'à cette position $i$.\\

Compter le nombre de $1$ depuis le début jusqu'à une position $i$ donnée à chaque fois augmente le temps de la requête. Afin d'y remédier, nous utilisons une méthode qui permet de donner le nombre des $1$ pour n'importe quelle position dans un temps constant.
On implémente un simple algorithme de rang qui augmente l'espace original du vecteur de bits d'un facteur de $1+1/\delta$, tel que $\delta$ est une petite valeur constante. On décompose le vecteur de bits en blocs de $\delta$ mots mémoires ($w \times \delta$ bits), et on sauvegarde le rang calculé jusqu'au début de chaque bloc. L'espace final obtenu est de $N(1+1/\delta)$ bits pour un vecteur de bits originalement constitué de $N$ bits.

Dans notre travail, nous assumons un mot mémoire $w$ de $32$ bits, et nous prenons $\delta = 4$. Chaque bloc a une taille de $4$ mots mémoire ($32 \times 4$ bits), auquel, on ajoute une case mémoire de taille $w$ (si on pose $N=4$, et on applique la formule $N(1+1/\delta)$, alors $4(1+1/4) =4+4/4 = 4+1$). Donc pour chaque bloc de $4$ mots mémoire, on ajoute un mot mémoire pour stocker le nombre des $1$ (voir la figure \ref{fig:bits_vecteurs}).
La requête de rang sur le vecteur de bits retourne la position dans la table de hachage compacte. On compte le nombre des $1$ consécutifs dans le vecteur de bits pour avoir le nombre des cases non-vides, et donc la position dans la table de hachage compacte.
Le nombre de comptage partiel $i$ va stocker tous les nombres des $1$ dans les $(i-1)\delta$ premiers mots. Au lieu de stocker les comptes partiels dans un vecteur distinct, on les stocke entrelacées avec le vecteur de bits.

Afin de supporter le comptage efficace des $1$ jusqu'à la position $i$, on utilise le vecteur de bits pour obtenir le comptage des $1$ jusqu'au mot numéro $\left \lfloor \frac{i}{\delta} \right \rfloor$ en utilisant les comptes partiels, puis compter le nombre des $1$ jusqu'à $32\delta-1$ bits, qui couvrent au plus $\delta$ mots en utilisant l'opération $\mathtt{PopCount}$~\cite{bithacks}~\footnote{Même si le temps d'exécution n'est pas constant dans la théorie, dans la pratique, elle est très rapide et peut être considérée comme constante. Notons également que les processeurs récents possèdent souvent une implémentation matérielles de cette instruction.}, voir la figure \ref{fig:bits_vecteurs}. Ainsi, l'espace total sera de $n'+n'/\delta = \lceil n/\alpha\rceil(1+\delta)$ bits, et le temps sera de $O(\delta)=O(1)$.

\begin{figure}[h]
\centering
\includegraphics[width=0.9\linewidth]{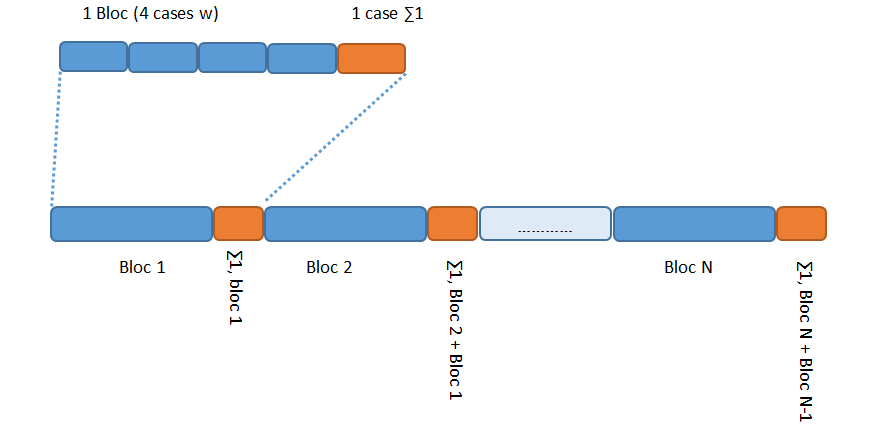}
\caption{Le vecteur de bits et les cases qui contiennent les compteurs des nombres des $1$ pour tous les blocs précédent.}
\label{fig:bits_vecteurs}
\end{figure}

On note que toutes nos méthodes non-compactes sont entièrement incrémentales, dans le sens où l'ajout d'un élément au dictionnaire est possible et prend à peu près le même temps que le temps de construction total divisé par le nombre total des mots. En revanche, les structures compactes ne sont pas incrémentales, en raison des étapes de constructions et du résultat final du compactage.

Notre implémentation suppose des entiers et des pointeurs de $32$ bits, mais il est trivial de changer le code pour fonctionner avec $64$ bits (et donc gérer de plus grands dictionnaires) sans encourir beaucoup de dépassement d'espace mémoire ou de temps.
Le seul élément qui utilise plus d'espace mémoire est le dictionnaire exact pour les grands mots qui sont stockés dans une table de pointeurs (les mots de longueur $\geq 16$). Tous les autres composants n'utilisent pas les pointeurs, et donc leur utilisation de l'espace ne devraient pas être affectés.\\

\begin{remarque}
Pour plus de performance avec $64$ bits, et pour optimiser l'espace mémoire, le seuil $\beta$ va être égale à $32$. On utilise les mêmes étapes expliquées dans la sous-section \ref{subsub:explication_beta=16} pour le trouver. Cela implique le changement de nombre de tableaux utilisé dans le dictionnaire exact, on va utiliser $30$ tableaux de caractères afin d'optimiser l'espace mémoire pour les mots qui ont une longueur $\geq 16$ (la nouvelle condition est $i\geq 32$).
\end{remarque}

\section{La vérification des occurrences}
\label{sec:Verification_des_Occurrences}

Soit un mot requête $q$ de longueur $m$. Pour vérifier les occurrences potentielles, nous utilisons une stratégie similaire à Boytsov~\cite{Bo12}. Au lieu de calculer la distance d'édition entre le mot requête et le mot candidat, nous construisons directement une chaîne modifiée, en appliquant l'opération d'édition candidate, ensuite, nous faisons une comparaison directe entre la chaîne obtenue et le mot candidat.

Avant de commencer les vérifications des mots candidats, on prépare trois tableaux $A_t, G, F$ comme décrit dans la partie \textbf{dictionnaire des listes de substitutions} dans la sous-section \ref{sub:dictionnaire_des_liste_de_substitution}: 
\begin{itemize}
\item $A_t[0..m' + 1]$, pour stocker toutes les puissances de $t$ jusqu'à $t^{m'+1}$, $m'$ est la longueur de mot le plus long supposé être utilisé comme un mot requête. On utilise un nombre suffisamment grand dans le but de calculer ce tableau une seule fois, et on le réutilise pour tous les mots requête, car ces valeurs sont indépendantes des mots requête.
\item $F[0..m]$, un tableau qui stocke les valeurs de hachage pour les préfixes du mot requête $q$.
\item $G[1..m + 1]$ pour stocker les valeurs de hachage pour les suffixes de $q$. Ces deux tableaux sont calculés pour chaque mot requête en $O(m)$.
\end{itemize}

\medskip
On commence par expliquer comment vérifier l'existence d'un mot requête dans le dictionnaire exact (une recherche exacte). Par la suite, on expliquera les étapes de la recherche approchée avec une seule erreur pour les trois types d'erreurs d'édition.

\subsection{La recherche exacte}
\label{sub:Recherche_exact}

La recherche exacte est très simple, il suffit juste de trouver la position dans la structure de données à l'aide de la fonction de hachage et ensuite de faire une simple comparaison. Voici les étapes à faire pour une vérification dans le dictionnaire exact :\\

\noindent
\rule{8cm}{0.1pt}\\
\textbf {Algorithme de la recherche exacte :}\\
\textbf{Entrée :} un mot requête $q$ de longueur $m$, le dictionnaire exact (la structure de données exacte).\\
\textbf{Sortie :} booléen (existe ou non ).\\

\begin{enumerate}

\item Calculer la valeur de hachage du mot requête $q$ en $O(1)$, la valeur de hachage est déjà pré-calculée $h(q) = F[m] = G[1]$.

\item Calculer la position avec la valeur de hachage modulo la taille de la table qui stocke les mots de longueur $m$, $\mathtt{pos}=h(q) \bmod T$.

	\begin{itemize}
	
	\item Si $m<\beta$, alors on utilise la taille de la table en nombre de blocs (le tableau $Tab\_exact[m-2]$), $T = \left \lceil (\mathtt{Tab\_NbMots}[m-2]/\mathtt{LF}) \right \rceil$.
	
	\item Si $m \geq \beta$, on utilise le tableau de pointeur $Tab\_exact[\beta-2]$. La taille est en nombre de cases mémoire de taille $w$, donc $T = \left \lceil (\mathtt{Tab\_NbMots}[\beta-2]/\mathtt{LF}) \right \rceil$.
	
	\end{itemize}

\item Faire une comparaison directe entre le mot requête $q$ et le mot stocké. Si la case à la position $\mathtt{pos}$ est vide, alors, on dit qu'on n'a pas de résultat. Dans le cas contraire, on teste tous les mots jusqu'à la première case vide.

\end{enumerate}
\rule{8cm}{0.1pt}\\

\subsection{La recherche approchée}
\label{sub:Recherche_approchee}

Dans la recherche approchée, on construit des mots candidats en appliquant l'une des trois opérations de distance d'édition dans une position donnée dans un temps constant, ensuite, on fait une vérification dans le dictionnaire exact. On récupère les caractères depuis le \textbf{dictionnaire des listes de substitutions}, qui donnent des mots candidats qui peuvent être une solution, afin de les utiliser dans la substitution et l'insertion. Pour la suppression, il suffit juste de vérifier l'existence de mot candidat dans le dictionnaire exact.

\subsubsection{La substitution}
\label{subsub:operation_substitution}

On commence par la création d'un mot copie $\mathtt{sub}_q$ de notre mot requête $q$ de la même longueur $m$. Ceci permettra de générer des mots candidats en $O(1)$, et de substituer les caractères récupérés depuis le dictionnaire des listes de substitutions dans une position donnée $i$.\\

\noindent
\rule{8cm}{0.1pt}\\
\textbf {Algorithme de la recherche approchée (substitution)}\\
\textbf{Entrée :} un mot requête $q$ de longueur $m$, le dictionnaire exact et le dictionnaire des listes de substitutions.\\
\textbf{Sortie :} liste des mots solutions approchées.\\

\begin{enumerate}

\item Pour chaque position $i$ dans $\mathtt{sub}_q$ (le mot copie du mot requête) avec $i \in [1..m]$:

\item Calculer la valeur de hachage pour la position $i$ en $O(1)$, avec la formule suivante: $h(\mathtt{sub}_q) = F[i-1] \oplus (\phi \oplus G[i + 1]) \otimes A_t[i]$.

\item Récupérer une liste de caractère $\mathtt{list\_chars}$ depuis le dictionnaire des listes de substitutions:

	\begin{enumerate}
	
	\item Trouver la position dans la table de hachage qui stocke les caractères de substitutions $\mathtt{pos}=h(\mathtt{sub}_q{i})\bmod T$.

	\item La première variante de la structure de données des listes de substitutions: 

		\begin{itemize}
		
		\item La taille de la table $ T = \lceil \frac{n}{\mathtt{LF}} \rceil$. 
		
		\item Si la case $\mathtt{pos}$ est vide, alors on retourne une liste vide. 
		
		\item Sinon, on récupère tous les caractères de cette position jusqu'à ce qu'on arrive à une case vide. L'avancement se fait case par case $\mathtt{pos} = \mathtt{pos}+1 \bmod T$.
	
		\end{itemize}

	\item La deuxième variante (avec la signature de $4$ bits) :
	
		\begin{itemize}
		
		\item La taille de tableau $T = \lceil \frac{n+ \frac{n}{2} }{\mathtt{LF}} \rceil $, si $T \bmod 3 = 2$ alors $T$ devient $T+1$, et si $T \bmod 3 = 1$ alors $T$ devient $T+2$.
		
		\item Si la case est vide, alors retourner une liste $\mathtt{list\_chars}$ vide.

		\item Sinon, pour tous les caractères de cette position jusqu'à une case vide, vérifier si la signature de ($h(\mathtt{sub}_q)$) est la même que celle stockée dans la table de hachage. Si c'est le cas, alors ajouter le caractère à la liste $\mathtt{list\_chars}$. Si $\mathtt{pos} \bmod 3 = 0$, alors les $4$ bits de la signature sont dans $\mathtt{pos}+1$ dans la partie poids fort (partie gauche). Si $\mathtt{pos} \bmod 3 = 2$, alors les $4$ bits sont dans l'octet précédent $\mathtt{pos}-1$ dans les $4$ bits de poids faible (à droite). Si $\mathtt{pos} \bmod 3 = 1$ alors on passe à l'octet suivant ($\mathtt{pos}+1$).
		
		\item Pour avancer vers les cases suivantes, si $\mathtt{pos} \bmod 3 = 0$, alors on fait $\mathtt{pos}=\mathtt{pos}+2$, et si on est dans le cas $\mathtt{pos} \bmod 3 = 2$, alors on fait ($\mathtt{pos}=\mathtt{pos}+1 \bmod T$).
		\end{itemize}
  
  \end{enumerate}
  
\item Pour chaque caractère $j$ dans la liste $\mathtt{list\_chars}$ de taille $L$ ($j \in [1..L]$):

	\begin{itemize}
	\item Substituer le caractère $\mathtt{list\_chars}[j]$ dans la position $i$ du mot requête $\mathtt{sub}_q[i]=list\_chars[j]$, pour construire le mot candidat $\mathtt{sub}_q = \mathtt{sub}_q[1..i-1] \mathtt{list\_chars}[j] \mathtt{sub}_q[i+1..m]$.
	
	\item Vérifier l'existence du mot candidat dans le dictionnaire exact. La valeur de hachage se calcule en $O(1)$ avec la formule $h(\mathtt{sub}_q) = F[i-1] \oplus (\mathtt{list\_chars}[j] \oplus G[i + 1]) \otimes A_t[i]$.
	
	\end{itemize}

\item Passer à la position suivante dans le mot requête ($i+1$). Avant cela, on doit remettre le caractère originel de la position $i$ en faisant $\mathtt{sub}_q[i]=q[i]$.

\end{enumerate}
\rule{8cm}{0.1pt}\\

On remarque que la génération des mots candidats après la récupération de la liste des caractères prend un temps constant par candidat: il suffit juste de substituer les caractères de la liste dans la position $i$ et à la fin remettre le caractère original avant de passer à la position suivante.

Dans les deux variantes de la structure de données, l'obtention de la liste de substitutions pour une position donnée, peut nécessiter la traversée de nombreuses cases dans la table de hachage avant d'atteindre une case vide. Pour borner le temps de la requête dans le pire cas, nous avons implémenté la stratégie suivante:
à chaque fois qu'une requête pour une liste de substitutions traverse plus que $\sigma$ cases dans la table de hachage, on arrête le parcours des cases de la table de hachage, et on crée $\sigma$ chaînes candidates parmi tous les $\Sigma$ caractères possibles~\footnote{Il faut noter que lorsque qu'on traverse $\sigma$ cases, cela ne veut pas dire qu'on va collecter $\sigma$ caractères candidats distincts. Le nombre peut même être beaucoup plus petit lorsque les signatures de hachage sont utilisées}.

\subsubsection{L'insertion}
\label{operation_insertion}

Dans cette partie, le traitement est le même qu'auparavant avec juste quelques petits changements dans les indices. Nous en décrivons les détails 
dans ce qui suit.

On commence par créer un mot copie $\mathtt{ins}_q$ avec une longueur $m+1$ afin de nous permettre de générer des mots candidats en $O(1)$ avec insertion dans une position donnée $i$. On copie le mot $q$ dans $\mathtt{ins}_q$ à partir de la deuxième position, on laisse la première position vide, donc $\mathtt{ins}_q[2..m+1]=q[1..m]$.

\rule{8cm}{0.1pt}\\
\textbf {Algorithme de la recherche approchée (insertion)}\\
\textbf{Entrée :} un mot requête $q$ de longueur $m$, le dictionnaire exact, le dictionnaire des listes de substitutions.\\
\textbf{Sortie :} liste des mots solutions approchées.\\

\begin{enumerate}

\item Pour chaque position dans le mot copie du mot requête $\mathtt{sub}_q$ $i \in [1..m+1]$:

\item Calculer la valeur de hachage du mot $\mathtt{sub}_q$ après une insertion du caractère spécial $\phi$ à la position $i$, $h(\mathtt{ins}_q) = F[i] \oplus (\phi \oplus G[i + 1]) \otimes A_t[i+1]$.

\item Calculer la position $\mathtt{pos}=h(\mathtt{ins}_q{i})\bmod T$, et ensuite, récupérer une liste de caractères ($\mathtt{list\_chars}$) si elle existe.
  
\item Construire les mots candidats par la substitution des caractères de la liste $\mathtt{list\_chars}$ à la position $i$ (où on a inséré le caractère $\phi$): $\mathtt{ins}_q = \mathtt{ins}_q[1..i] \mathtt{list\_chars}[j] \mathtt{ins}_q[i+1..m+1]$. Ensuite, faire la vérification dans le dictionnaire exact, la valeur de hachage étant calculée par la formule $h(\mathtt{ins}_q) = F[i] \oplus (\mathtt{list\_chars}[j] \oplus G[i + 1]) \otimes A_t[i+1]$).

\item Passer à la position d'erreur suivante ($i+1$). Décaler le caractère qui est à la position $i+1$ vers la position $i$ du mot $\mathtt{ins}_q$ ($\mathtt{ins}_q[i]=\mathtt{ins}_q[i+1]$).

\end{enumerate}
\rule{8cm}{0.1pt}\\

\subsubsection{La suppression}
\label{susbub:operation_suppression}

Dans la suppression, nous n'avons pas une liste de caractères à insérer ou à substituer. Il suffit juste de faire une suppression dans la position $i$ et de calculer la valeur de hachage pour ensuite faire une vérification dans le dictionnaire exact.\\

\noindent
\rule{8cm}{0.1pt}\\
\textbf {Algorithme de recherche approchée (suppression)}\\
\textbf{Entrée :} un mot requête $q$ de longueur $m$, le dictionnaire exact, le dictionnaire des listes de substitutions.\\
\textbf{Sortie :} liste des mots solutions approchées.\\

\begin{enumerate}

\item On crée un mot $\mathtt{del}_q$ de longueur $m-1$.

\item Construire les $m$ mots candidats :

	\begin{enumerate}[i.]
	
	\item \label{item:sup_1} Construire le premier mot candidat de $i=1$. On copie le mot $q$ à partir de sa deuxième position, donc $\mathtt{del}_q[1..m-1]=q[2..m]$.
	
	\item \label{item:sup_boucle} Construire les mots candidats pour les autres positions $i \in [2..m]$. Il suffit juste de faire $\mathtt{del}_q[i-1]=q[i]$.
	
	\item \label{item:sup_last} Construire le dernier mot candidat, on ajoute $\mathtt{del}_q[i-1]=q[i-1]$.
	\end{enumerate}
	
\item Pour chaque mot candidat $\mathtt{del}_q = q[1..i-1] q[i+1..m]$ ($i\in [1..m]$ ), calculer la valeur de hachage avec la formule suivante: $h(\mathtt{del}_q) = F[i-1] \oplus G[i+1] \otimes A_t[i-1]$, ensuite, faire la vérification dans le dictionnaire exact. 

\end{enumerate}
\rule{8cm}{0.1pt}\\

\subsection{La recherche dans la structure compacte}
\label{sub:Recherche_dans_la_structure_compact}

Avec la structure compacte, on utilise exactement les mêmes étapes précédentes, sauf qu'au lieu d'accéder directement au tableau de hachage compact, on passe par le tableau de vecteur de bits pour trouver la position de l'élément dans le tableau compact.

On commence par trouver la position dans le tableau de bits $\mathtt{pos}_{\mathtt{bv}}=h(q') \bmod T$. Cette position $\mathtt{pos}_{\mathtt{bv}}$ va nous donner la position réel $\mathtt{pos}_{\mathtt{htc}}$ dans la table de hachage compacte, pour cela on récupère le nombre de $1$ depuis le début du tableau de bits jusqu'à cette position $\left(\mathtt{pos}_{\mathtt{htc}}= \sum_{i=0}^{i=\mathtt{pos}_{\mathtt{bv}}}1 \right)$.  Comme on a expliqué dans la section \ref{sec:bit_compact}, cette opération ce fait dans un temps constant, à cause des cases mémoire ajoutées dans chaque $\delta$ cases qui contient la somme des $1$.\\

On récupère directement le nombre d'éléments de la liste de substitutions depuis le vecteur de bits qui est le nombre de $1$ successifs depuis la position trouvée dans le vecteur de bits. Pour cela, on avance case par case depuis la position $\mathtt{pos}_{\mathtt{bv}}$ pour vérifier si on est arrivé à une case vide, et on calcule le nombre d'éléments de $1$.

\section{Récapitulatif des performances de notre structure de données}
\label{sec:recapitulatif_des_performances_de_notre_structure_de_donnees}

Nous pouvons maintenant définir deux théorèmes pour récapituler les performances de notre structure de données en se basant sur deux hypothèses formalisées avec les définitions suivantes:

\begin{definition}
Une fonction de hachage pour une chaîne de caractères est dite incrémentale, si après quelques pré-traitements faits sur une chaîne de caractères $x$ en entrée, le résultat du hachage de n'importe quelle chaîne de caractères à une distance $1$ de $x$ prend un temps constant. 
\end{definition}

\begin{definition}
Une fonction de hachage mappant les éléments de l'ensemble $U$ vers l'ensemble $V$ est dite complètement aléatoire, si elle se comporte comme si elle a été choisie aléatoirement parmi l'ensemble de toutes les fonctions possibles de $U$ dans $V$.
\end{definition}

L'espace mémoire utilisé lors de la construction et l'espace mémoire final (espace utilisé par notre structure de données) sont résumés par le théorème suivant :

\begin{theorem}
\label{theo:space_one_error}
Le dictionnaire présenté dans cette section occupe $O(n\log\sigma)$ bits (où $n$ est la taille totale des chaînes dans le dictionnaire). L'espace utilisé durant la construction est également de $O(n\log\sigma)$ bits. Si le compactage de la table de hachage est utilisé, le dictionnaire final occupe au plus un espace mémoire de $n(2\log\sigma+O(1))$ bits.
\end{theorem}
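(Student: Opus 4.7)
La preuve suit une d�composition de la structure de donn�es en ses composants principaux, en bornant s�par�ment l'espace utilis� par chacun avant de sommer les contributions. Les composants sont: (1) les tables de hachage du dictionnaire exact pour les mots courts, (2) la table de pointeurs pour les mots longs, (3) la table de hachage du dictionnaire des listes de substitutions, et (4) les vecteurs de bits utilis�s pour la version compacte.

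Premi�rement, j'analyserais l'espace du dictionnaire exact. Pour la table de hachage des mots de longueur $i$ (avec $i < \beta$), qui stocke $n_i$ mots, l'espace occup� est $\lceil n_i/\alpha\rceil \cdot i \cdot \lceil \log\sigma \rceil$ bits. En sommant sur toutes les longueurs courtes, puisque $\sum_i n_i \cdot i \leq n$, on obtient $O(n\log\sigma/\alpha) = O(n\log\sigma)$ bits. Pour la table des pointeurs vers les mots longs (longueurs $\geq \beta$), chaque mot long occupe $m\log\sigma$ bits en m�moire plus une entr�e de $w$ bits dans la table; le nombre de mots longs �tant born� par $n/\beta$, la contribution totale des mots eux-m�mes est $O(n\log\sigma)$ et celle des pointeurs est $O(nw/\beta)$, qui reste en $O(n\log\sigma)$ sous l'hypoth�se standard $w = O(\beta \log\sigma)$.

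Deuxi�mement, pour le dictionnaire des listes de substitutions, je noterais que le nombre total d'entr�es est exactement $n$ (un caract�re par position dans chaque mot du dictionnaire), chaque entr�e occupant $\log\sigma$ bits dans la premi�re variante ou $\log\sigma + 4$ bits dans la deuxi�me. Avec le facteur de chargement $\alpha$, l'espace total est $\lceil n/\alpha\rceil \cdot (\log\sigma + O(1)) = O(n\log\sigma)$ bits. L'espace de construction reste du m�me ordre, les tableaux auxiliaires $A_t$, $F$ et $G$ n'occupant que $O(m' \cdot w)$ bits pour le mot le plus long de longueur $m'$, quantit� n�gligeable devant les structures permanentes, et le tableau $\mathtt{Tab\_NbMots}$ utilis� dans l'algorithme de construction �tant de taille constante.

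Enfin, pour la version compacte, je compterais pr�cis�ment chaque contribution. La table de hachage compacte du dictionnaire exact stocke exactement les caract�res utiles, soit au plus $n\log\sigma$ bits, et celle du dictionnaire des listes de substitutions occupe $n(\log\sigma + 4)$ bits. Les vecteurs de bits associ�s occupent chacun $\lceil n/\alpha\rceil$ bits auxquels s'ajoutent les compteurs de rang repr�sentant un surco�t multiplicatif $(1 + 1/\delta)$, soit $O(n)$ bits au total pour les deux. En combinant ces contributions: $n\log\sigma + n(\log\sigma + 4) + O(n) = n(2\log\sigma + O(1))$ bits. Le point le plus d�licat sera de v�rifier l'absorption des pointeurs vers les mots longs dans le terme $O(1)$ par caract�re; cela exige d'amortir leur contribution $O(w)$ par pointeur sur les $\beta$ caract�res de chaque mot long, ce qui conduit � un surco�t de $O(w/\beta)$ bits par caract�re, absorb� dans $O(1)$ d�s que $w = O(\beta)$, autrement dit sous l'hypoth�se raisonnable que la taille du dictionnaire reste born�e par $2^{O(\beta)}$.
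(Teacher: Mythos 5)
Votre preuve suit essentiellement la m\^eme d\'emarche que celle du manuscrit : d\'ecomposer la structure en dictionnaire exact et dictionnaire des listes de substitutions, borner chaque composant par $O(n\log\sigma)$ bits via le facteur de chargement $\alpha$, puis compter pr\'ecis\'ement les contributions de la version compacte ($n\log\sigma$ bits pour chaque table plus $O(n)$ bits pour les vecteurs de bits et leurs compteurs de rang) afin d'obtenir $n(2\log\sigma+O(1))$ bits. Vous \^etes m\^eme plus soigneux que la preuve originale sur un point : celle-ci passe sous silence le surco\^ut des pointeurs de $w$ bits vers les mots longs, que vous amortissez correctement sur les $\beta$ caract\`eres de chacun de ces mots.
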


\begin{proof}
Les mots du dictionnaire donnés comme entrée occupent un espace de $n\log\sigma$ bits. Le dictionnaire des listes de substitutions a $n/\alpha=O(n)$ cases de taille $\log\sigma+r$ bits chacune dont $n$ seulement sont occupées.
Le dictionnaire exact a $(d)$ cases (rappelons que $d$ est le nombre total des mots dont la taille totale est $n$) qui occupent un espace total de $(n/\alpha)\log\sigma=O(n\log\sigma)$ bits. L'espace mémoire total utilisé par l'algorithme de construction est clairement en $O(n\log\sigma)$ bits, car il lit uniquement les entrées, et écrit les sorties qui occupent un espace total de $O(n\log\sigma)$ bits.
Quand le compactage est utilisé, l'espace du dictionnaire exact est réduit à $n\log\sigma+d(1+1/\alpha)+o(d)=n\log\sigma+O(d)$ bits, et l'espace utilisé par le dictionnaire des listes de substitutions est réduit à $n(1+r+\log\sigma)+o(n)=n\log\sigma+O(n)$ bits, pour un espace total de $n(2\log\sigma+O(1))$ bits.
\end{proof}

La performance de la construction de la structure de données et le temps de requête sont résumés dans le théorème suivant :

\begin{theorem}
\label{theo:time_one_error}
En supposant un hachage complètement aléatoire et incrémental, et que chaque liste de substitutions contient un seul élément, et étant donné le mot requête de longueur $m$, alors le temps de requête moyen est égal à $O(m\left\lceil\frac{m\log\sigma}{w}\right\rceil)$. Et étant donné un dictionnaire avec une longueur des mots totale égale à $n$, alors le temps de la construction de la structure de données est $O(n)$ en moyenne.
\end{theorem}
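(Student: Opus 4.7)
Mon plan consiste à décomposer le temps total de la requête en une somme sur les $O(m)$ positions d'erreur possibles pour chacun des trois types d'opérations d'édition, puis à montrer que chaque opération élémentaire coûte $O(\lceil m\log\sigma/w\rceil)$ en moyenne sous les hypothèses énoncées. D'abord, je traiterais le pré-traitement initial: le calcul des tableaux $A_t$, $F$ et $G$ prend clairement $O(m)$ opérations (une multiplication et une addition modulo $P$ par position), ce qui sera absorbé par le terme dominant. Pour la recherche exacte, j'utiliserais le fait que $h(q)=F[m]$ est déjà disponible; l'accès à la table de hachage avec sondage linéaire sous hachage complètement aléatoire et facteur de chargement constant $\alpha<1$ requiert en espérance $O(1)$ sondages par le résultat classique de Knuth, et la comparaison bit-parallèle entre $q$ et le mot stocké coûte $O(\lceil m\log\sigma/w\rceil)$.

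Pour la partie approchée, l'hypothèse d'incrémentalité me permettra, à chaque position $i$, de calculer la valeur de hachage du mot candidat en $O(1)$ à partir de $F[i-1]$, $G[i+1]$ et $A_t[\cdot]$. Pour la substitution et l'insertion, la consultation du dictionnaire des listes de substitutions coûtera alors $O(1)$ sondages espérés et, grâce à l'hypothèse que chaque liste contient un unique caractère, retournera ce caractère en temps constant; un second accès au dictionnaire exact pour le candidat produit coûtera à nouveau $O(1)$ sondages espérés plus $O(\lceil m\log\sigma/w\rceil)$ pour la comparaison. Pour la suppression, je procéderais directement à une recherche exacte avec le même coût. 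En sommant sur les $O(m)$ positions et les trois types d'erreurs, le temps total sera donc $O(m\lceil m\log\sigma/w\rceil)$. Pour la construction, mon approche est analogue: pour chaque mot $x_i$ de longueur $m_i$ du dictionnaire, je précalculerais $F$ et $G$ en $O(m_i)$, puis j'insérerais les $m_i$ caractères dans le dictionnaire des listes de substitutions (chaque insertion coûtant $O(1)$ sondages espérés grâce au hachage incrémental), et enfin $x_i$ dans la table de hachage exacte appropriée en temps espéré $O(m_i)$, dominé par la copie du mot. La somme $\sum_i O(m_i) = O(n)$ donnera la borne annoncée.

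La difficulté principale résidera dans la justification rigoureuse que le nombre espéré de sondages reste borné par une constante indépendante du nombre d'éléments déjà insérés: cette propriété découlera de l'analyse classique du sondage linéaire de Knuth qui borne l'espérance par une fonction de $\alpha$ seul, constante tant que $\alpha<1$ est fixé. Un point délicat supplémentaire sera d'articuler proprement l'interaction entre l'hypothèse structurelle (une seule clé par liste de substitutions, qui élimine le facteur $\sigma$ du borne de sécurité imposée par le plafond de $\sigma$ cases sondées évoqué en \ref{subsub:operation_substitution}) et l'hypothèse probabiliste (hachage complètement aléatoire), afin de garantir que les espérances se combinent sans introduire de termes parasites liés à la corrélation entre les adresses générées par le hachage incrémental pour des candidats proches; je m'attends à ce qu'il faille invoquer l'indépendance asymptotique entre $h(q')$ pour des $q'$ distincts issus de l'application d'une opération d'édition différente à $q$.
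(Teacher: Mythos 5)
Votre démarche est correcte et suit essentiellement la même voie que la preuve du manuscrit : décomposition sur les $O(m)$ positions et les trois types d'erreurs, hachage incrémental pour obtenir chaque valeur de hachage candidate en $O(1)$, analyse classique de Knuth du sondage linéaire pour borner le nombre espéré de sondages par une constante dépendant uniquement de $\alpha$, hypothèse des listes de taille $1$ pour borner le nombre de candidats, et comparaison bit-parallèle en $O(\lceil m\log\sigma/w\rceil)$ par candidat, avec la même sommation $\sum_i O(m_i)=O(n)$ pour la construction. Le point délicat que vous soulevez sur la corrélation entre les valeurs de hachage des candidats voisins est pertinent mais n'est pas davantage traité dans la preuve originale.
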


\begin{proof}
\label{prof:chap:hachage:liste_sub}
En supposant un hachage incrémental, et étant donné un mot de longueur $m$. Le calcul de la valeur de hachage pour chercher une liste de caractères (dans le dictionnaire des listes de substitutions) pour insertion à une position $i$ donnée dans le mot requête prend un temps amorti de $O(1)$. Cela donne comme résultat, un temps total de $O(m)$ pour les $O(m)$ positions du mot requête. 
L'hypothèse qui dit que chaque liste de substitutions est de taille $1$, signifie que tous les $n$ mots ($x_{i,j}=x_i[1..j-1]\phi x_i[j+1..m_i]$  avec $i\in [1..n]$) utilisés comme clés pour l'insertion des caractères dans le dictionnaire des listes de substitutions vont être tous distincts. 
De plus, il est bien connu que le coût d'insertion d'un élément dans une table de hachage avec sondage linéaire est de $O(1)$ en moyenne si on suppose l'utilisation d'un hachage aléatoire ~\cite{Knuth63noteson}.
Donc il est évident que l'insertion de chaque élément de la liste de substitutions prend un temps constant en moyenne. Cela donne un temps total moyen de $O(n)$ pour insérer tous les $n$ éléments (caractères) dans le dictionnaire des listes de substitutions pour tous les $d$ mots du dictionnaire.
Par conséquent, le temps total moyen pour la construction de la version non-compactée du dictionnaire des listes de substitutions est de $O(n)$
~\footnote{ Pour voir pourquoi l'hypothèse que chaque liste de substitutions de taille $1$ est utile pour assurer un temps d'insertion moyen constant pour les éléments de la liste de substitutions, on peut considérer la liste de substitutions de taille maximale de $\sigma$. Vu que tous les éléments de cette liste vont être mappés au même endroit dans la table de hachage, l'insertion de chaque élément doit parcourir tous les éléments qui sont déjà insérés dans la même liste de substitutions avant d'atteindre une case vide, résultant en un temps d'insertion $\Omega(\sigma)$.}.
Avec des arguments similaires, le temps de construction du dictionnaire exact est estimé à $O(n)$, car l'insertion d'un mot de longueur $m$ requiert un nombre constant d'accès au tableau de hachage, et chaque accès à un coût de $O(m)$. En faisant la somme de tous les temps d'insertion pour tous les $d$ mots du dictionnaire, le temps total moyen d'insertion devient $O(n)$.
La génération de la version compacte du dictionnaire des listes de substitutions depuis le dictionnaire non-compact prend trivialement $O(n)$. En fin de compte, le temps de construction de la version compacte et non-compacte du dictionnaire présentées dans ce travail est de $O(n)$ en moyenne.\\

Maintenant, on va prouver que le temps de recherche moyen pour un mot requête de longueur $m$ est $O(m\left\lceil\frac{m\log\sigma}{w}\right\rceil)$.
Rappelons que l'extraction des éléments d'une liste de substitutions requiert l'accès à des cases consécutives dans la table de hachage, en démarrant par la position $i$ jusqu'à arriver à une case vide $i+k$ et en collectant ensuite tous les $k$ caractères des cases parcourues.
On distingue deux cas: 
\begin{enumerate}
\item Le cas où la liste de substitutions n'existe pas, la position initiale $i$ va suivre une distribution uniforme sur toutes les cases de la table de hachage. Alors le nombre total des cases parcourues va suivre la borne d'une recherche infructueuse dans une table de hachage avec sondage linéaire, une borne connue comme étant constante en moyenne selon~\cite{Knuth63noteson}. 
\item Dans le cas d'une requête pour une liste de substitutions existante, le nombre de cases parcourues est également constant en moyenne. 
Le nombre de cases parcourues dans ce cas donnés par le temps moyen d'une recherche fructueuse dans une table de hachage, qui est connu également 
comme étant constant~\cite{Knuth63noteson}.
\end{enumerate}

Considérons maintenant que la table de hachage est décomposée en un ensemble de cases non-vides continues, et un ensemble de cases vides. De fait que les cases vides sont une fraction constante de toutes les cases de tableau, on conclut que choisir une position de départ d'une manière uniforme et aléatoire à partir des cases non-vides va générer un nombre constant de cases parcourues avant d'atteindre une case vide. Comme chaque position non-vide représente une liste de substitutions, on conclut que le nombre moyen de caractères candidats collectés lors de la recherche d'une liste de substitutions existante est aussi constant.
Cela implique que le nombre total des mots candidats générés depuis la recherche de toutes les listes de substitutions est égal à $O(m)$ en moyenne.
Le temps de vérification dans le dictionnaire exact pour un mot candidat est égal à $O(\left\lceil\frac{m\log\sigma}{w}\right\rceil)$, qui est le temps nécessaire pour comparer le mot candidat avec le mot du dictionnaire.

L'explication est que le dictionnaire exact est représenté en utilisant la table de hachage avec sondage linéaire, dans laquelle une requête parcourt un nombre de cases constant en moyenne. À chaque fois, nous avons besoin d'un temps de $O(\left\lceil\frac{m\log\sigma}{w}\right\rceil)$ pour comparer le mot du dictionnaire avec le mot candidat. Ceci prouve que le temps d'exécution final d'une requête est de $O(m\left\lceil\frac{m\log\sigma}{w}\right\rceil)$ en moyenne.
\end{proof}

Dans la pratique, nous nous attendons à ce que la plupart des listes de substitutions contiennent un seul élément (nous l'avons vérifié précisément sur les jeux de données testés). 
Comme il est fait dans la majorité des travaux pratiques, nous avons préféré d'utiliser les fonctions de hachage les plus rapides qui sont efficaces en pratique au lieu d'utiliser celles qui sont moins rapides avec de meilleures garanties théoriques~\footnote{Le lecteur peut se référer à~\cite{MV08} pour avoir plus de détails sur les raisons pour lesquelles les fonctions de hachage simples se comportent dans la pratique suivant le principe de l'aléatoire pure (elles se comportent en pratique comme si elles étaient entièrement aléatoires).}.

\section{Extension à deux erreurs ou plus }
\label{sec:Extension_deux_erreurs_ou_plus}

Notre algorithme peut être étendu pour fonctionner avec deux erreurs ou plus. Pour deux erreurs, on stocke deux \textbf{dictionnaire de listes de substitutions}. Le premier dictionnaire est utilisé pour une erreur (le niveau $1$), pour chaque mot $x$ dans le dictionnaire de longueur $m$, et pour toutes les positions $i \in [1..m]$, on place un caractère spécial (joker) $\phi$ dans la position $i$, donc ($x' = u \phi v$). On calcule la valeur de hachage du mot $x'$ donc ($h(x')$), et ensuite, on stocke le caractère $x[i]$ dans le dictionnaire des listes de substitutions. Au total, on a $m$ caractères à insérer. Ce dictionnaire est donc identique à celui utilisé plus haut pour le cas d'une seule erreur.\\ 

Le deuxième dictionnaire (le niveau $2$) va stocker pour chaque mot et chaque paire de positions distinctes, le caractère de la première position. Pour chaque mot $x$ de longueur $m$, et toute paire de positons $(i,j)$ tels que $i \in [1..m-1]$ et $j \in [i+1..m-1]$, on place deux caractères spéciaux $\phi$ dans les positions $i$ et $j$, donc $x''=u \phi v \phi w$ ($u,v,w$ sont des sous-chaines), et on calcule la valeur de hachage de $h(x'')$ afin de stocker le caractère original de la position $i$ dans ce deuxième dictionnaire.\\

Au moment de la requête, on interroge d'abord le dictionnaire des listes de substitutions de niveau $2$ avec le mot requête $q''=u \phi v \phi w$ pour récupérer la liste des caractères de substitutions $L_2=\{c_1,c_2,...,c_l\}$, grâce à la valeur de hachage $h(q'')$, qui nous donne la position dans la table de hachage de niveau $2$. Ensuite, pour chaque caractère $c_i$ récupéré de cette liste, on le substitue dans le premier joker du mot $q''$ (donc $q'=u c_i v \phi w$). Ensuite, on passe au dictionnaire des listes de substitutions de niveau $1$ pour trouver tous les caractères à substituer au deuxième joker à l'aide de la valeur de hachage $h(q')$. On obtient alors une liste $L_1=\{e_1,e_2,...,e_f\}$, et on substitue alors les caractères de la liste (tous les éléments $e_j$) dans la deuxième position, pour constituer les mots candidats de la forme $q=u c_i v e_j w$.  Enfin, on interroge le dictionnaire exact pour les mots résultants (avec la substitution des $2$ caractères trouvés dans le premier et deuxième dictionnaires de substitutions).\\

\noindent
Exemple :\\
Soit le mot $\texttt{ALABAMA}$. Pour illustrer cet exemple, on considère juste une seule combinaison de $2$ positions (les autres combinaisons peuvent être traitées de la même façon). Dans la phase de construction de la structure de données, on fait ce qui suit:
\begin{enumerate}

\item On stocke le mot $\texttt{ALABAMA}$ dans le dictionnaire exact. 

\item On stocke une liste de substitutions associée à $\texttt{ALABA}\phi\texttt{A}$ dans le dictionnaire des listes de substitutions de niveau $1$.

\item On stocke une liste de substitutions associée à $\texttt{A}\phi\texttt{ABA}\phi\texttt{A}$ dans le dictionnaire de niveau $2$.
\end{enumerate}

Au moment de la requête, on fait les étapes suivantes :

\begin{enumerate}
\item On interroge le deuxième dictionnaire des listes de substitutions avec $\texttt{A}\phi\texttt{ABA}\phi\texttt{A}$, pour obtenir un ensemble de caractères ($L_2=\{c_1,c_2,...,c_l\}$) qui pourraient être substitués au premier joker. Dans notre exemple, nous avons un seul élément dans la liste $L_2 : \{ c=L \}$.

\item Pour chaque caractère $c_i$ dans cet ensemble, on  interroge le niveau $1$ avec $\texttt{A} \, c_i \, \texttt{ABA}\phi\texttt{A}$, pour obtenir la liste de substitutions de la deuxième position. Nous avons un seul élément dans la première liste $L$, et donc on va interroger le dictionnaire des listes de substitutions avec $\texttt{ALABA}\phi\texttt{A}$. On trouve un seul élément aussi, le caractère $M$.

\item On fait la substitution pour le deuxième joker, et donc on trouve le mot $\texttt{ALABAMA}$, on vérifie l'existence dans le dictionnaire exact. Le mot est trouvé dans le dictionnaire et est donc ajouté à la liste des solutions approchée avec $2$ erreurs pour tous les mots de la forme $\texttt{A}\phi\texttt{ABA}\phi\texttt{A}$.
\end{enumerate}

Concernant l'utilisation de l'espace mémoire de ce schéma étendu à $2$ erreurs, on note que le mot $\texttt{ALABAMA}$ étant de longueur $\ell=6$, va générer $\frac{\ell(\ell-1)}{2}=\frac{6\times 5}{2}=15$ caractères pour les listes de substitutions.\\

Dans le cas général, on peut donner les théorèmes suivants qui résument la performance de notre structure de données pour $2$ erreurs. On commence d'abord par l'utilisation de l'espace mémoire.

\begin{theorem}
Le dictionnaire présenté dans cette section occupe $O(N\log\sigma)$ bits où :
$N=\sum_{i=1}^{d}{n_i(n_i-1)/2}$ et où $n_i$ est la longueur du mot numéro $i$ du dictionnaire. L'utilisation de l'espace mémoire lors de la construction est de $O(N\log\sigma)$ bits. Si le compactage est utilisé, alors le dictionnaire (l'index) final occupe un espace mémoire au plus de $(N+2n)(\log\sigma+O(1))$ bits.
\end{theorem}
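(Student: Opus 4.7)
La strat�gie est d'adapter directement l'analyse du Th�or�me~\ref{theo:space_one_error} en d�composant la structure de donn�es en ses trois composants \-- le dictionnaire exact, le dictionnaire des listes de substitutions de niveau $1$, et le dictionnaire des listes de substitutions de niveau $2$ \-- puis en sommant leurs contributions respectives.

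Premi�rement, je traiterais les deux premiers composants, qui sont inchang�s par rapport au cas $k=1$: d'apr�s le Th�or�me~\ref{theo:space_one_error}, ils occupent ensemble $O(n\log\sigma)$ bits. Deuxi�mement, j'analyserais le nouveau dictionnaire de niveau $2$: pour chaque mot $x_i$ de longueur $n_i$ et chaque paire de positions $(j_1, j_2)$ avec $1 \leq j_1 < j_2 \leq n_i$, la construction y ins�re exactement un caract�re. Le nombre total d'insertions est donc $\sum_{i=1}^{d} \binom{n_i}{2} = N$, r�parties dans une table de hachage avec sondage lin�aire de $\lceil N/\alpha \rceil = O(N)$ cases, chacune de taille $\log\sigma + r$ bits. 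Cela donne une contribution en $O(N\log\sigma)$ bits pour le niveau $2$. Puisque $n_i \geq 2$ implique $\binom{n_i}{2} \geq n_i/2$, on a $N \geq n/2$, donc $O(N\log\sigma)$ domine asymptotiquement $O(n\log\sigma)$, et l'espace total est bien en $O(N\log\sigma)$ bits. L'espace utilis� pendant la construction suit la m�me borne, car l'algorithme ne fait que lire l'entr�e de taille $O(n\log\sigma)$ bits et �crire les structures finales; aucune structure interm�diaire ne d�passe cette taille (les tableaux auxiliaires $A_t$, $F$, $G$ utilis�s pour le hachage incr�mental sont de taille $O(m_{\max})$ mots m�moire, donc n�gligeables).

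Pour la version compact�e, j'appliquerais s�par�ment la technique de compression par vecteur de bits (Section~\ref{sec:bit_compact}) � chacune des trois tables. Les m�mes calculs que dans la preuve du Th�or�me~\ref{theo:space_one_error} donnent $n\log\sigma + O(d)$ bits pour le dictionnaire exact, $n(\log\sigma + O(1))$ bits pour le dictionnaire de niveau $1$, et, par un argument identique appliqu� � $N$ �l�ments au lieu de $n$, $N(\log\sigma + O(1))$ bits pour le dictionnaire de niveau $2$. La somme donne $(N + 2n)\log\sigma + O(N + n + d) = (N + 2n)(\log\sigma + O(1))$ bits, puisque $d \leq n$ et $n \leq N+2n$, ce qui �tablit la borne annonc�e.

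L'�tape la plus d�licate, bien que pas difficile, sera de v�rifier que les structures auxiliaires du compactage (les compteurs partiels supportant l'op�ration de rang en temps constant) n'introduisent qu'un terme en $O(1)$ par �l�ment stock�, afin qu'ils soient absorb�s dans la constante additive du $O(1)$ et ne gonflent pas le coefficient devant $\log\sigma$. Puisque, comme expliqu� dans la Section~\ref{sec:bit_compact}, le vecteur de bits augment� occupe $\lceil N/\alpha\rceil(1+1/\delta)$ bits pour un $\delta$ constant, ce surco�t est bien de $O(N)$, et la compression finale conserve la forme $(N+2n)(\log\sigma + O(1))$.
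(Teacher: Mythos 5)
Votre preuve est correcte et suit essentiellement la même démarche que celle du manuscrit, qui se contente d'invoquer les arguments du Théorème~\ref{theo:space_one_error} en notant que la seule différence est le dictionnaire de listes de substitutions de niveau $2$ dans lequel on insère $N$ éléments. Votre version est simplement plus détaillée, notamment sur la sommation finale $(N+2n)\log\sigma + O(N+n+d)$ et sur le surcoût du vecteur de bits augmenté, mais l'idée et la décomposition sont identiques.
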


\begin{proof}
On peut utiliser les mêmes arguments que pour le théorème~\ref{theo:space_one_error}. La seule différence étant que l'on a maintenant le dictionnaire des listes de substitutions de niveau $2$ dans lequel on insère $N$ éléments. La borne spatiale se calcule facilement.
\end{proof}

Ensuite, analysant le temps de la requête on obtient le théorème suivant : 
\begin{theorem}
En supposant un hachage totalement aléatoire et incrémental, et que chaque liste de substitutions contient un seul élément, avec un mot requête donné de longueur $m$, alors le temps de requête moyen est de $O(m^2\times \left\lceil\frac{m\log\sigma}{w}\right\rceil)$.

Étant donné un dictionnaire de $d$ mots, tels que $N=\sum_{i=1}^{d}{n_i(n_i-1)/2}$
où $n_i$ est la longueur du mot numéro $i$ du dictionnaire, le temps de construction de la structure de données est de $O(N)$ en moyenne.
\end{theorem}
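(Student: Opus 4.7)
Mon plan consiste � adapter directement la preuve du Th�or�me~\ref{theo:time_one_error} aux deux niveaux de dictionnaires des listes de substitutions. Pour la \textbf{construction}, je note d'abord que le dictionnaire de niveau~$1$ re�oit exactement $n=\sum_{i=1}^{d} n_i$ insertions (identique au cas d'une seule erreur), tandis que le dictionnaire de niveau~$2$ re�oit exactement $N=\sum_{i=1}^{d} n_i(n_i-1)/2$ insertions, une par paire de positions distinctes dans chaque mot. L'hypoth�se que chaque liste de substitutions ne contient qu'un seul �l�ment garantit, comme dans la preuve du Th�or�me~\ref{theo:time_one_error}, que les cl�s ins�r�es sont toutes distinctes, ce qui permet d'invoquer la borne moyenne $O(1)$ pour l'insertion dans une table de hachage avec sondage lin�aire de facteur de chargement constant~\cite{Knuth63noteson}. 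Chaque valeur de hachage se calcule en $O(1)$ amorti par l'incr�mentalit� de la fonction de hachage (les tableaux $A_t$, $F$, $G$ pr�-calcul�s en $O(n_i)$ par mot); la construction cumul�e des deux dictionnaires co�te donc $O(n+N)=O(N)$ en moyenne, et l'insertion dans le dictionnaire exact ajoute $O(n)=O(N)$ par le m�me argument que dans la preuve pr�c�dente.

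Pour le \textbf{temps de requ�te}, je parcours les $\binom{m}{2}=O(m^2)$ paires de positions $(i,j)$ avec $i<j$ du mot requ�te $q=u\phi v\phi w$. Pour chacune, la valeur de hachage se calcule en $O(1)$ par une extension imm�diate de l'incr�mentalit� utilis�e dans la Section~\ref{sec:Verification_des_Occurrences} (il suffit de combiner deux termes correctifs au lieu d'un). La consultation du dictionnaire de niveau~$2$ renvoie en moyenne $O(1)$ caract�res candidats $c_\ell$ par le m�me argument que dans la preuve du Th�or�me~\ref{theo:time_one_error} (les deux bornes moyennes connues pour une recherche fructueuse et pour une recherche infructueuse dans une table � sondage lin�aire sont constantes). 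Pour chaque $c_\ell$ collect�, la substitution au premier joker donne $u c_\ell v\phi w$ dont la valeur de hachage se met � jour en $O(1)$, et la consultation du dictionnaire de niveau~$1$ renvoie � son tour $O(1)$ �l�ments $e_p$ en moyenne. Le nombre total de mots candidats produits est donc $O(m^2)$ en moyenne.

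La derni�re �tape consiste � v�rifier chaque mot candidat de la forme $u c_\ell v e_p w$ dans le dictionnaire exact. Pour chacun, on effectue un nombre constant d'acc�s (par la borne du sondage lin�aire) suivis d'une comparaison compl�te avec le mot stock�, qui co�te $O\!\left(\left\lceil \frac{m\log\sigma}{w}\right\rceil\right)$ par bit-parall�lisme. En multipliant par le nombre $O(m^2)$ de candidats, on obtient un temps total de requ�te moyen $O\!\left(m^2\,\left\lceil \frac{m\log\sigma}{w}\right\rceil\right)$, ce qui �tablit l'�nonc�. Le point le plus d�licat me para�t �tre la justification rigoureuse que l'hypoth�se \og~chaque liste de substitutions contient un seul �l�ment~\fg{} s'applique ici aux cl�s � deux jokers (il faut que les triplets $(u,v,w)$ obtenus en enlevant deux positions soient distincts d'un mot du dictionnaire � l'autre), mais cette propri�t� d�coule de la m�me observation structurelle que pour le niveau~$1$ et se v�rifie exp�rimentalement sur les jeux de donn�es consid�r�s.
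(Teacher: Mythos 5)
Votre preuve est correcte et suit essentiellement la m�me d�marche que celle du papier, qui se contente de renvoyer aux arguments du th�or�me pr�c�dent en rempla�ant $m$ par $m^2$ pour les requ�tes (acc�s aux $O(m^2)$ paires de positions du dictionnaire de niveau $2$) et $n$ par $N$ pour la construction. Votre r�daction est m�me plus d�taill�e que celle du papier, notamment sur le point d�licat de l'hypoth�se des listes singleton pour les cl�s � deux jokers.
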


\begin{proof}
De même que pour la borne spatiale, la borne temporelle peut être également prouvée en utilisant les mêmes arguments utilisés pour prouver le théorème~\ref{theo:time_one_error}.
Le coût des requêtes et des insertions dans le dictionnaire des listes de substitutions de niveau $2$ peut être délimité de la même manière que celui du dictionnaire de niveau $1$, sauf qu'on remplace le terme $m$ par $m^2$ pour les requêtes, car chaque requête fait $O(m^2)$ accès en plus au dictionnaire de niveau $2$. On  remplace également le terme $n$ par $N$ dans le temps de la construction, puisqu'on insère $N$ éléments dans le dictionnaire des listes de substitutions de niveau $2$.
\end{proof}

Nous avons seulement implémenté l'algorithme pour deux erreurs, mais notre dictionnaire (la structure de données) peut être étendu pour gérer $K>2$ erreurs. Pour cela, on utilise des dictionnaires de listes de substitutions jusqu'à $K$ niveaux. Le nombre de caractères spéciaux substitués dans chaque mot est égal à $k$, et le caractère stocké dans le niveau $K$ est le premier caractère (parmi les $K$ caractères à substituer par $\phi$). Pour chaque niveau, on doit faire toutes les combinaisons possibles pour remplacer les $k$ caractères par $\phi$. Un mot de longueur $n_i$, ajoute ${{n_i}\choose{k}}\times \log\sigma$ bits au niveau $k$ pour le dictionnaire ($1\leq k\leq K$).

\section{Expérimentation}
\label{sec:hash:experimentation}

Notre implémentation est modulaire.
Nous avons fait notre expérimentation avec les deux jeux de données suivants: l'ensemble des titres de Wikipédia \textbf{WikiTitle}, qui a environ 1,8 millions de mots, et le dictionnaire Anglais qui comporte environ 213 milles mots~\footnote{Nous remercions Dennis Luxen pour nous avoir fournit les jeux de données.}. Ces deux ensembles de données sont également utilisés dans les expérimentations de Karch et al~\cite{karch2010improved}. Ils ont également utilisé deux autres jeux de données plus petits: Mobydick et Town.

Nous avons expérimenté avec \textbf{Mobydick}, mais étant donné la petite taille de ce jeu de données (37 mille mots), le résultat n'a pas donné un bon aperçu des performances (la structure de données totale tient dans le cache du processeur). Nous n'avons pas fait d'expérimentations sur \textbf{Town} car il n'était pas disponible et de toute façon, le résultat n'aurait pas été très différent de \textbf{Mobydick} étant donné que le fichier ne contient que 47 mille mots.


Nous avons implémenté la structure de données, les algorithmes de construction et de recherche en \textbf{langage C} utilisant le compilateur GNU GCC version 4.4.1. Les tests ont été effectués avec un processeur Intel E8400 3.0 Gigahertz Core 2 duo, exécutant Windows 7. Nous avons utilisé un seul c\oe{}ur de processeur.

Nous avons comparé nos résultats avec ceux de Karch et al. Les raisons pour lesquelles nous avons choisi Karch et al. est que ces derniers ont trouvé que leurs résultats étaient meilleurs que ceux des approches concurrentes.
Comme l'implémentation de Karch et al. n'est pas accessible au public, nous avons juste pris les résultats tels que publiés dans l'article de Karch et al. Nous notons que le matériel utilisé par Karch et al. est comparable au nôtre, bien qu'ils ne soient pas identiques~\footnote{Leurs expérimentations sont effectuées sur un Intel Xeon X5550 CPU avec 2.67 GHZ, qui est une machine avec une performance très proche de la n\^{o}tre.}.
Ainsi, la comparaison des temps de requête n'est pas tout à fait exacte. Toutefois, étant donné que la capacité des deux machines est très proche, nous pouvons considérer la comparaison précise jusqu'à une marge d'erreur entre $10\%$ et $20\%$. D'autre part, l'utilisation de l'espace mémoire est directement comparable, car elle ne dépend pas du matériel utilisé.

Nous n'avons pas comparé avec les résultats de Boytsov. Bien que le code source soit disponible, nous n'avons pas réussi à le faire fonctionner.
Nous aurions pu comparer aux résultats de Boytsov~\cite{Bo12} tels qu'indiqués dans son article, mais il n'y a pas de tableaux indiquant la performance des algorithmes dans son papier. La seule alternative aurait été de deviner les résultats des performances depuis les figures, mais cela aurait donné des résultats trop approximatifs. Néanmoins, d'après les similitudes avec notre méthode, nous nous attendons à ce que la méthode de Boytsov ait les mêmes bornes en termes d'espace mémoire et de temps de requête. Cependant, sa méthode est inférieure à la nôtre dans les aspects suivants :

\begin{enumerate}
\item La structure de données est très lente à construire ~\cite{Bo12}, beaucoup plus lente que les autres compétiteurs étudiés dans son article. Notre structure de donnés est plus rapide que celle de Karch et al. qui est la plus compétitive.
Le temps de construction de sa structure peut aller jusqu'à $1$ heure pour les grands jeux de données. En revanche, notre temps de construction est de moins d'une minute pour tous les ensembles de données étudiés.
 
\item La construction de notre structure de données nécessite $O(n\log\sigma)$ bits contre $O(n\log n)$ bits pour l'implémentation de Boytsov. Ceci est très important pour les périphériques mobiles et embarqués où l'espace mémoire est un problème.

\item L'utilisation du hachage parfait minimal conduit à utiliser un espace mémoire élevé au moment de la construction, même si l'espace final est petit.
La méthode de hachage parfait minimal exige habituellement au moins $12$ octets par entrée, même si l'espace final est inférieur à $3$ bits. En revanche, l'espace de construction de notre variante non-compacte est exactement le même que l'espace final.
\item La méthode n'est pas incrémentale. L'insertion de nouveaux mots nécessite la reconstruction de toute la structure de données.
\end{enumerate}

\medskip
Pour réaliser nos tests, nous avons choisi au hasard $1000$ mots de notre dictionnaire et nous avons appliqué une opération d'édition choisie aléatoirement sur chacun d'eux, et ensuite, interrogé le dictionnaire. Pour obtenir le temps de requête final, nous divisons le temps par $1000$. Nous avons répété la procédure $20$ fois, et nous avons pris la moyenne des résultats obtenus.\\

Nous avons expérimenté avec différents facteurs de chargement (le paramètre $LF$ ou $\alpha$) allant de $0,3$ jusqu'à $0,7$. Les résultats sont résumés dans les tableaux~\ref{table:compar_english1}~,~\ref{table:compar_wiki1}~,~\ref{table:compar_english2}~,~\ref{table:compar_wiki2}. Les résultats pour les versions compactées sur le dictionnaire anglais sont présentés dans la figure~\ref{fig:plot_compact_english}. Pour Karch et al. juste un seul point est montré (dans le graphe d'une erreur) parce que le schéma pour une erreur n'admet qu'une seule variante (il ne permet pas de faire varier le temps de réponse en fonction du changement de la taille du dictionnaire).

\begin{figure}
\centering
\includegraphics[width=0.5\linewidth]{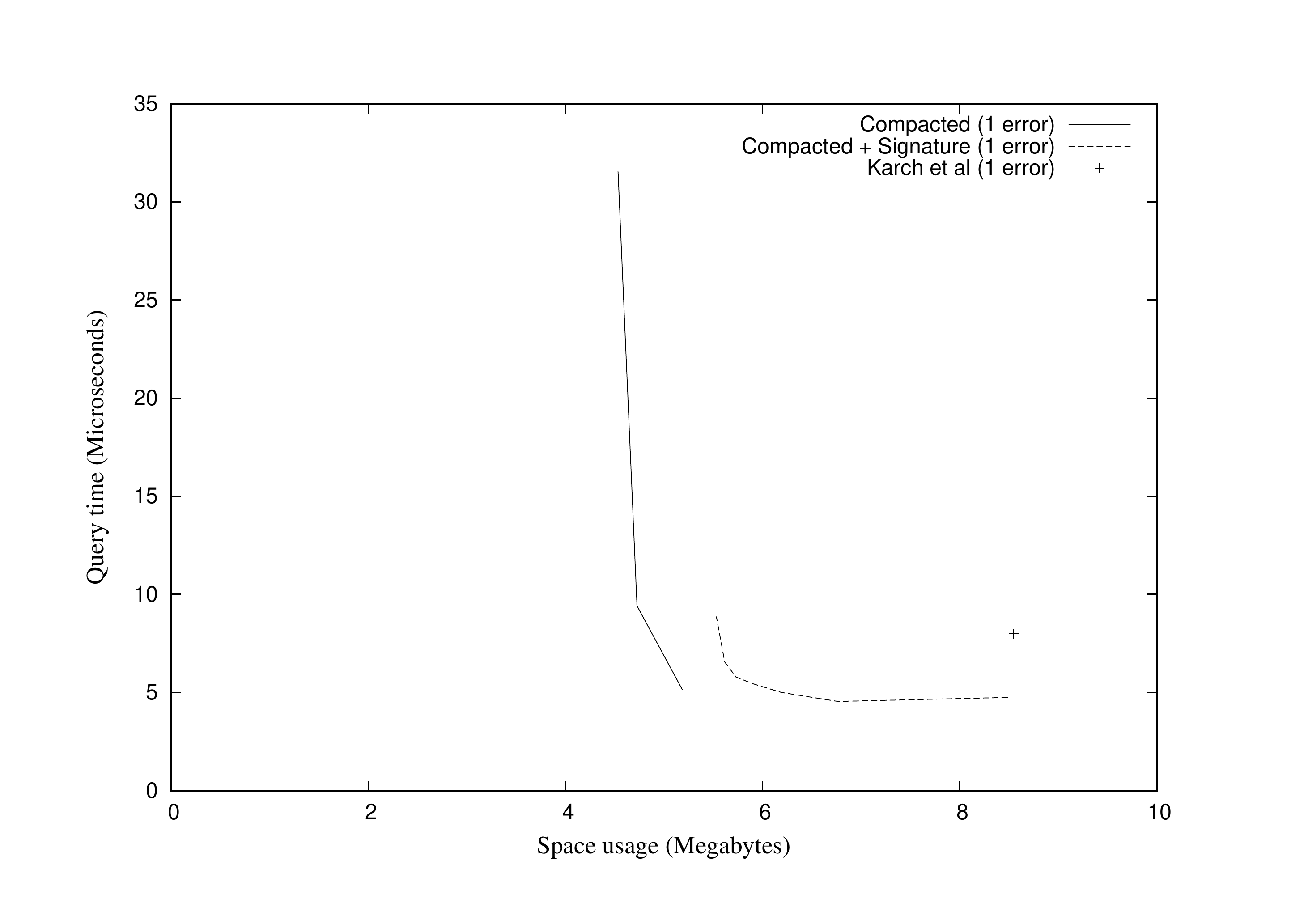}\hfill
\includegraphics[width=0.5\linewidth]{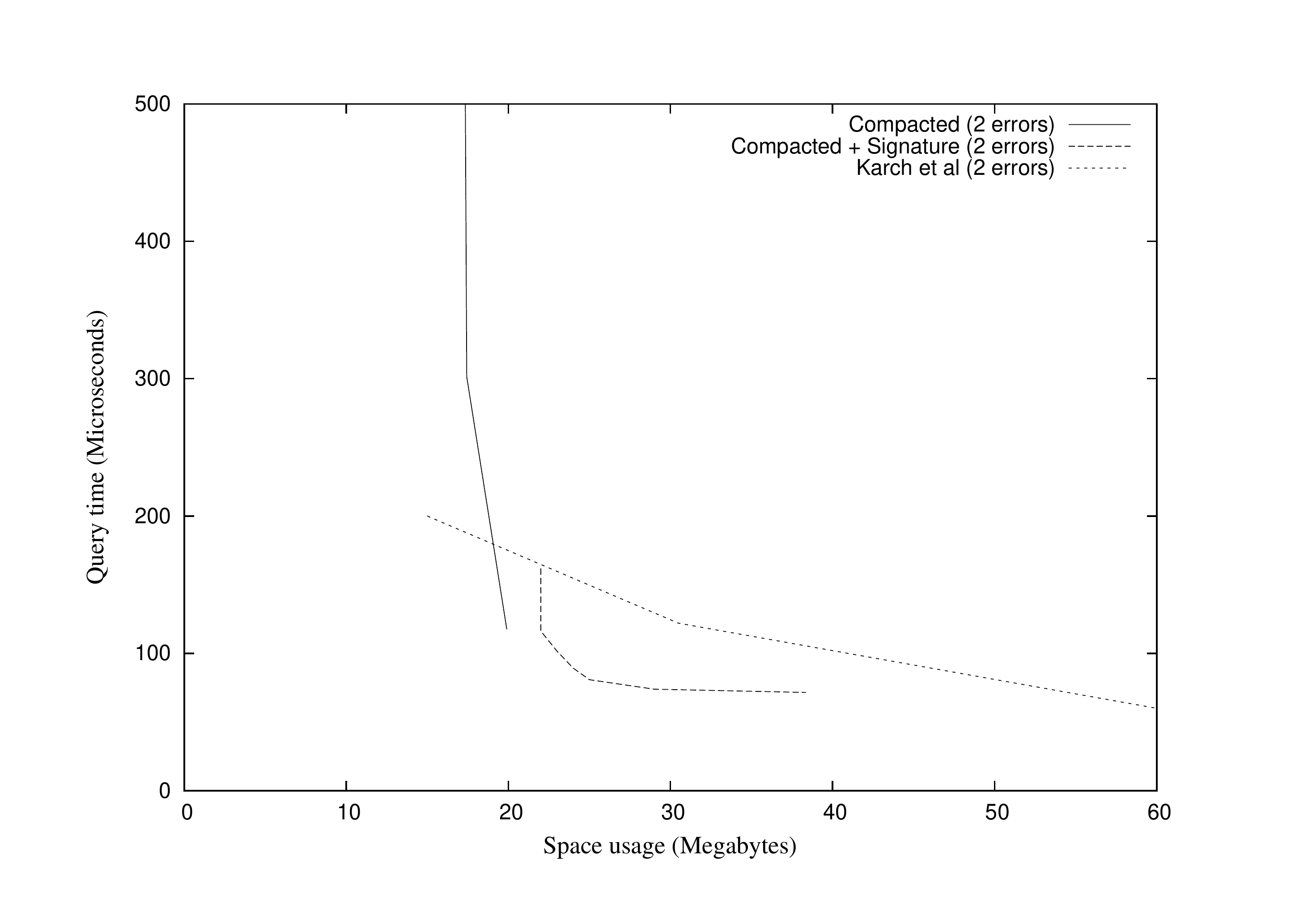}
\caption{L'utilisation de l'espace mémoire en fonction du temps de requête sur le dictionnaire Anglais (En) avec une (1) et deux (2) erreurs (versions compactes).}
\label{fig:plot_compact_english}
\end{figure}

\begin{table}
\small
\begin{center}
\begin{tabular}{|l|c|c|c|}
\hline
Méthode & Temps Constr. (secondes) & Espace (Mo) & Temps de requête ($\mu s$)\\
\hline
Karch et al     & 3.45
		& 8.55 
		& 8 \\

Non comp.	(LF 0.7) & 0.4
		& 5.72 
		& 27.6 \\
Non comp.  + Sign. (LF 0.7) 
		& 0.44 
	     	& 7.15
		& 6.24 \\
Compacte (LF 0.7)	& 0.43
		& 4.53
		& 31.6 \\

Compacte (LF 0.3)	& 0.47
		& 5.19
		& 5.154 \\

Comp. + Sign. (LF 0.7)
		& 0.48
		& 5.53
		& 8.9 \\
Comp. + Sign. (LF 0.3)
		& 0.625
		& 6.77
		& 4.55 \\
\hline

\end{tabular}

\end{center}
\caption{Comparaison des méthodes existantes sur le dictionnaire Anglais (En) pour une erreur.}
\label{table:compar_english1}
\end{table}

\begin{table}
\small
\begin{center}
\begin{tabular}{|l|c|c|c|}
\hline
Méthode & Temps Constr. (secondes) & Espace (Mo) & Temps de requête ($\mu s$)\\
\hline
Karch et al     & 32.29
		& 55.84
		& 34 \\
Non comp. + Sign. (LF 0.7) 
		& 4.61 
	     	& 54.44
		& 8.44 \\
Compacte (LF 0.5)	& 4.72
		& 36.27
		& 16.6 \\
Comp. + Sign. (LF 0.7)
		& 4.91
		& 42.3
		& 13.5 \\
Comp. + Sign. (LF 0.3)
		& 5.17
		& 47.5
		& 8.28 \\

\hline

\end{tabular}

\end{center}
\caption{Comparaison des méthodes existantes sur le dictionnaire WikiTitle (Wi) pour une erreur.}
\label{table:compar_wiki1}
\end{table}

\begin{table}
\small
\begin{center}
\begin{tabular}{|l|c|c|c|}
\hline
Méthode & Temps Constr. (seconds) & Espace (Mo) & Temps de requête ($\mu s$)\\
\hline
Karch et al     & 12.596
		& 30.49 
		& 122 \\
Non comp. + Sign.	(LF 0.7) 
		& 2.612 
	     	& 27.87
		& 117 \\
Compacte (LF 0.7) & 2.54
		& 16.6
		& 2309\\

Comp. + Sign. (LF 0.7)
		& 2.802
		& 22.8
		& 162 \\
Comp. + Sign. (LF 0.4)
		& 2.95
		& 24.44
		& 89.27 \\
Comp. + Sign. (LF 0.2)
		& 3.1
		& 29.08
		& 73.89 \\
\hline
\end{tabular}

\end{center}
\caption{Comparaison des méthodes existantes sur le dictionnaire Anglais (En) pour deux erreurs.}
\label{table:compar_english2}
\end{table}

\begin{table}
\small
\begin{center}
\begin{tabular}{|l|c|c|c|}
\hline
Méthode & Temps Constr. (seconds) & Espace (Mo) & Temps de requête ($\mu s$)\\
\hline
Karch et al     & 107.289
		& 170.79
		& 502 \\
Non comp. + Sign.	(LF 0.7) 
		& 21.29
	     	& 203.82
		& 503 \\

Compacte (LF 0.5)
		& 21.36
		& 127.87
		& 1993	\\
Comp. + Sign. (0.7)
		& 22.86
		& 162.57
		& 1271 \\
\hline

\end{tabular}
\end{center}
\caption{Comparaison des méthodes existantes sur le dictionnaire WikiTitle (Wi) pour  deux erreur.}
\label{table:compar_wiki2}
\end{table}

D'après les résultats, nous pouvons faire les observations suivantes:
\begin{enumerate}
\item Tous nos algorithmes sont beaucoup plus rapides dans la construction que l'algorithme de référence. Ils sont généralement entre $5$ et $10$ fois plus rapide.

\item L'algorithme pour une erreur est très robuste. Il est $4$ fois plus rapide que le résultat de Karch et al. sur les très grands jeux de données (les titres de Wikipédia "WikiTitle").
 
\item Le compactage donne un compromis très intéressant entre l'espace mémoire et le temps.

\item Le temps de requête de notre méthode pour deux erreurs n'est pas si bon, surtout sur les très grands jeux de données.
Toutefois, il fournit toujours un compromis pertinent espace/temps. Même avec le plus grand jeu de données, nous pouvons obtenir un avantage très significatif en terme  d'espace mémoire au détriment d'un temps de requête plus important.
\end{enumerate}

Les variantes non-compactes de notre index supportent naturellement l'insertion de nouveaux mots. Nous avons fait quelques expérimentations pour appuyer notre affirmation selon laquelle notre index prend en charge efficacement l'insertion de nouveaux éléments.
Pour cela, nous avons construit d'abord notre index sur une fraction de $0,7$ des mots dans le jeu de donnée WikiTitle avec un facteur de chargement de $0,7$. Nous avons ensuite inséré une fraction de $0,25$ de l'ensemble de données original, résultant en un facteur de chargement final de $0,95$.

Nous avons mesuré le temps d'insertion~\footnote{Nous avons fait des expérimentations sur une machine différente avec des performances légèrement plus faible (Intel Xeon avec 2.27 GHz, 2 Gigaoctets de RAM,  sous Windows XP).} pour des fractions de $0,05$ successives du fichier original. Les résultats sont présentés dans le tableau~\ref{table:insert_test_wiki}. Comme nous pouvons le voir dans le tableau, l'insertion ne prend que quelques microsecondes pour le dictionnaire pour une erreur, et de quelques dizaines de microsecondes dans le dictionnaire pour deux d'erreurs.

\begin{table}
\small
{\centering
\begin{tabularx}{\textwidth}{|X|X|X|}

\hline

Facteur de chargement (LF)& \multicolumn{2}{>{\centering\setlength{\hsize}{2\hsize}\addtolength{\hsize}{2\tabcolsep}}X|}
{Temps d'insertion moyen en $\mu s$} \\

\cline{2-3}&1 erreur      &2 erreurs\\
\hline
0.7-0.75&2.42&13.34\\
0.75-0.80&2.44&13.83\\
0.80-0.85&2.52&16.66\\
0.85-0.90&3.52&30.11\\
0.90-0.95&5.56&43.74\\

\hline

\end{tabularx}
}
\caption{Le temps d'insertion dans l'index avec le jeu de données WikiTitle pour $1$ et $2$ erreurs.}
\label{table:insert_test_wiki}

\end{table}

Nous avons vérifié d'une façon expérimentale l'hypothèse que la plupart des listes de substitutions (pour une erreur) contiennent juste un seul élément. À partir du tableau~\ref{table:subst_list_stats}, nous pouvons voir que plus de $96\%$ des caractères sont dans des listes de substitutions de taille $1$.

\begin{table}[h]
\small
\begin{center}
\begin{tabular}{|c|c|c|}
\hline
Taille de liste & Pourcentage Anglais (En) & Pourcentage WikiTitle (Wi)\\
\hline
1	        & 98.89\%
		& 96.16\%\\

2	        & 0.71\% 
		& 1.63\%\\

3		& 0.19\% 
		& 0.52\%\\

4		& 0.09\%
		& 0.29\%\\

5		& 0.05\%
		& 0.19\%\\

$\geq 6$	& 0.08\%
		& 1.20\%\\

\hline

\end{tabular}

\end{center}
\caption{Pourcentage des éléments dans les listes de substitutions pour une taille donnée pour les deux jeux de données.}
\label{table:subst_list_stats}
\end{table}

\section{L'application de notre méthode dans l'indexation du texte}
\label{sec:hash:application_notre_methode_dans_texte}

Notre solution est adaptable à la recherche des occurrences des mots dans un texte sans perte de performances.
Rappelons qu'un dictionnaire $D=\{w_1,w_2,...,w_d\}$ avec $d$ mots, et $w_i \neq w_j $ ($i \neq j$).
Un texte 'T' peut être vu comme étant un ensemble de mots, où chaque mot peut avoir plusieurs occurrences, $T=\{w_1,w_2,...,w_n\}$, et il se peut que $w_i = w_j$ ($i \neq j$).
L'index de Texte 'T' est un ensemble de mots où un mot peut avoir plusieurs occurrences (des positions).
On peut écrire la formule 'T' comme suit : $T'=\{l_1,l_2,...,l_d\}$ où chaque mot peut avoir plusieurs occurrences $l_i=\{pos_1,pos_2,...,pos_{nb}\}$.\\

Afin d'adapter notre solution pour la recherche des mots dans un texte, il suffit d'ajouter dans la structure de données de dictionnaire exact, pour chaque mot un champ qui pointe sur une liste finie de ses positions.

\paragraph{La recherche des facteurs dans le texte : }~\\
Pour la recherche de n'importe quel facteur dans le texte, on peut remplacer notre dictionnaire exact avec un arbre des suffixes pour la vérification des mots candidats, le dictionnaires des listes de substitutions des deux niveaux restent les mêmes. On insère le caractère de substitution seulement s'il n'existe pas dans le dictionnaire des listes de substitutions.

\section{Conclusion}

Dans ce chapitre, nous avons présenté un algorithme robuste et efficace pour la recherche approchée pour $k \geq 2$ dans un dictionnaire en utilisant les tables de hachages.
Les algorithmes que nous avons proposés sont faciles à implémenter et très efficaces dans la pratique.

L'algorithme pour une erreur utilise grossièrement deux fois la taille du dictionnaire exact. Le temps de la requête est attendu comme étant linéaire, en se basant sur certaines hypothèses qui sont presque toujours rencontrées en pratique.
L'algorithme pour deux erreurs à un espace d'utilisation linéaire, et en pratique, il est plus lent que le premier. Toutefois, Il atteint un compromis pertinent entre temps/espace.

Les versions non-compactes de nos structures sont incrémentales (avec un temps d'insertion, au pire quelques microsecondes).\\

Dans le chapitre suivant, on va explorer une autre méthode pour résoudre le problème de la recherche approchée. L'idée se base sur le concept de filtrage. Le but est de trouver les parties exactes rapidement de la requête en utilisant le Trie, et de localiser la position d'erreur afin de trouver les solutions possibles rapidement.

\chapter{Un algorithme rapide de recherche approchée de motifs dans un dictionnaire basé sur le Trie et le Trie inversé.}
\label{chap:Trie_R_Tire}

\ifpdf
    \graphicspath{{Trie_R_Trie/}}
\else
    \graphicspath{{Trie_R_Trie/}}
\fi

\section{Introduction}

Dans ce chapitre, nous proposons une solution basée sur le concept de filtrage \cite{Na01, Na01b}. L'algorithme parcourt tous les mots du dictionnaire afin de localiser rapidement les candidats possibles qui peuvent être des solutions.

Chaque mot approché contient quelques sous-chaînes exactes. Nous cherchons ces pièces exactes en utilisant une structure de données bidirectionnelle (un Trie et un Trie inversé). Nous utilisons une méthode de vérification pour obtenir les mots candidats et retourner toutes les occurrences qui sont au plus différentes du mot requête par une seule erreur.

\medskip
\paragraph{Ce chapitre est organisé comme suit : }
Dans la section \ref{sec:our_method}, nous détaillons les étapes de notre nouvelle approche (de la recherche approchée) qui réduit le nombre de transitions sortantes testées dans chaque n\oe{}ud dans le Trie.
Dans la section \ref{sec:the_second_method}, nous proposons une méthode qui utilise le Trie et le Trie inversé, et qui améliore la méthode de Amir et al \cite{Amir2000}, grâce à l'utilisation d'heuristique.
Dans la section \ref{sec:the_third_method}, nous expliquons notre troisième méthode, qui est une combinaison des deux méthodes précédentes.
La section \ref{sec:testes} porte sur les tests et les expérimentations.
Dans la section \ref{sec:TRT:application_notre_methode_dans_texte}, nous expliquons comment adapter notre méthode dans l'indexation de texte. Dans la dernière section \ref{sec:auto:conclusion} nous donnons une conclusion de ce chapitre.

\section{Une nouvelle approche pour réduire le nombre de transitions sortantes testées dans chaque n\oe{}ud}
\label{sec:our_method}

\subsection{L'idée de base}

Soit $ D = \{{x_1, x_2, ..., x_d} \} $ un dictionnaire défini sur l'alphabet  $\Sigma$, et soit $P$ un mot requête de longueur $|P|=m$, $P=c_1 c_2 \cdots c_m$.
Supposons l'existence dans $P$ d'une seule erreur; alors  $P$ peut s'écrire $P = P_1 \phi P_2$ où $P_1=c_1 c_2 \dots c_{i-1}$, $\phi$: représente l'erreur à la position $i$, et  $P_2= c_{i+1} c_{i+2} \dots c_m$.

Si on suppose qu'on a une solution approchée $P'\in D$ et $P' = P_1\:a\:P_2 , \; a\in \Sigma$  où $P_1$ et $P_2$ sont deux fragments exacts, alors $P_1 a$ et  $a P_2$ doivent exister respectivement dans 
le Trie et dans le Trie inversé.

\medskip
Nous pouvons conclure de cette observation que, seul le chemin sortant marqué par la lettre $a$ doit être vérifié pour trouver l'autre pièce exacte $P_2$ dans le Trie.

Nous devons obtenir toutes les lettres $a \in \Sigma $ qui satisfont cette propriété $P' = P_1 a P_2$.

\medskip
L'idée revient à faire une recherche exacte de $P_1$ dans le Trie (recherche de gauche vers la droite), et une recherche exacte pour $P_2$ dans le Trie inversé (recherche de droite à gauche) pour le mot requête, puis faire une intersection entre toutes les transitions sortantes des deux n\oe{}uds qui représente la position de l'erreur, pour avoir les caractères qui pourraient conduire aux solutions.

\medskip

La dernière étape consiste à vérifier si $P_1$ et $P_2$ appartiennent au même mot, pour s'assurer que la réponse n'est ni de la forme $P_1 a Z $, ni de la forme $Q a P_2$ où $Q, Z$ sont deux sous-chaînes différentes de $P_1, P_2$ respectivement.

\bigskip
Dans la prochaine sous-section, nous expliquons et nous détaillons notre méthode qui s'appuient sur le Trie et le Trie inversé afin de réduire le nombre des transitions sortantes dans chaque n\oe{}ud, et vérifier que quelques chemins seulement, ceux qui peuvent conduire à des solutions 
(voir la figure \ref{fig:reductiondechemin}).

\subsection{Principe de l'algorithme}

Soit $P = P_1 \phi P_2$ le mot requête  où $P_1$  et $P_2$ sont deux fragments exacts, et $\phi$ un caractère joker qui représente la position de l'erreur.

Notons $v_{err}$ le n\oe{}ud du Trie de la position de l'erreur (où la recherche s'arrête). De façon similaire, notons $w$ le n\oe{}ud où la recherche de $P_2$ s'arrête dans le Trie inversé.

\medskip

Soit $A=\{a_1,a_2,...,a_{L1} \}$ , et $B=\{b_1,b_2,...,b_{L2}\}$ les ensembles des caractères qui étiquettent les transitions sortantes de $v_{err}$ et $w$ respectivement.

\medskip

Afin de trouver les caractères communs, on effectue une intersection entre les deux ensembles :\\
$C= A  \cap B = \{c_1,c_2,...,c_{L3}\}$.\\
Si une solution approchée existe, cela signifie qu'il existe un caractère $c_k$ avec $k \in [1..L3]$, et le mot $P'= P_1 c_k P_2 , \: P'\in D$ est une solution.\\

\medskip

Après l'opération de l'intersection, l'ensemble résultat $C$ contiendra 2 types d'éléments :
\begin{enumerate}
\item Les caractères qui mènent à des solutions : dans ce cas, on a  $P_1 c_k$, et $c_k P_2$ et les deux sous-chaînes ($P_1, P_2$) sont dans le même mot.
\item  Les caractères qui ne mènent pas aux solutions : lorsqu'on a $P_1 c_k$ et $c_k P_2$  ne sont pas dans le même mot, autrement dit  $P_1 c_k Z$ et $Q c_k P_2$ sont deux mots différents.
\end{enumerate}

\medskip
En raison de ces deux cas, nous devons vérifier dans le Trie que $P_1 c_k P_2$ est dans le même mot. \\

Remarque :\\
Sachant  ($P_1 c_k$) est déjà trouvé dans le Trie, il reste juste à tester l'existence de ($P_2$) dans le Trie sur le même chemin étiqueté par ($c_k$). De même, nous pouvons faire cette vérification dans le Trie inversé, c'est-à-dire tester l'existence de $P_1$ sur le chemin étiqueté par $c_k$ qui sort de $P_2$.

\medskip
Pour réduire le nombre de comparaisons, nous vérifions la plus petite chaîne d'entre $P_1$ et $P_2$. En effet, si ($|P_1|>|P_2|$) on vérifie $P_2$ dans le Trie, autrement, on vérifie $P_1$ dans le Trie inversé.

\bigskip
Pour trouver toutes les solutions, nous devons déterminer toutes les positions possibles de l'erreur.
 
\subsection{Les positions possibles de l'erreur}
\label{Possible_error_positions}

Pour un motif donné $P = P_1 \phi P_2$, $|P| =|P_1|+1+|P_2|=m$. La recherche exacte de ($P$) dans le Trie donne ($P_1$); la position d'erreur est automatiquement trouvée lorsqu'on ne peut pas avancer dans le Trie.

\medskip
Puisque $k = 1$, seuls les cas suivants sont possibles (voire Fig. \ref{fig:error_position_T_RT}) :

\begin{enumerate}[a.]
\item Il n'est pas possible de trouver une autre erreur après $|P_1|+1$, sinon nous comptabilisons deux erreurs. Donc, Lorsque nous cherchons $P_2$ ($|P_2| = m-|P_1|-1$) dans le Trie inversé, si nous trouvons une erreur avant la longueur $|P_2|$, par conséquent, il n'existe aucune solution. Soit $q$ le fragment (la sous-chaîne) trouvé dans le Trie inversé avant la position d'erreur $w$ (qui satisfait $|P_2|$), avec $|q|<|P_2|$. Cela signifie que nous avons $|P_2|-|q|$ positions à ne pas vérifier.

\item Soit $V=\{v_1, v_2, ..., v_k, v_{err} \}$ l'ensemble des n\oe{}uds sur le chemin qui mène de la racine à $v_{err}$. Tous les ($v_i \in V$) peuvent être une position d'erreur.

\medskip
Lorsque nous appliquons une distance d'édition, un autre chemin (différent de celui de la recherche exacte) peut conduire à une solution, voire Fig. \ref{fig:error_position_T_RT}.

\begin{figure}[h!]
\centering
\includegraphics[width=0.7\linewidth]{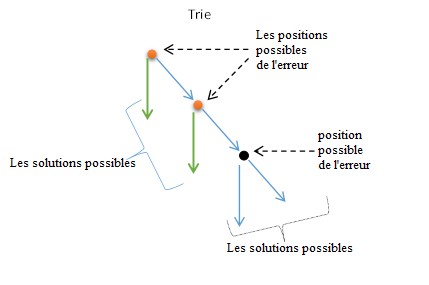}
\caption{Toutes les positions possibles de l'erreur  dans le Trie et le Trie inversé. Toutes les transitions sortantes de ces n\oe{}uds mènent à des solutions possibles.}
\label{fig:error_position_T_RT}
\end{figure}

Comme expliqué précédemment, dans chaque position possible de l'erreur, seulement quelques chemins seront examinés (voire Fig. \ref{fig:reductiondechemin}).

\begin{figure}[h!]
\centering
\includegraphics[width=0.7\linewidth]{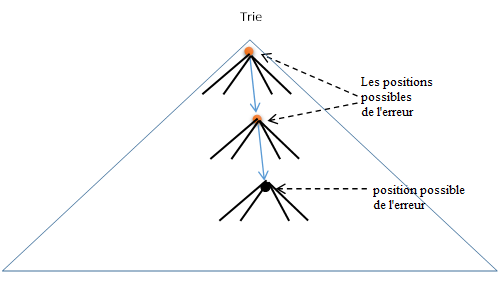}
\caption{Quelques transitions sortantes à partir du n\oe{}ud erreur seront vérifiées.}
\label{fig:reductiondechemin}
\end{figure}

\end{enumerate}


\subsection{L'algorithme de recherche}
\label{algo:TRT_CI}

\noindent
\rule{8cm}{0.1pt}\\
\textbf {Algorithme de la recherche approchée TRT\_CI}\\
\textbf{Entrée :} un mot requête $q$ de longueur $m$, les structures de données Trie et Trie inversé.\\
\textbf{Sortie :} liste des mots solutions approchées.\\

\begin{enumerate}[a.]
\item \label{itm:e_1} Faire une recherche exacte dans l'arbre Trie, jusqu'à ce que l'on arrive à la position de l'erreur. Soit  $v_{err}$ le n\oe{}ud qui représente cette erreur. Soit $P_1$ le préfixe avant cette position.

\item \label{itm:e_2} Faire une recherche pour le suffixe $P_2$ dans le Trie inversé, jusqu'à ce que la longueur de ce suffixe soit consommé, ($|P_2|=m-|P_1|-1$). Soit $w$ le n\oe{}ud dans lequel la recherche s'arrête dans le Trie inversé.

\item \label{itm:e_3} Si on ne peut pas trouver $P_2$,  l'algorithme s'arrête (il n'y a pas de solution).

\item \label{itm:e_4} Sinon, on a deux n\oe{}uds $v_{err}$ et $w$;  on calcule une intersection entre les deux ensembles des caractères des transitions sortantes de ces deux n\oe{}uds. Soit $C$ l'ensemble des caractères résultant.

\item \label{itm:e_5} Pour chaque caractère $c_k \in C$, continuer une recherche exacte en suivant la transition sortante étiquetée par $c_k$ comme suit:

	\begin{itemize}
	\item Si ($|P_1|>|P_2|$) alors continuer la recherche exacte de $P_2$ dans le Trie.
	\item Sinon continuer la recherche exacte de $P_1$ dans le Trie inversé.
	\end{itemize}

\item \label{itm:e_6} Pour chaque $v_i \in V$ où $V=\{v_1, v_2, ..., v_k \}$ sont les n\oe{}uds sur le chemin qui mène de la racine à $v_{err}$, répétez les étapes : \ref{itm:e_2}, \ref{itm:e_3}, \ref{itm:e_4} et \ref{itm:e_5}

\end{enumerate}
\rule{8cm}{0.1pt}\\

\subsection{Des cas particuliers}
\label{Special_cases}

Deux cas particuliers doivent être traités séparément: le cas où $P_1$ ou bien $P_2$ sont vides, et le cas où l'on trouve une solution exacte pour le motif recherché $P$.

\begin{itemize}

\item Le cas où l'une des deux sous-chaînes est vide (préfixe ou suffixe) :\\
Soit $P = P_1 \phi P_2$ et $P_1=\epsilon$ ou $P_2=\epsilon$, donc, on a $P= P_1 \phi$  ou bien  $P = \phi P_2$. Dans ce cas, l'intersection est faite entre un n\oe{}ud interne et le n\oe{}ud racine. Exécuter une intersection dans ce cas serait inefficace. Pour éviter ce calcul, nous appliquons une distance d'édition (que la suppression et la substitution) à la position finale de la sous-chaîne non-vide.
\medskip
\item Le cas où l'on trouve une solution exacte :
\medskip
Lorsqu'on effectue une recherche exacte sur le mot requête dans le Trie et dans le Trie inversé et aucune erreur n'est rencontrée.

\medskip
Soient $n_1$ et $n_2$ les n\oe{}uds dans lesquels la recherche exacte se termine dans le Trie et le Trie inversé, respectivement.
Nous traitons ce cas comme suit :
\medskip
	\begin{itemize}
	
	\item La première étape : dans les deux n\oe{}uds $n_1$ et $n_2$, seule l'erreur d'insertion est examinée.
	\medskip	
	\item La deuxième étape : après la première étape, on aura $P = P_1 \phi P_2$, où $|P_1 |\geq 1$ et  $|P_2|\geq 1$, tous les n\oe{}uds sur le chemin qui mène de la racine à $n_1$ ( ou $n_2$) seront considérés comme des n\oe{}uds de l'erreur.
	\end{itemize}

\end{itemize}

\subsection{Le cas de l'insertion et la suppression}

\paragraph{La suppression : }
Soit un mot requête $P = P_1 \phi P_2$ avec $|P|= m$. Soit $v_{err}$ et $w$ deux n\oe{}uds qui représentent la profondeur de $|P_1|$ et $|P_2|$ respectivement.
Pour faire une recherche avec une suppression, il suffit juste de sauter un seul caractère (la position de l'erreur) dans le mot requête. Cela revient à faire une recherche exacte de mot $P' = P_1 \: P_2$.

Lorsque on applique l'algorithme \ref{algo:TRT_CI} pour faire la recherche des solutions avec substitution, en même temps, on cherche les solutions avec une suppression comme suit :
Si $|P_1| > |P_2|$, alors on continue à vérifier $P_2$ dans le Trie à partir du n\oe{}ud $v_{err}$. Sinon, on continue à vérifier $P_1$ dans le Trie inversé à partir du n\oe{}ud $w$.

\paragraph{Le cas de l'insertion :}
Soit le mot requête $P = P_1 \: P_2$ d'une longueur $m$. Soit le pattern $P' = P_1 \phi P_2$ qui représente une solution approchée avec une insertion, avec $|P'|=m+1$, et $\phi$ est la position de l'erreur de type insertion.

Pour trouver toutes les solutions approchées avec une insertion, on fait une recherche avec substitution avec le même algorithme  \ref{algo:TRT_CI} sur le mot requête $P' = P_1 \phi P_2$.
On cherche la première partie $|P_1|$ dans le Trie pour déterminer le n\oe{}ud $v_{err}$, ensuite, on cherche $P_2$ dans le Trie inversé ($|P_2| = m - |P_1|$) pour déterminer le n\oe{}ud $w$, ensuite, on fait une intersection entre les caractères des fils des deux n\oe{}uds afin de continuer la vérification de l'un des parties ($P_1$ ou $P_2$) en suivant les caractères en commun.

\subsection{L'intersection}

Chaque n\oe{}ud à au maximum $\sigma$ fils (le nombre total des caractères de l'alphabet). 
Le but est de faire une intersection entre deux ensembles des caractères ($S_1$ et $S_2$) des transitions sortantes de deux n\oe{}uds.

Chaque ensemble de l'alphabet est ordonné et représenté par un tableau de bits (si le caractère existe donc 1, et 0 sinon).

Si $|S_1| < |S_2|$, alors on vérifie l'existence des éléments de $S_1$ dans $S_2$. Comme les caractères sont ordonnés, alors on a un accès direct à la bonne case (le bon bit) dans le tableau de bits. Cela donne comme complexité $O(|S_1|)$  avec  $|S_1| < |S_2|$.

\subsection{Évaluation de la complexité}

\begin{theorem}
Étant donné un dictionnaire $D$ équiprobable de $d$ mots, et un mot requête $q$ d'une longueur $m$.
La complexité en pire cas est en $O(\sigma m^2)$. La complexité temporelle moyenne est $O(m^2)$ si $m \geq 2 \frac{\log d}{\log \sigma}$  (elle est en $O(m)$ pour chaque position).  La complexité est en $O(m + m.occrs)$ si $m \geq 2 ( \frac{log d}{log \sigma} + \frac{log m}{log \sigma} )$, $occrs$ représente le nombre des occurrences de la solution.
\end{theorem}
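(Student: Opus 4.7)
L'analyse se d�compose selon les trois bornes annonc�es (pire cas, moyenne, et meilleure borne moyenne avec occurrences), et exploite le fait que les deux traversals initiales (dans le Trie et le Trie invers�) peuvent �tre r�alis�es \emph{incr�mentalement} une seule fois, en m�morisant les n\oe{}uds rencontr�s � chaque niveau. Ainsi, le co�t de la localisation de $v_{err}$ et $w$ pour toutes les $m$ positions d'erreur possibles (voir section \ref{Possible_error_positions}) est global en $O(m)$, et seuls les co�ts d'intersection et de v�rification s'ajoutent par position.

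Pour le \textbf{pire cas} $O(\sigma m^2)$ : pour chaque position d'erreur $i \in [1..m]$, on paie (i) $O(\sigma)$ pour l'intersection des ensembles de transitions sortantes de $v_{err}$ et $w$ (acc�s direct au tableau de bits repr�sentant les sorties, comme d�crit dans la proc�dure d'intersection), puis (ii) au plus $\sigma$ caract�res candidats, chacun exigeant une v�rification exacte de longueur au plus $\min(|P_1|,|P_2|) = O(m)$ dans le Trie ou le Trie invers�. On obtient ainsi $O(\sigma m)$ par position, soit $O(\sigma m^2)$ au total ; il faut ensuite ajouter le co�t $O(m)$ (n�gligeable) des deux traversals initiales et le traitement des cas particuliers de la section \ref{Special_cases}, qui restent born�s par $O(\sigma m)$.

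Pour la \textbf{complexit� moyenne} $O(m^2)$ lorsque $m \geq 2\log d / \log \sigma$ : j'invoquerais la propri�t� classique qu'un Trie construit sur $d$ mots al�atoires d'un alphabet de taille $\sigma$ a une profondeur esp�r�e en $O(\log_\sigma d)$ pour toute requ�te al�atoire (la probabilit� qu'un pr�fixe de longueur $\ell$ soit valide est born�e par $d \sigma^{-\ell}$ par union bound). Cela entra�ne que (a) les traversals initiales dans les deux Tries s'arr�tent en esp�rance apr�s $O(\log_\sigma d)$ �tapes, et (b) pour chaque position $i$, l'ensemble des transitions sortantes de $v_{err}$ (et de $w$) a une taille esp�r�e $O(1)$, donc la taille esp�r�e de l'intersection $|C|$ est $O(1)$. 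Le co�t esp�r� par position devient donc $O(1) + O(1)\cdot O(m) = O(m)$ pour les v�rifications r�siduelles (car ces v�rifications demeurent en $O(m)$ dans le pire cas par candidat), d'o� un co�t total esp�r� de $O(m^2)$.

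Pour la borne $O(m + m\cdot occrs)$ lorsque $m \geq 2(\log d + \log m)/\log \sigma$ : l'hypoth�se renforc�e garantit que $d \cdot m \cdot \sigma^{-m/2} = o(1)$, ce qui implique qu'avec grande probabilit�, pour une position $i$ telle que $\max(|P_1|,|P_2|) \geq m/2 \geq (\log d + \log m)/\log \sigma$, la v�rification du c�t� le plus long (que l'algorithme choisit, voir �tape \ref{itm:e_5} de la section \ref{algo:TRT_CI}) �choue imm�diatement sauf lorsqu'elle correspond � une v�ritable occurrence. Le co�t total esp�r� des v�rifications est alors amorti par le nombre r�el d'occurrences : chaque occurrence co�te $O(m)$ � produire, et les v�rifications infructueuses ajoutent un terme esp�r� en $O(m)$, d'o� la borne $O(m + m \cdot occrs)$. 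L'obstacle principal de cette derni�re �tape sera de formaliser l'argument probabiliste sur la v�rification asym�trique (selon qu'on poursuive dans le Trie ou le Trie invers�) et sur l'hypoth�se d'�quiprobabilit� du dictionnaire ; les deux premi�res bornes se d�duisent de mani�re plus directe par comptage des op�rations.
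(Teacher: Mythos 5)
Votre proposition suit essentiellement la m\^eme d\'emarche que la preuve du papier : le pire cas par comptage direct des $\sigma m$ branches, chacune v\'erifi\'ee en $O(m)$, et les deux bornes moyennes via la borne d'union donnant une probabilit\'e au plus $d\,\sigma^{-m'}$ (avec $m'\geq m/2$) qu'un mot du dictionnaire survive \`a la partie exacte la plus longue, soit $O(1)$ resp.\ $1/m$ candidats esp\'er\'es par position. Seule retouche : sous la seconde hypoth\`ese on a $d\cdot m\cdot\sigma^{-m/2}\leq 1$, donc $O(1)$ et non $o(1)$, mais votre d\'ecompte final ($O(m)$ esp\'er\'e au total pour les v\'erifications infructueuses, plus $O(m)$ par occurrence r\'eelle) co\"incide avec celui du papier.
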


\begin{proof}
Étant donné un dictionnaire équiprobable $D$ de $d$ mots, et un mot requête $q=P_1 c P_2$ d'une longueur $m$.
La complexité temporelle en pire cas est de $O(\sigma m^2)$ :\\
Dans chaque position $i \in [1..m]$ on teste tous les $\sigma$ caractères possibles (pour insertion ou substitution): 
nous avons $m$ positions, ce qui donne $\sigma m$ branches 
du Trie à explorer. Pour chaque branche nous avons un mot candidat que l'on doit vérifier dans le Trie en un temps $O(m)$ en descendant 
depuis cette branche en faisant une recherche exacte pour tous les caractères restant du mot. Donc, le temps total est de $O(\sigma m^2)$
dans le pire cas. 

\begin{enumerate}
\item Le cas en $O(m^2)$. 
L'algorithme cherche $P_1$ dans le Trie, et $P_2$ dans le Trie inversé, ensuite après l'étape de l'intersection, l'algorithme doit vérifier la partie la plus courte entre $P_1$ et $P_2$. Si on suppose que le résultat de l'intersection est non nul. Pour $m \geq 2 \frac{\log d}{\log \sigma}$, la taille de l'une des deux parties $P_1$ ou $P_2$ est au moins égale à $m' \geq \frac{1}{2} m$. Soit $P'=P_1$ si $P_1>P_2$ et $P'=P_2$ sinon. Nous avons donc $|P'|\geq m'$. En supposant que $m' \geq \frac{\log d}{\log \sigma}$ et donc $|P'|\geq \frac{\log d}{\log \sigma}$. Dans ce qui suit on suppose que $P'=P_1$, le cas $P'=P_2$
étant symétrique. La probabilité pour que les $m'$ premiers caractères d'un mot du dictionnaire soit égaux 
aux $m'$ premiers caractères de $P_1$ est donc inférieure à $1/\sigma^\frac{\log d}{\log \sigma}=1/d$, et donc 
le nombre de mots candidats en moyenne est de $O(d\cdot (1/d))=O(1)$. Comme la recherche exacte sur les caractères restant prend un temps $O(m)$, 
le temps moyen est de $O(m)$. Donc, pour les $m$ positions la complexité devient $O(m^2)$.

\item Le cas en $O(m + m \times occrs)$. On suppose maintenant que $m \geq 2 (\frac{\log d}{\log \sigma}+\frac{\log m}{\log \sigma})$. 
On raisonne de la même façon que précédemment avec pour seul changement la valeur de $m'$ qui est maintenant égale à $m/2=\frac{\log d}{\log \sigma}+\frac{\log m}{\log \sigma}$. La probabilité pour qu'un mot du dictionnaire soit candidat devient $1/\sigma^{m'}=1/(m \times d)$ et donc le nombre de mots candidats pour chaque position devient en moyenne égal à $d/(m \times d)=1/m$. Donc le temps moyen pour vérifier chaque candidat devient $O(m \times (1/m))=O(1)$, qui donne un temps de $O(m)$ pour l'ensemble des positions.  On ajoute un temps de $m \times occrs$ pour les autres occurrences de la solution si elles existent (car chaque solution est en $O(m)$).

La complexité moyenne finale est donc : $O(m + m \times occrs)$.

\end{enumerate}

\end{proof}

\bigskip
Nous nous référons à notre méthode par \textbf{TRT\_CI}, qui est un acronyme pour la phrase en Anglais : (Trie and reverse Trie Characters Intersection), qui ce traduit par Trie et Trie inversé Caractères Intersection.

\section{Une amélioration de la méthode de Amir et al.}
\label{sec:the_second_method}

Dans la méthode précédente \emph{TRT\_CI} nous avons expliqué comment localiser la position de l'erreur, et ensuite, comment réduire le nombre de transitions afin de tester uniquement celles qui mènent à des solutions possibles. 

Soit $P= P_1 \phi P_2$, d'une façon très simple, l'erreur divise le mot en deux parties $P_1$ et $P_2$. Le but est de s'assurer que ces deux parties appartiennent au même mot dans le dictionnaire.

L'idée simple suivante peut être utilisée: supposant que l'on puisse déterminer le numéro du mot auquel les parties $P_1$ et $P_2$ appartiennent,  cela veut dire que le motif qui a la forme $P= P_1 \phi P_2$ représente une solution approchée avec $k=1$ erreur si $P_1$ et $P_2$ appartiennent au même mot.

Pour faire cela, supposons que dans le Trie, au moment de sa construction, on stocke les numéros des mots dans les feuilles. Ensuite, on utilise le Trie pour chercher $P_1$, et on récupère tous les numéros qui sont stockés dans les fils du n\oe{}ud qui est à la fin de $P_1$. On fait le même traitement pour $P_2$ dans le Trie inversé. On obtient deux ensembles qui contiennent les numéros des mots. On fait une intersection pour vérifier si $P_1$ et $P_2$ appartiennent au même mot, et obtenir la solution.

Cette idée été utilisée par Amir et al. \cite{Amir2000}. Dans leur travail, ils font un pré-traitement afin de remédier au problème de l'intersection. Ils obtiennent un résultat qui un temps de pré-traitement en $O(n\,log^{2}n)$, et un temps de recherche en $O(m\,log^{3}d\,log\,log\,d + occr)$ pour le dictionnaire, et $O(m\,log^{2}n\,log\,log\,n + occr)$ pour l'indexation de texte, où $n$ est la taille du texte et $d$ est la taille du dictionnaire, et $occr$ est le nombre des occurrences de la solution.

Amir et al. dans \cite{Amir2000} font toutes les combinaisons de la position de l'erreur dans le mot requête. Chaque fois, ils modifient la position de l'erreur et essaye de trouver des n\oe{}uds dans le Trie et dans le Trie inversé, pour faire une intersection entre les numéros des mots stockés dans les feuilles des deux structures de données. Partant de cette observation, nous notons les points suivants :

\begin{enumerate}

\item  Comme nous avons expliqué dans la sous-section \ref{Possible_error_positions}, nous avons juste certaines positions possibles à considérer comme position de l'erreur. Donc si à la position $i$ il y a une erreur, nous allons perdre du temps pour vérifier les autres positions dans l'intervalle $[i+1..m]$ ($m$ la longueur de mot requête), parce que nous aurons plus d'une seule erreur.

\item  Si l'erreur est à la fin de mot, dans ce cas, l'intersection se fait avec tous les mots du dictionnaire, et cela implique un temps de traitement élevé.

\item Si nous utilisons un Trie compact, nous ne pouvons pas trouver un n\oe{}ud à la position $i$, si cette position est au milieu d'une arête qui a de nombreux caractères.

\end{enumerate}

Pour améliorer la méthode de Amir et al \cite{Amir2000}, nous utilisons les mêmes heuristiques utilisées dans la première méthode \emph{TRT\_CI} pour trouver les positions possibles de l'erreur  et pour éviter de faire une intersection avec le n\oe{}ud racine comme nous l'avons expliqué dans la sous-section \ref{Special_cases}.

\medskip

Dans ce que suit nous présentons la structure de données et notre algorithme de recherche pour améliorer la méthode de Aimr et al.

\subsection{La construction de la structure de données}
\label{sub:Methode_2_construction_structure_donnees}

Notre algorithme de recherche est basé sur la même structure de données que celle de Amir et al. \cite{Amir2000}. La construction de la structure de données est comme suit :

\noindent
\rule{8cm}{0.1pt}\\
\textbf {Algorithme de construction de la structure de données TRT\_WNI}\\
\textbf{Entrée :} fichier contenant les mots du dictionnaire.\\
\textbf{Sortie :} structure de données TRT\_WNI.\\

\begin{enumerate}

\item Construire le Trie du dictionnaire et stocker les numéros des mots dans les feuilles.

\item Construire le Trie inversé et stocker les numéros des mots dans les feuilles.

\item Dans chaque structure de données, relier toutes les feuilles de la gauche vers la droite, pour obtenir une liste des numéros de mots.

\item Stocker dans chaque n\oe{}ud interne le pointeur de la feuille de son fils le plus à gauche et son fils le plus à droite, donc le début et la fin de la sous-liste qui contienne tous les fils sortant de ce n\oe{}ud.
\end{enumerate}
\rule{8cm}{0.1pt}\\


\subsection{L'algorithme de recherche}

Les étapes de l'algorithme ressemble à celles de notre méthode \emph{TRT\_CI} expliqué dans la sous-section \ref{algo:TRT_CI}. La différence avec \emph{TRT\_CI} est dans l'étape de la vérification de l'appartenance de $P_1$ et $P_2$ au même mot.

\noindent
\rule{8cm}{0.1pt}\\
\textbf {Algorithme de la recherche approchée TRT\_WNI}\\
\textbf{Entrée :} un mot requête $q$ de longueur $m$, la structure de données TRT\_WNI.\\
\textbf{Sortie :} liste des mots solutions approchées.\\

\begin{enumerate}[a.]

\item Faire une recherche exacte dans le Trie, jusqu'à ce qu'on arrive à la position de l'erreur, on trouve le n\oe{}ud $v_{err}$ qui représente cette erreur. Soit $P_1$ le préfixe avant la position de l'erreur.

\item \label{item:step_2} Faire une recherche pour le suffixe $P_2$ dans le Trie inversé, jusqu'à ce que la longueur de ce suffixe soit consommée, ($|P_2|=m-|P_1|-1$). Soit $w$ le n\oe{}ud dans lequel la recherche s'arrête dans le Trie inversé.

\item \label{item:step_3} Si on ne peut pas trouver $P_2$,  l'algorithme s'arrête (il n'y a pas de solution).

\item \label{item:step_4} Sinon, on a deux n\oe{}uds $v_{err}$ et $w$, on fait une intersection entre les deux ensembles des numéros des mots stockés dans les feuilles sortantes de ces deux n\oe{}uds.

\item Pour chaque $v_i \in V$ où $V=\{v_1, v_2, ..., v_k \}$ sont les n\oe{}uds sur le chemin qui mènent de la racine à $v_{err}$, répétez les étapes : \ref{item:step_2} \ref{item:step_3} et \ref{item:step_4}

\end{enumerate}
\rule{8cm}{0.1pt}\\

\subsection{Un vecteur de bits simple pour faire l'intersection}

Dans notre travail on utilise une méthode simple pour faire l'intersection sans pré-traitement. Nous utilisons un vecteur de bits de longueur $d$, initialisé à $0$, où $d$ est le nombre total des mots du dictionnaire.

On prend le premier ensemble des numéros des mots du Trie et on marque leurs positions dans le vecteur de bits par $1$. Ensuite, on prend le deuxième ensemble des numéros des mots du Trie inversé, et pour chaque numéro de mot, on vérifie si dans sa position dans le vecteur de bits  est marqué $1$. Si oui, donc cette position est le numéro de mot qui représente une solution.

Dans le pire des cas, cette opération peut se faire dans une complexité temporelle de $O(n)$, mais si le dictionnaire est trié, nous pouvons la faire dans $O(n/w)$ où $w$ est un mot mémoire dans le modèle RAM et $w = log (n)$ avec $n$ le nombre total de caractères, car il suffit juste de faire l'opération (AND), où chaque mot mémoire prend un temps de $O(1)$.

\bigskip
Nous nous référons à notre deuxième méthode par \emph{TRT\_WNI} qui est un acronyme pour la phrase en Anglais : (Trie and Reverse Trie Words Number Intersection) qui signifie Trie et Trie inversé, l'intersection des numéros des mots.


\section{Une méthode hybride: une combinaison des deux méthodes précédentes}
\label{sec:the_third_method}

Afin d'améliorer la seconde méthode \emph{TRT\_WNI} (voir la section \ref{sec:the_second_method}), nous combinons les deux techniques, le filtrage des transitions sortantes de la méthode \emph{TRT\_CI}, et l'intersection des numéros de mots de la méthode \emph{TRT\_WNI}.

Nous utilisons la première approche pour trouver les n\oe{}uds de l'erreur, et pour choisir seulement les n\oe{}uds qui sont à la fin des premières transitions sortantes depuis le n\oe{}ud erreur ($v_{err}/w$), ces transitions peuvent probablement conduire à des solutions.
Nous appliquons la deuxième approche pour effectuer une intersection entre les numéros des mots stockés dans les feuilles sortantes des n\oe{}uds choisis dans la première étape.\\

La structure de données est exactement la même, celle  définie dans la méthode précédente, voir la sous-section \ref{sub:Methode_2_construction_structure_donnees}.
Les étapes de l'algorithme de recherche de cette troisième méthode sont comme suit :

\noindent
\rule{8cm}{0.1pt}\\
\textbf {Algorithme de la recherche approchée TRT\_CWNI}\\
\textbf{Entrée :} un mot requête $q$ de longueur $m$, la structure de données TRT\_WNI.\\
\textbf{Sortie :} liste des mots solutions approchées.\\

\begin{enumerate}
\item Obtenir les deux n\oe{}uds de l'erreur $v_{err}$ et $w$ :

	\begin{enumerate}
	\item Faire une recherche exacte dans le Trie, jusqu'à ce qu'on arrive à la position de l'erreur, on trouve le n\oe{}ud $v_{err}$ qui représente cette erreur. Soit $P_1$ le préfixe avant la position de l'erreur.
	
	\item \label{item:M3_1} Faire une recherche pour le suffixe $P_2$ dans le Trie inversé, jusqu'à ce que la longueur de ce suffixe est consommée, ($|P_2|=m-|P_1|-1$). Soit $w$ le n\oe{}ud dans lequel la recherche s'arrête dans le Trie inversé.
	
	\item \label{item:M3_2} Si on ne peut pas trouver $P_2$, on arrête l'algorithme de recherche (il n'y a pas de solution).
	\end{enumerate}

\item \label{item:M3_3} À partir des deux n\oe{}uds ($v_{err}$ et $w$), obtenir tous les n\oe{}uds obtenus par les transitions sortantes qui peuvent conduire à des solutions.
Soit $v_{err}^{'}$ l'ensemble résultants du Trie, $v_{err}^{'}=\{v_{err1},v_{err2},...,v_{errL1}\}$ , chaque $v_{err\,i}$  conduit à une solution possible. Et de même pour le Trie inversé, on a $w^{'}=\{w_{1},w_{2},...,w_{L2}\}$ , donc on a deux ensembles $v'_{err}$  et  $w'$.

\item \label{item:M3_4} Faire une intersection entre les numéros de mots stockés dans les feuilles obtenus de chaque n\oe{}ud dans les deux ensembles. $v'_{err}$  et  $w'$.

\item Pour chaque $v_i \in V$ où $V=\{v_1, v_2, ..., v_k \}$ sont les n\oe{}uds sur le chemin qui mènent de la racine à $v_{err}$, répéter les étapes : \ref{item:M3_1}, \ref{item:M3_2}, \ref{item:M3_3}, et \ref{item:M3_4}.
\end{enumerate}
\rule{8cm}{0.1pt}\\

\bigskip
La troisième méthode est appelée \emph{TRT\_CWNI} pour la phrase en Anglais : (Trie and reverse Trie, Characters and Words Number Intersection) qui signifie Trie et Trie inversé, l'intersection des caractères et l'intersection des numéros des mots.

\section{Les expérimentations et les analyses}
\label{sec:testes}

Les expérimentations ont été faites avec les deux ensembles de données suivants : une version extraite du dictionnaire des titres de wikipédia (Wikititle) qui a environ 1,2 millions de mots, et le dictionnaire Anglais qui contient environ deux cent treize mille mots.

L'implémentation est faite dans le langage C, en utilisant le compilateur GNU C (GCC), la version 4.7.1. Les tests ont été effectués sur Windows 8.1 Pro 64 bits, avec un processeur Intel Core i7-2670QM 2,20 GHz et 6 Gigabyte de RAM. Nous avons utilisé un seul coeur.\\

\noindent
Le code source de la méthode \emph{TRT\_CI} est disponible sur :\\
\url{https://github.com/chegrane/TrieRTrie/tree/master/TrieRTrie\_char\_inter}.

\noindent
Les méthodes de Amir et al., \emph{TRT\_WNI}, et \emph{TRT\_CWNI} sont disponibles sur :\\ \url{https://github.com/chegrane/TrieRTrie}.

\noindent
Les deux ensembles de données sont disponibles sur :\\
\url{https://github.com/chegrane/TrieRTrie/tree/master/dataset}.\\

Toutes nos implémentations sont distribuées sous la licence publique générale limitée GNU (GNU LGPL) (Anglais: GNU Lesser General Public License).

Nous avons comparé nos résultats aux travaux de Karch et al \cite{karch2010improved}, et Chegrane et Belazzougui \cite{ibra.Chegrane.simple}, car leurs résultats sont très compétitifs par rapport à d'autres approches qui ont une implémentation pratique.

Nous comparons, aussi, nos résultats avec un travail pratique et récent de Aleksander Cis{\l}ak et Szymon Grabowski~\cite{cislak2015practical}, qui traite la distance de Hamming seulement et qui donne des résultats intéressants.

\medskip
Pour réaliser nos tests, nous avons choisis $100$ mots d'une façon aléatoire de notre fichier du dictionnaire. Nous avons appliqué une opération d'édition choisie aléatoirement. À la fin, nous avons effectué une recherche approchée sur ces mots requête. Nous avons répété la procédure $10$ fois, et nous avons calculé la moyenne des résultats. Les résultats sont résumés dans le tableau \ref{tab:results}.

\begin{table}[h]
\centering
\begin{tabular}{|l|c|c|}
\hline 
Méthode & Anglais & Wikititle \tabularnewline
\hline 
\hline 
Karch et al & 8 & 34\tabularnewline
\hline 
Chegrane et Belazzougui & 4,55  & 8,28\tabularnewline
\hline 
TRT\_CI & 0,77 & 0,78\tabularnewline
\hline 
\end{tabular}

\caption{Comparaison entre des méthodes existantes et notre méthode \emph{TRT\_CI} sur les ensembles de données WikiTitle, et Anglais. (Le temps d'exécution est en  ($\mu s$)).}
\label{tab:results}
\end{table}

Dans le tableau \ref{tab:results}, on voit clairement que la méthode \emph{TRT\_CI} est plus efficace en terme de temps d'exécution comparée aux deux autres méthodes testées dans ce travail.
En fait, notre méthode \emph{TRT\_CI}, est d'environ $6$ fois plus rapide que celle de \cite{ibra.Chegrane.simple}, dans le dictionnaire Anglais, et environ $10$ fois plus rapide dans le dictionnaire WikiTitle. Elle est aussi d'environ $10$ fois plus rapide que celle de \cite{karch2010improved} dans l'ensemble de données Anglais, et environ $43$ fois plus rapide dans le dictionnaire WikiTitle.\\

Les expérimentations sont effectuées sur des différentes tailles du dictionnaire. Nous avons fait varier le nombre de mots dans le dictionnaire, pour observer l'influence de ce paramètre sur le temps d'exécution, voir la figure \ref{fig:englisg_wiki_test}, (Pour Karch et al. juste un seul point est montré, parce que nous n'avons qu'un seul résultat, les résultats de Karch et al. ne permettent pas de faire varier le temps de réponse en fonction du changement de la taille du dictionnaire).

\begin{figure}[h]
\centering
\includegraphics[width=1.0\linewidth]{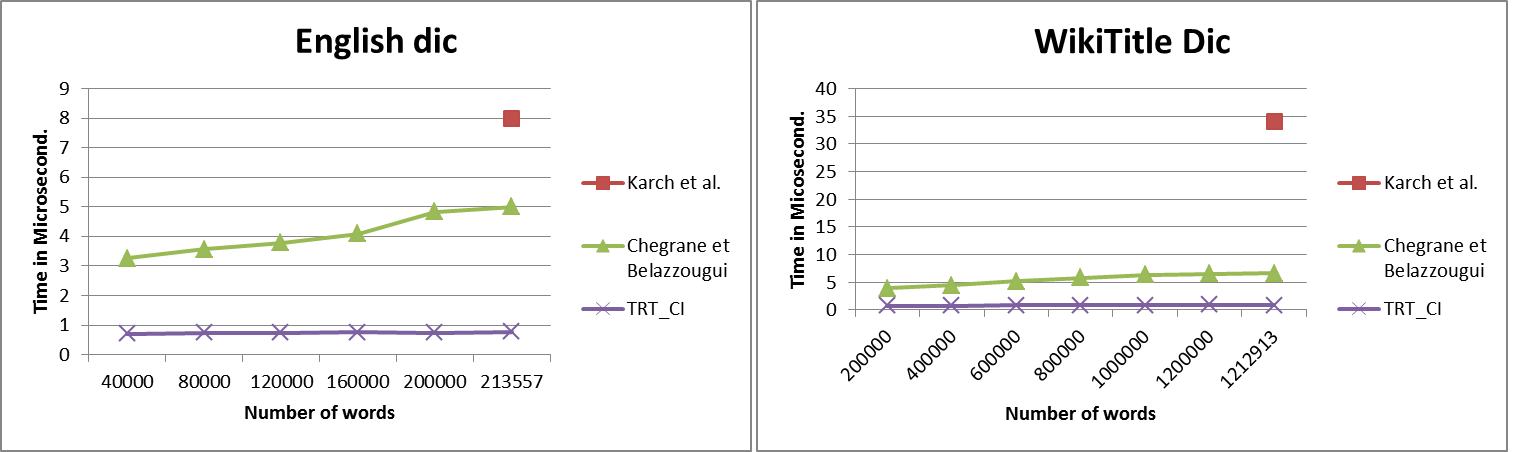}
\caption{Temps d'exécution en fonction du nombre de mots dans les dictionnaires Anglais et WikiTitle.}
\label{fig:englisg_wiki_test}
\end{figure}

Dans la figure \ref{fig:englisg_wiki_test}, on voit que contrairement aux méthodes \cite{karch2010improved} et \cite{ibra.Chegrane.simple}, l'augmentation de la taille du dictionnaire n'affecte pas le temps d'exécution de notre méthode \emph{TRT\_CI}.
Dans la méthode \emph{TRT\_CI}, le temps est presque constant (environ 0,77 $\mu s$), car elle ne dépend que du résultat de l'intersection sur les caractères communs des transitions sortantes des deux n\oe{}uds de l'erreur dans le Trie et dans le Trie inversé, et dans le pire cas, on a $|\Sigma| = \sigma$ chemins à vérifier.

En pratique, l'intersection donne un ensemble très petit, beaucoup moins que le nombre de caractères de l'alphabet $\sigma$. De plus la vérification s'arrête immédiatement si une erreur est rencontrée, car on ne peut pas avoir deux erreurs.
La recherche exacte des deux chaînes $P_1$ et $P_2$ se fait une seule fois dans le Trie et le Trie inversé respectivement. Tous les n\oe{}uds traversés dans la recherche exacte de $P_1$ et $P_2$ sont stockés dans deux vecteurs, pour nous permettre d'effectuer une intersection entre les caractères qui sont sur les transitions sortantes des deux n\oe{}uds (qui sont dans ces vecteurs directement).
Après le filtrage des transitions sortantes, la vérification de chaque chemin s'arrête directement si une erreur est rencontrée, autrement cette voie conduira à une solution.\\


Aleksander Cis{\l}ak et Szymon Grabowski dans leur travail ~\cite{cislak2015practical}, partitionnent chaque mot de dictionnaire $d_i$ en $k+1$ morceaux $d = \{p_1, p_2, ... , p_{k+1}\}$. Chaque morceau joue le rôle d'une clé pour stocker le mot d'origine dans une table de hachage (avec chaînage externe). Au total, il y a $k+1$ listes de mots à stocker dans la table de hachage pour chaque mot $d_i$. Au lieu de stocker la totalité du mot, ils stockent seulement les parties qui sont différentes du morceau qui joue le rôle d'une clé, et ils stockent comme indice le numéro du morceau qui manque dans le mot stocké.
Les auteurs de~\cite{cislak2015practical}, se sont basés sur le cas de $k=1$, où ils ne stockent que le préfixe et le suffixe du mot $d_i$ dans la table de hachage. Ils ne stockent que le préfixe et le suffixe de mot dans la table de hachage, ils traitent les listes de mots sans leurs préfixes en premier, ensuite, les listes de mots sans leurs suffixes, et ajoutent, dans chaque liste, la position où le dernier morceau commence.
Le travail de~\cite{cislak2015practical} concerne seulement la distance de Hamming.

\medskip
La recherche de mot requête se fait comme suit :
décomposer le mot requête $q$ en $k+1$ morceaux. Pour chaque morceau $p_i$, chercher la liste $l_i$ correspondante depuis la table de hachage (pour $k=1$, on a seulement le préfixe et le suffixe). Pour chaque mot $m_j$ (le mot d'origine moins le morceau $p_i$) stocker dans la liste, si $|m_j|= |q|-|p_i|$, alors vérifier si $dist\_Hamming(m_j,q-p_i) \leq k$, ($q-p_i$ signifie qu'on enlève le morceau $p_i$ du mot requête $q$ ). Si la distance de Hamming est satisfaite, alors les morceaux ($m_j$ et $p_i$) sont combinés dans un seul mot pour le présenter comme une solution approchée.

\medskip
Les résultats expérimentaux de~\cite{cislak2015practical} ont été obtenus sur une machine équipée du processeur Intel i5-3230M cadencé à 2,6 GHz et 8 Go de mémoire DDR3, et le code C++ a été compilé avec la version clang 3.4-1 et exécuter sur Ubuntu 14.04 OS.

Cette configuration est proche de celle utilisée dans ce chapitre. Nous n'avons pas testé leur code sur notre machine par manque de temps (et aussi, nous n'avons pas leur code source). Nous allons utiliser le graphe (voir la figure \ref{fig:ASM_Cislak}) de leurs résultats afin de faire la comparaison.

\begin{figure}[h]
\centering
\includegraphics[width=0.8\linewidth]{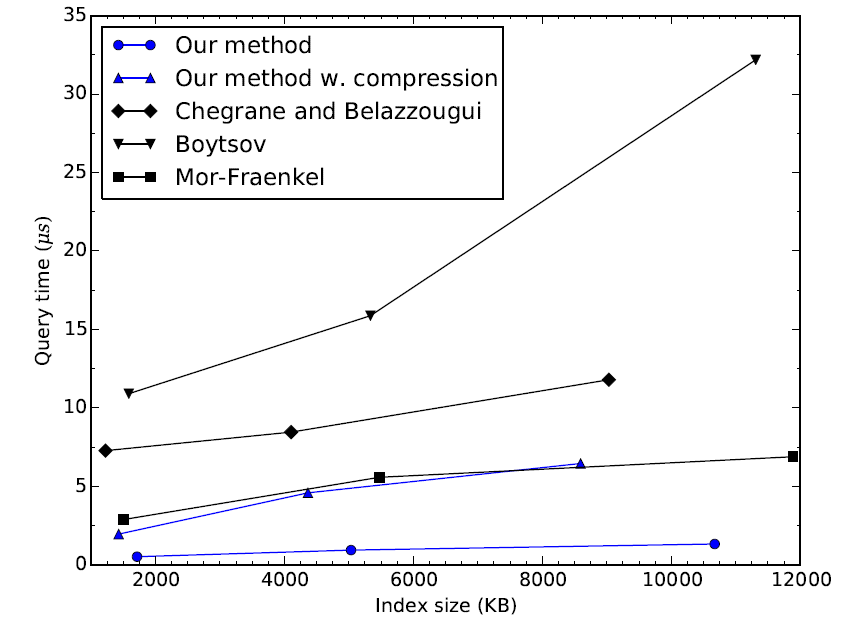}
\caption{Les résultats expérimentaux de~\cite{cislak2015practical} sur le dictionnaire anglais pour k=1 avec la distance de Hamming.}
\label{fig:ASM_Cislak}
\end{figure}

On peut remarquer depuis la figure \ref{fig:ASM_Cislak} (les résultats de \cite{cislak2015practical}) les points suivants :

\begin{enumerate}

\item Le graphe montre que les méthode de~\cite{cislak2015practical} donnent de meilleurs résultats comparés à ceux de notre méthode de recherche approchée avec hachage (chapitre \ref{chap:recherche_approchee_avec_hachage} précédent) et ceux de Boytsov. Mais cette comparaison n'est pas totalement correcte, car la méthode de~\cite{cislak2015practical} ne traite qu'un seul type d'erreur (la substitution). Le temps de calcul augmente lorsqu'on traite les trois types d'erreurs (intuitivement, le temps de traitement pour trois erreurs est de 3 fois plus).
Donc, par conséquence, on ne peut pas dire si les résultats de ~\cite{cislak2015practical} sont meilleurs ou pas.

\medskip
\item Notre méthode de recherche approchée avec hachage (chapitre \ref{chap:recherche_approchee_avec_hachage} précédent) donne de meilleurs résultats par rapport à ceux de Boytsov. Les deux méthodes traitent les 3 types d'erreurs de distance d'édition, donc ils sont directement comparables.

\medskip
\item Le temps de calcul de la méthode sans compression de~\cite{cislak2015practical}, qui donne le meilleur résultat dans le graphe est proche de $1 \mu s$. Le résultat de notre méthode \emph{TRT\_CI} est d'environ 0,77 $\mu s$ donc proche de 1 $\mu s$ aussi. Sauf que la méthode de~\cite{cislak2015practical} traite une seule erreur (la substitution) seulement, par contre la nôtre traite les trois types d'erreurs en même temps. Cela implique que notre méthode donne de meilleurs résultats par rapport à ceux de~\cite{cislak2015practical} (on peut dire qu'elle est 3 fois plus rapide).

\medskip
\item Le temps de calcul de la méthode sans compression de~\cite{cislak2015practical} (qui donne le meilleur résultat) augmente avec le changement de la taille du dictionnaire anglais (la taille du dictionnaire anglais est petite et ne permet pas de donner un bon aperçu), d'après le graphe, on peut déduire que si la taille du dictionnaire augmente, le temps de calcul augmente lui aussi. Quant à notre méthode \emph{TRT\_CI}, le temps de calcul est invariant avec l'augmentation de la taille du dictionnaire. Cela signifie que notre méthode est meilleure même avec ce rapport de différence dans le nombre d'erreurs traitées.

\end{enumerate}

\subsection{L'efficacité de la méthode \emph{TRT\_CI}, par rapport à la recherche exacte}
La recherche exacte est l'opération la plus basique qui peut être effectuée sur un dictionnaire de mots. Lorsque la recherche est effectuée dans un Trie, la complexité est proportionnelle à la longueur du mot requête. Pour cette raison, on peut la considérer comme une opération de référence qui donne le temps d'exécution le plus optimal par rapport à toutes les autres opérations (la recherche approchée avec $k>0$).

Puisque le temps d'exécution de notre méthode \emph{TRT\_CI} est indépendant de la taille du dictionnaire $D$, comme dans la recherche exacte en utilisant un Trie, nous avons fait des expérimentations sur les mêmes ensembles de données (les dictionnaires Anglais et WikiTitle), afin de trouver une relation entre la recherche exacte en utilisant un Trie et notre méthode \emph{TRT\_CI}, voir la figure \ref{fig:exact_with_TRT_CI}.

\begin{figure}[h]
\centering
\includegraphics[width=0.9\linewidth]{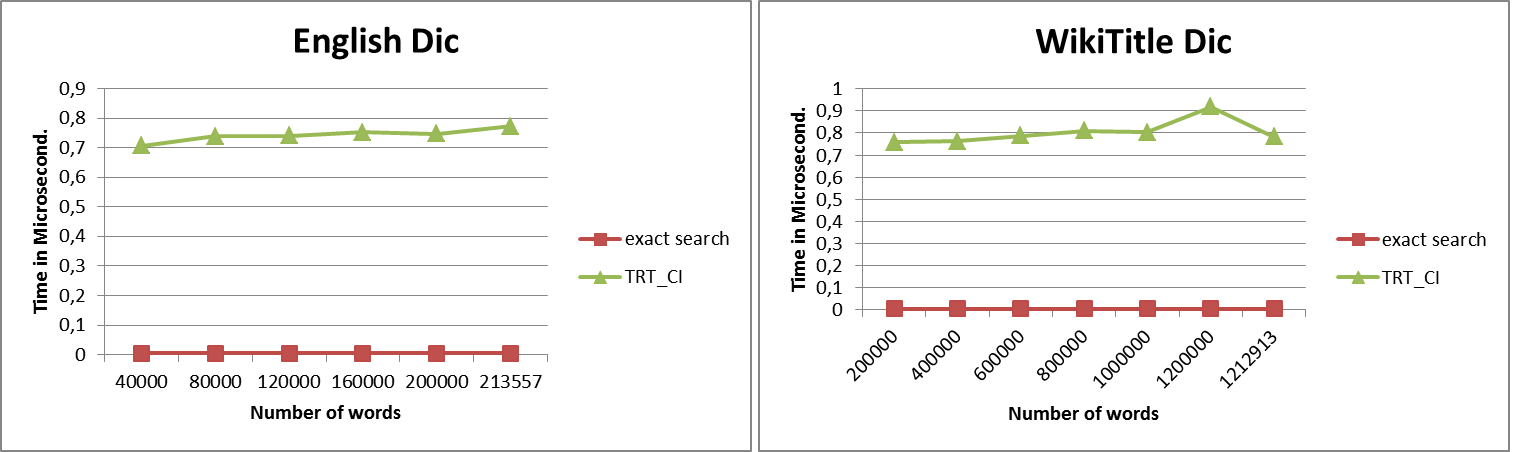}
\caption{La relation entre la méthode \emph{TRT\_CI} et la recherche exacte.}
\label{fig:exact_with_TRT_CI}
\end{figure}

Les résultats expérimentaux de la comparaison entre la méthode \emph{TRT\_CI} et la recherche exacte montrent qu'effectivement, il existe une relation presque constante entre le temps d'exécution de ces deux méthodes, cette relation est décrite comme suit :

Le temps d'exécution de \emph{TRT\_CI} est d'environ $167$ fois le temps d'exécution de la recherche exacte.
Le facteur $167$ s'explique par la différence entre la recherche approchée et la recherche exacte. La recherche exacte (en utilisant un Trie) retourne juste une seule solution, mais la recherche approchée renvoie toutes les solutions valides pour le mot requête, ce qui signifie qu'elle peut impliquer de nombreuses recherches exactes.\\

Enfin, on remarque que les expérimentations sont réalisées sur des dictionnaires de langues (Anglais, WikiTitle), où les ensembles des mots comme :\\
$\{P_1 a_1 P_2, P_1 a_2 P_2,\dots, P_1 a_{|\Sigma|} P_2,\} ,\: (a_i \in \Sigma)$ sont rares. L'étape de vérification est donc réduite à quelques mots seulement.

\bigskip
Dans ce qui suit, nous interprétons seulement les tests et les analyses de nos deux méthodes \emph{TRT\_WNI} et \emph{TRT\_CWNI} afin de souligner l'amélioration pratique apportée à l'algorithme de Amir et al.

Pour notre première méthode \emph{TRT\_CI}, nous allons juste mettre les résultats pour voir la différence avec les autres méthodes. Car cette méthode est beaucoup plus performante que toutes les méthodes testées dans ce chapitre.
Entre les quatre méthodes qui se basent sur le Trie et le Trie inversé, le fonctionnement de notre première méthode \emph{TRT\_CI} est lié à la taille de l'alphabet du dictionnaire qui est très petit $\sigma$. Contrairement aux autres méthodes (TRT\_WNI, TRT\_CWNI, et celle de Amir et al.) elles sont tous liées au nombre total de mots du dictionnaire $n$ qui est un grand nombre et cela rend le temps de l'intersection très grand.

Donc on a deux groupes de méthodes, celles qui se basent juste sur la taille du mot et la taille de l'alphabet, le temps de recherche est proportionnel à $m$ et $\sigma$ qui sont de petits nombres, et l'autre qui est liée à la taille du mot requête et le nombre total des mots du dictionnaire, donc le temps de recherche est proportionnel à $n$.

\subsection{Test des deux méthodes TRT\_WNI et TRT\_CWNI}

Amir et al. ont fait un travail théorique et ils n'ont pas une implémentation de leur méthode, afin de pouvoir comparer nos méthodes avec la leur, nous avons implémenté leur méthode. Pour l'intersection, nous avons utilisé le tableau de bits comme dans nos deux méthodes. Cela rend l'opération de l'intersection la même dans les trois méthodes.

Nous avons fait deux types d'expérimentations, le premier avec les mots exacts où toutes les positions sont considérées comme des positions d'erreur, le deuxième avec des mots qui contiennent exactement une seule erreur.
Les résultats sont résumés dans le tableau \ref{tab:English_Wiki_all_methode_exact_1err}.

\begin{table}[h]
\centering
\begin{tabular}{clll}
\cline{3-4}
                                                             & \multicolumn{1}{l|}{}                 & \multicolumn{1}{l|}{\textbf{Anglais}} & \multicolumn{1}{l|}{\textbf{WikiTitle}} \\ \hline
\multicolumn{1}{|c|}{\multirow{4}{*}{\textbf{Mot 1 erreur}}} & \multicolumn{1}{l|}{Amir et al}       & \multicolumn{1}{l|}{297,86}           & \multicolumn{1}{l|}{2840,50}            \\ \cline{2-4} 
\multicolumn{1}{|c|}{}                                       & \multicolumn{1}{l|}{\textit{TRT\_CI}} & \multicolumn{1}{l|}{\textit{0,77}}    & \multicolumn{1}{l|}{\textit{0,78}}      \\ \cline{2-4} 
\multicolumn{1}{|c|}{}                                       & \multicolumn{1}{l|}{TRT\_WNI}         & \multicolumn{1}{l|}{54,31}            & \multicolumn{1}{l|}{306,67}             \\ \cline{2-4} 
\multicolumn{1}{|c|}{}                                       & \multicolumn{1}{l|}{TRT\_CWNI}        & \multicolumn{1}{l|}{28,94}            & \multicolumn{1}{l|}{231,86}             \\ \hline
\multicolumn{1}{l}{}                                         &                                       &                                       &                                         \\ \hline
\multicolumn{1}{|c|}{\multirow{4}{*}{\textbf{Mot exact}}}    & \multicolumn{1}{l|}{Amir et al}       & \multicolumn{1}{l|}{1516,27}          & \multicolumn{1}{l|}{8406,68}            \\ \cline{2-4} 
\multicolumn{1}{|c|}{}                                       & \multicolumn{1}{l|}{\textit{TRT\_CI}} & \multicolumn{1}{l|}{\textit{2,52}}    & \multicolumn{1}{l|}{\textit{1,65}}      \\ \cline{2-4} 
\multicolumn{1}{|c|}{}                                       & \multicolumn{1}{l|}{TRT\_WNI}         & \multicolumn{1}{l|}{185,23}           & \multicolumn{1}{l|}{701,80}             \\ \cline{2-4} 
\multicolumn{1}{|c|}{}                                       & \multicolumn{1}{l|}{TRT\_CWNI}        & \multicolumn{1}{l|}{131,51}           & \multicolumn{1}{l|}{548,02}             \\ \hline
\end{tabular}
\caption{Temps d'exécution en $\mu s$ pour l'ensemble de données Anglais et WikiTitle.}
\label{tab:English_Wiki_all_methode_exact_1err}
\end{table}

Dans le tableau \ref{tab:English_Wiki_all_methode_exact_1err}, on peut voir clairement que nos deux approches (TRT\_WNI et TRT\_CWNI) sont meilleures que la méthode de Amir et al. et cela car elles utilisent des heuristiques pour localiser les positions possibles de l'erreur, et  évitent ainsi de tester toutes les positions de façon naïve. De plus, dans nos méthodes on évite de faire l'intersection avec la racine car cela signifie l'intersection avec tout le dictionnaire.
Dans les résultats, on voit aussi que notre méthode $TRT\_CWNI$ améliore la méthode $TRT\_WNI$, car on diminue le nombre de feuilles considérées dans l'intersection, lorsqu'on choisit les n\oe{}uds du premier niveau qui sortent du n\oe{}ud de l'erreur, et ces n\oe{}uds mènent à des solutions possibles.

En utilisant des mots avec juste une seule erreur, notre méthode $TRT\_WNI$ dans le dictionnaire Anglais est 5 fois plus rapide que la méthode de Amir et al. et 9 fois plus rapide dans le dictionnaire WikiTitle.
Avec les mots exacts où toutes les positions sont considérées comme une position d'erreur, $TRT\_WNI$ est 8 fois plus rapide dans le dictionnaire Anglais, et environ 12 fois plus rapide dans le dictionnaire WikiTitle.
Il y a une différence dans le temps d'exécution lorsque nous testons avec les mots exacts, et les mots avec une seule erreur, parce que dans le mot exact nous devons vérifier tous les n\oe{}uds dans le chemin du mot requête de la racine à la feuille, et ceci augmente le temps d'exécution.\\

Nous expérimentons avec des tailles différentes d'un dictionnaire, nous changeons le nombre de mots dans le dictionnaire pour voir l'influence de ce paramètre sur le temps d'exécution. Voir la figure \ref{fig:englisg_wiki_test_Amir_vs_ibra}.

\begin{figure}[h]
\centering
\includegraphics[width=0.49\linewidth]{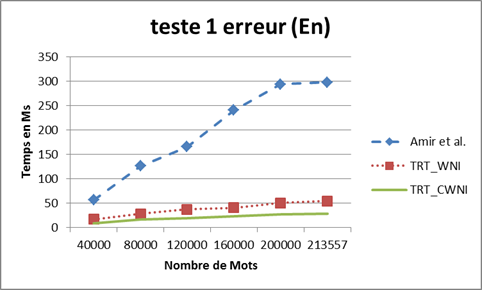}
\includegraphics[width=0.49\linewidth]{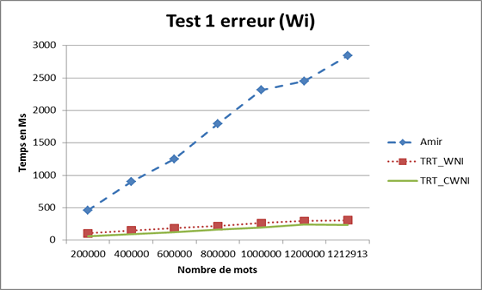}
\caption{Temps d'exécution en fonction du nombre de mots dans les dictionnaires Anglais et WikiTitle.}
\label{fig:englisg_wiki_test_Amir_vs_ibra}
\end{figure}

Dans la figure \ref{fig:englisg_wiki_test_Amir_vs_ibra}, nous présentons l'évaluation des trois méthodes en utilisant des mots avec juste une seule erreur.
Il est clair que le temps d'exécution de la méthode de Amir et al. est très élevé comparé à nos méthodes. En plus, le temps d'exécution de leur méthode augmente rapidement lorsque la taille de dictionnaire s'agrandit.
Cela pourrait s'expliquer par le fait que la méthode de Amir et al. vérifie toutes les positions dans le mot requête et donc cela revient à vérifier tous les n\oe{}uds dans le Trie depuis la racine jusqu'à feuille.
De plus, l'intersection utilisée dépend du nombre de mots dans le dictionnaire et cela signifie qu'à chaque fois que l'ensemble de mots devient grand, le temps augmente lui aussi.
Si nous avons beaucoup de mots qui ont le même préfixe, et beaucoup de mots qui ont le même suffixe, donc lorsque on vérifie le préfixe dans le Trie on aura un grand ensemble de mots du dictionnaire, et ce sera le même cas pour le suffixe, et donc nous aurons deux grands ensembles à vérifier. Le cardinal de chaque ensemble peut être proportionnel à la taille du dictionnaire.

Le temps d'exécution de notre méthode $TRT\_WNI$ augmente lentement. C'est parce qu'elle ne vérifie que les n\oe{}uds depuis la racine jusqu'à la position de l'erreur trouvée par la recherche exacte, comme nous l'avons expliqué dans la sous-section \ref{Possible_error_positions}. En plus de cela, nous traitons les caractères qui sont à l'extrémité du mot requête directement avec la distance d'édition, et cela afin d'éviter l'intersection avec le n\oe{}ud racine (voir la sous-section \ref{Special_cases}).

Dans ce qui suit nous donnons un tableau récapitulatif de toutes les méthodes testées dans ce travail (voir le tableau \ref{tab:results_all_method}).

\begin{table}[h]
\centering
\begin{tabular}{|l|c|c|}
\hline 
Méthode & Anglais & Wikititle \tabularnewline
\hline 
\hline 
Amir et al & 297,86  & 2840,50 \tabularnewline
\hline 
Karch et al & 8 & 34\tabularnewline
\hline 
Chegrane et Belazzougui & 4,55  & 8,28\tabularnewline
\hline 
Méthode 1 : TRT\_CI & 0,77 & 0,78\tabularnewline
\hline 
Méthode 2 : TRT\_WNI & 54,31  & 306,67 \tabularnewline
\hline 
Méthode 3 : TRT\_CWNI & 28,94  & 231,86 \tabularnewline
\hline 
\end{tabular}

\caption{Comparaison des méthodes existantes avec nos trois méthodes sur les ensemble de données Anglais et WikiTitle (temps en $\mu s$).}
\label{tab:results_all_method}
\end{table}

\section{L'application de notre méthode dans l'indexation de texte}
\label{sec:TRT:application_notre_methode_dans_texte}

On peut facilement adapter notre solution à la recherche des occurrences des mots ou la recherche des facteurs dans un texte.

\noindent
\paragraph{La recherche des occurrences des mots :}~\\
Dans le chapitre précédent on a dit qu'un texte T peut se formuler avec $T'={l_1,l_2,...,l_d}$ où chaque mot peut avoir plusieurs occurrences $l_i={pos_1,pos_2,...,pos_{nb}}$.

Dans la structure de données Trie, dans chaque feuille, on ajoute un champ qui pointe vers une liste qui contient toutes les positions des mots dans le texte.

\noindent
\paragraph{La recherche des facteurs dans un texte :}~\\
Il suffit de remplacer la structure de données Trie par un arbre des suffixes.

Il est clair qu'avec l'arbre des suffixes, on peut faire aussi la recherche de toutes les occurrences des mots, car un mot est aussi un facteur, mais on utilise un Trie avec les positions des mots afin d'optimiser l'espace mémoire, car l'arbre des suffixes stocke tous les suffixes de texte, alors que le Trie stocke les mots une seule fois.

\section{Conclusion}
\label{sec:Conclusion}

Dans ce chapitre, nous avons proposé trois méthodes pour résoudre le problème de la recherche approchée dans un dictionnaire avec $k=1$.
Toutes les trois méthodes opèrent sur une structure de données bidirectionnelle (le Trie et le Trie inversé), pour exécuter une recherche exacte sur deux morceaux du motif. 

\medskip
Notre première méthode \emph{TRT\_CI} effectue une intersection  des caractères communs de transitions sortantes entre les deux n\oe{}uds où l'erreur est rencontrée, dans le Trie (préfixe) et dans le Trie inversé (suffixe) pour déterminer les chemins qui peuvent conduire à des solutions.

\medskip

Dans la deuxième méthode \emph{TRT\_WNI}, nous effectuons une intersection entre deux ensembles de numéros de mots. Cette méthode améliore la méthode de Amir et al \cite{Amir2000}, grâce à l'utilisation d'heuristiques pour trouver les positions possibles de l'erreur, et pour éviter de faire une intersection avec la racine.

\medskip

Dans la troisième méthode \emph{TRT\_CWNI}, nous combinons les deux méthodes \emph{TRT\_CI} et \emph{TRT\_WNI}, pour réduire le nombre de feuilles et réduire les ensembles sur lesquels l'intersection de numéros de mots sera effectuée.

\medskip

Les résultats numériques montrent que notre première méthode \emph{TRT\_CI} surpasse toutes les autres implémentations testées jusqu'à ce jour en termes de temps d'exécution. De plus, cette performance est en proportion constante par rapport à la recherche exacte, indépendamment de la taille du dictionnaire.

\medskip
Nos deux autres méthodes TRT\_WNI et TRT\_CWNI donnent des résultats meilleurs que la méthode de Amir et al. car elles utilisent des heuristiques pour localiser les positions possibles de l'erreur, et évitent de tester toutes les positions.

\medskip
Notre méthode $TRT\_CWNI$ améliore la méthode $TRT\_WNI$ en diminuant le nombre de feuilles qui entrent  dans le calcul de l'intersection lorsqu'on choisit les n\oe{}uds du premier niveau qui sortent depuis le n\oe{}ud de l'erreur.

\medskip

La méthode $TRT\_CWNI$ améliore la seconde méthode $TRT\_WNI$, mais la première approche $TRT\_CI$ reste meilleure, parce qu'elle ne dépend que de la longueur du mot requête et de la taille de l'alphabet, mais la seconde dépend de tous les mots du dictionnaire.

\medskip
Nos trois méthodes et la méthode de Amir et al, se basent sur le Trie et le Trie inversé. 
Notre méthode TRT\_CI donne les meilleurs résultats et elle surpasse toutes les autres méthodes pratiques testées. Son fonctionnement dépend de la taille du mot requête et l'alphabet du dictionnaire, alors que les trois autres méthodes dépendent toutes du nombre total $n$ qui est la taille de tout le dictionnaire.

\bigskip
Dans ce chapitre et le chapitre précédent, nous avons décrit deux solutions du problème général de la recherche approchée dans un dictionnaire/texte.

Les performances théoriques de la 1\iere{} solution (la recherche approchée avec hachage présentée dans le chapitre précédent) et son application pour un nombre d'erreurs $k \geq 2$ semble suggérer qu'elle est applicable avec d'aussi bonnes performances dans d'autres contextes. Nous avons vu qu'en pratique et pour $k=1$ erreur la seconde solution (qui utilise le Trie et le Trie inversé proposée dans ce chapitre) est meilleure.

Par ailleurs, dans un problème de recherche approchée particulier, \emph{l'auto-complétion}, où il s'agit de trouver tous les suffixes d'un préfixe contenant des erreurs, nous avons apporté une nouvelle solution meilleure, en pratique, que la solution 1. Cette solution fait l'objet du chapitre qui suit.

\chapter{L'auto-complétion approchée dans une architecture Client-Serveur}

\ifpdf
    \graphicspath{{Autocompletion/}{Autocompletion/PDF}}
\else
    \graphicspath{{Autocompletion/}}
\fi


\section{Introduction}

Les champs de saisie de texte sont utilisés pour entrer les données dans l'ordinateur, et ils ont été améliorés au cours des années. Aujourd'hui, ils sont équipés avec de nombreuses fonctionnalités afin d'aider l'utilisateur. Parmi l'une des plus intéressantes on trouve l'auto-complétion (on l'appelle aussi l'auto-suggestion).\\

L'auto-complétion est une technique qui facilite et accélère l'écriture, en proposant une liste de mots ou de phrases (les suggestions) qui complètent les quelques caractères tapés dans le champ de texte, dans un temps très court (généralement, quelques millisecondes). Généralement, la liste de suggestions s'affiche en dessous du champ de saisie.

La manière habituelle (mais pas obligatoire) dont ces mots sont choisis est telle que les quelques caractères tapés sont préfixe des mots~\footnote{On trouve aussi la complétion où les quelques caractères tapés ne sont pas juste préfixe mais un facteur de la phrase, une sous-chaine apparaissant dans n'importe quelle position.}. Ensuite, prenant en compte cette liste, l'utilisateur peut choisir l'un des mots ou continuer de taper plus de caractères pour faire plus de filtrage, ce qui fait apparaître de nouvelles listes de mots.

À chaque fois que l'utilisateur tape un autre caractère, le nombre de résultats diminue pour devenir juste quelques mots dans la liste de suggestions. L'utilisateur trouve ce qu'il recherche, ou il continue à taper toute sa requête jusqu'à la fin.\\

L'auto-complétion est très utile et populaire dans plusieurs domaines et systèmes.

Dans le web, cette fonctionnalité est assurée par les navigateurs pour compléter les URLs. Elle est utilisée dans les pages HTML (côté client), avec des champs de saisie de texte spécifiques équipés par un système d'auto-complétion basé sur l'historique de champ récent \cite{WWW2013InputElement,WWW2012InputAutocomplete}. Dans le web côté serveur pour fournir une liste de suggestions et l'envoyer au client comme par exemple les moteurs de recherche.

Sur les ordinateurs de bureau, cette fonctionnalité est intégrée dans de nombreuses applications, en utilisant par exemple la touche \emph{Tab} dans un interpréteur de ligne de commande (Shell bash sous UNIX). Dans les éditeurs des codes sources pour les programmeurs, cette fonctionnalité est appelée \emph{intellisense} \cite{ICC14}.

Dans les appareils mobiles, taper avec précision est une tâche fastidieuse et l'écriture des utilisateurs a tendance à contenir des erreurs typographiques.
L'auto-complétion approchée est une fonctionnalité très importante dans ce type d'environnements (les appareils mobiles simples et les Smartphones et les tablettes). 

La figure \ref{fig:autocompletion_image}, représente les différentes catégories où l'auto-complétion peut être utilisée, dans le web, le PC de bureau, le Smartphone.

\begin{figure}[h]
\centering
\includegraphics[width=0.8\linewidth]{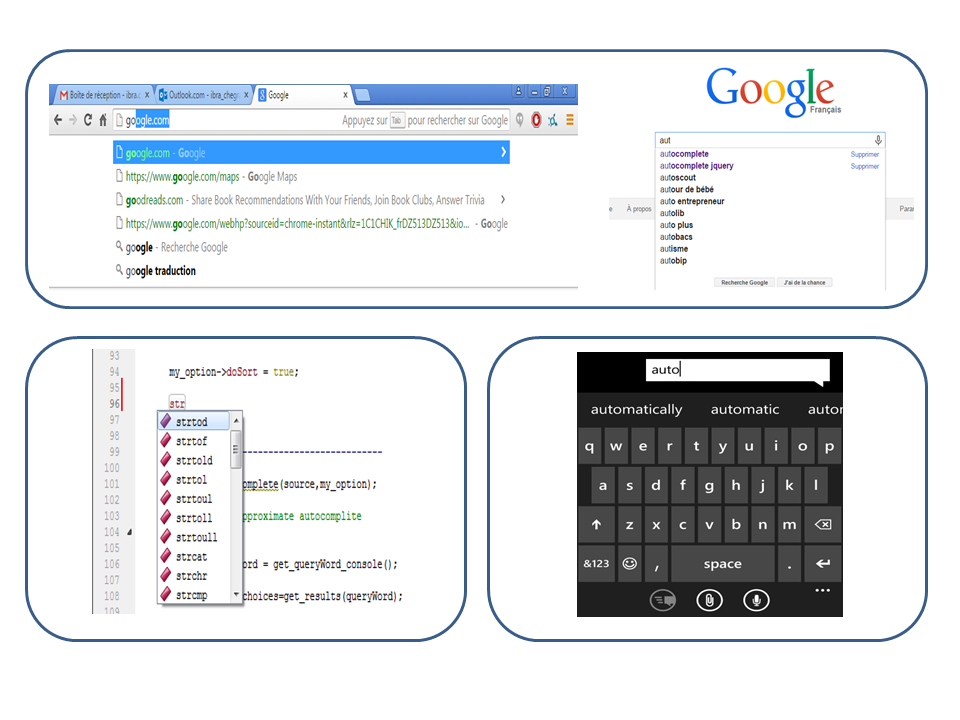}
\caption{L'auto-complétion dans différentes applications et dispositifs (web, application de PC, Smartphone).}
\label{fig:autocompletion_image}
\end{figure}


Considérant les bibliothèques de programmation de l'auto-complétion dans le Web, l'une des plus utilisée et qui permet aux programmeurs d'ajouter cette fonctionnalité à leurs champs de saisie est JQuery UI auto-complete (voir \url{http://api.jqueryui.com/autocomplete/}~\footnote{Visité le: 02-07-2016.}). 

Cependant, et cela est la motivation principale de ce travail, la chaîne de caractères d'entrée (les quelques caractères) peut contenir des erreurs. L'erreur peut être soit une erreur de frappe, en particulier lors de la saisie rapide, soit une méconnaissance par l'utilisateur de l'orthographe correcte (nom de personne, nom du produit ... etc.). 
La bibliothèque JQuery UI auto-complete et les autres bibliothèques de l'auto-complétion ne permettent pas de tolérer les erreurs dans la chaine de caractères entrée par l'utilisateur. Dans certains cas ce comportement peut induire l'utilisateur en erreur.

Pour résoudre ce problème, nous devons tolérer un certain nombre d'erreurs dans le préfixe tapé dans les champs de saisie de texte et dans la liste des suggestions afin d'obtenir une liste de complétion exacte et approchée (voire la figure \ref{fig:approximate_autocompletion}).

\begin{figure}[h]
\centering
\includegraphics[width=0.7\linewidth]{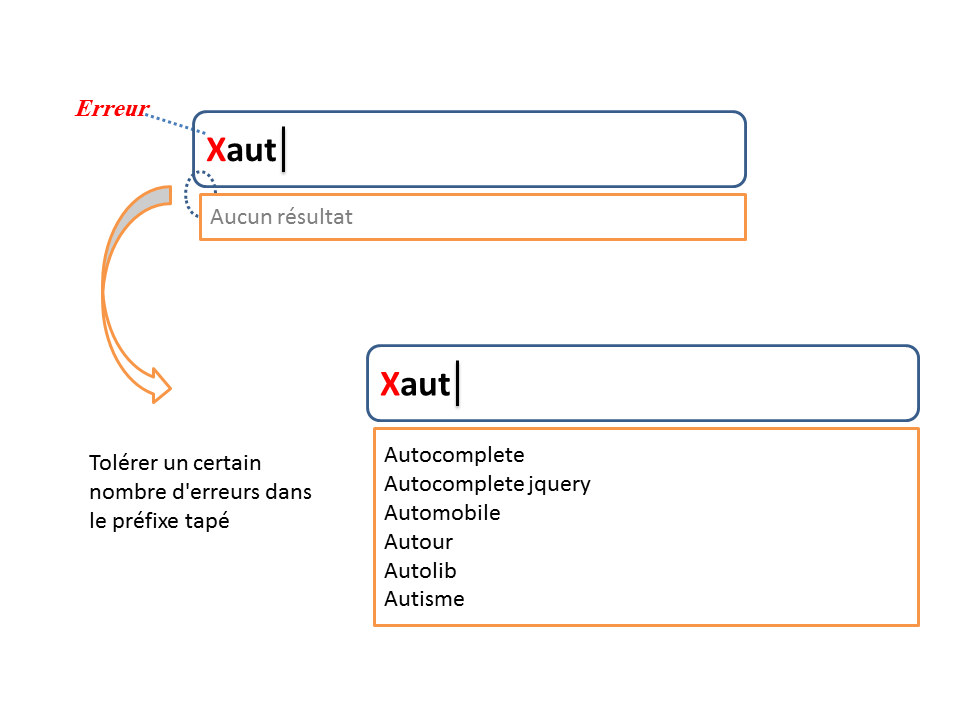}
\caption{L'auto-complétion approchée.}
\label{fig:approximate_autocompletion}
\end{figure}

Lorsque l'utilisateur tape les quelque caractères (préfixe) de sa requête, le système lui affiche une liste de suggestions qui contient tous les mots qui sont lexicalement similaires au préfixe tapé.
Dans cette liste, les mots qui ont exactement le même préfixe sont proposés en premier (l'auto-complète exacte), ensuite, les mots qui diffèrent d'un certain nombre d'erreurs $k$ (l'auto-complète approchée).
La mesure la plus utilisée pour déterminer la différence entre deux chaînes de caractères $x$ et $y$ est la \emph{distance d'édition} \cite{Le66}.\\


Une conséquence évidente de la tolérance aux erreurs est que la liste de suggestions pourrait être très longue.
Pour réduire le nombre de résultats, nous proposons une nouvelle bibliothèque d'auto-complétion nommée {\tt appacolib} qui (a) limite  le nombre d'erreurs possibles à au plus à une erreur, et (b) rapporte les $k$ suggestions ayant les scores les plus élevés (dans un ordre décroissant).
On appelle cette méthode top-k complétion (top-k suggestion). Le paramètre $k$ est donné par l'utilisateur.\\

Répondre à des requêtes de préfixe avec erreur sur un dictionnaire est un sujet de recherche étudié depuis un certain temps, mais qui reste néanmoins très actif.
Il existe plusieurs solutions algorithmiques qui résolvent ce problème de façon efficace. Cependant, très peu de ces solutions font face au problème dans une architecture client-serveur et proposent une solution pratique utilisable dans les systèmes 
actuels. 
Ceci est le but de ce travail. Nous avons étudié le problème de l'auto-complétion approchée sous la mesure de la distance d'édition, et en particulier la possibilité d'améliorer la qualité des résultats en utilisant un système de classement selon l'importance (score) des mots dans le dictionnaire. 
Nous avons aussi étudié le problème de la redondance des résultats et nous avons proposé une solution efficace pour les éliminer.\\

En fait, la bibliothèque {\tt appacolib} est un ensemble de bibliothèques pour différents langages pouvant être utilisés soit sur le serveur (C/C++) ou sur le client (JavaScript) ou sur les deux en même temps afin de répondre rapidement à des requêtes avec erreurs et faire l'auto-complétion en se basant sur un dictionnaire $\mathtt{UTF}\mbox{-}8$.

Typiquement, si on a un dictionnaire très volumineux (plus de 3 Mo d'entrées), on peut utiliser la bibliothèque écrite en C/C++ dans le coté serveur avec un module FASTCGI pour fournir une liste de suggestions dans un format JSON, et renvoyer les résultats à travers des appels AJAX depuis notre bibliothèque version client ou n'importe quelle autre interface utilisée dans le coté client. Si le dictionnaire n'est pas très volumineux (par exemple moins de 3 Mo d'entrées), on peut effectuer toutes les opérations au niveau local (sur le navigateur).
La bibliothèque version client est utilisée aussi pour envoyer des requêtes d'auto-complétion à des serveurs qui se basent sur une base de données ou un autre système permettant d'offrir une liste de suggestions au format JSON (par exemple un système écrit en PHP).\\
 
Lorsque l'on traite des données du monde réel, des problèmes pratiques apparaissent. Généralement, ils ne sont pas pris en compte dans le travail théorique. En particulier, la gestion efficace des fichiers $\mathtt{UTF}\mbox{-}8$  en termes de temps et d'espace mémoire qui n'est, techniquement pas triviale.
  
Notre bibliothèque permet le traitement des dictionnaires $\mathtt{UTF}\mbox{-}8$, quel que soit la langue utilisée (Arabe, Chinois, Latin...etc).

\paragraph{Le chapitre est organisé comme suit : }
Dans la section \ref{sec:auto:data_structure}, nous détaillons les différentes structures de données utilisées pour faire une auto-complétion approchée efficace. Dans la section \ref{sec:auto:methode_recherche}, nous expliquons les méthodes de recherche et comment les listes de suggestions sont générées. Dans la section \ref{sec:auto:autocompletion_user_typing} nous proposons  différentes méthodes permettant de s'adapter au comportement de l'utilisateur dans le but de réduire les opérations faites dans la recherche, et donc de gagner en temps de calcul.
Dans la section \ref{sec:auto:reduire_nombre_branche_sortante_tester} nous proposons une méthode permettant de réduire le nombre des branches sortantes a tester dans chaque n\oe{}ud valide, à l'aide d'une table de hachage qui permet de générer des caractères candidats à tester dans chaque branche.
Dans la section \ref{sec:auto:resultat_top-k} nous expliquons comment réduire le nombre de résultats, et mettre en avant les résultats qui satisfont le plus le besoin de l'utilisateur, en utilisant le système dit \emph{Top-k}.

Dans la section \ref{sec:auto:Eliminer_la_redondance}, nous expliquons comment éliminer les résultats en double.
Nous présenterons ensuite la stratégie client/serveur dans la section \ref{sec:auto:strategies_client_serveur}.
Dans la section \ref{sec:auto:quelle_vitesse_devrait_etre_l'auto-completion} nous présentons et nous discutons de la vitesse nécessaire pour que l'auto-complétion soit efficace. La section \ref{sec:auto:test} est dédiée aux tests et aux expérimentations. Enfin la dernière section \ref{sec:auto:conclusion} conclut ce chapitre.

\section{La structure de données}
\label{sec:auto:data_structure}

La plupart des algorithmes efficaces de recherche de préfixe approchée construisent d'abord un index sur le dictionnaire, puis utilisent cet index pour effectuer une recherche efficace pour chaque requête. Dans notre travail, nous construisons un index aussi, semblable à un Trie compact avec un système de classement de tous les mots du dictionnaire.

Dans notre travail, nous utilisons un Trie compact dans lequel nous optimisons l'espace mémoire; chaque n\oe{}ud qui est fils unique de son n\oe{}ud parent est fusionné avec ce dernier. Un détail important de cette représentation est que chaque transition est en fait marqué par un facteur de l'un des mots dans le dictionnaire (et non pas par un caractère unique comme dans le cas du trie).
Ainsi, pour réduire la mémoire nécessaire, en restant toujours efficace, nous codons une transition par son premier caractère, un pointeur dans l'index vers le début d'une occurrence de l'élément correspondant (facteur), et la longueur de la transition qui est donc la longueur du facteur.
Cela implique que, dans cette approche, nous gardons à la fois, le dictionnaire des mots et la structure de données (le Trie compact) dans la mémoire centrale.

\subsection{L'ordre lexicographique de dictionnaire}
\label{sub:Lordre_lexicographique_de_dictionnaire}

Dans notre travail, nous ordonnons le dictionnaire selon l'ordre lexicographique dans le but de pouvoir construire le tableau des longueurs des préfixes communs (LCP) entre toutes paires de mots successifs, permettant ainsi d'accélérer la construction du Trie.

L'opération de tri du dictionnaire prend un temps relativement grand, spécialement si la taille de dictionnaire est grande. Comme en généralement, les dictionnaires se présentent sous forme de fichiers (par exemple le dictionnaire Anglais), il est préférable d'ordonner le fichier avant de l'utiliser dans la construction de système d'auto-complétion, pour éviter de l'ordonner à chaque fois qu'on veut l'utiliser.

\subsection{La construction de Trie à l'aide de tableau LCP}
\label{sub:construction_Trie_avec_LCP}

Pour construire notre Trie compact d'une façon efficace, nous utilisons comme indiqué plus haut une autre structure de données appelée \emph{tableau des plus longs préfixes communs}. Le tableau LCP stocke les longueurs des plus longs préfixes communs entre chaque paire de mots consécutifs du dictionnaire, tel que la première case du tableau contient la longueur du plus long préfixe commun entre le premier et le deuxième mot du dictionnaire.\\

\noindent
Exemple : soit le dictionnaire suivant $D=\{AbCdeAa , AbCdeBbbOo , AbCdeBbbSss , AbEfg\}$. On prend les mots deux par deux, et on calcule leurs plus long préfixes communs. 

\begin{enumerate}

\item $\{AbCdeAa , AbCdeBbbOo\}$, le plus long préfixe commun est $AbCde$ , il est de longueur $5$.
\item $\{AbCdeBbbOo , AbCdeBbbSss\}$, le plus long préfixe commun est $AbCdeBbb$ , il est de longueur $8$.
\item $\{AbCdeBbbbSss , AbEfg\}$, le plus long préfixe commun est $Ab$ , il est de longueur $2$.
\end{enumerate}
À la fin, notre tableau LCP est comme suit : $\mathtt{Tab\_LCP}= [5,8,2]$.

\bigskip
La profondeur d'un n\oe{}ud $nd$ est le nombre de caractères lus sur le chemin depuis la racine jusqu'au ce n\oe{}ud $nd$. La profondeur de chaque n\oe{}ud interne $nd$ est en fait la longueur du plus long préfixe commun entre tous les mots pointé par les feuilles se trouvant sous ce n\oe{}ud. La profondeur d'un mot pour un n\oe{}ud feuille est la longueur du mot pointé par le n\oe{}ud.
Dans la construction du Trie, pour chaque mot, on crée un n\oe{}ud feuille, et pour chaque préfixe commun, on crée un n\oe{}ud interne.

\bigskip
En supposant que l'on a déjà calculé et construit le tableau LCP ($\mathtt{Tab\_LCP}$), les étapes de construction du Trie compact à l'aide de tableau LCP sont comme suit :

\noindent
\rule{8cm}{0.1pt}\\
\textbf {Algorithme de construction du Trie compact}\\
\textbf{Entrée :} Un fichier trié contenant les mots du dictionnaire. Le tableau LCP.\\
\textbf{Sortie :} La structure de données Trie compact.\\

\begin{enumerate}

\item Créer le n\oe{}ud racine.
\item Ajouter le premier mot au Trie, donc créer un n\oe{}ud feuille pour le premier mot.

\item Pour tous les mots restants $i\in [2..n]$ :

	\begin{enumerate}

	\item Si le mot $x$ précédent $i-1$ ($x_{i-1}$) est un préfixe propre dans le mot $x$ actuel $i$ ($x_i$). Alors on teste si le plus long préfixe commun entre le mot actuel et le mot précédent ($\mathtt{Tab\_LCP[i-1]}$) est supérieur ou égal à la longueur du mot précédent (la profondeur $\mathtt{Prof}$ du n\oe{}ud feuille $\mathtt{nd_f}$ pointant vers le mot précédent $x_{i-1}$, ($\mathtt{Prof(nd_{f(x_{i-1})}}$)).

		\begin{itemize} 
	
		\item Si les deux mots ont la même longueur, alors on ne fait rien (le dictionnaire ne contient pas deux copies d'un même mot). 
	
		\item Sinon, ajouter un nouveau n\oe{}ud feuille $\mathtt{nd}_{f (x_i)}$ qui sort du n\oe{}ud père de mot précédent $\mathtt{nd}_{p (x_{i-1})}$. Ajouter les caractères restants du mot actuel à la transition entre le n\oe{}ud père et le nouveau n\oe{}ud $\mathtt{Tr}(\mathtt{nd}_{p (x_{i-1})}, \mathtt{nd}_{f (x_i)})$. On ajoute les caractères qui ne sont pas dans le préfixe commun représenté par le n\oe{}ud père, le nombre de caractères à ajouter est égal à la profondeur du n\oe{}ud feuille (qui est la taille de mot) moins la profondeur de n\oe{}ud père ($\mathtt{Nb\_char\_add} = \mathtt{Prof}(\mathtt{nd}_{f}) - \mathtt{Prof}(\mathtt{nd}_{p})$).
	
		\end{itemize}

	\item Sinon, le mot précédent ($x_{i-1}$) n'est pas un préfixe propre du mot actuel (la taille du préfixe commun est différente de la longueur du mot précédent) donc :\\

		\begin{itemize} 
	
		\item Depuis le n\oe{}ud feuille pointant vers le mot précédent $\mathtt{nd}_{f(x_{i-1})}$, remonter dans ses n\oe{}uds ancêtres jusqu'à arriver au n\oe{}ud $\mathtt{nd}_{p(\mathtt{LCP}(x_{i-1},x_i))}$ qui représente le préfixe commun entre le mot précédent et le mot actuel. Pour cela, il suffit juste de comparer la taille de préfixe commun ($\mathtt{Tab\_LCP}[i-1]$) avec la profondeur des n\oe{}uds internes ($\mathtt{Prof}(\mathtt{nd}_{p_i})$), tant que $\mathtt{Tab\_LCP}[i-1] < \mathtt{Prof}(\mathtt{nd}_p)$, alors remonter au n\oe{}ud parent et ainsi de suite, jusqu'à arriver au n\oe{}ud $\mathtt{nd}_{p(\mathtt{LCP}(x_{i-1},x_i))}$. Donc on a $\mathtt{Prof}(\mathtt{nd}_{p_1}) < \mathtt{Prof}(\mathtt{nd}_{p_2}) < ... < \mathtt{Prof}(\mathtt{nd}_{f})$, et $ \mathtt{nd}_{p_1} = \mathtt{nd}_{p(\mathtt{LCP}(x_{i-1},x_i))}$.  Le n\oe{}ud $\mathtt{nd}_{p_i}$ est le père du n\oe{}ud $\mathtt{nd}_{p_{i+1}}$, le n\oe{}ud $nd_f$ est le dernier n\oe{}ud dans le chemin depuis $nd_{p_1}$.\\
	 
		\item Entre le n\oe{}ud $\mathtt{nd}_{p_1}$ et son n\oe{}ud fils $\mathtt{nd}_{p_2}$, créer un n\oe{}ud intermédiaire $\mathtt{nd}_{\mathtt{int}}$. L'ordre des n\oe{}uds devient alors comme suit : {$\mathtt{nd}_{p_1},\, \mathtt{nd}_{\mathtt{int}},\, \mathtt{nd}_{p_2}$}. La chaîne de caractères qui a été sur la transition $\mathtt{Tr}(\mathtt{nd}_{p_1},\mathtt{nd}_{p_2})$ va être découpée entre les deux nouvelles transitions $\mathtt{Tr}(\mathtt{nd}_{p_1},\mathtt{nd}_{\mathtt{int}}),\, \mathtt{Tr}(\mathtt{nd}_{\mathtt{int}},\mathtt{nd}_{p_2})$, le nombre de caractères à mettre dans la transition $\mathtt{Tr}(\mathtt{nd}_{p_1},\mathtt{nd}_{\mathtt{int}})$ est la profondeur de $\mathtt{nd}_{\mathtt{int}}$ moins la profondeur de n\oe{}ud $\mathtt{nd}_{p_1}$, ($\mathtt{Nb\_char\_add} = \mathtt{Prof}(\mathtt{nd}_{\mathtt{int}}) - \mathtt{Prof}(\mathtt{nd}_{p_1})$).\\
	
	 	\item Insérer un nouveau  n\oe{}ud feuille $\mathtt{nd}_{f}$ qui sort de ce n\oe{}ud intermédiaire $\mathtt{nd}_{\mathtt{int}}$ et le faire pointer le mot actuel à insérer. Ensuite, ajouter les caractères qui ne sont pas dans le préfixe commun à la nouvelle transition $\mathtt{Tr}(\mathtt{nd}_{\mathtt{int}},\mathtt{nd}_{f})$. Le nombre de caractères à ajouter est égal à la profondeur du n\oe{}ud feuille (qui est la taille du mot) moins la profondeur du n\oe{}ud père qui est le n\oe{}ud intermédiaire : $\mathtt{Nb\_char\_add} = \mathtt{Prof}(\mathtt{nd}_{f}) - \mathtt{Prof}(\mathtt{nd}_{\mathtt{int}})$.
	
		\end{itemize}
	
	\end{enumerate}

\end{enumerate}
\rule{8cm}{0.1pt}\\

\noindent
Exemple :\\
Soit le dictionnaire suivant $D=\{\mathtt{AbCdeAa} , \mathtt{AbCdeBbbOo} , \mathtt{AbCdeBbbSss} , \mathtt{AbEfg}\}$.
Le tableau LCP est comme suit : $\mathtt{Tab\_LCP} = [5,8,2]$.
On construit notre Trie étape par étape avec les explications :

\begin{enumerate}

\item La 1\iere{} étape est de créer le n\oe{}ud racine $\mathtt{nd}_0$.

\item Le 1\ier{} mot $\mathtt{AbCdeAa}$ : insérer le 1\ier{} n\oe{}ud feuille $\mathtt{nd}_1$ qui est relier avec la transition qui sort de la racine. Et on met le mot $\mathtt{AbCdeAa}$ sur la transition ($\mathtt{Tr}(\mathtt{nd}_0,\mathtt{nd}_1)$).

\item Le 2\ieme{} mot $\mathtt{AbCdeBbbOo}$ : d'après le tableau LCP, il y a un préfixe commun d'une longueur $5$ avec le mot précédent, et le mot précédent n'est pas un préfixe propre de mot actuel. La transition précédente $\mathtt{Tr}(\mathtt{nd}_0,\mathtt{nd}_1)$ va être découpée, pour cela, on crée un n\oe{}ud intermédiaire $\mathtt{nd}_2$ entre $\mathtt{nd}_0$ et $\mathtt{nd}_1$, de telle sorte que la transition $\mathtt{Tr}(\mathtt{nd}_0,\mathtt{nd}_2)$ contienne le préfixe commun $\mathtt{AbCde}$, le reste du mot précédent est dans la nouvelle transition $\mathtt{Tr}(\mathtt{nd}_2,\mathtt{nd}_1)$ donc $\mathtt{Aa}$. Ensuite, depuis $\mathtt{nd}_2$ on ajoute un n\oe{}ud fils $\mathtt{nd}_3$, de telle sorte que la nouvelle transition $\mathtt{Tr}(\mathtt{nd}_2,\mathtt{nd}_3)$ va contenir le reste du mot actuel donc $\mathtt{BbbOo}$.

\item Le 3\ieme{} mot $\mathtt{AbCdeBbbSss}$ : récupérer la longueur du préfixe commun depuis le tableau LCP, donc $8$. Depuis le n\oe{}ud feuille du dernier mot insérer ($\mathtt{nd}_3$), on retrouve le n\oe{}ud ancêtre le plus profond qui à une profondeur inférieure ou égale à la longueur de préfixe commun $8$, on trouve le n\oe{}ud $\mathtt{nd}_2$. On crée alors un n\oe{}ud intermédiaire $\mathtt{nd}_4$ entre la transition $\mathtt{Tr}(\mathtt{nd}_2,\mathtt{nd}_3)$ et on met $\mathtt{Bbb}$ dans la nouvelle Transition $\mathtt{Tr}(\mathtt{nd}_2,\mathtt{nd}_4)$ car le n\oe{}ud $\mathtt{nd}_2$ a une profondeur $5$, et le nouveau n\oe{}ud a une profondeur $8$, donc on ajoute juste $3$ caractères à la nouvelle transition. La transition $\mathtt{Tr}(\mathtt{nd}_4,\mathtt{nd}_3)$ va contenir $\mathtt{Oo}$.
Ensuite, on crée un nouveau n\oe{}ud feuille $\mathtt{nd}_5$ qui sort du $\mathtt{nd}_4$, et on met $\mathtt{Sss}$ sur la nouvelle transition $\mathtt{Tr}(\mathtt{nd}_4,\mathtt{nd}_5)$.

\item Le 4\ieme{} mot $\mathtt{AbEfg}$ : d'après le tableau LCP, on a un préfixe commun d'une longueur $2$.
Alors depuis le n\oe{}ud feuille de mot précédent $\mathtt{nd}_5$, on remonte pour trouver le bon père, on remonte jusqu'au n\oe{}ud racine $\mathtt{nd}_0$. On crée un nouveau n\oe{}ud intermédiaire $\mathtt{nd}_6$ entre la transition $\mathtt{Tr}(\mathtt{nd}_0,\mathtt{nd}_2)$, la nouvelle transition $\mathtt{Tr}(\mathtt{nd}_0,\mathtt{nd}_6)$ va représenter le préfixe commun $\mathtt{aB}$.
Ensuite, on crée un nouveau n\oe{}ud feuille $\mathtt{nd}_7$ qui sort de $\mathtt{nd}_6$, et la transition $\mathtt{Tr}(\mathtt{nd}_6,\mathtt{nd_7})$ va contenir la chaîne de caractères $\mathtt{Efg}$.

\end{enumerate}

\medskip
Le résultat de la construction de la structure de données est illustré dans la figure \ref{fig:Créer_trie_LCP_exemple}.

\begin{figure}[H]
\centering
\includegraphics[width=0.6\linewidth]{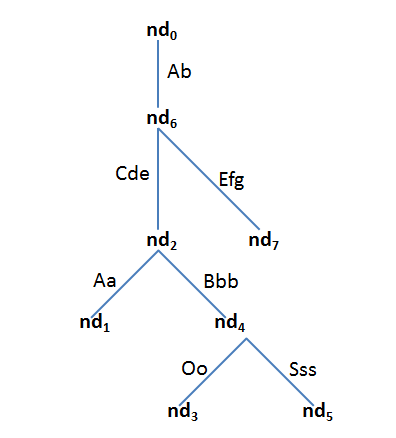}
\caption{Le Trie compact construit avec le tableau LCP.}
\label{fig:Créer_trie_LCP_exemple}
\end{figure}

\subsection{Réduire la mémoire nécessaire au Trie tout en restant efficace en temps de calcul}
\label{sub:Reduire_espace_memoire_de_Trie}

Dans cette approche, nous gardons à la fois, le dictionnaire des mots et la structure de données (le Trie compact) dans la mémoire centrale.

Afin de réduire la taille du Trie, toutes les transitions vont être encodées seulement avec leur 1\ier{} caractère, et on leur ajoute 1) la longueur de la transition,  donc la longueur du facteur, 2) un pointeur vers le début d'une occurrence du facteur dans le dictionnaire des mots.
Donc chaque transition, au lieu de contenir un facteur (d'un mot ou plusieurs mots partageant cette transition), elle ne contiendra que $3$ éléments, le 1\ier{} caractère du facteur, un pointeur vers le dictionnaire et la longueur de la transition.

La réduction de la mémoire du Trie compact avec l'exemple précédent (de la figure \ref{fig:Créer_trie_LCP_exemple}) est illustrée dans la figure \ref{fig:Trie_compression}.

\begin{figure}[H]
\centering
\includegraphics[width=0.6\linewidth]{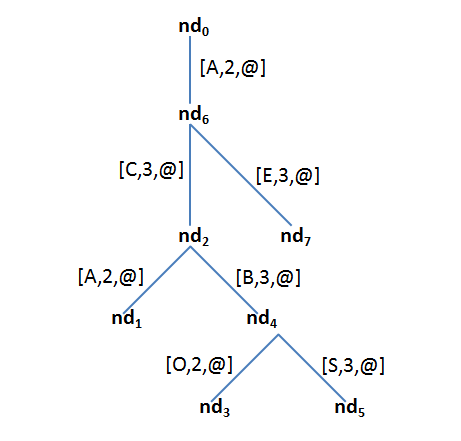}
\caption{Réduire l'espace mémoire occupé par le Trie. Chaque transition est codée par son 1\ier{} caractère, la longueur de la transition (le facteur), et un pointeur vers le début d'une occurrence du facteur dans le dictionnaire.}
\label{fig:Trie_compression}
\end{figure}

\medskip
Pour maintenir notre structure de données aussi petite que possible, les caractères UFT-8 et les nombres entiers sont codés par bits~\cite{Elias1975}, les pointeurs sont également codés comme des entiers.

\subsection{Ajouter les scores des mots au Trie compact}
\label{sub:Ajouter_scores_mots_Trie}

Chaque mot est associé un score statique, dans la construction du Trie compact, lorsqu'on atteint une feuille, on stocke le score du mot associé dans le n\oe{}ud feuille pointant vers le mot. Après l'insertion de tous les mots dans le Trie, et afin de supporter une complétion Top-k d'une façon efficace et rapide, pour chaque n\oe{}ud interne, on garde de manière récursive le score maximal parmi ses enfants, jusqu'à ce qu'on arrive à la racine.

Généralement, les scores sont stockés à la fin de chaque mot, donc on doit les extraire pour les insérer dans le Trie.

\medskip
\noindent
Exemple : soit le dictionnaire précédent avec des scores\\
 $D=\{\mathtt{AbCdeAa}\#55 , \mathtt{AbCdeBbbOo}\#9 , \mathtt{AbCdeBbbSss}\#11 , \mathtt{AbEfg}\#33\}$. Le résultat d'ajouter les scores au Trie est illustré dans la figure \ref{fig:Trie_add_scores} (dans cette figure, nous avons utilisé la version du Trie compact sans l'étape de réduction de l'espace mémoire) :

\begin{figure}[h]
\centering
\includegraphics[width=0.7\linewidth]{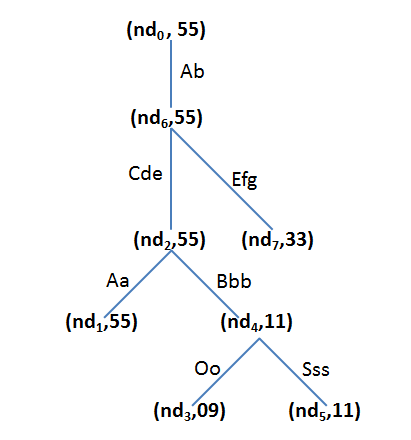}
\caption{Ajouter les scores des mots dans le Trie compact.}
\label{fig:Trie_add_scores}
\end{figure}

\subsection{Préparer la file de priorité et le tableau de hachage}
\label{sub:Preparer_file_priorite_tableau_hachage}

Nous proposons d'utiliser deux structures de données supplémentaires pour supporter les requêtes Top-k d'une façon efficace et rapide, et aussi pour supprimer les résultats en double de la liste finale.

La file de priorité est une structure de données qui permet de garder une liste d'éléments dans l'ordre, et dont l'élément avec la plus grande priorité est stocké en première position. Cette structure de données est utilisée pour permettre a trouver les résultats Top-k afin de les retourner à l'utilisateur. Pour implémenter notre file de priorité, nous avons utilisé un tas classique  implémenté avec tableau.\\

Nous utilisons un tableau de hachage avec sondage linéaire, afin de vérifier si le résultat existe dans la liste des solutions dans le but d'éliminer les résultats en double.

\subsection{Résumé de toutes les étapes de construction de la structure de données}

\noindent
\rule{8cm}{0.1pt}\\
\textbf {Algorithme de construction de la structure de données (les grandes lignes)}\\
\textbf{Entrée :} fichier contenant les mots du dictionnaire.\\
\textbf{Sortie :} structure de données pour l'auto-complétion.\\

\begin{enumerate}[1.]

\item Ordonner le dictionnaire dans l'ordre lexicographique. Pour plus de détails voir la sous-section \ref{sub:Lordre_lexicographique_de_dictionnaire}.

\item  Construire le tableau des plus longs préfixes communs (LCP array) de tous les mots du dictionnaire. Pour plus de détails voir la sous-section \ref{sub:construction_Trie_avec_LCP}.
 
\item  Construire le Trie compact en se basant sur le tableau LCP pour accélérer la construction. Pour plus de détails voir la sous-section \ref{sub:construction_Trie_avec_LCP}.

\item  Ajouter les scores des mots au Trie compact. Pour plus de détails voir la sous-section \ref{sub:Ajouter_scores_mots_Trie}.

\item Préparer la file de priorité. Pour plus de détails voir la sous-section \ref{sub:Preparer_file_priorite_tableau_hachage}.

\item Préparer un tableau de hachage avec sondage linéaire. Pour plus de détails voir la sous-section \ref{sub:Preparer_file_priorite_tableau_hachage}.

\end{enumerate}
\rule{8cm}{0.1pt}\\

\section{La méthode de recherche}
\label{sec:auto:methode_recherche}

On commence d'abord par expliquer l'algorithme qui permet de rechercher les résultats et de  proposer la liste de suggestions. Dans notre méthode, le Trie est utilisé afin de répondre aux requêtes de manière approchée, en utilisant un classement transversal de certains n\oe{}uds du Trie. Soit $q$ le mot requête avec $|q|=m$ et $k$ le nombre des résultats demandés dans la liste de complétion.

\subsection{Trouver les n\oe{}uds valides}
\label{sub:Trouver_les_noeuds_valides}

Étant donné un mot $q$, on appelle \emph{locus} la position dans le Trie compact où la recherche exacte de $q$ s'est arrêtée. Cette position peut être sur un n\oe{}ud, ou au milieu d'une arête (transition).
La recherche exacte s'arrête dans la position nommée \emph{locus} car soit on est arrivé à la profondeur $|q|$, donc on a trouvé le mot requête complet (la profondeur $|q|$ est le nombre total des caractères du mot requête $q$), ou on a obtenu une erreur avant de terminer la recherche et de trouver le mot entier $q$.\\

On appelle \emph{n\oe{}ud locus} le n\oe{}ud qui représente la position \emph{locus}. Si la position \emph{locus} se termine dans un n\oe{}ud, ce dernier est alors, le \emph{n\oe{}ud locus}; sinon, la position \emph{locus} se termine au milieu d'une arête et donc on considère le n\oe{}ud à destination de cette arête comme le \emph{n\oe{}ud locus}.\\

On appelle un n\oe{}ud \emph{n\oe{}ud valide}, le n\oe{}ud qui mène à des solutions exactes ou des solutions approchées.
Pour les solutions exactes, nous n'avons qu'un seul n\oe{}ud, par contre pour les solutions approchées, on peut avoir plusieurs n\oe{}uds possibles.

Dans ce qui suit, on donne l'algorithme pour trouver tous les n\oe{}uds valides. On utilise l'algorithme naïf de la recherche approchée dans un Trie.

\noindent
\rule{8cm}{0.1pt}\\
\textbf {Algorithme\_tous\_les\_n\oe{}uds\_valides}\\
\textbf{Entrée :} la requête  (préfixe) $q$.\\
\textbf{Sortie :} une liste de n\oe{}uds valides.\\
\begin{enumerate}[a.]

\item \label{item:nd_v_1} Trouver le \emph{n\oe{}ud locus} $\mathtt{nd}$ de $q$.

\item Si le \emph{n\oe{}ud locus} $\mathtt{nd}$ est à une profondeur d'au moins $|q|$, cela signifie que l'on a une solution exacte, alors ajouter ce n\oe{}ud à la liste des n\oe{}uds valides.

\item Pour chaque n\oe{}ud sur le chemin menant au \emph{n\oe{}ud locus} trouvé à l'étape \ref{item:nd_v_1} :
	\begin{itemize}
	\item Prendre un n\oe{}ud comme \emph{n\oe{}ud locus}.

	\item Faire une opération d'édition sur toutes les transitons sortantes (sauf pour celle qui est sur le chemin qui mène au \emph{n\oe{}ud locus} trouvé à l'étape \ref{item:nd_v_1}).
	
	\item Continuer une recherche exacte pour le suffixe restant de la requête dans le sous-arbre descendant de ce nouveau \emph{n\oe{}ud locus}.

	\item Si une occurrence approchée de $q$ est trouvée, alors ajouter le \emph{n\oe{}ud locus} de cette solution comme un n\oe{}ud valide à la liste des solutions.
	\end{itemize}

\end{enumerate}
\rule{8cm}{0.1pt}\\

\begin{figure}[h]
\centering
\includegraphics[width=0.7\linewidth]{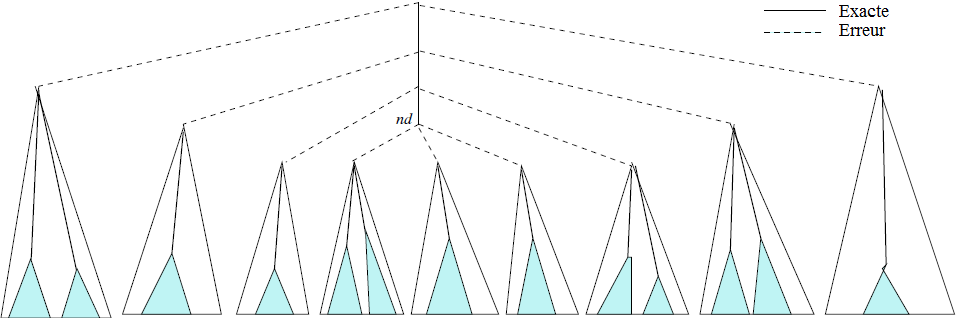}
\caption{La recherche avec une distance d'édition à 1-erreur dans un Trie compact.}
\label{fig:recherche_des noeud_trie}
\end{figure}

La figure \ref{fig:recherche_des noeud_trie} représente la recherche des n\oe{}uds valides qui mènent à des solutions exacte et approchées dans un Trie compact. Le chemin gras représente une correspondance exacte, alors que les chemins avec des points correspondent à celles avec 1-erreur d'édition (insertion, suppression, substitution). 

Dans la section \ref{sec:auto:reduire_nombre_branche_sortante_tester}, on détaille une méthode qui permet de faire une recherche approchée d'une manière efficace (sans explorer tout l'espace de Trie).

\subsection{Obtenir une liste de résultats à suggérer}
\label{sub:Obtenir_liste_top_K}

On se basant sur la liste des n\oe{}uds valides trouvés dans l'étape précédente, on calcule l'ensemble des $k$ résultats. Pour cela, on utilise les scores enregistrés dans chaque n\oe{}ud interne du Trie pour trouver quel chemin on doit prendre afin d'arriver aux résultats ayant les plus hauts scores. On utilise une file de priorité afin de trouver les $k$ meilleurs résultats et non juste le meilleur résultat.
L'algorithme permettant de calculer l'ensemble de la liste de suggestions Top-k, une fois que l'ensemble des n\oe{}uds valides a été calculé est le suivant :\\

\noindent
\rule{8cm}{0.1pt}\\
\textbf {Algorithme\_de\_liste\_de\_suggestions}\\
\textbf{Entrée :} liste des n\oe{}uds valides.\\
\textbf{Sortie :} liste des Top-k complétions.\\
\begin{enumerate}[a.]
		
\item Il y a une solution exacte : on prend tous les n\oe{}uds destinations des transitions sortantes depuis le n\oe{}ud valide représentant une solution exacte, et on les ajoute à la file de priorité.

\item \label{item:al_sol_pre_que_1} Obtenir le n\oe{}ud avec le plus grand score (celui en haut de la file de priorité).
Si ce n\oe{}ud représente une feuille, alors on ajoute le mot correspondant à la liste des résultats. Sinon on insère toutes les n\oe{}uds destinations de ses transitions (ces fils) dans la file de priorité.

\item \label{item:al_sol_pre_que_2} Faire le traitement de l'étape \ref{item:al_sol_pre_que_1} jusqu'à obtenir $k$ mots dans la liste de suggestions, ou jusqu'à ce que la file de priorité soit vide.
	
\item Si la file de priorité est vide avant d'avoir $k$ mots dans la liste des suggestions, ou si l'on n'a aucune solution exacte, alors : ajouter tous les n\oe{}uds valides restants qui représentent des solutions approchées avec leurs scores dans la file de priorité

\item Faire le traitement de l'étape \ref{item:al_sol_pre_que_2}

\end{enumerate}
\rule{8cm}{0.1pt}\\	

Dans cet algorithme qui trouve les $k$ résultats de complétions, on commence par le n\oe{}ud de la solution exacte afin de s'assurer que les premiers résultats dans la liste de suggestions sont des complétions exactes, ensuite, on ajoute les solutions approchées, car on doit d'abord présenter les solutions exactes (si elles existent) et ensuite seulement les solutions approchées.

À cette fin, dans la phase qui trouve les n\oe{}uds valides (voir la sous-section \ref{sub:Trouver_les_noeuds_valides}), lorsqu'on trouve une solution exacte, on l'ajoute dans la liste des n\oe{}uds valides dans le champ réservé à la solution exacte (où on marque juste qu'il s'agit d'une solution exacte) afin de le distinguer des autres n\oe{}uds. Ensuite, on ajoute les autres n\oe{}uds qui représentent des solutions approchées.

\section{L'auto-complétion et la saisie de l'utilisateur}
\label{sec:auto:autocompletion_user_typing}

Afin de ne pas recalculer tous les \emph{n\oe{}uds locus}\footnote{Voir la sous-section \ref{sub:Trouver_les_noeuds_valides}.} à partir du début, à chaque fois qu'un nouveau caractère est tapé, on stocke l'ensemble des positions des \emph{n\oe{}uds locus} après chaque nouveau caractère, et on continue la recherche dans les sous-arbres dont ils sont racines.

Une analyse des comportements d'un utilisateur en train de taper une requête, permet de les diviser en trois catégories : 

\begin{enumerate}

\item L'utilisateur tape sa requête pour la première fois, ou alors il tape une requête totalement différente de celle qui la précède.

\item Lorsque l'utilisateur tape sa requête, le système d'auto-complétion en temps réel lui propose des listes de suggestions. Et l'utilisateur continue à taper sa requête donc il ajoute d'autres caractères à la fin si aucune des suggestions ne lui convient.

Cette deuxième catégorie est la manière habituelle de l'interaction de l'utilisateur avec un système d'auto-complétion.

\item Lorsque l'utilisateur écrit sa requête, ensuite, il fait une modification par exemple une suppression à la fin de la chaîne pour qu'il retape d'autres caractères. En général, la modification peut-être : une suppression ou un ajout à n'importe quelle position dans la requête, à la fin, au milieu ou au début.
Généralement, cette troisième catégorie n'est pas la manière habituelle que l'utilisateur utilise avec le système d'auto-complétion.
\end{enumerate}
 
Dans ce qui suit, nous détaillons chaque catégorie et nous proposons une méthode permettant de s'adapter à l'utilisateur afin d'essayer de réduire les opérations faites dans la recherche et donc de gagner en temps de calcul.

\subsection{Chercher depuis la racine}
\label{sub:methode1_search_from_root}

La première méthode est la plus simple. C'est la méthode habituellement utilisée pour faire la recherche dans un Trie : 
elle consiste à faire la recherche à chaque fois en commençant le parcours du Trie depuis la racine.
On applique les algorithmes {\bf Tous\_n\oe{}uds\_valides} et {\bf Liste\_des\_suggestions}.
On appelle cette méthode {\em search\_from\_root}. Quelle que soit la requête tapée, on refait à chaque fois la recherche depuis la racine.

\subsection{Ajout à la fin}
\label{sub:methode2_search_end_node}

Le système d'auto-complétion est interactif, lorsqu'on tape une requête dans un champ de texte, les suggestions sortent au fur et à mesure qu'on tape les caractères, et cela dépend de \textbf{la durée d'attente entre chaque paire de requêtes consécutives}.\\

Exemple : on tape \textit{AB} donc on a une première liste de résultats pour le préfixe \textit{AB}. Ensuite, on continue à taper et on ajoute \textit{CD}, le mot requête devient \textit{ABCD}, donc on a une autre liste de suggestion pour le préfixe \textit{ABCD}. De même, à chaque fois qu'on ajoute de nouveaux caractères, on aura une nouvelle liste.\\

On remarque que chaque requête est un préfixe de la requête suivante. Généralement entre chaque paire de requêtes consécutives traitées par le système, il y a un temps d'attente prédéfini $t$. 
Pour ne pas refaire la recherche à chaque fois depuis le début (la racine) on procède comme suit :

\medskip
Sur les quelques premières frappes de la requête : on applique les algorithmes {\bf Tous n\oe{}uds valides} et {\bf Liste des suggestions}.

Quand des caractères sont ajoutés à la fin (en fonction du temps d'attente $t$, on peut avoir un ou plusieurs caractères), on ne commence pas la recherche à partir de la racine du Trie, mais plutôt, de la position du {\em n\oe{}ud locus} dans la dernière recherche exacte (i.e. dans la requête précédente). En d'autres termes, le  {\em n\oe{}ud locus} va être considéré comme une nouvelle racine et on continuera la recherche à partir de cette position, et on refait exactement les mêmes étapes dans {\bf Tous\_n\oe{}uds\_valides} et {\bf Liste\_des\_suggestions}.

\medskip
En général, on considère qu'on a deux mots requêtes différents. On commence la recherche à partir de la position du {\em n\oe{}ud locus} du dernier mot requête. Si c'est un préfixe du mot requête courant. Autrement, on commence la recherche à partir de la racine.\\

Cette manière de faire est la manière habituelle de taper une requête par l'utilisateur (caractère après caractère). Donc cette méthode est adaptée pour le besoin de proposer des listes de complétions d'une façon rapide au fur et à mesure que l'utilisateur tape les caractères de sa requête : à chaque fois qu'il ajoute des caractères, on affiche une liste d'une façon rapide et on continue la recherche depuis le n\oe{}ud auquel a abouti la recherche précédente.\\

On nomme cette méthode {\em search\_end\_node}.

\subsection{Une modification dans la requête} 
\label{sub:methode3_search_tab_nodes}

En tapant sa requête, l'utilisateur peut de temps en temps faire certaines modifications telles que la suppression, l'insertion au milieu, un ajout à la fin... Etc. 
On note ici qu'il y a des parties de la requête qui restent inchangées.

Pour ne pas recommencer à chaque fois la recherche depuis la racine, on enregistre tous les n\oe{}uds sur le chemin menant de la racine au {\em n\oe{}ud locus} dans un tableau $T$ de la 1\iere{} requête. La taille du tableau $T$ est la longueur du plus long chemin dans le Trie. On stocke chaque n\oe{}ud dans la case qui correspond à sa profondeur. Le n\oe{}ud $\mathtt{nd}$ est stocké dans la case numéro $i$ où la profondeur de $\mathtt{nd}$ est $i$, cela signifie qu'on à $i$ caractères depuis la racine jusqu'à ce n\oe{}ud $\mathtt{nd}$.

S'il y a un préfixe commun entre la requête modifiée (la 2\ieme{} requête) et celle qui la précède, alors on trouve le n\oe{}ud correspondant à ce préfixe commun à partir du tableau précédemment sauvegardé. Pour cela, il suffit juste d'aller directement à la case numéro (profondeur du préfixe commun). Le n\oe{}ud récupéré est considéré comme étant la racine, ensuite, on continue la recherche en faisant les mêmes étapes des algorithmes {\bf Tous\_n\oe{}uds\_valides} et {\bf Liste\_des\_suggestions}.

\medskip
En général, s'il y a un préfixe commun entre deux mots requêtes, on commence la recherche à partir du n\oe{}ud dont la profondeur est égale à la longueur du plus long préfixe commun.

\paragraph{Cette 2\ieme{} méthode a une mauvaise complexité temporelle :}

D'un premier point de vue, on dit que cette méthode permet de gagner un peu de temps de calcul puisqu'on ne recommence pas la recherche depuis le début, il suffit juste de trouver le bon n\oe{}ud et de le considérer comme une nouvelle racine pour continuer la recherche.

Mais en réalité, pour que cette méthode puisse s'exécuter, on doit faire des modifications ou une ré-initialisation dans le tableau ($Tab\_node$) qui sauvegarde les n\oe{}uds sur le chemin de la recherche après chaque nouveau mot requête.

Au début, toutes les cases du tableau sont initialisées à -1. Avec le 1\ier{} mot requête, on sauvegarde les n\oe{}uds qui correspondent au chemin de la recherche.
Dans le 2\ieme{} mot requête : si on a un préfixe commun,  on réinitialise la partie droite avec -1 à partir de la position du préfixe commun. Ensuite, on continue la recherche de suffixe et on sauvegarde les n\oe{}uds traversés pour ce dernier. Si on a un mot complètement différent on doit ré-initialiser tout le tableau pour l'utiliser de nouveau avec ce nouveau mot.

\medskip
Les étapes de ré-initialisation et modification du tableau ajoutent un temps supplémentaire au temps de recherche.

\medskip
Pour voir si cette méthode est utile ou non, on doit calculer la complexité de ces opérations.
Pour le 1\ier{} mot requête, on a $2m$. La recherche de mot requête dans le Trie est en $O(m)$, et l'ajout de tous les n\oe{}uds au tableau $Tab\_node$ est aussi en $O(m)$.

Pour le 2\ieme{} mot requête, on a $2m$. Calculer le plus grand préfixe commun et ré-initialiser la partie droite de tableau $Tab\_node$ donnent $O(m)$. Continuer à chercher le suffixe qui reste et stocker les n\oe{}uds trouvés de ce suffixe dans le tableau s'exécutent en $O(m)$. 

Cela donne entre le 1\ier{} et le 2\ieme{} mot requête une complexité de $4m$.

Au final, d'après le calcul de la complexité, cette méthode n'est pas efficace, car elle prend beaucoup plus d'opérations que de faire une simple recherche depuis la racine à chaque fois. Faire une simple recherche depuis la racine pour 2 requêtes est en $2m$ par contre cette méthode est en $4m$, donc elle prend un temps double.

On appelle cette méthode {\em search\_tab\_node}.

\section{Réduire le nombre des branches sortantes testées dans le Trie}
\label{sec:auto:reduire_nombre_branche_sortante_tester}

Afin de réduire le nombre des branches sortantes à tester dans chaque n\oe{}ud valide, nous avons effectué les expérimentations sur  la version côté serveur à l'aide d'une table de hachage, qui permet de générer des caractères candidats à tester dans chaque branche.

On stocke un dictionnaire de listes de substitutions tel que défini dans ~\cite{ibra.Chegrane.simple,Be09}, voir la sous-section \textit{dictionnaire des listes de substitutions}~\ref{sub:dictionnaire_des_liste_de_substitution} dans le chapitre~\ref{chap:recherche_approchee_avec_hachage}. Ce dictionnaire est construit seulement sur les préfixes ayant une longueur limitée. Pour chaque longueur, on stocke les caractères candidats pour toutes les positions.

Un dictionnaire de listes de substitutions, pour une longueur $d$ stocke une liste de caractères $c$ associée à des motifs de substitutions $p \phi q$, de telle sorte que $p c q$ est une chaîne de caractères de longueur $d$ qui est elle-même le préfixe de certaines chaînes de caractères dans le dictionnaire.

Le dictionnaire de listes de substitutions permet de faire la recherche approchée comme suit : étant donné un mot requête de la forme d'un motif de substitution $p \phi q$, le dictionnaire de listes de substitutions retourne un ensemble des caractères $c$ (une liste) tels que s'ils sont substitués à la place de $\phi$, ils génèrent des préfixes pour les chaînes de caractères qui sont dans le dictionnaire.

Dans notre cas, nous avons un paramètre $D$ (par exemple $D=6$), de telle sorte qu'on stocke des listes de substitutions pour tous les préfixes de la longueur $d\leq D$.

Pour chaque préfixe commun du dictionnaire de longueur $D$, on prend tous ces préfixes de longueur de $2$ (longueur minimum d'un mot) à $D$, pour générer des listes de substitutions.
Ainsi, quelle que soit la longueur du préfixe requête tapé (de $2$ à $D$) on peut appliquer cette méthode de recherche approchée.

Nous avons implémenté trois stratégies. étant donné un mot requête $x[1..m]$ avec $m\leq D$.

\subsection{La 1\iere{} stratégie : construire les mots candidats et les vérifier dans le Trie.}
\label{sub:methode_1er_SL}

Dans la première stratégie, on applique la méthode expliquée dans le chapitre~\ref{chap:recherche_approchee_avec_hachage} (la recherche approchée avec hachage), qui trouve les mots candidats qui peuvent être une solution approchée, ensuite, on les vérifie dans le dictionnaire exact.

Pour chaque motif de substitution $p \phi q$, on cherche les caractères candidats depuis la liste de substitutions, et on les remplace à la place de $\phi$ pour obtenir un mot $p c q$ qui a une très forte chance d'être une solution approchée. Pour vérifier si le mot $p c q$ représente une solution approchée, on le vérifier dans le Trie, en faisant une simple recherche exacte, donc on commence depuis la racine jusqu'à trouver le dernier caractère du motif $p c q$.

L'algorithme détaillé de cette méthode est comme suit :

\noindent
\rule{8cm}{0.1pt}\\
\textbf {Algorithme\_tous\_les\_n\oe{}uds\_valides\_1err\_SL}\\
\textbf{Entrée :} le préfixe requête $x$.\\
\textbf{Sortie :} une liste de n\oe{}uds valides.\\
\begin{enumerate}[a.]

\item \label{item:1err_SL_a} Trouver le \emph{n\oe{}ud locus} $nd$ de $x$.

\item Si le \emph{n\oe{}ud locus} $nd$ est à une profondeur d'au moins $|x|$, cela signifie qu'on a une solution exacte, donc ajouter ce n\oe{}ud à la liste des n\oe{}uds valides.

\item Pour chaque n\oe{}ud sur le chemin menant au \emph{n\oe{}ud locus} trouvé en \ref{item:1err_SL_a}

	\begin{itemize}
	\item Prendre un n\oe{}ud comme un \emph{n\oe{}ud locus}.

	\item Placer le caractère spécial $\phi$ à la position correspondant au n\oe{}ud choisi, donc obtenir le motif $p \phi q$.
	
	\item Calculer la valeur de hachage de mot $x' = p \phi q$ donc $h(x')$.

	\item Interroger le dictionnaire des listes de substitutions (avec $h(x')$)  afin d'obtenir une liste de caractères.
	
	\item Remplacer chaque caractère $c$ candidat à la place de $\phi$ pour obtenir le mot candidat $pcq$.
	
	\item Faire une recherche exacte dans le Trie pour le mot $pcq$.

	\item Si on le trouve, ce mot $pcq$ est une solution approchée, on ajoute le \emph{n\oe{}ud locus} de cette solution comme un n\oe{}ud valide à la liste des solutions.
	\end{itemize}

\end{enumerate}
\rule{8cm}{0.1pt}\\

On appelle cette 1\iere{} méthode {\em 1\_err\_SL}.

\subsection{La 2\ieme{} stratégie : choisir les chemins candidats avec les caractères de substitutions}
\label{sub:methode_1er_SL_node}

Dans la recherche approchée dans un Trie, on doit tester tous les fils d'un n\oe{}ud pour arriver à la solution approchée. Notre but est de savoir quel est le chemin qui mène à la solution sans tester tous les chemins sortants d'un n\oe{}ud donné.
Pour cela, dans cette deuxième stratégie, dans les n\oe{}uds où on doit appliquer l'opération de distance d'édition pour trouver les solutions approchées, on interroge le dictionnaire des listes de substitutions en même temps que l'on traverse le Trie du haut vers le bas pour trouver les caractères qui marquent les transitions à vérifier.
Plus précisément, supposons pour un n\oe{}ud traversé en profondeur $d$, on interroge le dictionnaire des listes de substitutions pour le motif de substitution $p[1..d-1]\phi[d + 1..m]$ afin d'obtenir une liste de caractères.
Ensuite, on continue à traverser les transitions marquées par les caractères de la liste obtenue seulement.

Les étapes de cette méthode sont détaillées dans l'algorithme suivant :

\noindent
\rule{8cm}{0.1pt}\\
\textbf {Algorithme\_tous\_les\_n\oe{}uds\_valides\_1err\_SL\_node}\\
\textbf{Entrée :} le préfixe requête $x$.\\
\textbf{Sortie :} une liste de n\oe{}uds valides.\\
\begin{enumerate}[a.]

\item Trouver le \emph{n\oe{}ud locus} $nd$ de $x$.

\item Si le \emph{n\oe{}ud locus} $nd$ est à une profondeur d'au moins $|x|$, cela signifie que l'on a une solution exacte, donc ajouter ce n\oe{}ud à la liste des n\oe{}uds valides.

\item Pour chaque n\oe{}ud sur le chemin menant au \emph{n\oe{}ud locus} trouvé en a.
	
	\begin{itemize}
	\item Prendre un n\oe{}ud comme un \emph{n\oe{}ud locus}.

	\item Placer le caractère spécial $\phi$ à la position correspondant au n\oe{}ud choisi, donc obtenir le motif $p \phi q$.

	\item Interroger le dictionnaire des listes de substitutions afin d'obtenir une liste de caractères.
	
	\item Continuer à traverser les transitions enfants marquées par les caractères dans la liste de substitutions, afin de vérifier le suffixe $q$ qui reste.

	\item Si une correspondance approchée de $x$ est trouvée, on ajoute le n\oe{}ud de cette solution comme un n\oe{}ud valide à la liste des solutions.
	\end{itemize}

\end{enumerate}
\rule{8cm}{0.1pt}\\

On appelle cette deuxième méthode \emph{1-err\_SL\_Node}. 

\medskip
Cette 2\ieme{} méthode est très proche de la première méthode \emph{1-err\_SL}. En effet, au lieu de refaire la recherche depuis le début, il suffit juste de choisir le bon chemin dans le Trie avec le caractère récupéré de la liste de substitutions, ensuite, continuer la vérification du suffixe qui reste.

\subsection{La 3\ieme{} stratégie : utiliser la liste de substitutions dans seulement les 1\iers{} niveaux du Trie}
\label{sub:methode_1err_SL_3_level}

La recherche approchée dans un arbre (Trie ou autre) peut se diviser en deux (2), les premiers niveaux qui sont proches de la racine, et les niveaux inférieurs qui sont proches des feuilles.

Dans les premiers niveaux, chaque n\oe{}ud à un grand sous-arbre et un nombre de fils important, et cela influe sur la recherche approchée et la rend lente, car il y a beaucoup de possibilités et de chemins à tester avant de trouver le bon qui représente la solution. Donc la méthode simple de traiter ce problème est inefficace (celle qui applique directement l'opération de distance d'édition dans chaque n\oe{}ud).

Par contre si on considère que l'on recherche l'erreur seulement dans les derniers niveaux, la méthode simple (naïve) est suffisante et efficace, car chaque n\oe{}ud à un petit sous-arbre et un nombre de fils très petit, et cela rend la recherche rapide.

\medskip
En se basant sur cette observation, nous avons proposé une méthode hybride qui utilise les caractères de substitutions juste dans les 3 premiers niveaux, et pour le reste des niveaux, on applique la recherche simple dans le Trie.

À chaque fois qu'on supprime un niveau, un très grand nombre de chemins s'éliminent. Avec seulement trois niveaux, on élimine un nombre très important de possibilités, et on réduit la taille des sous-arbres qui restent.
Le but, c'est de minimiser l'espace mémoire utilisé par le dictionnaire des listes de substitutions (pour tous les préfixes de 2 à $D$, avec $D=6$, comme nous l'avons expliqué au début de cette section), et en même temps gagner dans le temps d'exécution. Avec  trois niveaux, les tests donnent de bons résultats.
						
Si l'erreur est dans ces 3 premiers niveaux, on applique la recherche avec la méthode qui utilise les caractères de substitutions afin de choisir les bons chemins. Sinon on applique la méthode de recherche naïve dans le Trie.

Dans la partie test (voir la sous-section \ref{sub:test_list_substitution}) la méthode \emph{1err\_SL\_node} donne de meilleurs résultats par rapport à la première méthode \emph{1err\_SL}. Pour cela, dans cette 3\ieme{} méthode hybride, on utilise la 2\ieme{} méthode \emph{1err\_SL\_node} dans les 3 premiers niveaux, et pour le reste, on utilise la méthode classique expliqué dans la sous-section \ref{sub:Trouver_les_noeuds_valides}.\\

Les étapes de cette méthode sont résumées dans l'algorithme suivant :\\

\noindent
\rule{8cm}{0.1pt}\\
\textbf {Algorithme\_tous\_les\_n\oe{}uds\_valides\_1err\_SL\_3\_level}\\
\textbf{Entrée :} le préfixe requête $x$.\\
\textbf{Sortie :} une liste de n\oe{}uds valides.\\
\begin{enumerate}[a.]

\item Trouver le \emph{n\oe{}ud locus} $nd$ de $x$.

\item Si le \emph{n\oe{}ud locus} $nd$ est à une profondeur d'au moins $|x|$, cela signifie qu'on a une solution exacte, donc on ajoute ce n\oe{}ud à la liste des n\oe{}uds valides.

\item Pour chaque n\oe{}ud sur le chemin menant au \emph{n\oe{}ud locus} trouvé en (a).

	\begin{itemize}
	\item Prendre un n\oe{}ud $nd$ comme un \emph{n\oe{}ud locus}.
	
	\item Si la profondeur de \emph{n\oe{}ud locus} < 3 ($Prof(nd) < 3$) alors :
			
		\begin{itemize}
		\item Placer le caractère spécial $\phi$ à la position correspondant au n\oe{}ud choisi, pour obtenir le motif $p \phi q$.
	
		\item Interroger le dictionnaire des listes de substitutions afin d'obtenir une liste de caractères.
		
		\item Continuer à traverser les transitions marquées par les caractères qui sont dans la liste seulement, afin de vérifier le suffixe $q$ qui reste.
	
		\item Si une correspondance approchée de $x$ est trouvée, alors ajouter le n\oe{}ud de cette solution comme un n\oe{}ud valide à la liste des solutions.
		\end{itemize}
			
	\item Sinon, $Prof(nd) \geq 3$ alors:
		\begin{itemize}
		\item Faire une opération de distance d'édition sur toutes les transitions sortantes (sauf pour celle qui est sur le chemin qui mène au \emph{n\oe{}ud locus} trouvé en a).
			
		\item Continuer une recherche exacte pour le suffixe restant de la requête dans le sous-arbre descendant de ce nouveau \emph{n\oe{}ud locus}.
		
		\item Si une correspondance approchée de $x$ est trouvée alors on ajoute le n\oe{}ud de cette solution comme un n\oe{}ud valide à la liste des solutions.
		\end{itemize}
			
	\end{itemize}

\end{enumerate}
\rule{8cm}{0.1pt}\\

On appelle cette méthode : \emph{1-err\_SL\_3\_level}.

\subsection{Évaluation de la complexité}

Nous donnons une évaluation de la complexité moyenne pour les deux méthodes {\em 1\_err\_SL} et \emph{1-err\_SL\_Node}.

\begin{theorem}
Étant donné un dictionnaire $D$ de $d$ mots, et un mot requête $q$ d'une longueur $m$.
La complexité temporelle moyenne pour une requête préfixe de la recherche approchée pour l'auto-complétion est de $O(m^2)$.
\end{theorem}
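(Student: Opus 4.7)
Le plan est de d�composer le co�t total d'une requ�te de pr�fixe approch�e de longueur $m$ selon les $O(m)$ positions d'erreur candidates consid�r�es le long du chemin de recherche, et de borner le co�t esp�r� encouru � chaque position. Pour chaque position $i$, l'algorithme �met une requ�te au dictionnaire des listes de substitutions avec la valeur de hachage $h(q[1..i-1]\phi q[i+1..m])$, puis v�rifie les caract�res candidats retourn�s soit par une recherche exacte compl�te dans le Trie (m�thode \emph{1-err\_SL}), soit en poursuivant la travers�e depuis le n\oe{}ud courant le long des transitions �tiquet�es par les caract�res retourn�s (m�thode \emph{1-err\_SL\_Node}). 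Ce d�coupage est exactement celui utilis� dans la preuve du th�or�me~\ref{theo:time_one_error}, et je pr�voie de r�utiliser cette d�composition ici.

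Je commencerais par invoquer la propri�t� incr�mentale de la fonction de hachage polynomiale de Rabin--Karp : apr�s le pr�-traitement des tableaux $A_t$, $F$ et $G$ pour la requ�te, fait en temps total $O(m)$, chaque valeur de hachage d'une variante � une erreur de $q$ se calcule en temps $O(1)$. Le co�t total de hachage cumul� sur les $m$ positions est donc $O(m)$. Ensuite, sous l'hypoth�se d'un hachage enti�rement al�atoire et vu que la grande majorit� des listes de substitutions contient un seul �l�ment (comme document� dans le tableau~\ref{table:subst_list_stats} du chapitre pr�c�dent), le nombre de cases visit�es dans la table de hachage avec sondage lin�aire avant d'atteindre une case vide est constant en moyenne, que la recherche soit fructueuse ou infructueuse~\cite{Knuth63noteson}. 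Par cons�quent, le nombre esp�r� de caract�res candidats extraits � chaque position est $O(1)$, ce qui donne $O(m)$ candidats en moyenne, somm� sur les $m$ positions.

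Le point principal sera de borner le co�t de v�rification par candidat. Chaque candidat est v�rifi� en parcourant le Trie (depuis la racine pour \emph{1-err\_SL}, ou bien le long de la transition choisie pour \emph{1-err\_SL\_Node}) pour une cha�ne de longueur au plus $m$, co�tant $O(m)$ dans le pire cas. Comme l'on attend $O(m)$ candidats au total et que chaque v�rification co�te $O(m)$, le produit donne un temps moyen de $O(m^2)$. Je remarquerais enfin que la strat�gie hybride \emph{1-err\_SL\_3\_level} v�rifie la m�me borne : restreindre l'emploi du dictionnaire des listes de substitutions aux trois premiers niveaux ne fait que retirer du co�t dans le premier terme sans aggraver le second, car au-del� de la profondeur $3$ les sous-arbres rencontr�s sont petits et la recherche approch�e na�ve y reste born�e par $O(m)$ par position en moyenne, conform�ment au m�me type d'argument probabiliste (la probabilit� qu'une branche donn�e conduise profond�ment d�cro�t g�om�triquement). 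L'ajout du parcours Top-$k$ de la file de priorit� sur les n\oe{}uds valides ne contribue que $O(k \log k)$, domin� par $O(m^2)$ dans les r�gimes consid�r�s, ce qui ach�ve la preuve.
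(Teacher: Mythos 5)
Votre preuve est correcte et suit essentiellement la m\^eme d\'emarche que celle du manuscrit : m\^eme d\'ecomposition en $m$ positions d'erreur candidates, m\^eme recours au dictionnaire des listes de substitutions pour obtenir en moyenne $O(1)$ candidats par position (en temps $O(1)$ apr\`es un pr\'e-traitement en $O(m)$), et m\^eme co\^ut de v\'erification en $O(m)$ par candidat, d'o\`u $O(m^2)$. Vos remarques suppl\'ementaires sur la variante hybride et sur le co\^ut du Top-$k$ ne font qu'expliciter des points que le manuscrit d\'el\`egue \`a la preuve du chapitre sur la recherche avec hachage.
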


\begin{proof}

L'algorithme fait une recherche exacte jusqu'à la position de l'erreur. Tous les n\oe{}uds depuis la racine à la position de l'erreur vont être vérifiés pour trouver les solutions approchées. Le nombre maximal de positions est égal à $m$.

Pour une seule position $i \in [1..m]$, on applique la méthode de listes de substitutions pour obtenir les lettres pouvant mener à une solution. En moyenne on a juste une seule position (voir la preuve dans le chapitre \ref{chap:recherche_approchee_avec_hachage} (la recherche avec hachage), dans la preuve \ref{prof:chap:hachage:liste_sub}, et la section expérimentation section \ref{sec:hash:experimentation}). Donc on aura juste un seul chemin à vérifier, ce qui donne un temps $O(m)$ pour vérifier le préfixe avant la position de l'erreur et le suffixe après la position de l'erreur. Le calcul de la liste de substitutions se fait quant à lui en temps $O(1)$ (voir le détail dans le chapitre sur la recherche avec hachage). Donc le temps total pour une seule position est $O(m)$ et le temps total pour toutes les positions est $O(m \times m) = O(m^2)$.
\end{proof}

\section{Les résultats Top-K}
\label{sec:auto:resultat_top-k}

La liste des résultats dans l'auto-complétion pourrait être très grande, surtout dans les deux cas suivants :

1) Lorsque l'utilisateur tape moins de 3 caractères, on aura un très grand sous-arbre pour rendre toutes ses feuilles comme résultats, et donc beaucoup de mots. Par exemple, si on a qu'un seul caractère dans certains cas, on aura presque la totalité de l'arbre. Beaucoup de bibliothèques (par exemple \textit{JQuery UI Autocomplete}) proposent comme paramètre de réglage un nombre minimum de caractères à taper pour traiter la requête afin d'éviter de rendre une grande partie de dictionnaire comme résultat.

2) Lorsqu'on accepte les erreurs dans la requête, une conséquence évidente est que la liste de suggestions devient très grande aussi.\\

Afin de réduire le nombre de résultats, et mettre en avant les résultats les plus importants qui satisfont le plus le besoin de l'utilisateur, on utilise le système dit \textbf{Top-k} qui rapporte les $k$ suggestions les plus hautement classées dans un ordre décroissant par rapport à leur score.

Le paramètre $k$ est donné par l'utilisateur, le nombre de résultats dépend de la taille  de l'interface du dispositif utilisé. Généralement, on utilise une valeur entre 10 et 15. Cette méthode est appelée \textbf{Top-k complétion} (Top-k suggestion).

Comme on a expliqué dans la sous-section \ref{sub:Obtenir_liste_top_K}, on utilise un Trie qui stocke tous les scores des mots dans les n\oe{}uds feuilles, et chaque n\oe{}ud interne garde le score maximal de ces enfants. Le Trie avec les scores et une autre structure de donnée nommée la file de priorité sont combinées dans le but de trouver les $k$ résultats qui ont les scores les plus élevés. L'algorithme permettant de trouver la liste Top-k  comme expliqué précédemment est: {\bf Algorithme de liste de suggestions}.\\

Pour faire la recherche Top-k, les mots du dictionnaire doivent avoir un score qui représente leur importance. Dans le cas où certains mots n'ont pas de score, on leur attribue le score NULL, donc la valeur 0.\\

Exemple $D=\{"abcd\#10" ,  "abcdefg\#5" , "xyz\#02", "ibra", "nasa" \}$.\\

Dans cet exemple, les mots $\{"abcd" , "abcdefg" , "xyz"\}$ ont des scores (ils apparaissent à la fin avec un séparateur spécial $\#$). Par contre, les deux mots $\{ "ibra", "nasa"\}$ n'ont pas de score, dans ce cas, on leur donne la valeur $0$.\\

Dans une recherche, les résultats ayant un même score sont classés par ordre lexicographique.

Dans les deux sous-sections qui suivent, nous expliquons le concept des résultats Top-k statique et dynamique, et nous donnons une méthode simple qui donne la possibilité à l'utilisateur de visionner tous les résultats possibles d'une requête groupe par groupe sans encombrer l'interface graphique et sans refaire la recherche.

\subsection{Classement Top-k statique et dynamique.}
\label{sub:Classement_top_k_statique_dynamique.}

Les résultats dans un classement statique restent toujours les mêmes dans le temps. Si on tape la même requête, plusieurs fois, l'algorithme retournera comme résultat  toujours la même liste, même si par exemple, on choisit toujours le dernier mot de la liste.

Comme exemple des classements statiques : les noms des villes du monde selon leur superficie ou leur nombre d'habitants (le nombre d'habitants change, mais généralement les statistiques sont faites dans des périodes un peu éloignées), l'ordre lexicographique du dictionnaire,... etc.\\

Dans un classement Top-k dynamique, les résultats dans la liste peuvent changer avec la même requête, car les scores changent. Par exemple si dans une application donnée le score du mot choisi dans la liste augmente, donc à la prochaine requête sa position dans la liste change, alors  si l'utilisateur choisit toujours le même mot, ce dernier va apparaître au top de la liste (comme par exemple dans l'IDE de programmation ''Visuel Studio'').
 
Le score statique est le nombre donné pour chaque mot, ces scores peuvent être mis à jour. Le score pour un classement dynamique peut être une valeur positive/négative, et le score initial sera changé avec une unité spécifique. Si on a une valeur positive, le score augmente, par exemple : le système de Google +1.

Dans notre travail, les scores sont des nombres entiers donnés pour chaque mot dans le dictionnaire. Lorsque l'utilisateur choisit un mot depuis la liste des résultats Top-k affichée, on récupère le mot choisi par l'utilisateur, et on fait une mise à jour de son score (par exemple avec +1).

\paragraph{Comment faire une mise à jour de score : }
\label{subsub:comment_faire_update_top_k}

Pour faire la mise à jour de score du mot choisi, on doit faire la mise à jour à tous les n\oe{}uds qui sont sur le chemin dans le Trie depuis la racine jusqu'au n\oe{}ud feuille. Pour cela,  nous avons choisi dans notre travail, de refaire la recherche du mot depuis le début et de sauvegarder dans une liste tous les n\oe{}uds trouvés sur le chemin jusqu'au n\oe{}ud feuille. Lorsqu'on arrive au dernier n\oe{}ud, on récupère le score et on lui ajoute l'unité de mise à jour (par exemple +1). Ensuite, on prend le nouveau score et on le compare avec les scores des n\oe{}uds qui sont sur le chemin de la recherche et qui ont été sauvegardés pour leur appliquer une mise à jour. Donc si le nouveau score est plus grand que les scores de ces n\oe{}uds, alors on les modifie. On applique la mise à jour à tous les n\oe{}uds qui sont sur le chemin de la recherche pour rééquilibrer le Trie par le nouveau score, puisque chaque n\oe{}ud interne contient le score le plus élevé de ces fils.

\subsection{Afficher tous les résultats ordonnés en Top-k groupe par groupe}
\label{sun:top_k_groupe_by_groupe}

Dans certains cas, l'utilisateur veut voir une grande partie ou la totalité, des résultats de sa requête. L'affichage restreint aux résultats Top-k ne satisfait pas le besoin de l'utilisateur, comme par exemple une requête de recherche d'un produit donné, un livre ou un film, etc.

Dans ce cas il est plus adéquat de proposer un système qui affiche tous les résultats à l'utilisateur, à sa demande.

Dans notre travail, nous nous inspirons des moteurs de recherche comme Google qui affiche les résultats Top-k et propose des boutons/liens en bas de la page (next) pour afficher les pages suivantes.

En affichant les résultats Top-k, on fournit une option pour obtenir les autres résultats, et les afficher groupe par groupe. Pour une démonstration, voir dans \url{http://5.135.166.57/APPACOLIB/}~\footnote{Visité le: 02-07-2016.}, le bouton \fbox{{\tt next}} , ou la combinaison (CTRL + flèche droite) pour obtenir la même fonction.\\

Lorsque l'on utilise cette technique, on ne refait pas la recherche du mot requête à chaque fois. Il suffit juste d'aller à l'index (la file de priorité et le Trie) et extraire les résultats par groupe de $k$ éléments.

Lorsque on a une requête, on recherche d'abord les n\oe{}uds valides par l'algorithme {\bf Algorithme tous les n\oe{}uds valides}, ensuite, on les range dans une file de priorité pour trouver les $k$ résultats par l'algorithme ({\bf Algorithme de liste de suggestions}) expliquer dans la section \ref{sec:auto:methode_recherche}. S'il reste des n\oe{}uds dans la file de priorité, cela veut dire que nous avons un nombre de résultats plus que $k$. 
Ainsi lorsque l'utilisateur veut afficher les $k$ éléments suivants, il suffit juste d'exécuter les mêmes opérations de l'algorithme ({\bf Algorithme de liste de suggestions}) pour extraire les $k$ résultats depuis les n\oe{}uds qui restent dans la file de priorité.
  
Les $k$ résultats ne sont pas pré-calculés pour les afficher par groupe de $k$ éléments, car cela prend un temps considérable pour trouver tous les résultats. Mais à chaque fois que l'utilisateur demande de voir le groupe suivant des résultats, on utilise la file de priorité et on applique les étapes de l'algorithme ({\bf Algorithme de liste de suggestions}).
Généralement, l'utilisateur est satisfait juste par le premier groupe des résultats; dans certains cas, il visualise le premier groupe suivant.

\paragraph{Remarque :}
Si l'utilisateur ne trouve pas sa requête dans le premier groupe des résultats et ne veut pas consulter les autres groupes, il lui suffit juste de taper quelque caractères supplémentaire de sa requête pour faire plus de filtrage sur les résultats.

\section{Éliminer la redondance (les résultats en double)}
\label{sec:auto:Eliminer_la_redondance}

Dans la recherche approchée il y a toujours des résultats en double. Ce phénomène est encore plus large dans l'auto-complétion approchée. Lorsqu'on applique les trois opérations de distance d'édition pour rechercher des solutions avec une erreur, on peut atteindre les mêmes n\oe{}uds valides, et donc tous leurs fils sont des résultats en double.\\

\noindent
Il existe deux cas où les n\oe{}uds valides causent une redondance des résultats : 
\begin{enumerate}

\item On trouve le même n\oe{}ud valide plus qu'une seule fois, par exemple avec
l'opération de suppression on trouve $nd_1$ comme un n\oe{}ud valide, et avec
la substitution on retrouve le même n\oe{}ud valide $nd_1$.

\item On trouve $nd_1$ comme un n\oe{}ud valide (par exemple avec une opération 
de suppression), et ensuite, on trouve son fils $nd_2$ comme un n\oe{}ud valide (par exemple avec une opération d'insertion), et par conséquent, tous les fils du deuxième n\oe{}ud sont des résultats en double.
\end{enumerate}

\bigskip
Comment résoudre ce problème des n\oe{}uds valides en double?\\

Avant d'ajouter les n\oe{}uds valides à la liste des solutions, on vérifie d'abord s'ils existent dans cette liste ou non. L'auto complétion ne donne que les premiers éléments Top-k, cela signifie une petite liste d'éléments. Pour cela, on peut utiliser une petite table de hachage.
Cette table de hachage est utilisée dans l'algorithme {\bf Algorithme tous les n\oe{}uds valides}, avant l'ajout des n\oe{}uds valides à la liste des solutions.\\

Pour la relation entre les fils et les n\oe{}uds valides père, on utilise une table de hachage dans l'algorithme {\bf Algorithme de liste de suggestions}, pour vérifier si un mot solution existe dans la liste de suggestions ou non.

\section{Les stratégies client/serveur de l'auto-complétion}
\label{sec:auto:strategies_client_serveur}

Il y a deux stratégies extrêmes pour l'auto-complétion : soit le programme s'exécute sur le coté serveur, et le client envoie les requêtes et gère les résultats seulement, ou bien le programme s'exécute du coté client (en utilisant JavaScript), et le client fait tous les traitements en local et affiche les choix (résultats) finaux.

Entre ces deux extrêmes, plusieurs scénarios peuvent être considérés. Ils dépendent de plusieurs paramètres, parmi les plus importants :

\begin{enumerate}[a)]

\item La taille du dictionnaire. 

\item La charge du serveur. 

\item La vitesse de connexion.  

\item Le nombre de connexions courantes au niveau du serveur (nombre de connexions simultanées).

\item La puissance du calcul du coté client (la puissance de calcul du coté serveur est supposée  beaucoup plus puissante ($\gg$) que celle du client, mais partagée).

\item Le classement statique ou dynamique. 

\item Le délai entre deux requêtes. 

\end{enumerate}

\noindent
Notre bibliothèques est disponible sur
\url{https://github.com/AppacoLib/api.appacoLib}, elle permet d'exécuter nos algorithmes soit sur le serveur ou du côté client, selon le meilleur scénario recherché.
Le c\oe{}ur de nos bibliothèques est écrit en (C/C++) du côté serveur, et en JavaScript pour le côté client.

Nous expliquons ci-dessous ses deux utilisations principales.

\paragraph{Appacolib du coté serveur, } Typiquement si on a un dictionnaire très volumineux (plus de 3 Mo d'entrées), on peut utiliser la bibliothèque écrite en C/C++ avec un module FASTCGI pour fournir une liste de suggestions dans un format JSON et renvoyer les résultats à travers des appels AJAX, puis afficher la liste en utilisant la bibliothèque écrite en JavaSscript (les résultats sont affichés généralement en dessous du champ de texte qui s'auto-complète). Pour une démonstration voir \url{http://5.135.166.57/APPACOLIB/result_from_server_CGI.html}.

\paragraph{Appacolib du coté client, } Du coté client, on a plusieurs stratégies dépendant du scénario voulu :

\begin{enumerate}[1)]

\item Toute l'opération de l'auto-complétion approchée peut être effectuée au niveau du navigateur du coté client. Si le dictionnaire n'est pas très volumineux (par exemple moins de 3 Mo d'entrées), on peut effectuer toutes les opérations au niveau local : la construction de l'index, la recherche de requête préfixe, et l'affichage de la liste des suggestions. Dans ce cas, on doit fournir une liste de mots pour construire l'index. La source du dictionnaire peut être locale ou au niveau du serveur, et peut être en différents formats (fichier, une chaîne de caractères avec des séparateurs, un tableau, ...etc.). Pour plus de détails sur le format de source pour construire l'index voir la documentation de \emph{AppacoLib} dans \url{https://github.com/AppacoLib/api.appacoLib/tree/master/doc_appaco_lib}.

\item Afficher les résultats au niveau du client uniquement : en utilisant un appel AJAX pour envoyer la requête et recevoir une liste de réponses venues du serveur.
Dans ce cas, on peut combiner les deux parties de notre bibliothèque: utiliser la bibliothèque en C/C++ du coté serveur pour gérer la création de l'index et répondre aux requêtes de l'auto-complétion, et utiliser la bibliothèque du coté client pour recevoir les résultats et les afficher.

Lorsqu'on utilise la bibliothèque du coté client pour juste envoyer la requête au serveur, et gérer  l'affichage des résultats finaux, le serveur n'est pas forcément notre bibliothèque écrite en C/C++, mais il peut être n'importe quel système qui accepte les requêtes AJAX et qui fournit les résultats en format JSON, comme par exemple un simple système écrit en PHP qui lit les résultats depuis un simple tableau, ou peut être un système plus complexe comme la gestion des complétions depuis les bases de données.

\item Télécharger le Trie construit dans le serveur, et exécuter les requêtes localement.

\end{enumerate}

\section{À quelle vitesse devrait être l'auto-complétion}
\label{sec:auto:quelle_vitesse_devrait_etre_l'auto-completion}

Pour être utile, l'auto-complétion doit être sensible, réactive et instantanée.
Quand l'auto-complétion est gérée du coté serveur, le calcul du temps de traitement prend en considération le temps pour taper un caractère et le temps du transfert des données entre le client et le serveur, et le temps du traitement du côté client pour gérer l'interface.

Le nombre de frappes par seconde est de $7,5$ \cite{Starner1996}, cela donne pour un caractère, l'utilisateur met ($ 133 ms $). Miller \cite{Miller1968}, a montré que pour être instantané pour l'utilisateur, le temps de réponse doit être inférieur à $100 ms$ ($ms$ : MilliSeconde).
Cela signifie que l'on a seulement environ $100 ms$ pour traiter la requête et donner une liste de suggestions qui contient les résultats Top-k. Cette période de $100 ms$ inclut le temps de la communication entre le serveur et le client et le temps de gestion de l'interface.
Ainsi, le côté serveur doit traiter et renvoyer les résultats très rapidement.

Si on considère seulement le côté client pour faire de l'auto-complétion indépendamment du serveur, on aurait exactement $100 ms$ pour effectuer tout le traitement (sans compter le temps de transfert, car tous les calculs sont faits en local).

\section{Tests et expérimentations}
\label{sec:auto:test}

Les tests ont été effectués sur deux fichiers: le dictionnaire Anglais ($213,557$ mots,
$2,4$ Mégaoctets) et un extrait des titres des articles de Wikipédia ($1,2$ millions titres,
$11,5$ Mégaoctets).

Notre bibliothèque est disponible sur \url{https://github.com/AppacoLib/api.appacoLib}. Le c\oe{}ur de nos bibliothèques principales est écrit en (C/C ++) pour le côté serveur, et en JavaScript pour le côté client.
L'implémentation de notre bibliothèque est modulaire, elle est divisée en plusieurs fichiers sources indépendants selon les fonctionnalités voulues. \textbf{La bibliothèque en C/C++ contient 7 593 lignes de code}, et la bibliothèque écrite en \textbf{JavaScript  contient 3 884  lignes} de code source.
Pour une démonstration de notre bibliothèque voir : \url{http://5.135.166.57/APPACOLIB/}.\\

Dans nos tests, nous avons considéré les deux scénarios extrêmes cités avant, l'auto-complétion est faite seulement par le client, et l'auto-complétion faite par le serveur et gérée par le client dans le navigateur.

Nous avons testé les différentes méthodes proposées, la complétion exacte ou approchée, les méthodes pour s'adapter aux modifications de l'utilisateur sur la requête, l'utilisation du hachage pour générer une liste de substitutions pour améliorer la recherche approchée.

Les paramètres pris en considération lors des tests sont :
\begin{itemize}

\item Le temps de construction de la structure de données.

\item L'espace occupé par l'index.

\item Le temps de réponse aux  requêtes en fixant le $k$ et on changeant la longueur des préfixes de 2 à 6, pour l'auto-complétion exacte et approchée.

\item L'affectation du nombre de résultats $k$ sur le temps du calcul.

\end{itemize}

\medskip
Les mots requêtes ont été obtenus en choisissant au hasard parmi les préfixes des mots dans le dictionnaire et introduisant des erreurs aléatoires dans des positions aléatoires.
Le temps final a été obtenu en faisant la moyenne de 1000 mots requêtes distinctes.

\subsection{Le coté serveur, C/C++}
\label{sub:test_serveur}

Les tests ont été effectués sur un ordinateur Intel core-2 duo e8400, windows 7, 3.0 GHz, 2GB de RAM, avec un compilateur GNU GCC version 4.4.1.

\subsubsection{Test du temps de réponse pour l'auto-complétion exacte et approchée}
\label{subsub:test_exact_err}

Dans ce test, nous avons focalisé sur le temps de réponse à une seule requête. Nous avons fixé $k$ à 10 résultats dans la liste, et nous avons changé les tailles des préfixes de 2 à 6.
Les tests sont faits avec la totalité du dictionnaire, l'espace mémoire occupé par l'index est d'environ 5 Mo pour le dictionnaire Anglais (En) et 29 Mo pour le dictionnaire WikiTitle (Wi). 
Le temps de construction du Trie pour le dictionnaire Anglais (En) est d'environ $177 ms$ et $896 ms$ pour le dictionnaire WikiTitle (Wi).

\begin{table}[h]
\begin{tabular}{|l|ll|ll|}
\hline
Longueur   & Exacte & 1-erreur  & Exacte  & 1-erreur \\
de requête & (En)   & (En)      & (Wi)    & (Wi) \\
\hline
2          & 0,02   & 0,11      & 0,02    & 0,34     \\
3          & 0,017  & 0,16      & 0,017   & 0,44   \\
4          & 0,013  & 0,19      & 0,014   & 0,53    \\
5          & 0,010  & 0,21      & 0,012   & 0,58   \\
6          & 0,009  & 0,25      & 0,01    & 0,61 \\
\hline
\end{tabular}
\centering
\caption{Temps de requête $+$ Top-$k$ (en millisecondes) pour les dictionnaires Anglais (En) et WikiTitles (Wi).}
\label{tab:temps_reponse_exact_err}
\end{table}

Dans le tableau \ref{tab:temps_reponse_exact_err},  pour l'auto-complétion exacte, il est intéressant de noter que le temps de réponse pour les requêtes courtes (long = 2) est plus long que pour celui des requêtes longues (long = 6).
Cela est dû principalement à l'existence de beaucoup plus de résultats, au delà du seuil fixé, lorsque la requête est courte. En effet, avec une requête courte,  la recherche s'arrête dans les premiers niveaux de l'arbre, donc on aura un sous-arbre très grand où on doit trouver les résultats Top-k, ce qui augmente le temps de calcul.

Cependant, pour l'auto-complétion avec erreur, le temps est beaucoup plus élevé comparé à l'auto-complétion exacte. Cela est dû au nombre d'opérations de la complétion exacte et approchée. Dans l'auto-complétion exacte, il suffit juste d'exécuter une seule recherche exacte puis de récupérer les résultats depuis un seul n\oe{}ud.
Par contre, dans l'auto-complétion approchée, on est obligé de faire une recherche exacte plus une opération de distance d'édition sur tous les n\oe{}uds sur le chemin depuis la racine jusqu'au dernier n\oe{}ud où la recherche exacte s'est arrêtée (faire une recherche approchée avec 1 seule erreur), ensuite, trouver les résultats depuis plusieurs n\oe{}uds, ceux de la solution exacte (1 seul n\oe{}ud), et ceux de la solution approchée (plusieurs n\oe{}uds).

La différence de temps dans les résultats de recherche exacte et approchée existe, mais elle devient négligeable par rapport à l'ensemble du temps total de réponse qui ne doit pas dépasser les $100 ms$.

\subsubsection{Test avec différentes tailles du dictionnaire}

Dans ce test, nous avons fait varier la taille du dictionnaire (le nombre des mots) à chaque fois pour voir l'influence de ce paramètre sur les 3 aspects suivants : le temps de création de l'index, le temps de réponse de requête et l'espace mémoire occupé par l'index.

\paragraph{Le temps de création de l'index :}

nous commençons par le temps de création de notre index, voir la figure \ref{fig:En_Wi_creation} et le tableau \ref{tab:espace_memoire_EN_WI}, qui illustrent le changement du temps de création des 2 dictionnaires selon leurs tailles.

\begin{figure}[h]
\centering
\includegraphics[width=0.49\linewidth]{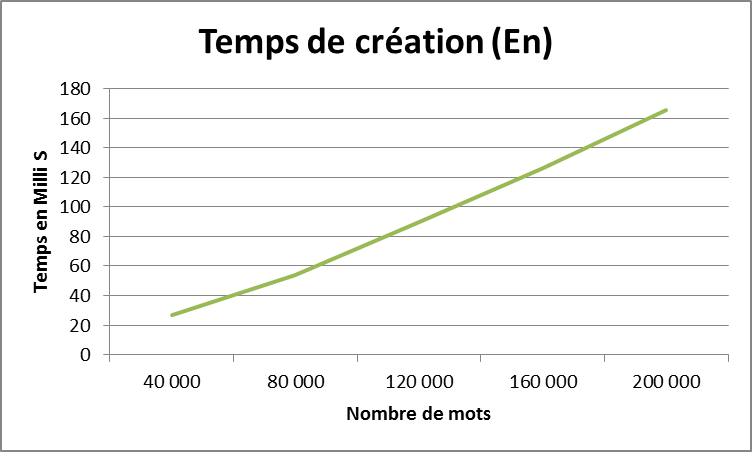}\hfill
\includegraphics[width=0.49\linewidth]{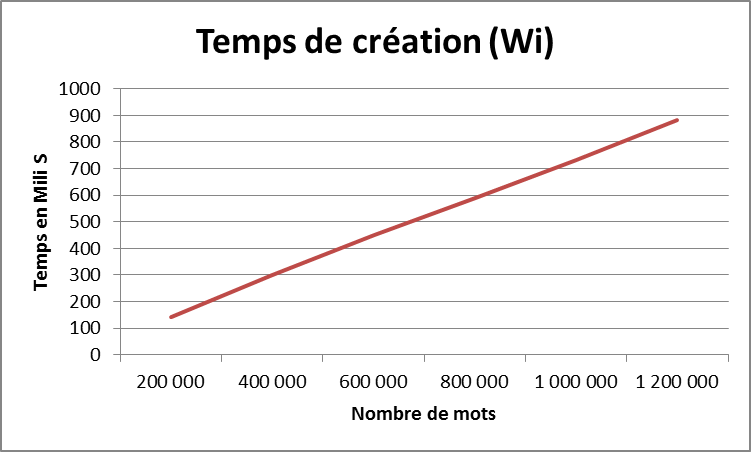}
\caption{Le temps de création pour différentes tailles des dictionnaires Anglais (En) et WikiTitle (Wi).}
\label{fig:En_Wi_creation}
\end{figure}

Dans la figure \ref{fig:En_Wi_creation}, on remarque bien que dans les 2 dictionnaires, le temps de création augmente en corrélation avec la taille du dictionnaire d'une façon linéaire. Pour le dictionnaire Anglais (En), le temps commence de $20ms$ jusqu'à $170ms$,  et pour WikiTitle (Wi) le temps est entre $150ms$ et $900ms$.\\

\begin{table}[h]
\begin{tabular}{|c|c|c|}
\hline
\begin{tabular}[c]{@{}l@{}}Nombre\\ Mots\end{tabular} & Création (EN)  & Création (Wi)  \\ \hline
\multicolumn{1}{|c|}{40 000}                          & 27  & 24  \\ \hline
\multicolumn{1}{|c|}{80 000}                          & 54  & 48  \\ \hline
\multicolumn{1}{|c|}{120 000}                         & 90  & 78  \\ \hline
160 000                                               & 126 & 106 \\ \hline
200 000                                               & 165 & 141 \\ \hline
\end{tabular}
\centering
\caption{Le temps de création en millisecondes pour les deux dictionnaires Anglais (En) et WikiTitle (Wi) avec les mêmes tailles.}
\label{tab:espace_memoire_EN_WI}
\end{table}

Il est intéressant de remarquer dans le tableau \ref{tab:espace_memoire_EN_WI}, que les temps de création pour les deux dictionnaires avec la même taille sont proches. La différence qui se trouve entre ces deux dictionnaires est due à la nature de chaque dictionnaire (la distribution des : mots, caractères, préfixes, suffixes, la longueur moyenne des mots,  etc.)

\paragraph{Le temps de recherche d'une requête : }
Le temps de l'auto-complétion des requêtes change en fonction de la taille de l'index et du type de l'index. Nous utilisons un Trie compact où le temps de recherche d'une requête exacte ne dépend pas de la taille de l'index, mais il dépend du nombre de caractères de la requête elle-même. Dans notre cas, nous avons la recherche de Top-k et la recherche approchée, les deux opérations sont influencées par la taille de l'index (le nombre des n\oe{}uds dans le Trie, et si la distribution des scores est équilibrée ou non). Les résultats sont illustrés dans les tableaux \ref{tab:Temps_rechercher_En_Wi_dif_size} et \ref{tab:Temps_rechercher_Wi_dif_grand_size}.

\begin{table}[h]
\begin{tabular}{|l|l|l|l|l|}
\hline
Nb des Mots & exacte (En) & exacte (Wi) & 1-err (En) & 1-err (Wi) \\
\hline
40 000      & 0,0156      &  0,0179 & 0,092 & 0,177 \\
80 000      & 0,0164      &  0,0183 & 0,106 & 0,224 \\
120 000     & 0,0159      &  0,0170 & 0,132 & 0,251 \\
160 000     & 0,0166      &  0,0177 & 0,139 & 0,252 \\
200 000     & 0,0160      &  0,0176 & 0,146 & 0,254 \\
\hline
\end{tabular}
\centering
\caption{Le temps de recherche + Top-k en millisecondes pour différentes tailles des dictionnaires Anglais (En) et WikiTitle (Wi).}
\label{tab:Temps_rechercher_En_Wi_dif_size}
\end{table}

\begin{table}[h]
\begin{tabular}{|l|l|l|}
\hline
Nb des Mots & exacte (Wi) & 1-err (Wi) \\
\hline
200 000     & 0,0176  & 0,254   \\
400 000     & 0,0180  & 0,282   \\
600 000     & 0,0175  & 0,320   \\
800 000     & 0,0181  & 0,350   \\
1 000 000   & 0,0178  & 0,373   \\
1 200 000   & 0,0179  & 0,485  \\
\hline
\end{tabular}
\centering
\caption{Le temps de recherche + Top-k en millisecondes pour WikiTitle (Wi) avec des grandes tailles.}
\label{tab:Temps_rechercher_Wi_dif_grand_size}
\end{table}

Dans les tableaux \ref{tab:Temps_rechercher_En_Wi_dif_size} et \ref{tab:Temps_rechercher_Wi_dif_grand_size}, on remarque  que dans les deux dictionnaires, le changement de taille n'influe pas sur le temps de l'auto-complétion exacte (recherche+Top-k), le temps est presque stable. Dans le dictionnaire Anglais, le temps est autour de $0,016ms$. Pour le dictionnaire Wikititle, le temps est autour de $0,018ms$. On remarque qu'entre la taille de 40 mille mots jusqu'à la totalité du dictionnaire Wikititle (Wi) ($1,2$ millions mots), le temps reste presque stable.
Entre les deux dictionnaires, même avec la grande différence du nombre de mots 200 mille mots (En), et $1,2$ million pour (Wi), on remarque que le temps ne change pas beaucoup, il y a une différence de $0,02ms$.

Dans ce test, nous avons fait la recherche dans le Trie plus la liste Top-k avec $k=10$. Le temps de la liste Top-k dépend des scores des mots dans le dictionnaire. Dans notre Trie chaque n\oe{}ud interne sauvegarde le score maximal de ses fils pour qu'on puisse trouver rapidement le mot qui a le score le plus élevé (cela prend le temps de traverser la hauteur du Trie), et avec l'utilisation de la file de priorité le temps reste stable pour un nombre de résultats $k$ fixe.

Le temps de l'auto-complétion avec erreur, change avec le changement de la taille du dictionnaire. Lorsque la taille du dictionnaire augmente cela implique que la taille de l'index augmente aussi (plus spécialement le Trie). Lorsque le Trie est grand cela signifie que le nombre des chemins à vérifier pour trouver des solutions approchées est grand, et cela influe sur le temps d'exécution final. Pour le Top-k,  (l'auto-complétion approchée) on a plusieurs n\oe{}uds à traiter, et donc plusieurs sous-arbres, contrairement au cas de l'auto-complétion exacte où on a juste un seul n\oe{}ud et donc un seul sous-arbre. Lorsque la taille du dictionnaire est grande, cela signifie que les tailles des sous-arbres aussi sont grandes, et cela, influe sur l'augmentation du temps d'exécution de Top-k.

\paragraph{L'espace mémoire : }

Le changement de la taille du dictionnaire initial influe directement sur la taille de l'index (le Trie). Lorsqu'on augmente le nombre de mots dans le dictionnaire, cela implique aussi une augmentation de la taille de l'index, car des nouveaux mots sont ajoutés au Trie. Voir la figure \ref{fig:changement_taille_En_Wi}.

\begin{figure}[h]
\centering
\includegraphics[width=0.49\linewidth]{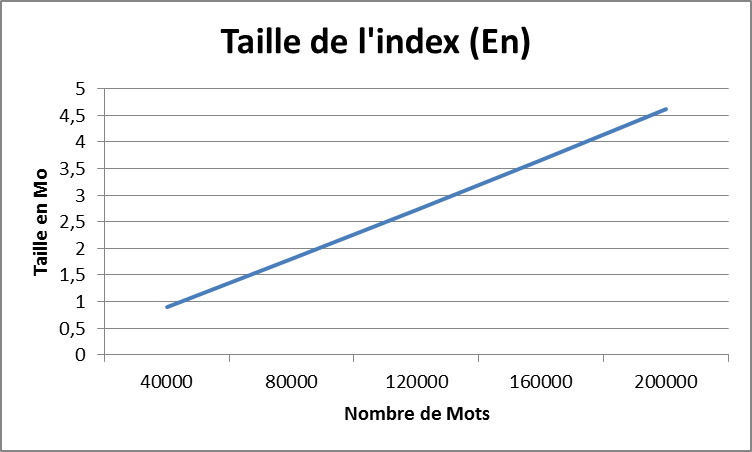}\hfill
\includegraphics[width=0.49\linewidth]{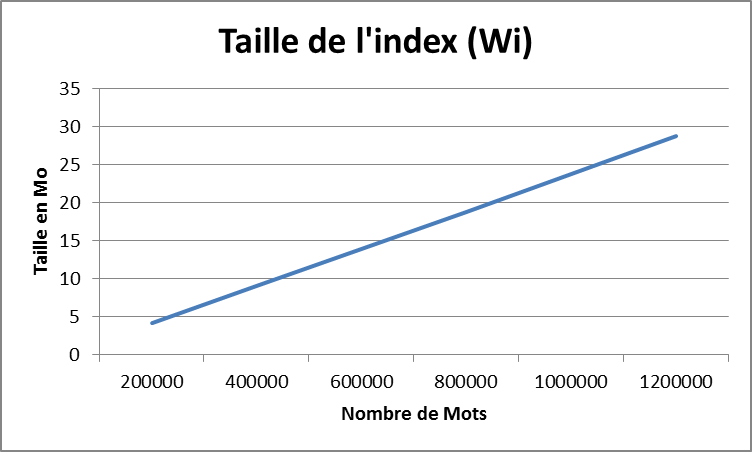}
\caption{Changement de la taille de l'index en fonction de la taille des dictionnaires Anglais (En) et WikiTitle (Wi).}
\label{fig:changement_taille_En_Wi}
\end{figure}

La figure \ref{fig:changement_taille_En_Wi}, montre l'augmentation de la taille de l'index en corrélation avec l'augmentation du nombre de mots.

L'augmentation de la taille de l'index est linéaire, pour le dictionnaire Anglais (En). Elle commence à 1 Mo pour 40 mille mots jusqu'à 5 Mo pour 200 mille mots. La taille initiale du dictionnaire Anglais est $2,4$ Mo. Et il en est de même pour le dictionnaire WikiTitle (Wi): elle commence à 5 Mo pour 200 mille mots jusqu'à 29 Mo pour $1,2$ million mots (la taille initiale de dictionnaire WikiTitle est de $11,5$ Mo).

\subsubsection{Test des trois méthodes qui s'adaptent au comportement de l'utilisateur}

Dans la section \ref{sec:auto:autocompletion_user_typing}, nous avons proposé et expliqué trois méthodes afin de s'adapter à l'utilisateur lorsque il tape sa requête.
Les trois méthodes sont :

\begin{enumerate}
\item \emph{search\_from\_root}, la méthode simple, faire la recherche depuis la racine à chaque fois.

\item \emph{search\_end\_node}, faire la recherche depuis le dernier n\oe{}ud, si cela est possible.

\item \emph{search\_tab\_node}, faire la recherche depuis le n\oe{}ud du plus grand préfixe commun.
\end{enumerate}

\medskip
Nous n'avons fait que les tests pour les deux premières méthodes. La troisième méthode (\emph{search tab node}), a une complexité élevée par rapport à la recherche simple. Pour plus de détails voir la sous-section \ref{sub:methode3_search_tab_nodes}. Pour cette troisième méthode une fois que nous avons implémenté et commencé à la tester, les premiers tests n'ont pas donné de bons résultats et la méthode cause beaucoup de Bugs. C'est la raison pour laquelle  nous avons arrêté les tests.\\

Pour réaliser les tests, nous avons pris un tableau qui contient 100 mots, et nous avons généré un autre tableau qui contient les préfixes du premier dans l'ordre, avec une distance entre le mot et son préfixe de \emph{nb\_char\_def}.
Pour comparer les méthodes \emph{search\_from\_root} et \emph{search\_end\_node} : nous avons fait le test avec le préfixe ensuite le mot, l'un après l'autre, exemple : \emph{abc} ensuite \emph{abcd}, et on mesure le temps.
			
Nous avons 2 tableaux (les mots avec leurs préfixes), donc 200 mots requêtes. Nous avons refait le test 10 fois, et à la fin, nous avons pris la moyenne. Les résultats sont illustrés dans la figure \ref{fig:En_Wi_nb_char_3}.

\begin{figure}[h]
\centering
\includegraphics[width=0.49\linewidth]{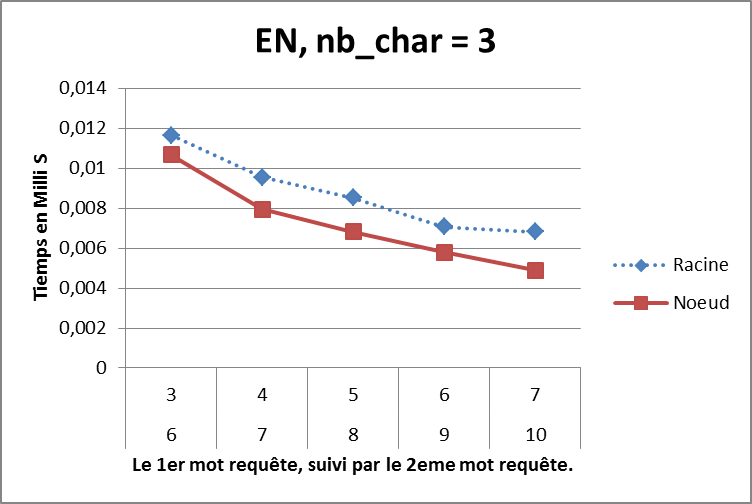}
\includegraphics[width=0.49\linewidth]{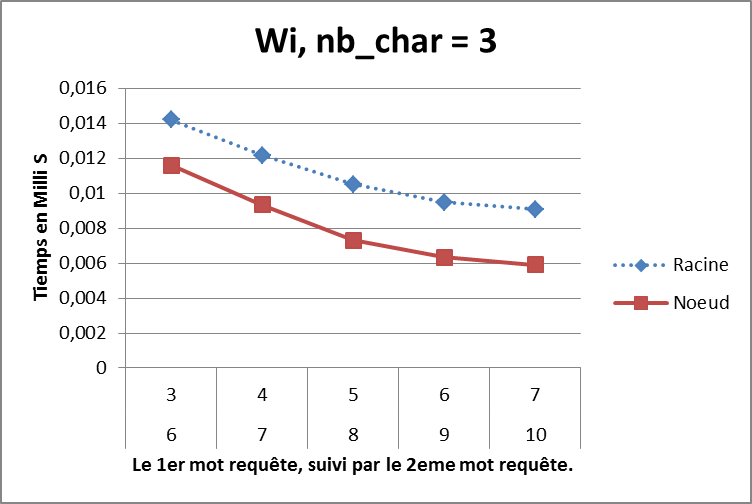}
\caption{Temps de requête en millisecondes pour les 2 méthodes \emph{search\_from\_root} et \emph{search\_end\_node}, pour les dictionnaires Anglais (En) et WikiTitle (Wi), avec $nb\_char\_def=3$.}
\label{fig:En_Wi_nb_char_3}
\end{figure}

Dans la figure \ref{fig:En_Wi_nb_char_3},  l'axe des x, représente la longueur de la première requête suivie de la longueur de la seconde requête en dessous, et l'axe des y représente le temps en millisecondes.
Dans ces deux graphes, on observe clairement que la méthode \emph{search\_end\_node} (continuer depuis le dernier n\oe{}ud ) donne de meilleurs résultats.\\

\noindent
Remarque :\\
Nous observons que la méthode \emph{search\_end\_node} améliore légèrement les résultats, mais en réalité, cela n'a pas une grande influence sur le comportement du système dans l'interaction avec l'utilisateur si tout le traitement s'effectue du coté client, car les temps des deux méthodes sont très petits comparés au temps nécessaire de l'interaction avec l'utilisateur qui est $100ms$.
Mais si le traitement s'effectue du coté serveur, cette amélioration est nécessaire, car le temps de l'interaction avec l'utilisateur est la somme de tous les temps de toutes les étapes : le traitement, le temps du transfert réseau (l'envoi de la requête et la réception des résultats), etc.

\subsubsection{Influence du paramètre k (Top-k) sur le temps de traitement}
\label{subsub:Influence_parametre_k}

Le paramètre $k$ est le nombre de résultats qui doivent être affichés à l'utilisateur selon l'importance des résultats. Dans cette partie, nous changeons à chaque fois ce paramètre $k$, pour observer son influence sur le temps d'exécution.

Dans ce test, on ne présente que le test de dictionnaire WikiTitle, car ses résultats sont plus clairs (pour le dictionnaire Anglais, on obtient le même résultat pour le changement du temps du calcul en fonction du paramètre $k$).
Nous utilisons des mots requêtes d'une longueur de trois caractères, afin d'avoir un grand sous-arbre et donc beaucoup de résultats, et cela permet de choisir le nombre de résultats à afficher $k$ librement. Car dans le cas contraire, un mot requête long va nous donner juste une petite liste de résultats. Le résultat du test est illustré dans la figure \ref{fig:top_k_test}.

\begin{figure}[h]
\centering
\includegraphics[width=0.7\linewidth]{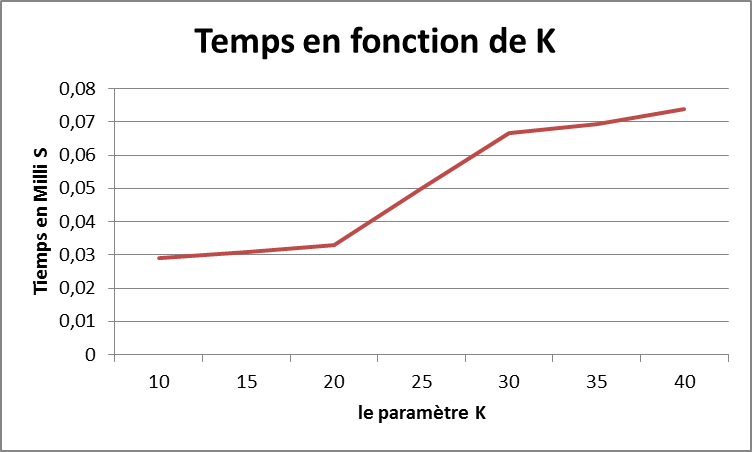}
\caption{Temps de requête en millisecondes en fonction du paramètre $k$, pour le dictionnaire Wikititle (Wi).}
\label{fig:top_k_test}
\end{figure}

Dans la figure \ref{fig:top_k_test}, on remarque que lorsque $k$ varie entre 10 et 20, le temps ne change pas beaucoup, ensuite il augmente de façon linéaire.

Comme nous l'avons expliqué dans la section \ref{sec:auto:methode_recherche}, la méthode de recherche est divisée en deux étapes: la première étape est celle de la recherche dans le Trie pour trouver les n\oe{}uds valides, la seconde étape  insère ces derniers dans une file de priorité pour trouver la liste Top-k.
Dans notre test, la première étape n'influe pas sur le temps d'exécution, car quelque soit $k$, la méthode de recherche restera la même.
La seconde étape consiste à insérer les n\oe{}uds dans la file de priorité, puis itérativement, vérifier si le n\oe{}ud en tête de cette file représente une feuille et d'ajouter le mot correspondant, à la liste des résultats, sinon ajouter ces fils à la file de priorité. On exécute cette opération jusqu'à l'obtention de $k$ éléments dans la liste des résultats.
Cela implique qu'à chaque fois que $k$ augmente le nombre d'opérations augmente aussi, et cela implique un temps de calcul plus élevé.

\subsection{La liste de substitutions, coté serveur}
\label{sub:test_list_substitution}

Nous avons expérimenté avec l'utilisation du dictionnaire des listes de substitutions afin d'accélérer le temps de la recherche approchée.

Le dictionnaire des listes de substitutions augmente la taille de notre index avec 0,7 Mégaoctet pour le dictionnaire Anglais et 5 Mo pour WikiTitle.

Pour comparer l'efficacité du dictionnaire des listes de substitutions, nous l'avons comparé avec la recherche approchée simple dans un Trie.

Le temps des requêtes par les trois méthodes, la recherche simple dans un Trie (\emph{1-err}) et les 2 méthodes avec l'utilisation du dictionnaire des listes de substitutions (\emph{1-err\_SL} et \emph{1-err\_SL\_Node}) pour des préfixes d'une longueur $2$ jusqu'à $6$ sont illustrés dans le tableau \ref{tab:SL_test} et la figure \ref{fig:En_Wi_1_erreur_SL_Methode}.

\begin{table}[h]
\begin{tabular}{|c|l|l|l|l|l|l|l|l|}
\cline{1-1} \cline{3-5} \cline{7-9}
\multicolumn{1}{|l|}{\begin{tabular}[c]{@{}l@{}}Longueur \\ préfixe\end{tabular}} &  & \begin{tabular}[c]{@{}l@{}}1-err \\ (En)\end{tabular} & \begin{tabular}[c]{@{}l@{}}1-err\_SL \\ (En)\end{tabular} & \begin{tabular}[c]{@{}l@{}}1-err\_SL\\ \_Node (En)\end{tabular} &  & \begin{tabular}[c]{@{}l@{}}1-err\\ (Wi)\end{tabular} & \begin{tabular}[c]{@{}l@{}}1-err\_SL\\ (Wi)\end{tabular} & \begin{tabular}[c]{@{}l@{}}1-err\_SL\\ \_Node (Wi)\end{tabular} \\ \cline{1-1} \cline{3-5} \cline{7-9} 
2                                                                                 &  & 0.11                                                   & 0,18                                                    & 0,11                                                          &  & 0,35                                                  & 0,50                                                   & 0,37                                                         \\ \cline{1-1} \cline{3-5} \cline{7-9} 
3                                                                                 &  & 0.14                                                   & 0,17                                                    & 0,13                                                          &  & 0,45                                                  & 0,56                                                   & 0,40                                                         \\ \cline{1-1} \cline{3-5} \cline{7-9} 
4                                                                                 &  & 0.16                                                   & 0,15                                                    & 0,11                                                          &  & 0,51                                                  & 0,50                                                   & 0,32                                                         \\ \cline{1-1} \cline{3-5} \cline{7-9} 
5                                                                                 &  & 0.17                                                   & 0,13                                                    & 0,09                                                          &  & 0,53                                                  & 0,36                                                   & 0,20                                                         \\ \cline{1-1} \cline{3-5} \cline{7-9} 
6                                                                                 &  & 0.17                                                   & 0,10                                                    & 0,06                                                          &  & 0,55                                                  & 0,25                                                   & 0,13                                                         \\ \cline{1-1} \cline{3-5} \cline{7-9} 
\end{tabular}
\centering
\caption{Le temps des requêtes avec les 3 méthodes de la recherche approchée.}
\label{tab:SL_test}
\end{table}

Dans le tableau \ref{tab:SL_test}, on remarque que la méthode \emph{1-err\_SL\_Node} améliore les résultats, elle est plus rapide par rapport à la méthode de recherche simple dans le Trie (\emph{1-err}).
Par contre, la méthode \emph{1-err\_SL} n'est pas aussi bonne que la deuxième méthode. D'après le tableau, elle donne un temps plus long pour les préfixes de petite longueur, et de bons temps avec les préfixes de longueur 5 et 6.

\begin{figure}[h]
\centering
\includegraphics[width=0.49\linewidth]{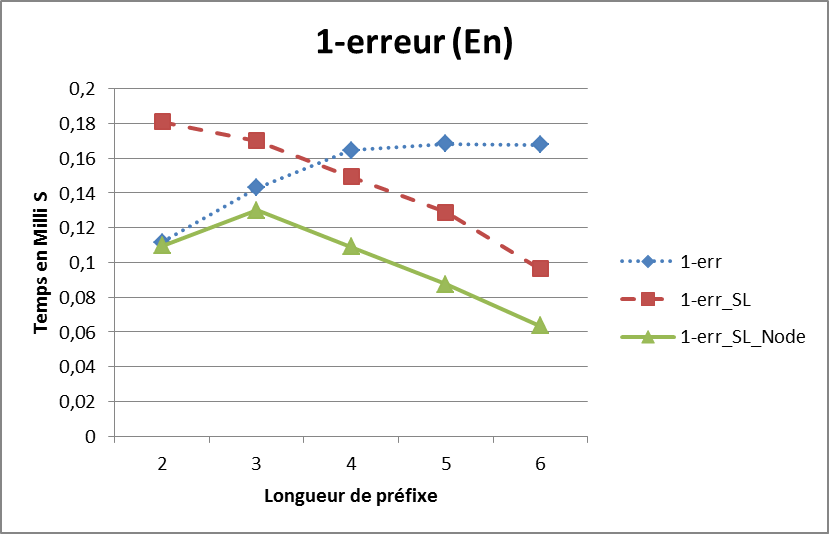}\hfill
\includegraphics[width=0.49\linewidth]{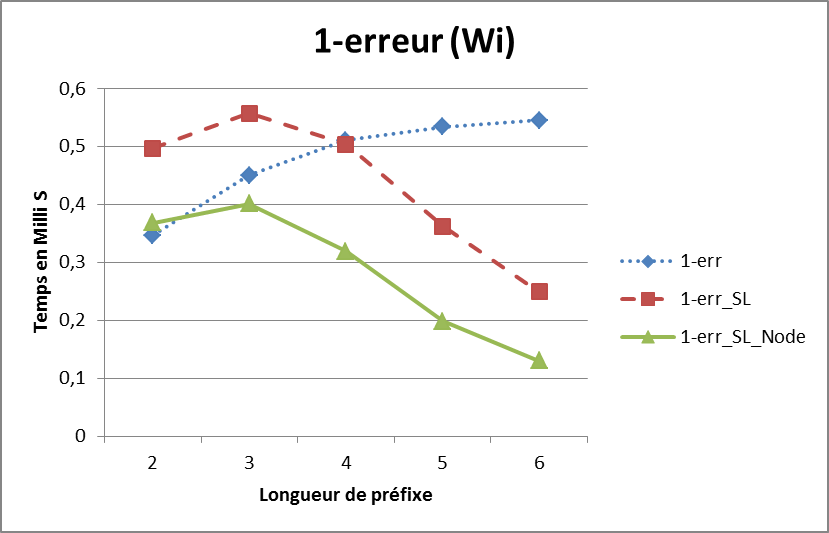}
\caption{Temps de l'auto-complétion approchée avec les 3 méthodes 1-err (naïve), 1-err\_SL et 1-err\_SL\_Node, pour les dictionnaires Anglais (En) et WikiTitle (Wi).}
\label{fig:En_Wi_1_erreur_SL_Methode}
\end{figure}

La figure \ref{fig:En_Wi_1_erreur_SL_Methode}, montre que la méthode \emph{1-err\_SL\_Node} donne de meilleurs résultats.
La première méthode \emph{1-err\_SL} donne des résultats meilleurs que la méthode naïve \emph{1-err} (la recherche simple dans le Trie) avec des préfixes de longueur $> 5$, mais cela n'est pas intéressant dans l'auto-complétion, car ce qui nous intéresse, ce sont les premiers caractères tapés par l'utilisateur (généralement 2, 3 ou 4).

Un fait intéressant: on peut observer que, lorsqu'on utilise le dictionnaire de substitutions, le temps de recherche diminue quand on augmente la longueur de la requête.
Cela est dû au fait que la taille des listes de substitutions est plus courte pour les préfixes longs que les préfixes courts.
Intuitivement, un motif de substitution plus long $p \phi q$, va correspondre à un nombre de préfixe moindre, par rapport au motif de substitution court.

\subsubsection{Utiliser les caractères de substitutions juste dans les premiers niveaux du Trie}

Dans cette partie, nous testons la méthode qui divise le Trie en deux parties, dans les premiers niveaux du Trie, nous utilisons le dictionnaire de substitutions, et dans le reste de Trie, nous utilisons la méthode de recherche naïve. Pour plus de détails, voir la sous-section \ref{sub:methode_1err_SL_3_level}.

Pour les tests, nous procédons de la même manière:  l'erreur est aléatoire dans le préfixe, elle peut être dans tous les niveaux.

Nous faisons les tests pour des préfixes d'une longueur 4, 5 et 6 seulement, car avec les préfixes de longueur 2 et 3, nous n'utilisons que la méthode du dictionnaire de substitutions, qui donne les mêmes résultats que la méthode \emph{1-err\_SL\_node}.

Nous avons vu dans les tests précédents que la méthode \emph{1-err\_SL\_node} a donné les meilleurs résultats comparée à la méthode naïve, et à la méthode \emph{1-err\_SL}. Pour cela, dans ce test, nous faisons la comparaison juste avec la méthode \emph{1-err\_SL\_node}. Les résultats sont illustrés dans le tableau \ref{tab:test_1-err_SL_3_level}.

\begin{table}[h]
\begin{tabular}{|c|l|l|l|l|l|l|}
\cline{1-1} \cline{3-4} \cline{6-7}
\multicolumn{1}{|l|}{\begin{tabular}[c]{@{}l@{}}Longueur \\ préfixe\end{tabular}} &  & \begin{tabular}[c]{@{}l@{}}1-err\_SL\_Node \\ En\end{tabular} & \begin{tabular}[c]{@{}l@{}}1-err\_SL\_3L\\ En\end{tabular} &  & \begin{tabular}[c]{@{}l@{}}1-err\_SL\_Node\\ Wi\end{tabular} & \begin{tabular}[c]{@{}l@{}}1-err\_SL\_3L\\ Wi\end{tabular} \\ \cline{1-1} \cline{3-4} \cline{6-7} 
4                                                                                 &  & 0,11                                                          & 0,10                                                       &  & 0,32                                                         & 0,26                                                       \\ \cline{1-1} \cline{3-4} \cline{6-7} 
5                                                                                 &  & 0,09                                                          & 0,07                                                       &  & 0,20                                                         & 0,17                                                       \\ \cline{1-1} \cline{3-4} \cline{6-7} 
6                                                                                 &  & 0,06                                                          & 0,06                                                       &  & 0,13                                                         & 0,10                                                       \\ \cline{1-1} \cline{3-4} \cline{6-7} 
\end{tabular}
\centering
\caption{Temps de l'auto-complétion en utilisant les 2 méthodes, \emph{1-err\_SL\_node} et \emph{1-err\_SL\_3\_level} pour les 2 dictionnaires Anglais (En) et WikiTitle (Wi).}
\label{tab:test_1-err_SL_3_level}
\end{table}

Dans le tableau \ref{tab:test_1-err_SL_3_level}, on observe que la méthode hybride \emph{1-err\_SL\_3\_level}, améliore les résultats (les résultats   avec le dictionnaire WikiTitle le montrent plus).

L'utilisation de cette méthode permet de combiner les caractéristiques de la méthode simple qui a un temps de construction rapide et un espace mémoire petit par rapport à \emph{1-err\_SL\_Node} avec la rapidité de cette dernière du temps de recherche.

La construction du dictionnaire de substitutions juste pour les trois premiers caractères (préfixe jusqu'à 3) au lieu de la totalité de la longueur du mot (ou comme nous avons fait dans les tests jusqu'à la taille 6) divise le temps de construction et l'espace du dictionnaire de substitutions par deux.

\subsection{Le coté Client, JavaScript}

Les tests ont été effectués sur un ordinateur Netbook qui a les caractéristiques suivantes :  un processeur Intel Atom CPU N455 1,67 Ghz, 2Go RAM et window 7 starter 32 bits, (Aujourd'hui, cet ordinateur Netbook est moins puissant qu'un Smartphone).

Les différents tests pour l'auto-complétion exacte et approchée sur les dictionnaires Anglais et WikiTitle sont illustrés dans les tableaux (\ref{tab:test_nav_Firefoxe}, \ref{tab:test_nav_chrome}, \ref{tab:test_nav_inter_explorer}, \ref{tab:test_nav_opera}).


\begin{table}[H]

\begin{tabular}{|l|ll|ll|}
\hline
Longueur & Exacte & 1-erreur  & Exacte  & 1-erreur \\
requête   & (En) & (En) & (Wi) & (Wi) \\
\hline
2                         & 0,006       & 0,47        & 0,0123      & 1,88  \\
3                         & 0,009       & 0,59        & 0,0242      & 2,41  \\
4                         & 0,0114      & 0,65        & 0,0186      & 2,42  \\
5                         & 0,0125      & 0,62        & 0,0206      & 2,28  \\
6                         & 0,0143      & 0,58        & 0,0233      & 2,08  \\
\hline
\end{tabular}
\centering
\caption{ Le navigateur Google Chrome 39. Le temps des requêtes $+$ Top-$k$ (en millisecondes) sur les dictionnaires Anglais (En) et WikiTitle (Wi). La construction du Trie pour le dictionnaire Anglais est $2,7$ secondes et 13 secondes pour le dictionnaire Wikititle.}
\label{tab:test_nav_chrome}
\end{table}


\begin{table}[H]

\begin{tabular}{|l|ll|ll|}
\hline
Longueur & Exacte & 1-erreur  & Exacte  & 1-erreur \\
requête   & (En) & (En) & (Wi) & (Wi) \\
\hline
2	& 0,0021	& 0,4541	& 0,0026	& 1,1547\\
3	& 0,0035	& 0,5117	& 0,0059	& 1,453\\
4	& 0,0042	& 0,4474	& 0,0061	& 1,2086\\
5	& 0,0061	& 0,3554	& 0,0076	& 0,9461\\
6	& 0,005	    & 0,2808	& 0,0081	& 0,7408\\
\hline

\end{tabular}
\centering
\caption{ Le navigateur Firefox 29. Le temps des requêtes $+$ Top-$k$ (en millisecondes) sur les dictionnaires Anglais (En) et les Titres de Wikipédia (Wi). La construction du Trie pour le dictionnaire Anglais est $7,4$ secondes et 47 secondes pour le dictionnaire Wikititle.}
\label{tab:test_nav_Firefoxe}
\end{table}


\begin{table}[H]
\begin{tabular}{|l|ll|ll|}
\hline
Longueur & Exacte & 1-erreur  & Exacte  & 1-erreur \\
requête   & (En) & (En) & (Wi) & (Wi) \\
\hline
2	& 0,003	    & 0,5975	& 0,005	    & 0,9819\\
3	& 0,0054	& 0,3052	& 0,0087	& 0,9652\\
4	& 0,005	    & 0,355	    & 0,0072	& 1,0334\\
5	& 0,0064	& 0,2933	& 0,007	    & 0,6881\\
6	& 0,0059	& 0,2412	& 0,0078	& 0,6455\\
\hline
\end{tabular}
\centering
\caption{ Le navigateur Opera 26. Le temps des requêtes $+$ Top-$k$ (en millisecondes) sur les dictionnaires Anglais (En) et les Titres de Wikipédia (Wi). La construction du Trie pour le dictionnaire Anglais est $1,5$ secondes et $6,2$ secondes pour le dictionnaire Wikititle.}
\label{tab:test_nav_opera}
\end{table}


\begin{table}[H]

\begin{tabular}{|l|ll|ll|}
\hline
Longueur & Exacte & 1-erreur  & Exacte  & 1-erreur \\
requête   & (En) & (En) & (Wi) & (Wi) \\
\hline
2                             & 0,0471      & 1,64    & 0,0868      & 7,07   \\
3                             & 0,0492      & 1,93    & 0,0883      & 8,72    \\
4                             & 0,0608      & 2,37    & 0,1063      & 10,56   \\
5                             & 0,0896      & 3,29    & 0,1172      & 11,07 \\
6                             & 0,1013      & 3,59    & 0,1308      & 10,59   \\
\hline
\end{tabular}
\centering
\caption{ Le navigateur Internet Explorer 11. Le temps des requêtes $+$ Top-$k$ (en millisecondes) sur les dictionnaires Anglais (En) et les Titres de Wikipédia (Wi). La construction du Trie pour le dictionnaire Anglais est 30 secondes et 150 secondes pour le dictionnaire Wikititle.}
\label{tab:test_nav_inter_explorer}
\end{table}

Les résultats des quatre navigateurs sont beaucoup plus petits que $100ms$, cela veut dire que nos méthodes d'auto-complétion fonctionnent très bien dans tous les navigateurs.\\

Afin de mieux visualiser les résultats et afin d'avoir une idée de la performance de nos méthodes de l'auto-complétion exacte et approchée sur les différents navigateurs, nous avons fait une comparaison entre eux illustrée par les deux figures \ref{fig:client_exacte_En_Wi}, \ref{fig:client_1_erreur_En_Wi} suivantes :

\begin{figure}[H]
\centering
\includegraphics[width=0.49\linewidth]{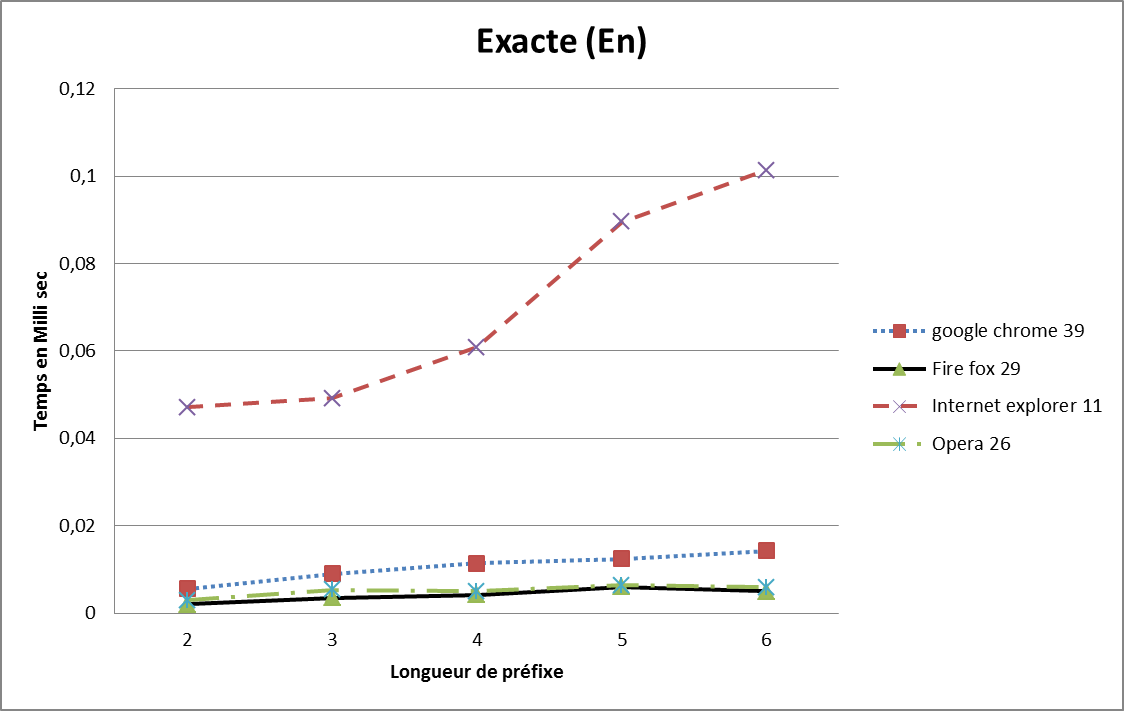}\hfill
\includegraphics[width=0.49\linewidth]{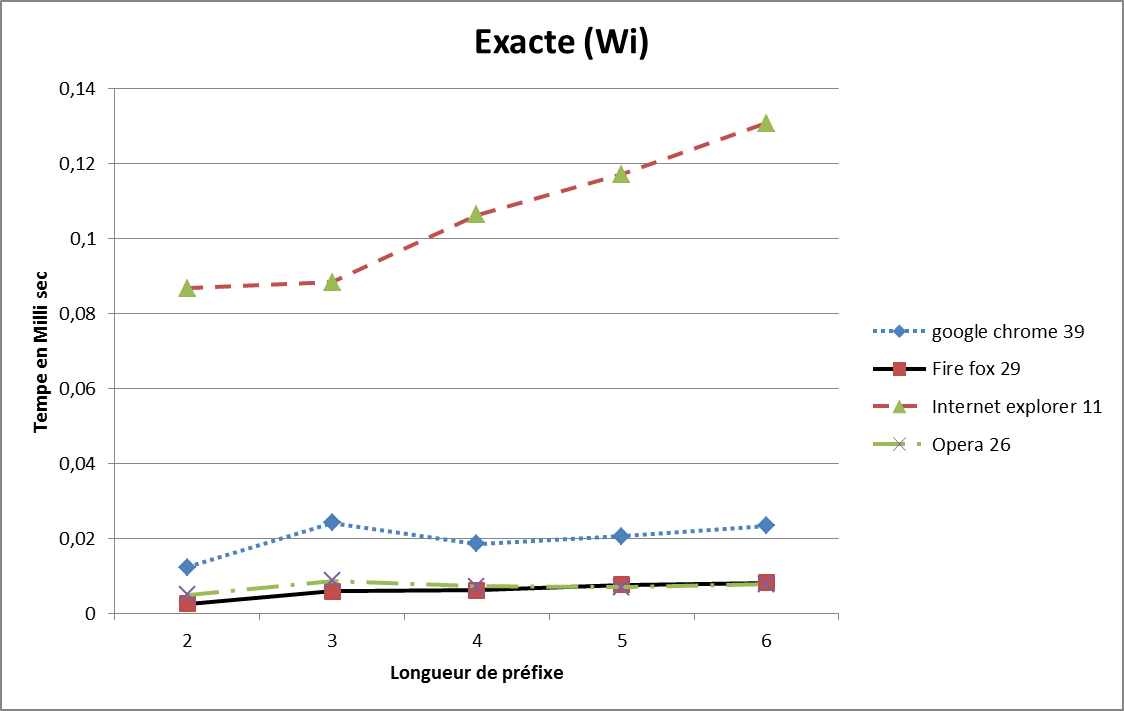}
\caption{Temps des 4 navigateurs pour une auto-complétion exacte avec les dictionnaires Anglais (En) et WikiTitle (Wi).}
\label{fig:client_exacte_En_Wi}
\end{figure}

\begin{figure}[H]
\centering
\includegraphics[width=0.49\linewidth]{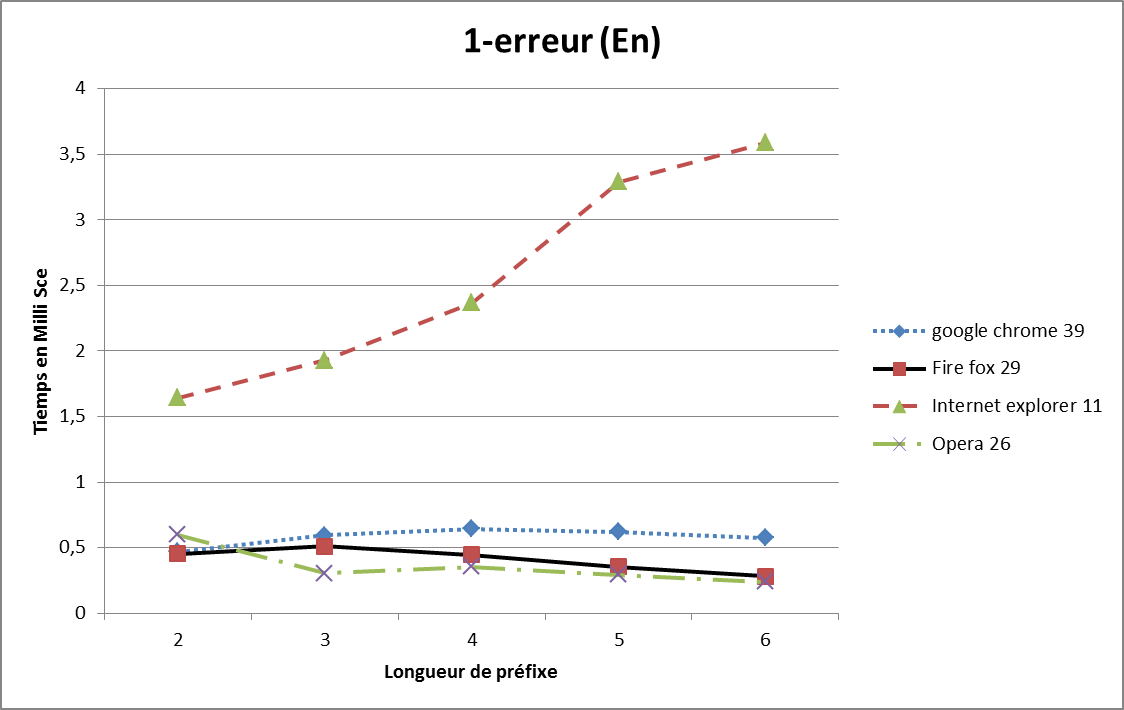}\hfill
\includegraphics[width=0.49\linewidth]{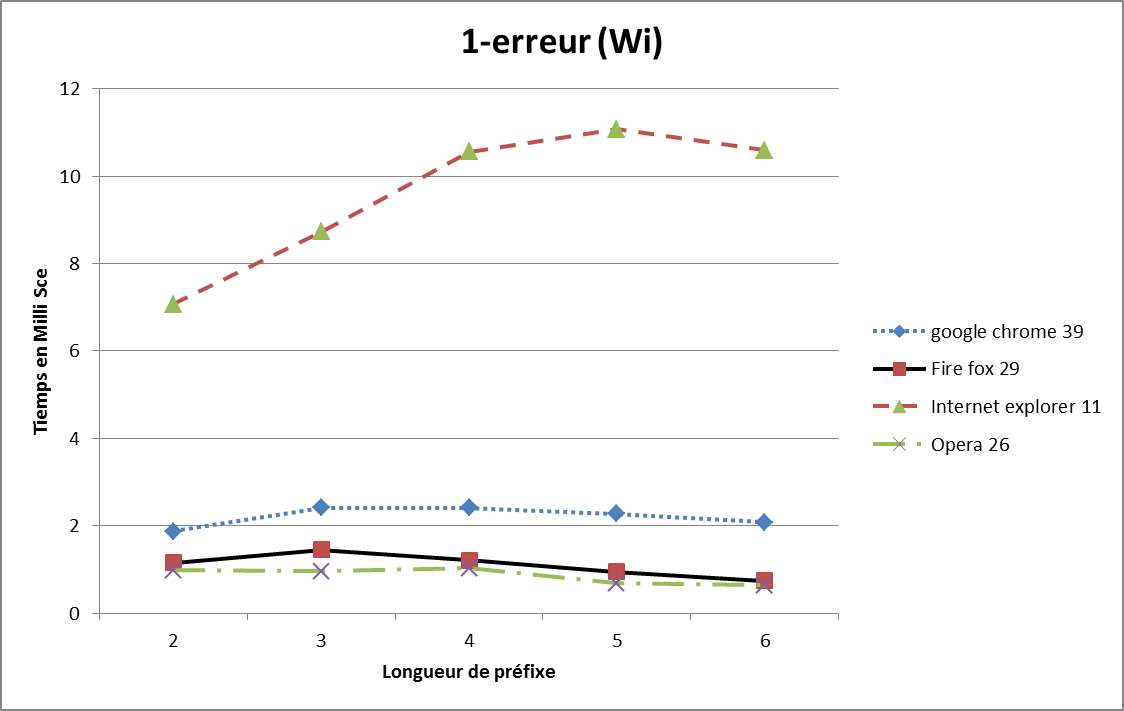}
\caption{Temps des 4 navigateurs pour une auto-complétion approchée (1-erreur) pour les dictionnaires Anglais (En) et WikiTitle (Wi).}
\label{fig:client_1_erreur_En_Wi}
\end{figure}

Les quatre graphes présentés dans les deux figures (\ref{fig:client_exacte_En_Wi}, \ref{fig:client_1_erreur_En_Wi}) illustrent bien la différence entre les quatre navigateurs et montrent que la meilleure performance a été obtenue avec \emph{Opera 26} et \emph{Firefox 29} qui sont presque au même niveau dans l'auto-complétion exacte, \emph{Firefox} est légèrement plus lent dans l'auto-complétion approchée. Ces deux navigateurs sont suivis par \emph{Google Chrome 39} (légèrement derrière), et enfin \emph{Microsoft Internet Explorer 11}.

On remarque qu'avec les trois navigateurs (Chrome, Firefox et Opera) le temps ne change pas beaucoup avec le changement de la taille des requêtes.
Par contre avec \emph{Internet explorer 11} le temps augmente considérablement. Chrome, qui est un navigateur populaire, est seulement légèrement plus lent que les 2 plus rapides. Internet Explorer est le plus lent.

\section{Conclusion}
\label{sec:auto:conclusion}

Dans ce chapitre, nous avons présenté une approche efficace pour résoudre le problème de l'auto-complétion approchée sous la contrainte de la distance d'édition dans une architecture client-serveur, et nous avons présenté la méthode Top-k pour améliorer la qualité des résultats en utilisant un système de classement dynamique et statique, selon l'importance (score) des mots.

Nous avons également présenté une méthode efficace d'élimination des résultats en double, ce qui est un réel problème dans les résultats de l'auto-complétion approchée.

Nous avons présenté une bibliothèque (nommée Appaco\_lib), pour être utilisée soit du coté serveur ou du coté client ou les deux en même temps pour répondre rapidement à des requêtes d'auto-complétion approchée pour des dictionnaires $\mathtt{UTF}\mbox{-}8$ (donc toutes les langues).

Pour les détails de l'utilisation de notre bibliothèque voir le document qui explique toutes les fonctionnalités et les options proposées dans : \url{https://github.com/AppacoLib/api.appacoLib/tree/master/doc_appaco_lib}. \\

L'intérêt de cette étude, c'est d'avoir démontré que ce problème, bien qu'il soit un type particulier de la recherche approchée, ne pouvait pas être résolu de manière optimale par les solutions présentées dans les deux chapitres précédents.
Nous avons ainsi étendu l'étude de la recherche approchée à un contexte différent qui nécessite une solution différente.
Sachant que $|prefixe(m)| \leq |m|$, et que l'auto-complétion nécessite une deuxième étape pour trouver tous les suffixes, les solutions présentées dans les chapitres précédents ne pouvaient pas rester optimales.\\

Remarquons enfin que la recherche approchée pour laquelle nous avons présenté plusieurs solutions, se base sur une comparaison entre mots. Il s'agit fondamentalement, quelle que soit la solution, de comparer la proximité entre mots. Il existe plusieurs fonctions mathématiques qui mesurent cette proximité, elles sont appelées fonction de distance.

Une fonction de distance $dist(x,y)$ détermine l'écart entre les mots $x$ et $y$, c'est-à-dire le nombre d'opérations\footnote{Chaque fonction à ses propres opérations. Voir la définition dans le chapitre \ref{chap:Preliminaires_definitions}.} minimum nécessaires pour transformer le mot $x$ en $y$.

Si $x$ et $y$ sont des séquences biologiques (ex: ADN, ARN), le problème de l'alignement revient à la recherche de cette proximité entre $x$ et $y$, et c'est donc un problème de recherche approchée dans lequel le nombre d'erreurs $k$ est inconnu.

Trouver le meilleur alignement entre $x$ et $y$ revient à trouver le plus petit $k$, autrement dit, la meilleure façon de mettre $x$ face à $y$ avec un nombre minimum de différences.\\

Nous nous intéressons à ce nouveau problème de recherche approchée pour les séquences génomiques (des longues chaînes de texte) \emph{l'alignement}, dans le chapitre qui suit.

\chapter{Alignement multiple de séquences d'ADN avec un nouvel algorithme \emph{DiaWay}}

\ifpdf
    \graphicspath{{Alignement/}}
\else
    \graphicspath{{Alignement/}}
\fi



\section{Introduction}
\label{sec:ali:introduction}

L'alignement des séquences biologiques est parmi les domaines d'application les plus importants de la recherche approchée de motifs. 
Les séquences biologiques (ADN, ARN, protéines) sont considérées comme de longues chaînes de caractères sur un alphabet spécifique, ( $\Sigma_{ADN}=\{A,C,T,G\}$ pour l'ADN, $\Sigma_{ARN}=\{A,C,U,G\}$ pour l'ARN, et $ \Sigma_{prot\acute{e}ine}=\{20 \; lettres  \; d'acides \;  amin\acute{e}s\}$ pour les protéines). L'alignement est un problème appartenant au domaine de la bio-informatique permettant de visualiser les ressemblances entre deux séquences ou plus et de déterminer leurs éventuelles homologies. 
Pour plus de détails sur les méthodes et les techniques d'alignement des séquences biologiques, référer à \cite{bioinformatics_waterman1995,Gusfield1997, bioinformatics2001,bioinformatics_lesk2013}.

L'alignement des séquences biologiques est la représentation de deux ou plusieurs séquences les unes sous les autres, afin de faire ressortir les régions similaires, plus précisément déterminer les différences et les similitudes.

Généralement pour mesurer le taux de similarité entre les séquences on utilise la distance d'édition où on a les trois opérations suivantes: la substitution (le $A$ est substitué à $T$ et le $G$ est substitué à $C$), l'insertion et la suppression qui nécessitent en général l'introduction de \emph{trous} (appelés \emph{gaps}) à certaines positions dans les séquences, de manière à aligner les caractères communs sur des colonnes successives, voir la figure \ref{fig:alignement_par_paire}.

\begin{figure}[h]
\centering
\includegraphics[width=0.7\linewidth]{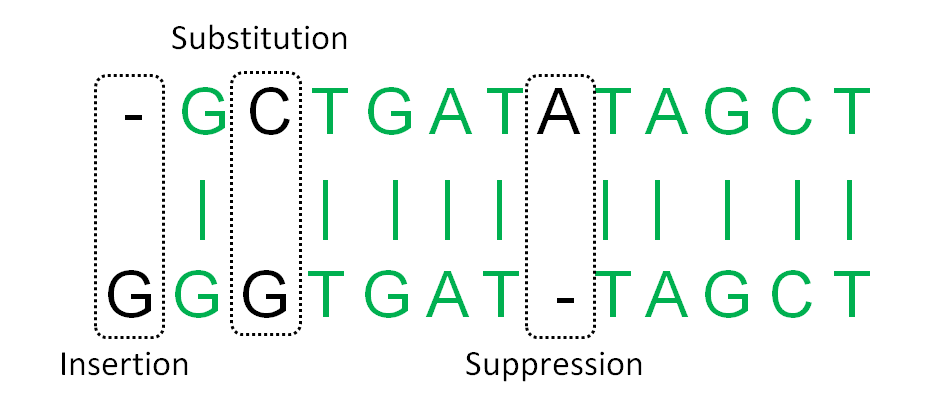}
\caption{Alignement de deux séquences d'ADN.}
\label{fig:alignement_par_paire}
\end{figure}

Le défi important pour la bio-informatique est de concevoir des algorithmes capables de trouver automatiquement des alignements biologiquement corrects.\\

Le problème d'alignement par paire a été largement considéré comme étant résolu \cite{Needleman_Wunsch,Smith_waterman}.
La plupart des efforts sont concentrés sur l'amélioration des algorithmes d'alignements multiple, pour trouver des solutions optimales ou raisonnablement bonnes. L'alignement multiple correspond à une généralisation de l'alignement par paire, la recherche de similitude ne se fait pas qu'au niveau de deux séquences, mais au niveau de toutes les séquences.
Il existe de nombreuses méthodes qui fournissent des solutions à ce problème \cite{Thompson1994,Morgenstren1996,Notredame2000,Edgar2004,plasma,FASTA_1988,BLAST_1990basic}.\\

Dans ce chapitre, nous apportons une solution au problème de l'alignement multiple.
Notre solution est basée sur une méthode d'alignement connue \emph{DIALIGN} \cite{Morgenstren1996,Morgenstern1998b}, qui est un algorithme conçu pour l'alignement par paire, ainsi que l'alignement multiple des séquences d'acides nucléiques et de protéines.

Notre nouvelle approche (appelée \emph{DiaWay}) s'exécute selon les étapes suivantes: après extraction des diagonales de la matrice dot plot, au lieu de supprimer automatiquement les diagonales inconsistantes, on essaye d'aligner les sous-diagonales (fragments) consistantes de celles-ci afin d'améliorer la qualité de l'alignement.
L'étape d'indexation des diagonales, où nous créons une structure d'index pour accélérer l'accès aux différents fragments des diagonales sera suivie par la vérification de l'inconsistance.
L'étape de tri des diagonales consiste à trier toutes les diagonales consistantes pour l'alignement final, en fonction de leurs positions dans les séquences biologiques.
La dernière étape consiste simplement à rassembler les fragments des diagonales et ensuite insérer des gaps (le symbole $-$ ) dans les vides.\\

Les expérimentations, comparées à \emph{DIALIGN}, montrent que notre approche \emph{DiaWay} donne de meilleurs résultats dans les deux cas d'alignement d'ADN par paire et multiple.

Le travail présenté dans ce chapitre a été publié en 2015 dans deux articles \cite{DIAWAY,DIAWAY_b}, où nous avons présenté deux implémentations de notre approche \emph{DiaWay 1.0}, \emph{DiaWay 2.0}.\\

Ce travail de doctorat est l'aboutissement d'un premier travail réalisé en master \cite{masters_ibra}, qui a été poursuivi pendant la durée du doctorat en cherchant à améliorer notre  approche et faire les différents tests nécessaires afin d'arriver à une version performante qui fait un alignement efficace et rapide. Ce travail a abouti à la publication de deux articles scientifiques, ainsi qu'à l'implémentation de deux versions logiciels.\\

Le chapitre est organisé comme suit : La section \ref{sec:DIALIGN} couvre le concept de l'approche originale de \emph{DIALIGN}, et une brève explication de ses étapes. Les étapes de notre nouvelle approche \emph{DiaWay} de l'extraction des diagonales au processus d'alignement final sont présentées dans les sections \ref{sec:extraction_des_diagonals}, \ref{sec:Diagonal_indexation}, \ref{sec:diagonals_consistency}, \ref{sec:sorting_diagonals} et \ref{sec:ali:Mettre_les_fragments_ensemble}. La section \ref{sec:ali:analyse_comparaison} portes sur les tests, les comparaisons, et les analyses. Et enfin, une conclusion est donnée dans la section \ref{sec:ali:Conclusion}.

\section{DIALIGN}
\label{sec:DIALIGN}

\emph{DIALIGN} \cite{Morgenstren1996,Morgenstern1998b}, est un algorithme connu qui fait un alignement par paire et multiple des séquences biologiques. Il combine les caractéristiques d'alignement locales et globales. La méthode consiste à aligner les similitudes de paires locales. L'alignement final est composé d'une collection de diagonales qui répondent à un certain critère de consistance puis sélectionne un ensemble consistant de diagonales avec une somme maximale de poids.

Les grandes lignes de l'algorithme d'alignement multiple \emph{DIALIGN} sont :

\begin{enumerate}

\item Construire une matrice de similarité pour chaque paire des séquences afin de récupérer toutes les diagonales possibles. Une diagonale (également appelé fragment ou bloc) est définie par des régions similaires appartenant aux deux séquences biologiques.

\item Donner pour chaque diagonale un score puis les trier selon leurs scores et selon leurs degré de chevauchement avec les autres diagonales. La liste de diagonales est ensuite utilisée pour assembler un alignement multiple d'une manière gloutonne (gourmande).

\item  Les diagonales sont intégrées une par une dans l'alignement multiple en commençant par la diagonale de poids (score) maximum. Avant d'ajouter une diagonale à l'alignement, on doit vérifier si elle est consistante avec les diagonales déjà intégrées.

\item Dans une étape finale, le programme introduit des gaps (-) dans les séquences jusqu'à ce que tous les résidus liés par les diagonales sélectionnées soient correctement disposées.

\end{enumerate}

\paragraph{Le poids d'une diagonale, et le score d'alignement :}

\emph{DIALIGN} essaye de sélectionner un ensemble consistant de diagonales avec une somme maximale de poids.
Les diagonales d'un poids élevé sont plus susceptibles d'être biologiquement pertinentes, et, par conséquent, sont ajoutées d'abord à l'alignement.
La qualité des alignements produits par la méthode \emph{DIALIGN} dépend de la manière dont le score ou le poids des diagonales est définis. \emph{DIALIGN} a défini un nouveau système de score qui est différent des autres approches traditionnelles.

Le premier système de score est défini dans \cite{Morgenstren1996,Morgenstern1998a}, dans cette version \emph{DIALIGN 1}, la dépendance sur le seuil $T$ ($T$ représente une valeur de seuil positive ou nulle qui rentre dans le processus du calcul des poids des diagonales), était un sérieux inconvénient, car il n'y avait pas de règle générale pour choisir la valeur $T$ qui permet de produire des alignements de qualité. En plus, la longueur minimum requise d'une diagonale était arbitraire.
Dans la 2\ieme{} version de Dialign (\emph{DIALIGN 2}) Morgenstern et al. \cite{Morgenstern1998b} introduisent une nouvelle fonction de poids pour les diagonales afin de remédier au problème de seuil $T$ et la taille minimale de diagonale.

\paragraph{La consistance des diagonales avec la méthode DIALIGN :}

Un concept crucial et délicat décrit dans \emph{DIALIGN}. Comment décider si une diagonale est consistante avec les diagonales déjà intégrées dans l'alignement ou non.
Une collection de diagonales est dite consistante  s'il n'y a pas un double conflit ou croisement des résidus des fragments des diagonales \cite{Morgenstren1996}.
De même, pour les diagonales impliquant la même paire de séquences ne sont pas autorisées à avoir un chevauchement. Voir la figure \ref{fig:D1_croise_et_chovoche_D0}.

\begin{figure}[h]
\centering
\includegraphics[width=0.8\linewidth]{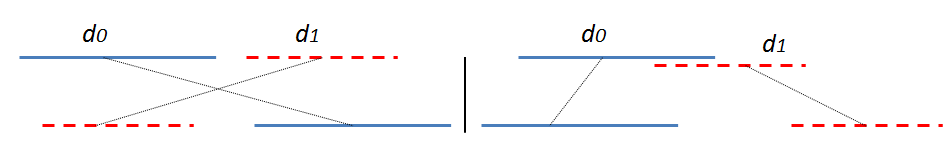}
\caption{L'inconstance des diagonales, $d_1$ se croise/chevauche avec $d_0$}
\label{fig:D1_croise_et_chovoche_D0}
\end{figure}

Une diagonale qui traverse (se croise) ou se chevauche avec une autre diagonale qui a un poids plus élevé, est une diagonale inconsistante, parce qu'elle ne permet pas un alignement correct avec cette diagonale qui a un poids plus élevé.
À chaque fois qu'on ajoute une diagonale à l'alignement, on vérifie si elle est consistante avec les diagonales déjà intégrées. Les diagonales qui ne sont pas consistantes avec l'ensemble croissant des diagonales consistantes seront rejetées et donc supprimées. 
Trouver un ensemble de diagonales consistantes est un problème NP-complet \cite{Subramanian2008}.

\section{L'extraction des diagonales}
\label{sec:extraction_des_diagonals}

Soit $S$ l'ensemble de toutes les séquences biologiques, et $S=\{s_1,s_2,...,s_l\}$ où $l$ est le nombre total de toutes les séquences, $s_k \in S$ où $k \in [1..l]$.

Soit $D$ l'ensemble de toutes les diagonales, $ D=\{d_1,d_2,...,d_n\} $ où $n$ est le nombre total de toutes les diagonales initiales. Soit $D^{'}$ l'ensemble des diagonales consistantes seulement, $ d_j \in D^{'} $ où $j=[1..m]$ et $m$ est le nombre total de toutes les diagonales consistantes seulement. Chaque diagonale comporte une paire de fragment (sous-séquence), le premier segment est désigné par $x$, et l'autre par $y$. Pour désigner une diagonale et ses deux fragments, on écrit $d_i =(x_i,y_i)$ où $i \in [1..n]$ .\\

Les diagonales sont extraites à partir d'une matrice dot plot construite pour chaque paire de séquences, tels que les éléments $[i, j]$ de cette matrice sont remplis par $1$ dans le cas de correspondance entre les caractères de la position $i$ et $j$ dans les deux séquences respectivement, et $0$ dans le cas contraire.\\

Une diagonale est définie par des régions similaires appartenant à deux séquences. Elle est représentée par une paire de fragments (sous-séquences) d'une taille égale et qui contiennent des similitudes tout en acceptant la mutation c'est-à-dire : concordance entre $A$ et $T$ et entre $C$ et $G$.

Soient $S_1$ et $S_2$ deux séquences d'ADN, et soit $M$ leur matrice dot plot. La diagonale $d$ est donc comme suit $d = \{M[i_1][j_1], M[i_2][j_2],\dots,M[i_m][j_m]\}$, où $m$ est la longueur de cette diagonale, et $i \in [1..|S_1|]$ et $j \in [1..|S_2|]$.
Comme nous l'avons dit précédemment, une diagonale est désignée  par ses deux fragments et on écrit $d_i =(x_i,y_i)$ où $i \in [1..n]$.\\

Exemple : Soient deux séquences d'ADN (de petites longueurs) $S=\{s_1,s_2\}$, avec $s_1={ATTCCGACT}$ et $s_2={AATTCGCGT}$. Le résultat de la construction de la matrice dot plot est illustré dans la figure \ref{fig:Matric_dot_plot}.

\begin{figure}[h]
\centering
\includegraphics[width=0.7\linewidth]{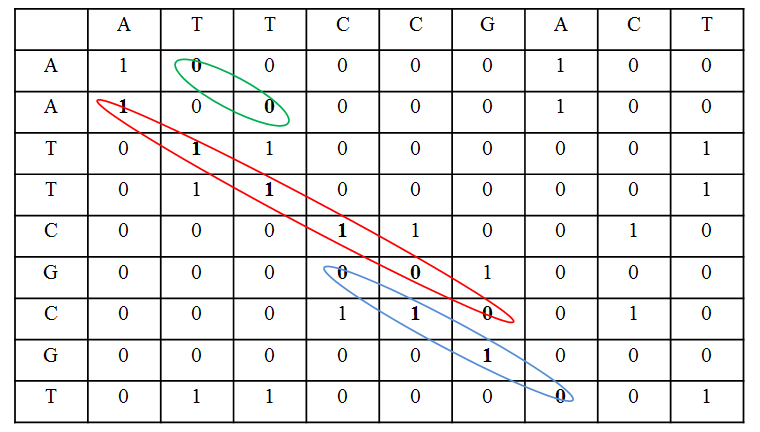}
\caption{La matrice de similarité dot plot de deux séquences d'ADN.}
\label{fig:Matric_dot_plot}
\end{figure}

Dans la figure \ref{fig:Matric_dot_plot}, nous donnons comme exemple les trois diagonales qui sont entourées par des ellipses :

\begin{itemize}
\item[$d_1 =$]
\begin{tabular}{cc}
  		  & TT \\
          & AA 
\end{tabular}

\smallskip

\item[$d_2 =$]
\begin{tabular}{cc}
          & ATTCCG \\
          & ATTCGC 
\end{tabular}

\smallskip

\item[$d_3 =$]
\begin{tabular}{cc}
          & CCGA \\
          & GCGT 
\end{tabular}
			 
\end{itemize}


\bigskip

Nous avons remarqué que lorsqu'on supprime les diagonales inconsistantes, même si c'est seulement une partie de leurs résidus qui cause le problème de l'inconsistance, on perd dans la qualité de l'alignement. Et nous avons remarqué qu'on peut aligner les parties des diagonales (les sous-diagonales) qui ne posent pas de problème d'inconsistance pour améliorer le résultat de l'alignement final. Pour cela, nous procédons comme suit :

\subsection{Méthode 1}

Lorsqu'on extrait les diagonales, on extrait aussi leurs sous-diagonales, qui sont elles-mêmes considérées comme des diagonales. Car une diagonale peut être inconsistante, tandis que l'une de ses sous-diagonales est consistante, et donc ses sous-diagonales vont avoir une chance d'être alignées.

Pour extraire les sous-diagonales, à chaque fois, on enlève les premières paires de caractères des deux fragments qui constituent la diagonale, donc on enlève le premier caractère gauche du premier fragment et on fait la même chose pour le deuxième fragment, et dans les deux fragments, il reste leurs suffixes. Les deux nouveaux sous-fragments vont constituer une nouvelle sous-diagonale.
Tous les suffixes d'une diagonale $d_i$ sont considérés comme des sous-diagonales. On prend seulement les suffixes qui ont une longueur supérieure ou égale à la longueur minimale de la diagonale permise.\\

Exemple:
Soit la longueur minimale des diagonales égales à $2$, et soit 
la diagonale $d_2$ (de l'exemple précédent) suivante :\\

\begin{itemize}

\item[$d_2 =$]
\begin{tabular}{cc}
          & ATTCCG \\
          & ATTCGC 
\end{tabular}
			 
\end{itemize}

Donc ses sous-diagonales sont extraites comme suit :\\

\begin{itemize}

\item[$\{d_{2.1}, \, d_{2.2}, \, d_{2.3}, \, d_{2.4} \} =$]
\begin{tabular}{cccc}
          TTCCG , & TCCG , & CCG , & CG \\
          TTCGC , & TCGC , & CGC , & GC 
\end{tabular}
			 
\end{itemize}

\bigskip

Dans le processus de vérification de l'inconsistance d'une diagonale $d_i$, si on trouve qu'elle est consistante, on supprime directement toutes ses sous-diagonales de l'ensemble $D$.
Car la diagonale principale est consistante, donc on n'a pas besoin de ses sous-diagonales. En plus, toutes ses diagonales sont consistantes à leur tour, et leur fragments sont inclus dans la diagonale principale.

Les diagonales sont indexées en fonction de leurs numéros de séquences afin de permettre un accès rapide pour les trouver. Voir la section \ref{sec:Diagonal_indexation}.
Cette méthode est utilisée dans notre première implémentation appelée \emph{DiaWay $1.0$}.

\subsection{Méthode 2}

Il s'est avéré qu'avec la méthode précédente, le nombre de diagonales devient très importants, et cela influx sur le temps de calcul, en particulier pendant le processus de vérification de l'inconsistance des diagonales.

Pour cette raison, dans la deuxième version de notre approche, nous utilisons la méthode expliquée dans \cite{Subramanian2005}, au lieu d'extraire toutes les sous-diagonales, on supprime seulement les parties (les résidus) inconsistantes à partir des deux fragments de la diagonale principale, et on donne aux sous-fragments restants une autre chance dans le processus de sélection des diagonales consistantes.
On calcule le poids de la partie restante (la sous-diagonale restante), et on la place dans la liste des diagonales qui restent à vérifier.

Cette seconde méthode est utilisée dans la version appelée \emph{DiaWay $2.0$}.

\section{L'indexation des diagonales}
\label{sec:Diagonal_indexation}

Certaines étapes de notre approche nécessitent la recherche des diagonales et donc un accès rapide à ces derniers.
Pour remédier à ce problème, on construit une structure d'index (une matrice) qui contient les numéros des diagonales des fragments en fonction de leurs appartenances aux séquences biologiques.\\

On commence par la construction de la structure de données, on utilise un tableau de pointeur qui pointe vers des tableaux avec des tailles différentes. La taille de chaque tableau est le nombre maximal des fragments des diagonales possibles dans la séquence qui porte le numéro de ce tableau. On peut aussi utiliser une simple matrice, si le nombre de fragments dans chaque ligne est approximativement le même.

La deuxième étape consiste à remplir notre index.
Au moment de l'extraction des diagonales, pour chaque paire de séquences $(s_i,s_j)$ avec $i \neq j$, lorsqu'on extrait une diagonale $d_k$, on ajoute ses deux fragments à l'index, le premier fragment de la séquence $s_i$ dans le tableau(ou la ligne numéro) $i$, et le deuxième fragment dans le tableau (ou la ligne numéro) $j$.

Lorsqu'on extrait les diagonales, si la méthode d'extraction ne permet pas d'obtenir les fragments dans l'ordre d'appartenance à la séquence (par exemple l'extraction des diagonales de la partie supérieure de la diagonale principale de la matrice (M[0,0] M[$|t|$][$|t|$] avec $t=min(|s_2|,|s_1|)$ ), ensuite, la partie inférieure). Dans ce cas, on doit trier les fragments qui sont dans la même ligne (tableau) de notre index, selon l'indice de leur début dans la séquence biologique.\\

Exemple:\\
Soit les diagonales suivantes $D=\{d_0, d_1, d_2, d_3, d_4, d_5, d_6, d_7\}$, ordonnées selon leur poids. Les fragments des diagonales sont répartis sur les quatre séquences biologiques comme il est montré dans la figure \ref{fig:index_diagonales}.

\begin{figure}[h]
\centering
\includegraphics[width=0.9\linewidth]{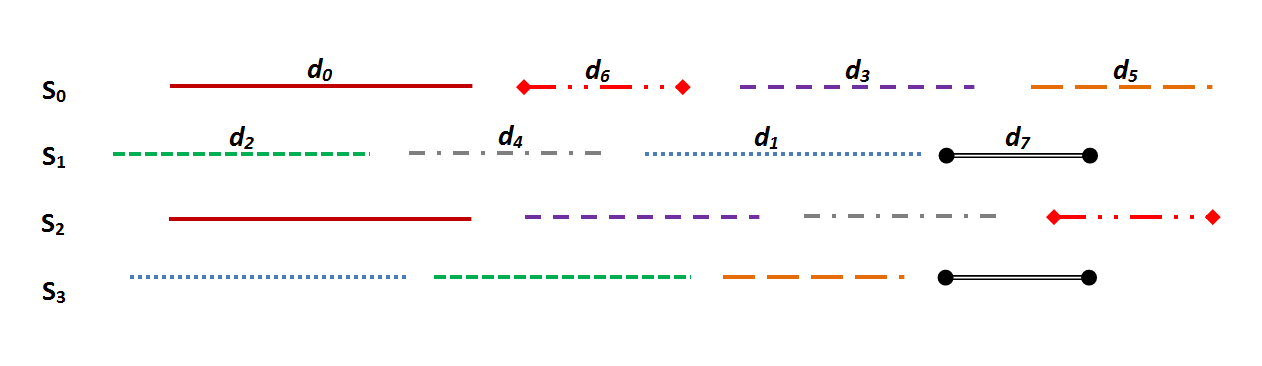}
\caption{Les fragments des 7 diagonales répartis sur les 4 séquences biologiques.}
\label{fig:index_diagonales}
\end{figure}

 Le résultat de la construction de l'index est illustré dans le tableau \ref{tab:tab_ordanance_des_diagonales} suivant (dans ce simple exemple, on utilise une matrice).

\begin{table}[h] 
\centering
\begin{tabular}{|c|c|c|c|c|}
\hline $S_0$ & $d_{0.1}$ & $d_{6.1}$ & $d_{3.1}$ & $d_{5.1}$ \\ 
\hline $S_1$ & $d_{2.1}$ & $d_{4.1}$ & $d_{1.1}$ & $d_{7.1}$ \\ 
\hline $S_2$ & $d_{0.2}$ & $d_{3.2}$ & $d_{4.2}$ & $d_{6.2}$ \\ 
\hline $S_3$ & $d_{1.2}$ & $d_{2.2}$ & $d_{5.2}$ & $d_{7.2}$ \\ 
\hline 
\end{tabular}
\caption{Les diagonales $d_0, d_1, d_2, d_3, d_4, d_5, d_6, d_7$ indexées selon les positions de leurs fragments dans les séquences.}
\label{tab:tab_ordanance_des_diagonales}
\end{table}

Si on veut obtenir les diagonales situées sur les mêmes séquences que $d_0$, il suffit d'accéder directement à l'index de $d_0 (S_0, S_2)$ qui correspond aux séquences $0$ et $2$, et dans ce cas on trouve les diagonales $d_3$ et $d_6$.\\

La méthode de recherche consiste à aller à la première ligne portant le premier fragment et enregistrer les numéros de toutes les diagonales trouvées dans cette ligne. Ensuite, aller à la deuxième ligne qui contient le deuxième fragment de la diagonale, et on cherche s'il y a des numéros de diagonales déjà enregistrées dans la première ligne.\\

Pour une raison d'optimisation du processus précédent, on utilise un simple tableau de bits, pour vérifier l'intersection entre les numéros des diagonales qui sont dans les deux lignes de la matrice comme suit :
On utilise un tableau de bits initialisé à $0$, la longueur de ce tableau de bits $L$ est le nombre total des diagonales. On prend la première ligne de la matrice et on marque toutes les cases, correspondantes aux numéros des diagonales, dans le tableau de bits par un $1$.
On prend la deuxième ligne de la matrice, et pour chaque numéro de diagonale, dans la position de tableau de bits représenté par ce numéro, on vérifie simplement si on a $1$. Si oui, alors la diagonale correspondante est prise comme une solution.\\

L'étape de l'indexation des diagonales est utilisée :
\begin{itemize}
\item Dans le processus de l'inconsistance simple. Voir la sous-section \ref{subsec:simple_inconsistency}.
\item Pour trier les diagonales pendant le processus d'alignement. Voir la section \ref{sec:sorting_diagonals}.
\end{itemize}

\section{Une nouvelle méthode pour la consistance des diagonales}
\label{sec:diagonals_consistency}

La consistance des diagonales est un problème difficile, elle est un concept crucial dans l'algorithme \emph{DIALIGN}.
La diagonale qui se croise ou chevauche avec une diagonale consistante qui a un poids plus élevé est considérée comme une diagonale inconsistante  \cite{Morgenstren1996}, voir la figure \ref{fig:D1_croise_et_chovoche_D0}, et cela, parce qu'elle ne permet pas aux diagonales avec un poids plus élevé d'être alignées correctement.

L'approche originale \cite{Morgenstren1996} vérifie la consistance d'une diagonale avec les diagonales déjà incorporées afin de l'ajouter à l'alignement. Nous avons proposé un concept totalement différent du concept original. Contrairement à l'algorithme \emph{DIALIGN}, nous essayons de prouver l'inconsistance des diagonales en utilisant une modélisation basée sur les graphes. Notre approche prouve qu'une diagonale est inconsistante avec les diagonales déjà testées et acceptées comme consistantes.

Dans notre travail, on utilise un graphe orienté, on considère les débuts des fragments des séquences $x_i$ ou $y_i$ comme des n\oe{}uds (sommets) du graphe, et les arcs correspondent à la relation entre le début d'un fragment d'une diagonale $d_i=(x_i,y_i)$ avec le début d'un autre fragment $(x_j \; ou \; y_j)$ d'une diagonale $d_j$ située sur la même séquence, où la position de début du fragment de $d_i$ est juste avant celui de $d_j$. Ainsi que la relation de début du fragment $x_i$ de la diagonale $d_i$ vers l'autre fragment $y_i$ de la même diagonale (ou vice versa).\\

Le graphe orienté $G=(V, E)$ est défini comme suit :\\

$V=\{x_{1},x_{2},...,x_{n} \}\cup\{y_{1},y_{2},...,y_{n}\}$ où $x_{i}$ et $y_{i}$ : le début du $1\ier{}$ et $2\ieme{}$ fragment de la diagonale $d_i$ , et $i \in [0..n]$ \\

$ E =  E_1 \cup E_2 \cup E_3 \cup E_4 \cup E_5 $

$ E_1=\{(x_i,y_i ),(y_i,x_i ) \;| i \in [0..n]\}$

$ E_2=\{(x_i, x_j) \;|i,j \in [0..n]$  et  $\exists k$  où  $x_i,x_j \in s_k$  et  $x_j$  le premier fragment qui apparait après $x_i$, et $s_k$ le numéro de la séquence biologique $k$ $\}$

$ E_3=\{(x_i, y_j ) \;|i,j \in [0..n]$  et $\exists k \; | x_i, y_j \in s_k$  et $y_j$ le premier fragment qui apparait après $x_i$  $\}$

$ E_4=\{(y_i, x_j ) \;|i,j \in [0..n]$  et $\exists k \; | y_i, x_j \in s_k$  et $x_j$ le premier fragment qui apparait après $y_i$  $\}$

$E_5=\{(y_i, y_j ) \;|i,j \in [0..n]$  et $\exists k \; | y_i, y_j \in s_k$ et $y_j$  le premier fragment qui apparait après $y_i$  $\}$\\

Exemple: on a 4 diagonales (8 fragments), situés sur 3 séquences biologiques différentes ($S_1$, $S_2$ et $S_3$).

\begin{figure}[h]
\centering
\includegraphics[width=0.7\linewidth]{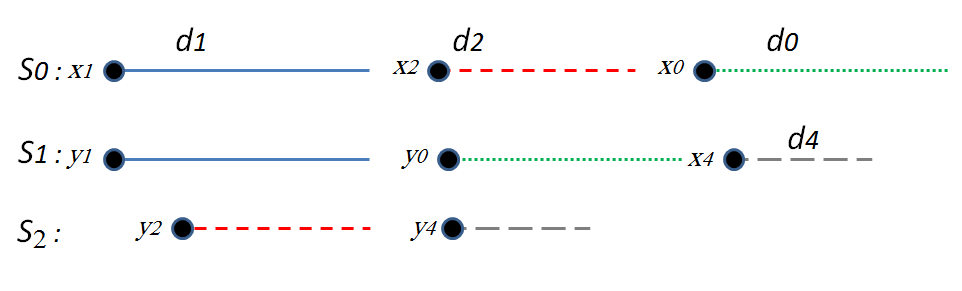}
\caption{Quatre diagonales et leurs fragments.}
\label{fig:4_diagonals}
\end{figure}

Donc le graphe sera $G(8, 13)$ avec 8 sommets et 13 arcs, voir la figure \ref{fig:graphe}:

\begin{figure}[h]
\centering
\includegraphics[width=0.7\linewidth]{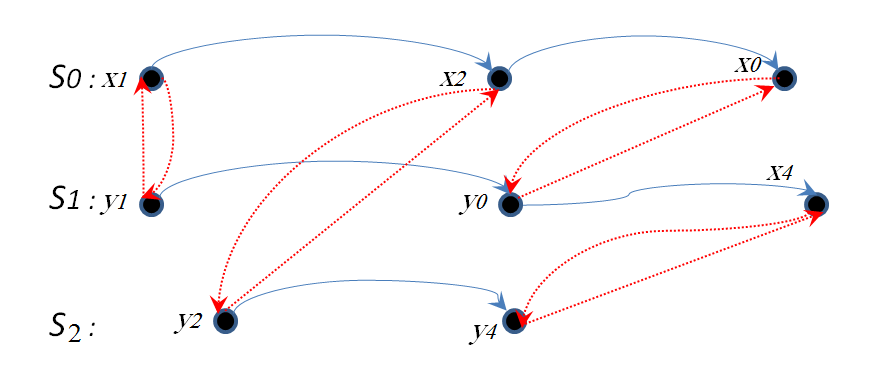}
\caption{Graphe de consistance $G(8,13)$. Les arcs en pointillés (les arcs verticaux avec la couleur rouge) représentent l'ensemble $E_1$, les autres arcs horizontaux représentent les ensembles $E_2$, $E_3$, $E_4$ et $E_5$.}
\label{fig:graphe}
\end{figure}

\subsection{L'inconsistance d'une diagonale avec un chemin simple}
\label{sec:path_consistency}

La diagonale $d_i$ est appelée inconsistante avec toutes les diagonales déjà acceptées dans $D^{'}$, si et seulement s'il y a un chemin simple \footnote{ Le chemin est un graphe dirigé $G = (V, E)$. C'est une séquence alternative de sommets et d'arcs :$\lambda=x_0 u_1 x_1...x_{k-1} u_k x_k$ où $i \in [1..k]$, et le sommet $x_i$ est le bout initial de l'arc $u_i$ et le sommet $x_{i+1}$ est son bout final. $\lambda$ est un chemin de $x_0$  à $x_k$ de longueur $k$. Dans un chemin simple tous ses arcs sont distincts l'un de l'autre \cite{graph1976}.}
qui connecte $x_{i}$ et $ y_{i}$ passant à travers les fragments des diagonales consistantes dans $D^{'}$. $(x_{i} \longrightarrow y_{i}$ ou $x_{i} \longleftarrow y_{i})$. En d'autres termes, pour prouver que $d_i$ est une diagonale inconsistante avec $D^{'}$ on doit trouver un chemin simple $\lambda$ entre $x_{i}$ et $y_{i}$ avec $\lambda=x_0 u_1 x_1...x_{k-1} u_k x_k$  où ($x_0=x_{i}$  et  $x_k=y_{i}$)   ou  ($x_0=y_{i}$   et  $x_k=  x_{i}$ ). \\

L'exemple suivant illustre ce concept. Soit la diagonale $d_{5}(x_5,y_5)$ dans la figure \ref{fig:chemin_simple}, pour vérifier l'inconsistance de cette diagonale, on essaye de trouver un chemin simple qui passe à travers les fragments des diagonales consistantes dans $D^{'}$ ($D^{'}=\{d_0,d_1,d_2,d_3,d_4\}$).

\begin{figure}[h]
\centering
\includegraphics[width=1\linewidth]{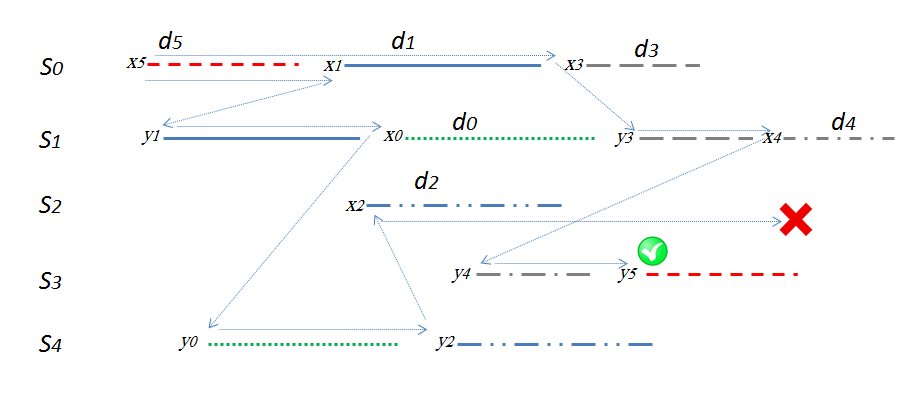}
\caption{Un chemin simple entre les deux fragments de la diagonale $d_5$ qui passe à travers les fragments des diagonales consistantes ($\lambda=x_5 \, x_1 \, x_3 \, y_3 \, x_4 \, y_4 \, y_5$).}
\label{fig:chemin_simple}
\end{figure}

Dans la figure \ref{fig:chemin_simple}, on trouve un chemin simple qui passe à travers les fragments des diagonales consistantes  ($\lambda=x_5 \, x_1 \, x_3 \, y_3 \, x_4 \, y_4 \, y_5$),  donc la diagonale $d_5$ est considérée comme inconsistante, et par conséquent elle est supprimée de la liste des diagonales $D$.
Lorsqu'on essaye de trouver un chemin simple, on peut avoir plusieurs possibilités. Dans cette figure (figure \ref{fig:chemin_simple}), on peut voir qu'on a deux chemins, le 1\ier{} mène à la solution et l'autre pas. On appel cette méthode \emph{inconsistance de chemin}.\\

Pour faire la recherche de chemin simple dans notre graphe orienté, on considère que la racine est le sommet représenté par le premier fragment de la diagonale $d_i$, ça peut être soit $x_i$ ou $y_i$. Le but est de trouver l'autre sommet dans le graphe, c'est-à-dire si la racine est $x_i$, on essaye de trouver $y_i$ et vice versa. La recherche de chemin simple utilise l'algorithme de \textit{recherche en profondeur d'abord}, pour plus de détails sur cet algorithme voir \cite{cormen2001}.

Dans notre graphe orienté, deux branches différentes peuvent converger vers un même sommet, voir la figure \ref{fig:arbreAumbigue}. Cela signifie qu'un sous-graphe peut être vérifié plus d'une seule fois inutilement, pour cela on doit s'assurer que dans la phase de recherche, le chemin passe qu'une seule fois par un sommet donné. Donc si dans la recherche on arrive à un sommet qui a été déjà vérifié, on arrête, pour ne pas vérifier le sous-graphe plusieurs fois.

\begin{figure}[h]
\centering
\includegraphics[width=0.7\linewidth]{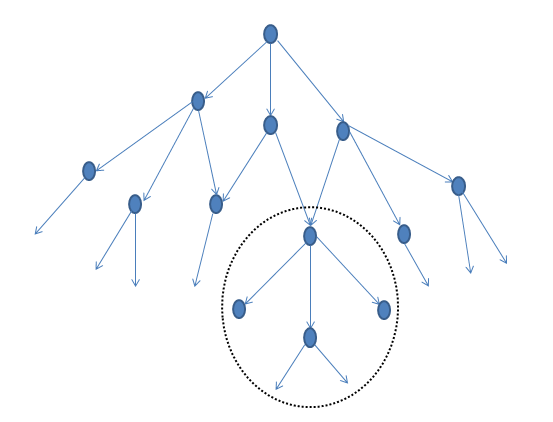}
\caption{Le graphe de recherche d'un chemin simple. Le sous-graphe entouré, n'est vérifié qu'une seule fois.}
\label{fig:arbreAumbigue}
\end{figure}

\subsection{L'inconsistance avec le chemin simple trouvé}
\label{sub:Inconsistency_with_found_simple_path}

Lors de l'étape de recherche de chemin simple, pour vérifier l'inconsistance d'une diagonale $d_i$, si ce chemin existe, toutes les diagonales dans $D$ qui appartiennent à ce chemin dans la direction spécifiée sont supprimées. voir la figure \ref{fig:appartientChemin}.

\begin{figure}[h]
\centering
\includegraphics[width=0.8\linewidth]{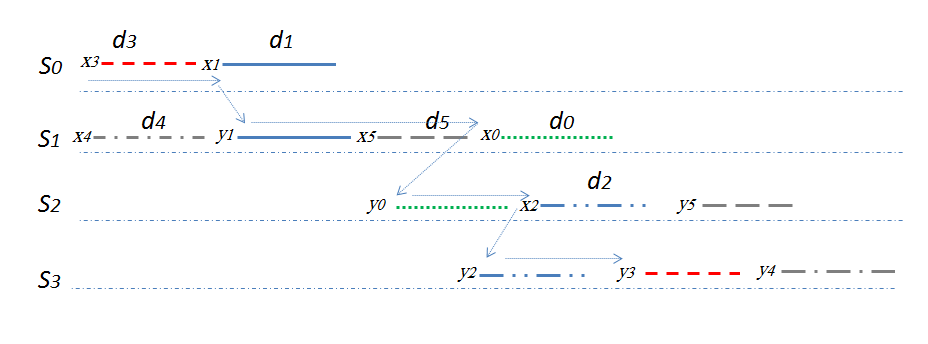}
\caption{Les diagonales $d_4$ et $d_5$ appartiennent au chemin simple.}
\label{fig:appartientChemin}
\end{figure}

Le chemin simple trouvé, pour prouver l'inconsistance de la diagonale $d_3$, est utilisé pour éliminer les diagonales $d_4$ et $d_5$ qui appartiennent à ce chemin et suivent le bon sens. Avec cette technique on évite de vérifier ces diagonales avec la recherche de chemin simple qui peut prendre beaucoup de temps.

\subsection{L'inconsistance simple}
\label{subsec:simple_inconsistency}

La première méthode \emph{ inconsistance de chemin } (voir la sous-section \ref{sec:path_consistency}) suffit à elle seule pour éliminer toutes les diagonales inconsistantes, mais avec cette approche la recherche peut emprunter plusieurs chemins avant de trouver le bon, et cela implique un temps d'exécution très élevé.
Pour cela, on a choisi d'employer une seconde méthode appelée \emph{inconsistance simple} qui permet d'éliminer les inconsistances entre deux diagonales situées dans les mêmes séquences, et cela permet de réduire considérablement le temps d'exécution.

L'algorithme prend une diagonale consistante $d_i$ ($d_i \in D^{'}$) et la compare avec la diagonale $d_j$ où $d_j \in D$), si on trouve que leurs sous-séquences se trouvent dans les mêmes séquences et $d_j$ se croise ou se chevauche avec la diagonale  $d_i$ alors  $d_j$ est inconsistante, donc on la supprime directement de $D$. On vérifie toutes les diagonales qui se situent sur les mêmes séquences de $d_i$. \\

Exemple :\\
On considère deux séquences biologiques $S=\{s_0,s_1\}$, et soit $d_0=(x_0,y_0)$ une diagonale consistante, et soit les 3 diagonales $D=\{d_1(x_1,y_1) , d_2(x_2,y_2) , d_3(x_3,y_3)\}$ qui ne sont pas encore vérifiées. Tous leurs fragments sont sur les mêmes séquences. voir la figure \ref{fig:inconsistance_simple}.

\begin{figure}[h]
\centering
\includegraphics[width=0.8\linewidth]{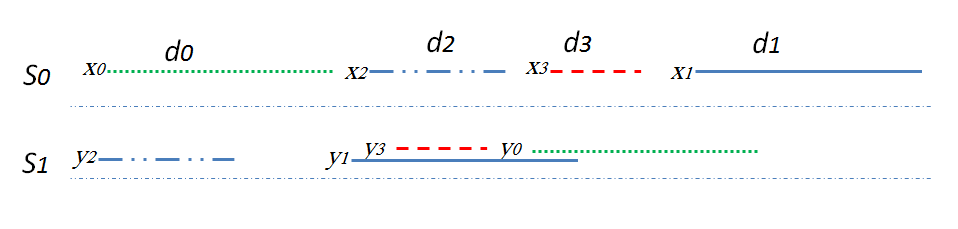}
\caption{Inconsistance simple : les diagonales  $d_1, d_2$ se croisent avec $d_0$ et $d_3$ se chevauche avec $d_0$.}
\label{fig:inconsistance_simple}
\end{figure}

Dans la figure \ref{fig:inconsistance_simple}, on voit clairement que les diagonales  $d_1, d_2$ se croisent avec $d_0$ et $d_3$ se chevauche avec $d_0$, donc toutes ces diagonales $d_1,d_2,d_3$ sont inconsistantes et seront donc supprimées.

\subsection{Chemin vertical}
\label{sub:ali:Chemin_vertical}

Un chemin est dit vertical si et seulement s'il ne passe que par les sommets (les fragments) d'une même diagonale $d_i$, et si on a plus d'une diagonale, dans ce cas leurs fragments doivent partager le même sommet (c'est-à-dire leurs fragments sont dans les mêmes séquences et ont le même début).

Les diagonales qui sont reliées par un chemin simple vertical ne sont pas considérées comme des diagonales inconsistantes, car elles ne causent pas de problème pour l'alignement. Pour cela, on ne les supprime pas, on les garde pour les utiliser dans la suppression des autres diagonales avec la méthode \emph{inconsistance simple}.

Soit l'exemple dans la figure \ref{fig:cheminVertical} suivante.

\begin{figure}[h]
\centering
\includegraphics[width=0.8\linewidth]{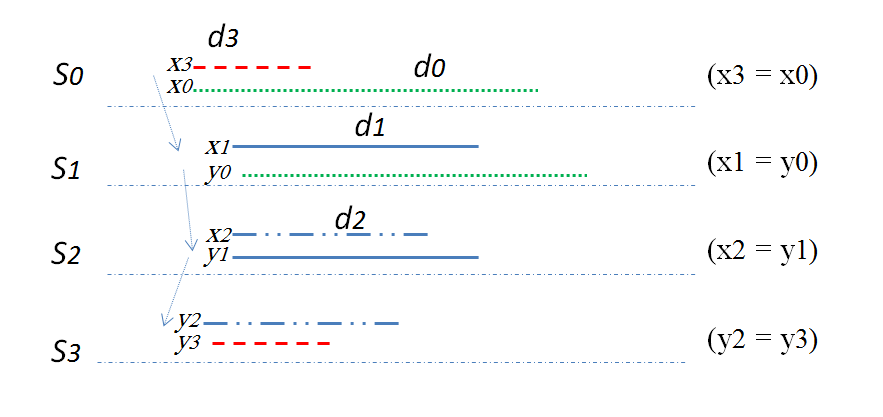}
\caption{Un chemin simple vertical qui relie les deux fragments ($x_3, \, y_3$) de la diagonale $d_3$.}
\label{fig:cheminVertical}
\end{figure}

Selon la méthode \emph{inconsistance de chemin} (voir la sous-section \ref{sec:path_consistency}), la diagonale $d_3$ est une diagonale inconsistante car il y a un chemin simple entre ses deux fragments, donc on la supprime, même si elle ne cause pas de problème pour l'alignement, $d_3$ sera alignée par transitivité lorsque les diagonales $d_0$, $d_1$ et $d_2$ seront alignées.

Mais, afin de supprimer les diagonales inconsistantes avec $d_3$ en utilisant la méthode \emph{inconsistance simple} ( voir la sous-section \ref{subsec:simple_inconsistency}), on doit la garder, et l'ajouter à l'ensemble des diagonales consistantes $D^{'}$.

\subsection{Le processus du traitement de l'approche de l'inconsistance}

Étape I: initialisation de l'ensemble des diagonales consistantes $D^{'}$ :

\begin{itemize}

\item On ajoute la première diagonale $d_0$ à $D^{'}$. La première diagonale $d_0$ a le plus grand poids dans l'ensemble $D$, donc elle est automatiquement consistante.

\item On utilise la méthode \emph{inconsistance simple} pour supprimer les diagonales de $D$ qui sont inconsistantes simple avec $d_0$.

\item On ajoute la première diagonale dans $D$ (appelée $d_1$) à l'ensemble consistant $D^{'}$. On a prouvé que la diagonale $d_1$ est consistante à l'aide de la méthode \emph{inconsistance simple}, par conséquent $d_1$ est consistante avec $d_0$, maintenant $D^{'} = \{d_0, d_1\}$.

\item On fait appel à la méthode \emph{inconsistance simple} pour supprimer toutes les diagonales de $D$ qui sont inconsistantes simple avec $d_1$.

\end{itemize}

\noindent
Étape II : Traiter toutes les diagonales inconsistantes restantes dans $D$ :

\begin{itemize}

\item Trouver une diagonale consistante :\\
Appliquer la méthode \emph{inconsistance de chemin} (voir la sous section \ref{sec:path_consistency}), pour éliminer toutes les diagonales inconsistantes dans $D$ avec toutes les diagonales de $D^{'}$, une par une, jusqu'à ce qu'on trouve une diagonale consistante avec les diagonales déjà acceptées (qui sont dans $D^{'}$), ou on arrive à la fin de l'ensemble $D$, et cela veut dire qu'on a supprimé toutes les diagonales de $D$.

Dans cette méthode, lorsqu'on trouve un chemin simple entre les deux fragments d'une diagonale $d_i$ on la supprime. Toutes les diagonales qui sont dans ce chemin simple sont supprimées en utilisant la méthode \emph{inconsistance avec chemin trouvé}, voir la sous section \ref{sub:Inconsistency_with_found_simple_path}.

Lorsqu'on fait la vérification d'une diagonale $d_i$ avec la méthode \emph{inconsistance de chemin} et on ne trouve pas de chemin simple entre ces deux fragments $(x_i,y_i)$, cela signifie que cette diagonale est consistante, donc on l'ajoute à l'ensemble $D^{'}$.\\

\item Appliquer la méthode \emph{inconsistance simple} :\\
On applique la méthode \emph{inconsistance simple} pour supprimer toutes les diagonales de $D$ qui sont inconsistantes simple avec $d_i$ trouvé dans l'étape précédente.\\

\item  Répéter ces deux étapes précédentes, jusqu'à traiter toutes les diagonales qui sont dans $D$. À la fin, $D^{'}$ ne contiendra que des diagonales consistantes, et l'ensemble $D$ sera vide.

\end{itemize}

\section{Le tri des diagonales}
\label{sec:sorting_diagonals}

Après la phase de la vérification de la consistance des diagonales, voir la section \ref{sec:diagonals_consistency}, où on supprime toutes les diagonales inconsistantes qui posent un problème pour l'alignement, on obtient un ensemble contenant uniquement des diagonales consistantes. L'alignement est effectué sur les diagonales restantes, de façon à ce que chaque fragment est placé en face de l'autre fragment de la même diagonale.

La méthode triviale pour aligner les  diagonales restantes (placer les 2 fragments de chaque diagonales l'un en face de l'autre) est la suivante :

\subsection{Méthode 1}

Les diagonales restantes sont triées en fonction de leur poids dans l'ensemble final. Pour les aligner, on commence par la première diagonale dans la liste (celle qui a le plus grand poids), et on met ses deux fragments $(d_i=(x_i,y_i))$, l'un devant l'autre.
Si le début du fragment $x_i$ est, avant le début du fragment $y_i$, alors $x_i$ est déplacé vers $y_i$, et vice versa. Pour déplacer le fragment ($x_i/y_i$), toute la sous-séquence qui commence à partir de ce fragment doit être déplacée.

Lorsqu'on passe à la prochaine diagonale ($d_j$ avec $j>i$) pour rassembler ses fragments, le déplacement des fragments de cette diagonale $d_j$ peut détériorer l'alignement des diagonales déjà alignées. Pour cela, on commence d'abord par vérifier si les fragments des diagonales déjà alignées sont déplacées (leurs positions ont été changées et leur alignement est détérioré) afin de ré-effectuer à nouveau leur alignement. Ce processus est répété pour toutes les diagonales déjà alignées une par une.
Par ailleurs, pour l'ensemble de toutes les diagonales, chaque fois qu'une nouvelle diagonale $d_j$ est alignée, on doit vérifier de manière récurrente pour chaque diagonale $d_k$ ($k \in [1..j]$), toutes les diagonales précédentes dans l'alignement ($[1..k]$), jusqu'à arriver à la première diagonale $d_1$.\\

Cette méthode est très couteuse en matière de temps d'exécution, car pour chaque diagonale alignée, on doit vérifier toutes les diagonales précédentes ajoutées dans le processus d'alignement (l'étape finale qui met ensemble (regroupe) les fragments des diagonales).\\

Exemple: Si on considère qu'on est dans la 5\ieme{} diagonale, la diagonale alignée $d_j$ ($j$ = 5), donc le nombre de vérifications est comme suit :\\

\noindent
$5+4+3+2+1$\\
$ +4+3+2+1$\\
$   +3+2+1$\\
$     +2+1$\\
$       +1$\\
     
Le nombre de vérifications à faire pour une seule position est donné par la formule suivante : $\sum_{k=1}^{k=j} \frac{k(k+1)}{2} $ \footnote{Il y a une petite erreur dans notre article \cite{DIAWAY_b}. Nous avons utilisé la formule $\sum_{k=1}^{j-1} j+(j-k)^{k+1}$ au lieu de la formule $j+ \sum_{k=1}^{j-1} (j-k)(k+1)$, et bien sur cette erreur a fait que la complexité temporelle obtenue qui est de l'ordre de $O(n^n)$ est fausse. Dans cette thèse, nous obtenons donc la formule $\sum_{k=1}^{k=j} \frac{k(k+1)}{2} $ qui est égale à $j+ \sum_{k=1}^{j-1} (j-k)(k+1)$.}.

Le nombre total d'opérations pour vérifier $n$ diagonales est :  
$\sum_{j=1}^{j=n} \sum_{k=1}^{k=j} \frac{k(k+1)}{2} $ dont la complexité temporelle est de l'ordre  
$\Omega(n^4)$.

\subsection{Méthode 2}

La méthode précédente ayant une complexité temporelle élevée, nous avons proposé une nouvelle méthode qui est plus efficace et qui peut assurer l'alignement final seulement en $O(k \times n)$, où $n$ est le nombre de diagonales consistantes restantes, et $k$ est le nombre de séquences biologiques à aligner.\\

Cette méthode consiste à trier toutes les diagonales consistantes pour l'alignement final, en fonction de leurs positions dans les séquences biologiques.
La condition de vérification pour le tri n'est pas triviale, car les fragments des diagonales ne sont pas situés dans les mêmes séquences biologiques, pour cette raison, on applique la méthode suivante pour les trier :

\begin{enumerate}

\item L'indexation des fragments des diagonales en fonction des numéros des séquences de leurs fragments, comme nous l'avons déjà expliqué, dans la section \ref{sec:Diagonal_indexation}.
\medskip
\item Construire progressivement l'ensemble final des diagonales pour l'alignement de telle sorte que : la diagonale du premier tour est celle dont ses deux fragments n'ont pas de fragments qui les précèdent dans la même séquence. C'est-à-dire les fragments qui sont situés les premiers dans les séquences biologiques. 
Après cela, on supprime les fragments de la structure d'index, et on ajoute leurs diagonales à la liste triée.
Ce processus est répété jusqu'à ce que tous les fragments soient supprimés et leurs diagonales sont ajoutées à la liste triée.

Pour obtenir les fragments satisfaisant ces conditions on procède comme suit : 
on parcourt la première colonne de la matrice ($i\in [1..k]$), $k$ est le nombre des lignes de la matrice (en d'autre terme, on parcourt la première case des lignes de la matrice ou la structure d'index ligne après ligne). On marque le premier fragment d'une diagonale à la ligne $i$. Si le second fragment situé à la ligne $j$ (où $j > i$) est situé en premier sur sa ligne, la condition est satisfaite. Dans le cas contraire, si la condition n'est pas satisfaite, on saute à la ligne suivante $i+1$. Ce processus est répété jusqu'à ce qu'on arrive à la dernière ligne de la matrice.

Un fragment est considéré le premier dans sa ligne, s'il n'y a aucun fragment avant lui. Dans le cas où on trouve une diagonale qui satisfait la condition, on recommence la recherche depuis la 1\iere{} ligne. Si on arrive à la dernière ligne sans qu'on trouve une nouvelle diagonale qui satisfait la condition de tri, on passe à la colonne suivante dans la matrice.

\end{enumerate}

\paragraph{Le calcul de la complexité :} Pour trouver la bonne diagonale qui satisfait la condition (d'être la première), il se peut qu'on doive parcourir toute la colonne (tous les niveaux). On est sure que dans chaque étape (le parcourir d'une seule colonne, autrement un seul niveau), on trouve au moins une seule diagonale. Dans le pire cas, pour une seule diagonale, il faut parcourir toute la colonne donc les (k) lignes de la matrice (le nombre des séquences biologiques). Au total, on a $n$ diagonales, donc pour traiter toutes les diagonales, il nous faut $O(k \times n)$.

Ensuite, dans la dernière étape \ref{sec:ali:Mettre_les_fragments_ensemble}, on regroupe les fragments des diagonales  avec un temps de l'ordre de $O(n)$. La complexité globale de ces deux étapes devient $O(k \times n)$.

L'étape finale (regrouper les fragments des diagonales) peut être faite directement en même temps que l'opération de tri. À chaque fois qu'on trouve la diagonale qui satisfait la condition de tri, on la supprime de l'index et on regroupe ses fragments.\\

Exemple :\\
On considère l'ensemble suivant après la phase de suppression des diagonales inconsistantes :
$D'= \{d_0, d_1, d_2, d_3, d_4, d_5, d_6\}$ (voir la figure \ref{fig:tri_diag_alignement1}).

\begin{figure}[h]
\centering
\includegraphics[width=0.8\linewidth]{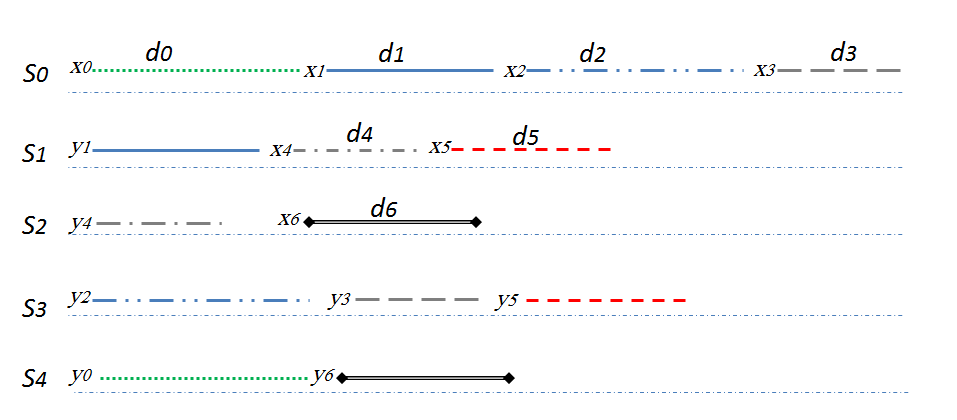}
\caption{L'ensemble des diagonales indexées dans une matrice.}
\label{fig:tri_diag_alignement1}
\end{figure}

On constate dans la figure \ref{fig:tri_diag_alignement1} que la diagonale $d_0$ satisfait la condition expliquée précédemment, donc on l'ajoute dans l'ensemble trié des diagonales  $D^{''}$, et on supprime ses deux fragments.

\begin{figure}
    \centering
    \begin{subfigure}[b]{0.49\textwidth}
        \centering
        \includegraphics[width=\textwidth]{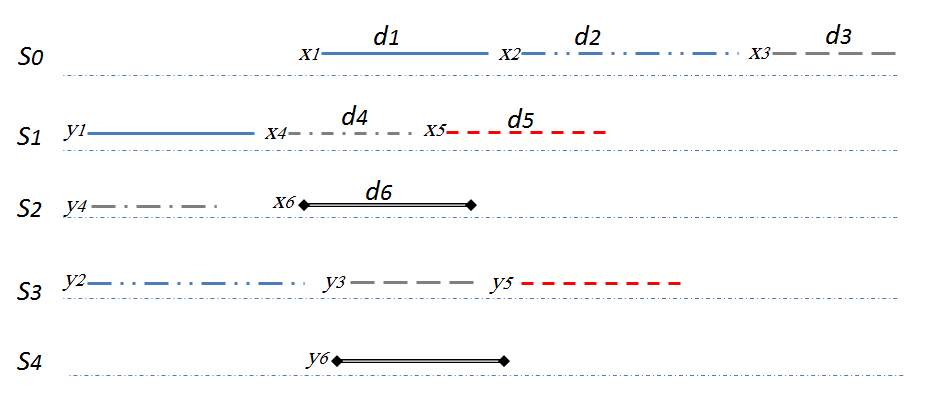}
        \caption{} \label{fig:operation_2}
    \end{subfigure}
    \hfill
    \begin{subfigure}[b]{0.49\textwidth}
        \centering
        \includegraphics[width=\textwidth]{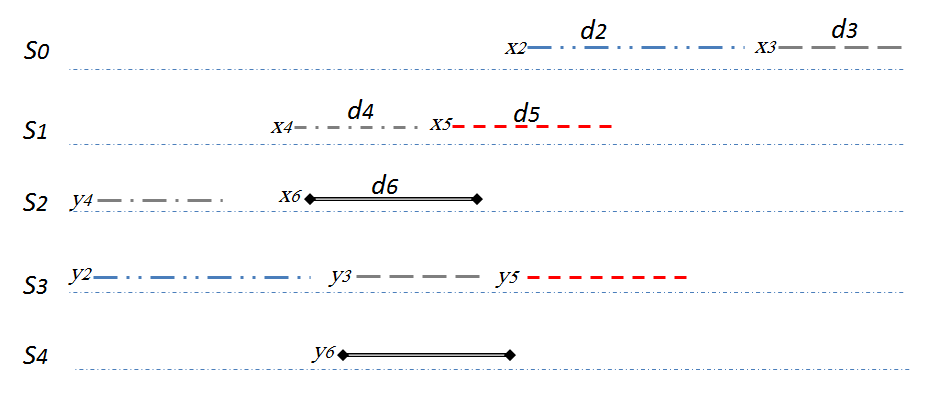}
        \caption{} \label{fig:operation_3}
    \end{subfigure}

    \begin{subfigure}[b]{0.49\textwidth}
        \centering
        \includegraphics[width=\textwidth]{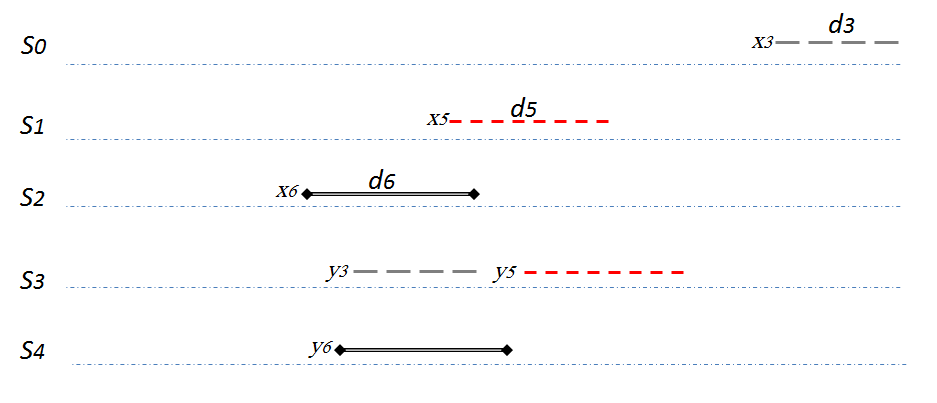}
        \caption{} \label{fig:operation_4}
    \end{subfigure}
    \hfill
    \begin{subfigure}[b]{0.49\textwidth}
        \centering
        \includegraphics[width=\textwidth]{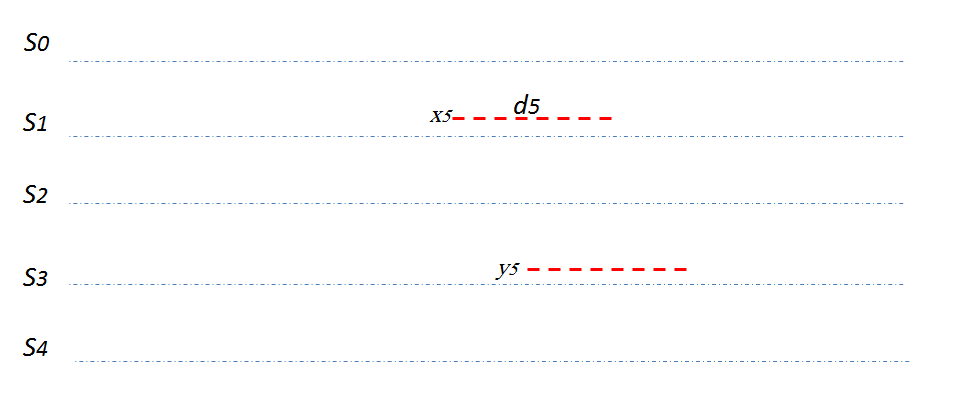}
        \caption{} \label{fig:operation_5}
    \end{subfigure}
    
    \caption{Les itérations du tri des diagonales.}
    \label{fig:tri_diag_alignement2...}
\end{figure}

Dans la figure \ref{fig:operation_2}, la diagonale $d_1$, est la diagonale qui satisfait la condition expliquée précédemment.

Dans la figure \ref{fig:operation_3}, on a deux diagonales $d_2$, et $d_4$ qui satisfont la condition expliquée précédemment en même temps. Donc on suit l'ordre d'apparition de leurs fragments dans la matrice, on commence par la diagonale $d_2$, on l'ajoute à $D^{''}$ et on supprime ses fragments de la matrice. Ensuite on fait la même chose pour la diagonale $d_4$. L'étape montrée dans cette figure se fait en deux itérations, celle de $d_2$, ensuite celle de $d_4$.

Dans la figure \ref{fig:operation_4}, on a aussi deux $d_3$, et $d_6$ qui satisfont la condition en même temps.

Dans la figure \ref{fig:operation_5}, la diagonale $d_5$, est la diagonale qui satisfait la condition expliquée précédemment.\\

Pour notre exemple, l'ensemble des diagonales initiales qui est trié selon leurs poids était comme suit : 
$D'= \{d_0, d_1, d_2, d_3, d_4, d_5, d_6\}$.

Après l'opération de tri selon les positions des fragments des diagonales dans les séquences biologiques, l'ensemble devient: $D^{''}= \{d_0, d_1, d_2, d_4, d_3, d_6, d_5\}$.

\section{Mettre les fragments des diagonales ensemble, et insérer les gaps (-)}
\label{sec:ali:Mettre_les_fragments_ensemble}

Dans cette dernière étape, on prend l'ensemble des diagonales triées avec la deuxième méthode expliquée dans la section précédente, et pour chaque diagonale ($d_i=(x_i,y_i)$), on met ses deux fragments l'un en face de l'autre (l'un devant l'autre). Si le début du fragment $x_i$ est, avant le début du fragment $y_i$, alors $x_i$ est déplacé vers $y_i$, et vice versa. Pour déplacer le fragment, toute la sous-séquence qui commence à partir du fragment ($x_i/y_i$) doit être déplacée.

Après cette étape, on doit insérer les gaps ($-$) dans les espaces vides (qui sont le résultat du déplacement des fragments).

\section{L'analyse et la comparaison}
\label{sec:ali:analyse_comparaison}

Nous avons réalisé deux implémentations de notre approche. La première version de notre approche est appelée \emph{DiaWay 1.0} (Diagonal Way). Dans cette implémentation, pour chaque diagonale, on extrait toutes ses sous-diagonales comme nous l'avons expliqué dans la section \ref{sec:extraction_des_diagonals}, méthode $1$. Notre deuxième implémentation est appelée \emph{DiaWay 2.0}. Dans cette version, nous n'extrayons pas les sous-diagonales, on utilise une autre méthode pour améliorer la qualité d'alignement, voir la section \ref{sec:extraction_des_diagonals} méthode $2$ (dans la version actuelle de \emph{DiaWay 2.0}, la méthode $2$ expliquée dans la section \ref{sec:extraction_des_diagonals} n'est encore implémentée, on utilise simplement les diagonales seules), et aussi, nous trions les diagonales avant leurs alignement final.
La deuxième version de notre approche est une amélioration de la première version. Comme nous l'avons expliqué, nous avons changé les deux méthodes (l'extraction des diagonales, et leurs tri avant la phase finale) afin de rendre le programme beaucoup plus rapide.

Notre implémentation est modulaire, elle a été programmée en langage C++ avec l'utilisation du Framework Qt. Les tests ont été effectués sur Windows 7, 32 bits, avec un processeur Intel Core 2 duo E8400 3,00 GHz et 2 Gigabyte de RAM. Nous avons utilisé un seul c\oe{}ur.
Le code source et les deux versions de notre approche sont disponibles sur : \url{https://github.com/chegrane/DiaWay\_1.0} et \url{https://github.com/chegrane/DiaWay\_2.0}. 
Notre projet est distribué sous la licence publique générale limitée GNU (GNU LGPL) (Anglais: GNU Lesser General Public License).

Nous avons fait une analyse sur les résultats des tests, et une comparaison avec l'algorithme original \emph{DIALIGN}. Les ensembles de données (benchmarks) des séquences d'ADN sont récupérés depuis les bases de données \textbf{BAliBASE} et \textbf{SMART}~\cite{BAliBASE_carroll2007}.

Dans notre travail, nous utilisons la même formule de score définie dans \cite{Morgenstren1996,Morgenstern1998a,Morgenstern1998b}, pour cela, l'analyse est basée sur le temps de calcul seulement.\\


Le temps d'exécution varie en fonction du nombre de séquences à aligner, de leurs longueurs et la longueur minimale de la diagonale autorisée.
Soit un ensemble de séquences biologiques $S$ avec une longueur $|S|$, la longueur des séquences moyennes est notée par $\bar{S}$.
$|D|$ représente la longueur de l'ensemble $D$ de toutes les diagonales (donc le nombre total des diagonales initiales). Et $|D^{'}|$ la taille de l'ensemble $D^{'}$ qui ne contient que les diagonales consistantes.

Le tableau \ref{tab:tab_nb_diagonal} suivant, illustre la relation entre le nombre de diagonales consistantes et le nombre total de diagonales initiales.

\begin{table}[h] 
\centering
\begin{tabular}{|c|c|c|c|}
\hline $|S|$ & $\bar{S}$  & $|D|$ & $|D^{'}|$ \\ 
\hline 3 & 139 & 43757 & 156 \\ 
\hline 5 & 202 & 101261 & 499 \\ 
\hline 8 & 170 & 243845 & 661 \\ 
\hline 10 & 101 & 131515 & 1474 \\ 
\hline 
\end{tabular}
\caption{Le nombre total de diagonales initiales et les diagonales consistantes restantes.}
\label{tab:tab_nb_diagonal}
\end{table}

Dans le tableau \ref{tab:tab_nb_diagonal}, les résultats montrent qu'avant d'arriver à l'ensemble des diagonales consistantes, nous sommes obligés de supprimer presque toutes les diagonales initiales. Le nombre des diagonales consistantes restantes est inférieur à $1\%$ de toutes les diagonales initiales.\\


Le temps d'exécution varie selon le nombre de toutes les diagonales initiales. Dans la figure \ref{fig:nb_diagonal_temps_7}, nous donnons le temps d'exécution en millisecondes pour une longueur minimale autorisée de diagonale $=7$.

\begin{figure}[h]
\centering
\includegraphics[width=0.7\linewidth]{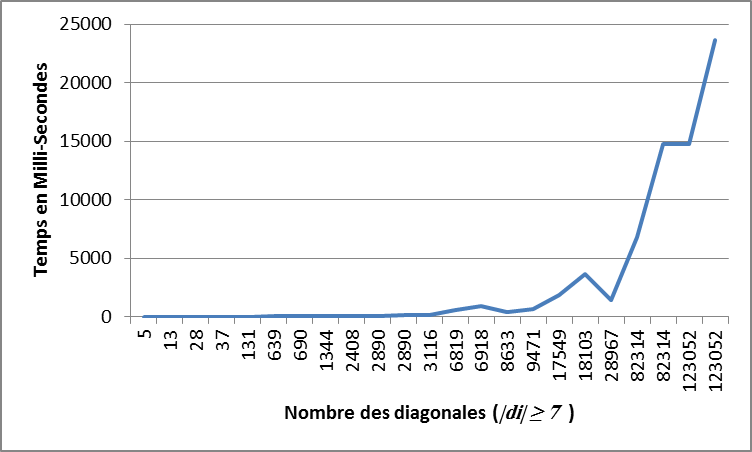}
\caption{Le temps d'exécution en millisecondes pour une longueur minimale de diagonale = $7$.}
\label{fig:nb_diagonal_temps_7}
\end{figure}


Le graphe de la figure \ref{fig:nb_diagonal_temps_7}, montre que le temps d'exécution augmente en corrélation avec le nombre de diagonales.
Cela signifie que si le nombre initial des diagonales est petit (juste après la première étape qui permet leur extraction voir \ref{sec:extraction_des_diagonals}), cela implique un temps d'exécution petit. Et dans le cas contraire, où on a un très grand nombre de diagonales, le temps de calcul est très élevé.\\

Si nous avons des séquences biologiques avec un degré élevé de similitude, le nombre de diagonales consistantes sera énorme. Cela veut dire que le graphe de vérification de l'inconsistance des diagonales (voir la section \ref{sec:path_consistency}) devient aussi très grand. Cela rend le traitement de la méthode inconsistance de chemin plus lent, vu le nombre de chemins qui doivent être testés avant de trouver le bon, et qui implique un temps d'exécution plus élevé. Voir le tableau \ref{tab:similarity_degree}.

\begin{table}
\centering
\begin{tabular}{|c|c|c|c|c|}
\cline{2-5} 
\multicolumn{1}{c|}{} & $|S|$  &  $\bar{S}$ & $|D|$ & T en min\tabularnewline
\hline 
 & 9 & 40 & 15318  & 0,59\tabularnewline
\cline{2-5} 
Haut  & 9 & 76 & 57983  & 3,61\tabularnewline
\cline{2-5} 
similarité  & 10 & 73 & 63137  & 3,44\tabularnewline
\cline{2-5} 
 & 10 & 75 & 67058  & 7,42\tabularnewline
\hline 
\hline
 & 8 & 180 & 226217  & 0,64\tabularnewline
\cline{2-5} 
Similarité & 8 & 295 & 583156  & 1,36\tabularnewline
\cline{2-5} 
régulière & 10 & 166 & 303685  & 1,52\tabularnewline
\cline{2-5} 
 & 11 & 190 & 551049  & 2,27\tabularnewline
\hline 
\end{tabular}
\caption{Le temps (en minute) d'exécution en fonction du degré de similitude des séquences.}
\label{tab:similarity_degree}
\end{table}

Dans le tableau \ref{tab:similarity_degree}, on a deux groupes de séquences. Le premier contient les séquences qui ont un degré élevé de similarité, et le second celles qui ont un degré de similarité régulier. On voit que la taille des séquences du deuxième ensemble (avec des similarités ordinaires) est plus grande que le premier ensemble avec une similarité élevée, et également, le nombre de diagonales du deuxième ensemble est excessivement plus grand que celui du premier ensemble. Malgré cela, le temps d'exécution de l'ensemble de similitude régulier est plus petit que l'ensemble de similarité élevée. Car, comme nous l'avons expliqué ci-dessus, dans le cas du degré de similitude élevé, le nombre de diagonales consistantes est très grand, et cela rend le traitement d'inconsistance plus lent (spécialement, la méthode de l'inconsistance de chemin, voir la section \ref{sec:path_consistency}). Et cela implique un temps d'exécution plus élevé.\\


Nous avons comparé nos deux implémentations avec \emph{DIALIGN 2.2} \cite{Morgenstern1999}, son code source est disponible sur \url{http://bibiserv2.cebitec.uni-bielefeld.de/dialign/}~\footnote{Visité le: 02-07-2016.}.
Le temps d'exécution dépend du nombre de séquences biologiques, et leur longueur moyenne, voir le tableau \ref{tab:T_nbSequance_averageL}.

\begin{table}[h]
\centering
\begin{tabular}{|c|c|c|c|c|}
\hline $|S|$ & $\bar{S}$ & DW.2.0 & DW.1.0 & DIALIGN 2.2 \\ 
\hline 3 & 530 & 63 & 421  & 1312  \\ 
\hline 4 & 250 & 31 & 125 & 766 \\ 
\hline 5 & 200 & 47 & 125 & * \\ 
\hline 6 & 119 & 32 & 1203  & 1750  \\ 
\hline 7 & 195 & 125 & 422 & 1640  \\ 
\hline 8 & 294 & 437  & 1453  & 4062  \\ 
\hline 9 & 325 & 1047  & 6828  & 6297  \\ 
\hline 10&	476 & 3204 & 14782 & 16094  \\
\hline 11&	190 & 875  & 3656  & 3281  \\
\hline 
\end{tabular}
\caption{Le temps d'exécution de DIALIGN 2.2, DiaWay $1.0$ (DW.1.0), et DiaWay $2.0$ (DW.2.0) en millisecondes selon le nombre des séquences et leurs longueurs.}
\label{tab:T_nbSequance_averageL}
\end{table}

Le symbole étoile (*) dans le tableau \ref{tab:T_nbSequance_averageL}, signifie que l'application plante et ne fonctionne pas pour cet exemple. D'après le tableau \ref{tab:T_nbSequance_averageL}, on peut constater les points suivants :

\begin{itemize}
\item Le temps pour aligner 4 ou 5 séquences est petit par rapport au temps pour aligner 3 séquences, parce que leurs longueurs sont plus petites que celles des 3 séquences.

\item La méthode \emph{DiaWay 1.0} (DW.1.0) a un meilleur temps d'exécution par rapport à \emph{DIALIGN 2.2} pour un petit ensemble de séquences avec une longueur moyenne relativement petite. Mais pour un grand nombre de séquences, son temps d'exécution est très mauvais parce que : 1) \emph{DiaWay 1.0} (DW.1.0) extrait pour chaque diagonale toutes ses sous-diagonales et cela rend l'ensemble des diagonales énorme, 2) dans l'étape finale de l'alignement (mettre les fragments ensembles), on ne fait pas le tri des diagonales. Ces deux raisons rendent le temps de traitement très élevé.

\item Le temps d'exécution pour notre version \emph{DiaWay 2,0} (DW.2.0) est beaucoup plus petit que les deux autres méthodes, \emph{DIALIGN} et \emph{DiaWay 1.0} (DW.1.0). Voir la figure \ref{fig:comparison_multiple_alignement}.

\end{itemize}

\medskip
Dans le prochain test, nous allons prendre 10 séquences biologiques, et nous changeons à chaque fois la longueur des séquences afin de voir l'influence de ce paramètre sur le temps d'exécution, voir la figure \ref{fig:comparison_multiple_alignement}.

\begin{figure}[h]
\centering
\includegraphics[width=0.7\linewidth]{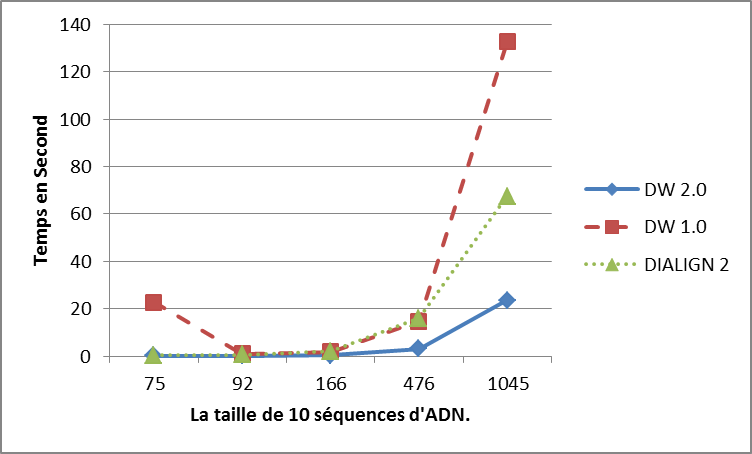}
\caption{Le temps d'exécution pour aligner 10 séquences de longueurs différentes.}
\label{fig:comparison_multiple_alignement}
\end{figure}

Le temps d'exécution de notre version \emph{DiaWay 2.0} (DW.2.0) est meilleur que celui des deux autres méthodes. Notre version \emph{DiaWay 1.0} (DW.1.0) donne un temps d'exécution très mauvais pour une grande taille de séquences. Parce que, comme nous l'avons expliqué avant, elle extrait pour chaque diagonale toutes ses sous-diagonales, et aussi on ne fait pas le tri des diagonales dans l'étape finale de l'alignement, ce qui implique un temps d'exécution très important.\\

Pour montrer l'efficacité de notre méthode \emph{DiaWay 2.0} (DW.2.0) par rapport à celle de \emph{DIALIGN 2.2}, nous avons fait un test pour l'alignement par paire, nous prenons seulement deux séquences biologiques avec de très grande tailles. De plus, nous changeons à chaque fois la longueur. Voir le tableau \ref{tab:T_2_Sequance_lenght}, et la figure \ref{fig:comparaison_dialign_Newdiaway}.

\begin{table}[h]
\centering
\begin{tabular}{|c|c|c|c|c|}
\hline $|S|$ & $\bar{S}$ &	 DW.2.0 & DW.1.0 &	DIALIGN 2.2 \\
\hline 2 &	2000 &	203 &	57640 &	6796  \\
\hline 2 &	3000 &	437 & 	172359 &	14843  \\
\hline 2 &	4032 &	812 &	392703 &	25718  \\
\hline 2 &	5000 &	1235 &	710984 &	38828  \\
\hline 2 & 	7499 &	2828 &	2223297 &	81281  \\
\hline
\end{tabular} 
\caption{Le temps d'exécution en millisecondes selon la longueur de 2 séquences seulement.}
\label{tab:T_2_Sequance_lenght}
\end{table}

Dans le tableau \ref{tab:T_2_Sequance_lenght}, ont peut voir que le temps d'exécution de notre méthode DW.2.0 est beaucoup plus petit que \emph{DIALIGN 2.2}.

\begin{figure}[h]
\centering
\includegraphics[width=0.7\linewidth]{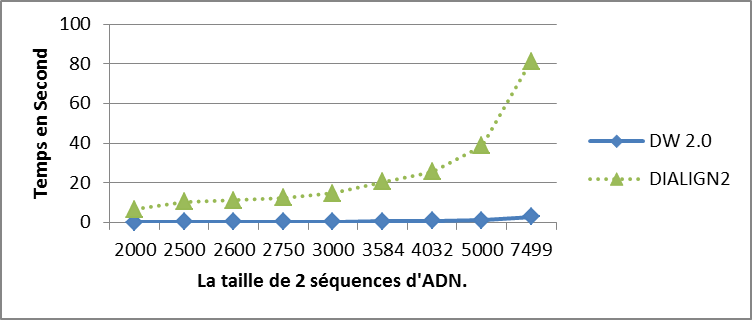}
\caption{Comparaison de l'alignement par paire, avec des séquences de très grandes tailles.}
\label{fig:comparaison_dialign_Newdiaway}
\end{figure}

Dans ce graphe (la figure \ref{fig:comparaison_dialign_Newdiaway}) nous avons omis la méthode \emph{DiaWay 1.0} (DW.1.0) car elle donne un temps d'exécution très mauvais.
Les résultats montrent que \emph{DiaWay 2.0} (DW 2.0) est très efficace en comparant avec l'algorithme \emph{DIALIGN 2.2}. Le temps d'exécution de notre méthode est presque stable.

\section{Conclusion}
\label{sec:ali:Conclusion}
Dans ce chapitre, nous avons présenté un nouvel algorithme d'alignement des séquences  biologiques par paire et multiple. Un algorithme robuste et efficace en terme de temps d'exécution. Notre algorithme est basé sur l'algorithme \emph{DIALIGN} qui est très connue pour l'alignement des séquences d'ADN et de protéine.

Nous avons présenté les différentes étapes de notre approche,  de l'étape d'extraction des diagonales à l'étape finale qui consiste à regrouper les fragments des diagonales et l'insertion des gaps, en passant par l'étape de l'indexation des diagonales et l'étape de la vérification de l'inconsistance des diagonales, où nous avons proposé une nouvelle méthode qui utilise les graphes afin de résoudre ce problème de l'inconsistance.

Dans l'étape de tri des diagonales pour l'alignement final, nous avons proposé une nouvelle technique de tri, car le besoin est différent et on ne peut pas appliquer les algorithmes de tri classique.

Nous avons réalisé deux implémentations de notre approche, \emph{DiaWay 1.0} et \emph{DiaWay 2.0}. Nous avons fait différents tests sur des séquences par paires et multiple. Les résultats montrent que notre approche \emph{DiaWay 2.0} est très efficace dans les deux cas d'alignement d'ADN par paire et multiple, et donne un très bon temps d'exécution comparé à \emph{DIALIGN 2.2}.

\def\baselinestretch{1}
\chapter{Conclusion générale}
\label{chap:Conclusion_generale}

\def\baselinestretch{1.66}

L'objet de cette thèse porte sur l'étude du problème de la recherche approchée en général, et son étude dans différents contextes importants de l'informatique.

\begin{enumerate}[a]
\item La recherche approchée dans un dictionnaire ou un texte.
\item L'auto-complétion approchée.
\item L'alignement de séquences biologiques.
\end{enumerate}

Chaque contexte à de nombreuses applications. Nous avons apporté des solutions qui donnent aussi bien en théorie qu'en pratique des résultats satisfaisants en complexité.\\

\paragraph{La recherche approchée dans un dictionnaire/texte.} Nous avons apporté deux solutions.

La première solution utilise les tables de hachage, nous avons présenté une solution pour résoudre le problème de la recherche approchée pour $k \geq 2$. L'idée de notre algorithme est basée sur l'utilisation des tableaux de hachage avec sondage linéaire et des signatures de hachage afin d'accélérer le temps de recherche.
Nous avons proposé deux variantes de notre solution, une version non-compacte qui est incrémentale (possibilité d'insertion des nouveaux éléments), et une version compacte qui optimise l'espace mémoire.

\medskip

La deuxième solution résout le problème de la recherche approchée dans un dictionnaire/texte pour $k=1$ erreur. La solution utilise le Trie et le Trie inversé afin d'obtenir les caractères possibles pour une position d'erreur donnée et de vérifier les parties exactes du mot requête rapidement. Nous avons présenté trois variantes de notre solution, la première \emph{TRT\_CI} effectue une intersection sur des caractères communs des transitions sortantes entre les deux n\oe{}uds d'erreur du Trie et le Trie inversé pour déterminer les chemins qui peuvent conduire à des solutions. Dans la seconde méthode \emph{TRT\_WNI}, nous faisons une intersection entre deux ensembles de numéros de mots pour déterminer les solutions. Dans la troisième variante \emph{TRT\_CWNI}, nous combinons les deux méthodes \emph{TRT\_CI} et \emph{TRT\_WNI}, pour réduire le nombre de feuilles avant de faire une intersection entre leurs numéros.


Dans la première partie, dans les deux chapitres \ref{chap:recherche_approchee_avec_hachage} et \ref{chap:Trie_R_Tire} où on a proposé deux solutions pour le problème de la recherche approchée (une solution qui se base sur le hachage et l'autre qui se base sur le Trie et Trie inversé), les mots du dictionnaire/texte recherchés sont de longueur moyenne, en particulier lorsque en considère les langues naturelles, et le nombre d'erreurs $k$ est donnée d'hypothèse. La complexité dépend de la taille des mots, la taille de dictionnaire, la taille de l'alphabet et $k$.\\

\paragraph{L'auto-complétion approchée}

Dans ce travail, nous avons présenté une solution pratique pour résoudre le problème de l'auto-complétion approchée dans une architecture client-serveur, et nous avons présenté la méthode Top-k pour améliorer la qualité des résultats en utilisant un système de classement dynamique et statique.

\medskip

Dans le problème de l'auto-complétion, la requête recherchée par un algorithme de recherche approchée est un préfixe d'un mot, donc sa taille est relativement plus petite que celle des mots du cas précédent, ou de même ordre de grandeur. Après l'étape de la recherche approchée de préfixe taper, on a besoin d'une 2\ieme{} étape pour obtenir tous les suffixes de ce préfixe. Pour cela, on est obligé de proposer des solutions spécifiques à ce problème (l'auto-complétion approchée) indépendamment de problème général de la recherche approchée. C'est la raison pour laquelle nous avons étudié ce problème et proposé quatre variantes pour la recherche approchée du préfixe.
La première recherche dans un Trie le préfixe avec retour arrière (la méthode naïve). La seconde recherche le préfixe par l'algorithme 1 (avec hachage) du problème général. Et la troisième solution combine une structure de Trie avec hachage pour utiliser une liste de substitutions qui permet de choisir les chemins des solutions. La quatrième variante, prend en compte la profondeur du Trie dans lequel le préfixe est recherché. Dans les niveaux proches de la racine, la recherche se fait par la variante 3, et dans les niveaux proches des feuilles, elle se fait par la variante 1. Les expérimentations ont montré que cette dernière variante donne les meilleures performances de recherche.\\

L'auto-complétion nécessite une étape qui trouve tous les suffixes d'un préfixe. Pour cela nous avons proposé trois variantes d'algorithmes pour s'adapter au comportement de l'utilisateur. 1) Commencer la recherche à partir de la racine. 2) Commencer la recherche depuis le dernier n\oe{}ud de la requête précédente lorsque le mot requête précédent est un préfixe complet du mot actuel. 3) Commencer la recherche depuis le plus grand préfixe commun.

Pour réduire le nombre des résultats obtenus, nous appliquons une fonction qui retourne les résultats les plus pertinents (Top-k) vis-vis d'un score (statique ou dynamique) associé.\\

Nous avons réalisé une bibliothèque (nommée Appaco\_lib), pour être utilisée soit du côté serveur ou du côté client ou les deux en même temps pour répondre rapidement à des requêtes d'auto-complétion approchée pour des dictionnaires UTF-8 (donc toutes les langues).

\paragraph{Alignement multiple des séquences d'ADN :}

Nous avons aussi examiné la recherche approchée dans le contexte de la bio-informatique. 

Aligner deux séquences $x$ et $y$ est de mettre les régions similaires de $x$ en face de $y$. Trouver le bon alignement revient à minimiser le nombre d'opérations qui permet de transformer la séquence $x$ en la séquence $y$, ou soit à maximiser la similarité entre $x$ et $y$.

Ce problème ainsi exprimé, est donc un problème de recherche approchée où les mots (les séquences biologiques) sont de taille relativement grande, le seuil d'erreur $k$ est inconnu, et la taille de l'alphabet est réduite.

Dans ce travail, nous avons présenté un nouvel algorithme d'alignement des séquences  biologiques par paire et multiple robuste et efficace en terme de temps d'exécution, notre algorithme est basé sur l'algorithme \emph{DIALIGN} qui est très connu pour l'alignement des séquences d'ADN et de protéine.

Nous avons réalisé deux implémentations de notre approche, \emph{DiaWay 1.0} et \emph{DiaWay 2.0}. Nous avons fait différents tests sur des séquences multiples et par paires. Les résultats montrent que notre approche (\emph{DiaWay 2.0}) est très efficace dans les deux cas, l'alignement d'ADN multiple et par paire, et donne un très bon temps d'exécution comparé à \emph{DIALIGN 2.2}.\\

Toutes nos implémentations sont distribuées sous la licence publique générale limitée GNU (GNU LGPL) (Anglais: GNU Lesser General Public License). Les résultats de notre travail sont disponibles sous forme de bibliothèques sur :

\begin{itemize}

\item La recherche approchée avec les tableaux de hachage : \url{https://code.google.com/p/compact-approximate-string-dictionary/}

\item La recherche approchée avec le Trie et le Trie inversé : \url{https://github.com/chegrane/TrieRTrie}

\item L'auto-complétion approchée avec top-k : \url{https://github.com/AppacoLib/api.appacoLib}

\item L'alignement multiple des séquences biologiques : \url{https://github.com/chegrane/diaway} pour la première version, et sur \url{https://github.com/chegrane/DiaWay_2.0} pour la deuxième version.

\end{itemize}


\paragraph{}

Au terme de ce travail et avant d'envisager quelques perspectives rappelons quelques problèmes qui se sont posés dans toute cette problématique de la recherche approchée dans un mot autrement dit la présence de(s) erreur(s) dans les mots. Nous en mentionnons trois:

\begin{enumerate}
\item La position de l'erreur dans le mot requête : 
	tout simplement lorsqu'on a une requête $q$ d'une longueur $m$, la difficulté provient du fait que l'on ne sait pas à quelle position l'erreur se trouve, elle peut être à n'importe quelle position de l'intervalle $[1..m]$.

\medskip
	
	Lorsque le nombre d'erreurs est $\geq 2$, le problème devient plus complexe étant donné la combinaison des positions des erreurs que l'on doit considérer:  une combinaison de $\binom{m}{k}$ positions.

Inversement, si on connait par avance la position de l'erreur (ou les positions possibles), la solution devient facile et rapide à obtenir.
	Dans le cas où $k \geq 2$, si seulement on connaît juste une partie des positions de l'erreur par exemple celle d'une partie de gauche ou de droite du mot requête, la complexité des combinaisons des positions peut être réduite.

On peut alors envisager quelques questions à explorer qui pourraient améliorer ce travail, par exemple:

\medskip
\item Dans une position donnée, quels sont les caractères qui représentent une solution où au moins qui mènent à des solutions possibles. Ce problème peut aussi dépendre de la taille de l'alphabet, petite ou grande.

\medskip
\item Quels types d'erreurs  considérer? (en fonction des différentes mesures de distance entre deux mots). Dans le cas de $k \geq 2$ erreurs, on a une combinaison à faire entre les types d'erreurs et les positions où elles pourraient se trouver.
Quel est l'impact du choix de la fonction de distance dans ce problème?

\medskip

Si on sait d'avance quels sont les types d'erreurs concernées dans le travail (par exemple dans une recherche approchée où les erreurs sont celles d'un système d'OCR, il est possible de caractériser, expérimentalement, les erreurs possible d'un OCR), il serait alors possible de diminuer le nombre de combinaisons et donc un temps de calcul plus réduit. Cela permet aussi de proposer des algorithmes moins complexes adaptés au besoin.
\end{enumerate}

\paragraph{}

Il y a d'autres paramètres que l'on pourrait aussi prendre en considération afin de résoudre ce problème comme l'espace mémoire occupé par la structure de donnée et le temps du pré-traitement. Une réflexion dans cette direction peut apporter de nouvelles solutions meilleures.

\backmatter 

\bibliographystyle{splncs}
\renewcommand{\bibname}{Références} 
\bibliography{References/references} 

\begin{thebibliography}{100}

\bibitem{cislak2015practical}
Cis{\l}ak, A., Grabowski, S.:
\newblock A practical index for approximate dictionary matching with few
  mismatches.
\newblock arXiv preprint arXiv:1501.04948 (2015)

\bibitem{Le66}
Levenshtein, V.I.:
\newblock {Binary Codes Capable of Correcting Deletions, Insertions, and
  Reversals}.
\newblock Soviet Physics Doklady \textbf{10} (1966)  707--710

\bibitem{Na01}
Navarro, G.:
\newblock A guided tour to approximate string matching.
\newblock ACM Comput. Surv. \textbf{33}(1) (2001)  31--88

\bibitem{Bo11}
Boytsov, L.:
\newblock Indexing methods for approximate dictionary searching: Comparative
  analysis.
\newblock ACM Journal of Experimental Algorithmics \textbf{16}(1) (2011)

\bibitem{cucerzan2004spelling}
Cucerzan, S., Brill, E.:
\newblock Spelling correction as an iterative process that exploits the
  collective knowledge of web users.
\newblock In: EMNLP. Volume~4. (2004)  293--300

\bibitem{pollock1984automatic}
Pollock, J.J., Zamora, A.:
\newblock Automatic spelling correction in scientific and scholarly text.
\newblock Communications of the ACM \textbf{27}(4) (1984)  358--368

\bibitem{reynaert2008non}
Reynaert, M.:
\newblock Non-interactive {OCR} post-correction for giga-scale digitization
  projects.
\newblock In: Computational Linguistics and Intelligent Text Processing.
\newblock Springer (2008)  617--630

\bibitem{manber1994algorithm}
Manber, U., Wu, S.:
\newblock An algorithm for approximate membership checking with application to
  password security.
\newblock Information Processing Letters \textbf{50}(4) (1994)  191--197

\bibitem{chaudhuri2003robust}
Chaudhuri, S., Ganjam, K., Ganti, V., Motwani, R.:
\newblock Robust and efficient fuzzy match for online data cleaning.
\newblock In: Proceedings of the 2003 ACM SIGMOD international conference on
  Management of data, ACM (2003)  313--324

\bibitem{bioinformatics_waterman1995}
Waterman, M.S.:
\newblock Introduction to computational biology\NoAutoSpaceBeforeFDP: maps,
  sequences and genomes.
\newblock CRC Press (1995)

\bibitem{Gusfield1997}
Gusfield, D.:
\newblock {Algorithms on strings, trees, and sequences\NoAutoSpaceBeforeFDP:
  computer science and computational biology}. Volume~24.
\newblock Cambridge University Press (1997)

\bibitem{bioinformatics_lesk2013}
Lesk, A.:
\newblock Introduction to bioinformatics.
\newblock Oxford University Press (2013)

\bibitem{navarro2002flexible}
Navarro, G., Raffinot, M.:
\newblock Flexible pattern matching in strings\NoAutoSpaceBeforeFDP: practical
  on-line search algorithms for texts and biological sequences.
\newblock Cambridge University Press (2002)

\bibitem{vardi2014boolean}
Vardi, M.Y.:
\newblock {Boolean satisfiability\NoAutoSpaceBeforeFDP: theory and
  engineering.}
\newblock Commun. ACM \textbf{57}(3) (2014) ~5

\bibitem{Be09}
Belazzougui, D.:
\newblock {Faster and Space-Optimal Edit Distance "1" Dictionary}.
\newblock In: Combinatorial Pattern Matching, 20th Annual Symposium, {CPM}
  2009, Lille, France, June 22-24, 2009, Proceedings. (2009)  154--167

\bibitem{ibra.Chegrane.simple}
Chegrane, I., Belazzougui, D.:
\newblock Simple, compact and robust approximate string dictionary.
\newblock Journal of Discrete Algorithms \textbf{28} (2014)  49 -- 60
  StringMasters 2012 \& 2013 Special Issue (Volume 1).

\bibitem{chegrane2015jquery}
Chegrane, I., Belazzougui, D., Raffinot, M.:
\newblock {JQuery UI like approximate autocomplete}.
\newblock In: International Symposium on Web Algorithms. (2015)

\bibitem{DIAWAY}
Chegrane, I., Seghier, A., Ighilaza, C., Boutorh, A.:
\newblock Diagonal consistency problem resolution in dialign algorithm.
\newblock In: the International Conference on Bioinformatics Models, Methods
  and Algorithms. (2015)  225--231

\bibitem{DIAWAY_b}
Seghier, A., Chegrane, I., Ighilaza, C.:
\newblock {DNA multiple alignment problem with the new DiaWay algorithm}.
\newblock In: 12th International Symposium on Programming and Systems (ISPS)
  2015, IEEE (2015)  1--7

\bibitem{carton2008langages}
Carton, O.:
\newblock Langages formels, calculabilit{\'e} et complexit{\'e}. Volume 101.
\newblock Vuibert (2008)

\bibitem{cormen2009introduction}
Cormen, T.H.:
\newblock Introduction to algorithms.
\newblock MIT press (2009)

\bibitem{Knuth1998_book}
Knuth, D.E.:
\newblock {The Art of Computer Programming, Volume 3\NoAutoSpaceBeforeFDP: (2Nd
  Ed.) Sorting and Searching}.
\newblock Addison Wesley Longman Publishing Co., Inc., Redwood City, CA, USA
  (1998)

\bibitem{Hamming50}
Hamming, R.W.:
\newblock Error detecting and error correcting codes.
\newblock Bell System technical journal \textbf{29}(2) (1950)  147--160

\bibitem{damerau1964technique}
Damerau, F.J.:
\newblock A technique for computer detection and correction of spelling errors.
\newblock Communications of the ACM \textbf{7}(3) (1964)  171--176

\bibitem{crochemore2001algorithmique}
Crochemore, M., Hancart, C., Lecroq, T.:
\newblock Algorithmique du texte. Volume 347.
\newblock Vuibert Paris (2001)

\bibitem{crochemore2002jewels}
Crochemore, M., Rytter, W.:
\newblock Jewels of stringology: text algorithms.
\newblock World Scientific (2002)

\bibitem{crochemore2009suffix}
Crochemore, M., Lecroq, T.:
\newblock Suffix tree.
\newblock In: Encyclopedia of Database Systems.
\newblock Springer (2009)  2876--2880

\bibitem{Weiner1973}
Weiner, P.:
\newblock {Linear pattern matching algorithms}.
\newblock In: Switching and Automata Theory, 1973. SWAT '08. IEEE Conference
  Record of 14th Annual Symposium on. (1973)  1--11

\bibitem{McCreight1976}
McCreight, E.M.:
\newblock {A Space-Economical Suffix Tree Construction Algorithm} (1976)

\bibitem{Ukkonen1995}
Ukkonen, E.:
\newblock {On-line construction of suffix trees} (1995)

\bibitem{crochemore2009trie}
Crochemore, M., Lecroq, T.:
\newblock Trie.
\newblock In: Encyclopedia of Database Systems.
\newblock Springer (2009)  3179--3182

\bibitem{de1959file}
De~La~Briandais, R.:
\newblock File searching using variable length keys.
\newblock In: Papers presented at the the March 3-5, 1959, western joint
  computer conference, ACM (1959)  295--298

\bibitem{fredkin1960trie}
Fredkin, E.:
\newblock Trie memory.
\newblock Communications of the ACM \textbf{3}(9) (1960)  490--499

\bibitem{Manber1993}
Manber, U., Myers, G.:
\newblock {Suffix Arrays\NoAutoSpaceBeforeFDP: A New Method for On-Line String
  Searches}.
\newblock SIAM Journal on Computing \textbf{22}(5) (October 1993)  935--948

\bibitem{Kasai2001}
Kasai, T., Lee, G., Arimura, H., Arikawa, S., Park, K.:
\newblock {Linear-Time Longest-Common-Prefix Computation in Suffix Arrays and
  Its Applications}.
\newblock In: Proceedings of the 12th Annual Symposium on Combinatorial Pattern
  Matching. CPM '01, Springer-Verlag (2001)  181--192

\bibitem{williams1964algorithm}
Williams, J.W.J.:
\newblock Algorithm-232-heapsort (1964)

\bibitem{EmdeBoas1976}
{Emde Boas}, P., Kaas, R., Zijlstra, E.:
\newblock {Design and implementation of an efficient priority queue}.
\newblock Mathematical Systems Theory \textbf{10}(1) (December 1976)  99--127

\bibitem{Jones1986}
Jones, D.W.:
\newblock {An empirical comparison of priority-queue and event-set
  implementations}.
\newblock Communications of the ACM \textbf{29}(4) (March 1986)  300--311

\bibitem{elias1974}
Elias, P.:
\newblock Efficient storage and retrieval by content and address of static
  files.
\newblock Journal of the ACM (JACM) \textbf{21}(2) (1974)  246--260

\bibitem{fano1971}
Fano, R.M.:
\newblock On the number of bits required to implement an associative memory.
\newblock Massachusetts Institute of Technology, Project MAC (1971)

\bibitem{fenwick1994new}
Fenwick, P.M.:
\newblock A new data structure for cumulative frequency tables.
\newblock Software: Practice and Experience \textbf{24}(3) (1994)  327--336

\bibitem{jacobson1989space}
Jacobson, G.:
\newblock Space-efficient static trees and graphs.
\newblock In: Foundations of Computer Science, 1989., 30th Annual Symposium on,
  IEEE (1989)  549--554

\bibitem{raman2002succinct}
Raman, R., Raman, V., Rao, S.S.:
\newblock Succinct indexable dictionaries with applications to encoding k-ary
  trees and multisets.
\newblock In: Proceedings of the Thirteenth Annual ACM-SIAM Symposium on
  Discrete Algorithms. SODA '02, Philadelphia, PA, USA, Society for Industrial
  and Applied Mathematics (2002)  233--242

\bibitem{munro1996tables}
Munro, J.I.:
\newblock Tables.
\newblock In: Foundations of Software Technology and Theoretical Computer
  Science, Springer (1996)  37--42

\bibitem{cichelli1980minimal}
Cichelli, R.J.:
\newblock Minimal perfect hash functions made simple.
\newblock Communications of the ACM \textbf{23}(1) (1980)  17--19

\bibitem{jaeschke1981reciprocal}
Jaeschke, G.:
\newblock {Reciprocal hashing\NoAutoSpaceBeforeFDP: A method for generating
  minimal perfect hashing functions}.
\newblock Communications of the ACM \textbf{24}(12) (1981)  829--833

\bibitem{czech1992optimal}
Czech, Z.J., Havas, G., Majewski, B.S.:
\newblock An optimal algorithm for generating minimal perfect hash functions.
\newblock Information Processing Letters \textbf{43}(5) (1992)  257--264

\bibitem{Knuth63noteson}
Knuth, D.E.:
\newblock Notes on {''Open''} addressing (1963) Unpublished Memorandum.

\bibitem{knuth1997art}
Knuth, D.:
\newblock {The Art of Computer Programming\NoAutoSpaceBeforeFDP: Fundamental
  algorithms}.
\newblock Number vol.~1 in Addison-Wesley series in computer science and
  information processing. Addison-Wesley (1997)

\bibitem{knuth1976big_O}
Knuth, D.E.:
\newblock Big omicron and big omega and big theta.
\newblock ACM Sigact News \textbf{8}(2) (1976)  18--24

\bibitem{aggarwal1988input}
Aggarwal, A., Vitter, J.,  et~al.:
\newblock The input/output complexity of sorting and related problems.
\newblock Communications of the ACM \textbf{31}(9) (1988)  1116--1127

\bibitem{frigo1999cache}
Frigo, M., Leiserson, C.E., Prokop, H., Ramachandran, S.:
\newblock Cache-oblivious algorithms.
\newblock In: Foundations of Computer Science, 1999. 40th Annual Symposium on,
  IEEE (1999)  285--297

\bibitem{karp1988survey}
Karp, R.M.:
\newblock A survey of parallel algorithms for shared-memory machines.
\newblock (1988)

\bibitem{turing2004essential}
Turing, A.M., Copeland, B.J.:
\newblock {The essential turing\NoAutoSpaceBeforeFDP: seminal writings in
  computing, logic, philosophy, artificial intelligence, and artificial life
  plus the secrets of enigma}.
\newblock (2004)

\bibitem{luginbuhl1990computational}
Luginbuhl, D.R.:
\newblock Computational complexity of random access models.
\newblock Technical report, DTIC Document (1990)

\bibitem{cook1972time}
Cook, S.A., Reckhow, R.A.:
\newblock Time-bounded random access machines.
\newblock In: Proceedings of the fourth annual ACM symposium on Theory of
  computing, ACM (1972)  73--80

\bibitem{levin1986average}
Levin, L.A.:
\newblock Average case complete problems.
\newblock SIAM Journal on Computing \textbf{15}(1) (1986)  285--286

\bibitem{bogdanov2006average}
Bogdanov, A., Trevisan, L.:
\newblock Average-case complexity.
\newblock arXiv preprint cs/0606037 (2006)

\bibitem{Na01b}
Navarro, G., Baeza-Yates, R.A., Sutinen, E., Tarhio, J.:
\newblock Indexing methods for approximate string matching.
\newblock IEEE Data Eng. Bull. \textbf{24}(4) (2001)  19--27

\bibitem{charras2004handbook}
Charras, C., Lecroq, T.:
\newblock Handbook of exact string matching algorithms.
\newblock King's College Publications London, UK: (2004)

\bibitem{bellman1952theory}
Bellman, R.:
\newblock On the theory of dynamic programming.
\newblock Proceedings of the National Academy of Sciences of the United States
  of America \textbf{38}(8) (1952)  716

\bibitem{bellman1956dynamic}
Bellman, R.:
\newblock {Dynamic programming and Lagrange multipliers}.
\newblock Proceedings of the National Academy of Sciences of the United States
  of America \textbf{42}(10) (1956)  767

\bibitem{sellers1980theory}
Sellers, P.H.:
\newblock {The theory and computation of evolutionary
  distances\NoAutoSpaceBeforeFDP: pattern recognition}.
\newblock Journal of algorithms \textbf{1}(4) (1980)  359--373

\bibitem{masek1980faster}
Masek, W.J., Paterson, M.S.:
\newblock A faster algorithm computing string edit distances.
\newblock Journal of Computer and System sciences \textbf{20}(1) (1980)  18--31

\bibitem{Ukkonen1993}
Ukkonen, E.:
\newblock {Approximate string-matching over suffix trees}.
\newblock Combinatorial Pattern Matching (1993)  228--242

\bibitem{q-gram_sutinen1995using}
Sutinen, E., Tarhio, J.:
\newblock On using q-gram locations in approximate string matching.
\newblock In: Algorithms--ESA'95.
\newblock Springer (1995)  327--340

\bibitem{Needleman_Wunsch}
Needleman, S.B., Wunsch, C.D.:
\newblock A general method applicable to the search for similarities in the
  amino acid sequence of two proteins.
\newblock Journal of Molecular Biology \textbf{48}(3) (1970)  443 -- 453

\bibitem{Smith_waterman}
Smith, T., Waterman, M.:
\newblock Identification of common molecular subsequences.
\newblock Journal of Molecular Biology \textbf{147}(1) (1981)  195 -- 197

\bibitem{Morrison1968}
Morrison, D.R.:
\newblock {PATRICIA---Practical Algorithm To Retrieve Information Coded in
  Alphanumeric}.
\newblock Journal of the ACM \textbf{15}(4) (October 1968)  514--534

\bibitem{suffix_arrays_grossi2005}
Grossi, R., Vitter, J.S.:
\newblock Compressed suffix arrays and suffix trees with applications to text
  indexing and string matching.
\newblock SIAM Journal on Computing \textbf{35}(2) (2005)  378--407

\bibitem{DAWG_blumer1983}
Blumer, A., Blumer, J., Ehrenfeucht, A., Haussler, D., McConnell, R.M.:
\newblock Linear size finite automata for the set of all subwords of a word -
  an outline of results.
\newblock Bulletin of the EATCS \textbf{21} (1983)  12--20

\bibitem{DAWG_blumer1985}
Blumer, A., Blumer, J., Haussler, D., Ehrenfeucht, A., Chen, M.T., Seiferas,
  J.:
\newblock The smallest automation recognizing the subwords of a text.
\newblock Theoretical Computer Science \textbf{40} (1985)  31--55

\bibitem{CDAWG_blumer1984}
Blumer, A., Blumer, J., Ehrenfeucht, A., Haussler, D., McConnell, R.:
\newblock Building a complete inverted file for a set of text files in linear
  time.
\newblock In: Proceedings of the sixteenth annual ACM symposium on Theory of
  computing, ACM (1984)  349--358

\bibitem{FM_index_ferragina2000}
Ferragina, P., Manzini, G.:
\newblock Opportunistic data structures with applications.
\newblock In: Foundations of Computer Science, 2000. Proceedings. 41st Annual
  Symposium on, IEEE (2000)  390--398

\bibitem{FM_index_ferragina2005}
Ferragina, P., Manzini, G.:
\newblock Indexing compressed text.
\newblock Journal of the ACM (JACM) \textbf{52}(4) (2005)  552--581

\bibitem{Tries_shang1996}
Shang, H., Merrettal, T.:
\newblock Tries for approximate string matching.
\newblock Knowledge and Data Engineering, IEEE Transactions on \textbf{8}(4)
  (1996)  540--547

\bibitem{russo2009_compressed_indexes_approximate}
Russo, L., Navarro, G., Oliveira, A.L., Morales, P.:
\newblock Approximate string matching with compressed indexes.
\newblock Algorithms \textbf{2}(3) (2009)  1105--1136

\bibitem{bieganski1994generalized_suffix_trees}
Bieganski, P., Riedl, J., Retzel, E.F.,  et~al.:
\newblock {Generalized suffix trees for biological sequence
  data\NoAutoSpaceBeforeFDP: Applications and implementation}.
\newblock In: System Sciences, 1994. Proceedings of the Twenty-Seventh Hawaii
  International Conference on. Volume~5., IEEE (1994)  35--44

\bibitem{zobel1995finding_approximate_large_lexicons}
Zobel, J., Dart, P.:
\newblock Finding approximate matches in large lexicons.
\newblock Softw., Pract. Exper. \textbf{25}(3) (1995)  331--345

\bibitem{Ri76}
Rivest, R.L.:
\newblock Partial-match retrieval algorithms.
\newblock SIAM Journal on Computing \textbf{5}(1) (1976)  19--50

\bibitem{wu1992fast}
Wu, S., Manber, U.:
\newblock {Fast text searching\NoAutoSpaceBeforeFDP: allowing errors}.
\newblock Communications of the ACM \textbf{35}(10) (1992)  83--91

\bibitem{T_R_T_brodal1996approximate}
Brodal, G.S., Gasieniec, L.:
\newblock Approximate dictionary queries.
\newblock In: Combinatorial Pattern Matching, Springer (1996)  65--74

\bibitem{Amir2000}
Amir, A., Keselman, D., Landau, G.M., Lewenstein, M., Lewenstein, N., Rodeh,
  M.:
\newblock Text indexing and dictionary matching with one error.
\newblock Journal of Algorithms \textbf{37}(2) (November 2000)  309--325

\bibitem{mor1982hash}
Mor, M., Fraenkel, A.S.:
\newblock A hash code method for detecting and correcting spelling errors.
\newblock Communications of the ACM \textbf{25}(12) (1982)  935--938

\bibitem{russo2005efficient}
Russo, L.M., Oliveira, A.L.:
\newblock An efficient algorithm for generating super condensed neighborhoods.
\newblock In: Combinatorial Pattern Matching, Springer (2005)  104--115

\bibitem{myers1994sublinear}
Myers, E.W.:
\newblock A sublinear algorithm for approximate keyword searching.
\newblock Algorithmica \textbf{12}(4-5) (1994)  345--374

\bibitem{Bocek2008}
Bocek, T., Hunt, E., Hausheer, D., Stiller, B.:
\newblock {Fast similarity search in peer-to-peer networks}.
\newblock NOMS 2008 - 2008 IEEE Network Operations and Management Symposium
  (2008)

\bibitem{karch2010improved}
Karch, D., Luxen, D., Sanders, P.:
\newblock Improved fast similarity search in dictionaries.
\newblock In: String Processing and Information Retrieval, Springer (2010)
  173--178

\bibitem{Bo12}
Boytsov, L.:
\newblock Super-linear indices for approximate dictionary searching.
\newblock In: SISAP. (2012)  162--176

\bibitem{KR87}
Karp, R.M., Rabin, M.O.:
\newblock Efficient randomized pattern-matching algorithms.
\newblock IBM Journal of Research and Development \textbf{31}(2) (1987)
  249--260

\bibitem{q-gram_ukkonen1992approximate}
Ukkonen, E.:
\newblock Approximate string-matching with q-grams and maximal matches.
\newblock Theoretical computer science \textbf{92}(1) (1992)  191--211

\bibitem{wu1992agrep}
Wu, S., Manber, U.:
\newblock Agrep--a fast approximate pattern-matching tool.
\newblock Usenix Winter 1992 (1992)  153--162

\bibitem{baeza1992new}
Baeza-Yates, R., Gonnet, G.H.:
\newblock A new approach to text searching.
\newblock Communications of the ACM \textbf{35}(10) (1992)  74--82

\bibitem{wu1996subquadratic}
Wu, S., Manber, U., Myers, G.:
\newblock A subquadratic algorithm for approximate limited expression matching.
\newblock Algorithmica \textbf{15}(1) (1996)  50--67

\bibitem{myers1999fast}
Myers, G.:
\newblock A fast bit-vector algorithm for approximate string matching based on
  dynamic programming.
\newblock Journal of the ACM (JACM) \textbf{46}(3) (1999)  395--415

\bibitem{navarro2000fast}
Navarro, G., Raffinot, M.:
\newblock Fast and flexible string matching by combining bit-parallelism and
  suffix automata.
\newblock Journal of Experimental Algorithmics (JEA) \textbf{5} (2000) ~4

\bibitem{navarro1998bit}
Navarro, G., Raffinot, M.:
\newblock {A bit-parallel approach to suffix automata\NoAutoSpaceBeforeFDP:
  Fast extended string matching}.
\newblock In: Combinatorial Pattern Matching, Springer (1998)  14--33

\bibitem{hyyro2001explaining}
Hyyr{\"o}, H.:
\newblock Explaining and extending the bit-parallel approximate string matching
  algorithm of myers.
\newblock Technical report, Citeseer (2001)

\bibitem{hyyro2002faster}
Hyyr{\"o}, H., Navarro, G.:
\newblock Faster bit-parallel approximate string matching.
\newblock In: Combinatorial Pattern Matching, Springer (2002)  203--224

\bibitem{Hyyro2003}
Hyyr\"{o}, H.:
\newblock {Practical Methods for Approximate String Matching}.
\newblock Tampereen yliopisto (2003)

\bibitem{hyyro2005increased}
Hyyr{\"o}, H., Fredriksson, K., Navarro, G.:
\newblock Increased bit-parallelism for approximate and multiple string
  matching.
\newblock Journal of Experimental Algorithmics (JEA) \textbf{10} (2005)  2--6

\bibitem{CGL04}
Cole, R., Gottlieb, L.A., Lewenstein, M.:
\newblock Dictionary matching and indexing with errors and don't cares.
\newblock In: STOC. (2004)  91--100

\bibitem{cole2002approximate}
Cole, R., Hariharan, R.:
\newblock {Approximate string matching\NoAutoSpaceBeforeFDP: A simpler faster
  algorithm}.
\newblock SIAM Journal on Computing \textbf{31}(6) (2002)  1761--1782

\bibitem{pacheco2011introduction}
Pacheco, P.:
\newblock An introduction to parallel programming.
\newblock Elsevier (2011)

\bibitem{grama2003introduction}
Grama, A.:
\newblock Introduction to parallel computing.
\newblock Pearson Education (2003)

\bibitem{landau1989fast}
Landau, G.M., Vishkin, U.:
\newblock Fast parallel and serial approximate string matching.
\newblock Journal of algorithms \textbf{10}(2) (1989)  157--169

\bibitem{landau1986introducing}
Landau, G.M., Vishkin, U.:
\newblock Introducing efficient parallelism into approximate string matching
  and a new serial algorithm.
\newblock In: Proceedings of the Eighteenth Annual ACM Symposium on Theory of
  Computing. STOC '86, New York, NY, USA, ACM (1986)  220--230

\bibitem{GPU_kouzinopoulos2009string}
Kouzinopoulos, C.S., Margaritis, K.G.:
\newblock {String matching on a multicore GPU using CUDA}.
\newblock In: Informatics, 2009. PCI'09. 13th Panhellenic Conference on, IEEE
  (2009)  14--18

\bibitem{GPU_lin2010accelerating}
Lin, C.H., Tsai, S.Y., Liu, C.H., Chang, S.C., Shyu, J.M.:
\newblock {Accelerating string matching using multi-threaded algorithm on GPU}.
\newblock In: Global Telecommunications Conference (GLOBECOM 2010), 2010 IEEE,
  IEEE (2010)  1--5

\bibitem{GPU_zha2011multipattern}
Zha, X., Sahni, S.:
\newblock {Multipattern string matching on a GPU}.
\newblock In: 2011 IEEE Symposium on Computers and Communications (ISCC), IEEE
  (2011)  277--282

\bibitem{buchsbaum2000range}
Buchsbaum, A.L., Goodrich, M.T., Westbrook, J.R.:
\newblock Range searching over tree cross products.
\newblock In: Algorithms-ESA 2000.
\newblock Springer (2000)  120--131

\bibitem{agarwal1999geometric}
Agarwal, P.K., Erickson, J.,  et~al.:
\newblock Geometric range searching and its relatives.
\newblock Contemporary Mathematics \textbf{223} (1999)  1--56

\bibitem{berge1973graphs}
Berge, C., Minieka, E.:
\newblock Graphs and hypergraphs. Volume~7.
\newblock North-Holland publishing company Amsterdam (1973)

\bibitem{benoit2005}
Benoit, D., Demaine, E.D., Munro, J.I., Raman, R., Raman, V., Rao, S.S.:
\newblock Representing trees of higher degree.
\newblock Algorithmica \textbf{43}(4) (2005)  275--292

\bibitem{overmars1988efficient}
Overmars, M.H.:
\newblock Efficient data structures for range searching on a grid.
\newblock Journal of Algorithms \textbf{9}(2) (1988)  254--275

\bibitem{HL05}
Heileman, G.L., Luo, W.:
\newblock How caching affects hashing.
\newblock In: ALENEX/ANALCO. (2005)  141--154

\bibitem{bithacks}
Anderson, S.E.:
\newblock Bit twiddling hacks.
\newblock \url{https://graphics.stanford.edu/~seander/bithacks.html} Visité le:
  02-07-2016.

\bibitem{MV08}
Mitzenmacher, M., Vadhan, S.:
\newblock {Why Simple Hash Functions Work\NoAutoSpaceBeforeFDP: Exploiting the
  Entropy in a Data Stream}.
\newblock In: Proceedings of the Nineteenth Annual ACM-SIAM Symposium on
  Discrete Algorithms. SODA '08, Philadelphia, PA, USA, Society for Industrial
  and Applied Mathematics (2008)  746--755

\bibitem{WWW2013InputElement}
{World Wide Web Consortium (W3C)}:
\newblock {HTML5\NoAutoSpaceBeforeFDP: Edition for Web Authors}.
\newblock
  \url{http://www.w3.org/TR/2013/NOTE-html5-author-20130528/the-input-element.html}
  (2013) Visité le: 02-07-2016.

\bibitem{WWW2012InputAutocomplete}
{World Wide Web Consortium (W3C)}:
\newblock {HTML5\NoAutoSpaceBeforeFDP: A vocabulary and associated APIs for
  HTML and XHTML}.
\newblock
  \url{http://www.w3.org/TR/2012/WD-html5-20121025/common-input-element-attributes.html}
  (2012) Visité le: 02-07-2016.

\bibitem{ICC14}
{Wikipedia}:
\newblock {Intelligent code completion}.
\newblock \url{http://en.wikipedia.org/wiki/Intelligent\_code\_completion}
  (2014) Visité le: 02-07-2016.

\bibitem{Elias1975}
Elias, P.:
\newblock {Universal codeword sets and representations of the integers}.
\newblock IEEE Transactions on Information Theory \textbf{21}(2) (March 1975)
  194--203

\bibitem{Starner1996}
Starner, T.:
\newblock {Human-powered wearable computing}.
\newblock IBM Systems Journal \textbf{35}(3.4) (1996)  618--629

\bibitem{Miller1968}
Miller, R.B.:
\newblock {Response time in man-computer conversational transactions}.
\newblock In: Proceedings of the December 9-11, 1968, fall joint computer
  conference, part I on - AFIPS '68 (Fall, part I), New York, New York, USA,
  ACM Press (December 1968)  267

\bibitem{bioinformatics2001}
Mount, D.W.:
\newblock Bioinformatics\NoAutoSpaceBeforeFDP: sequence and genome analysis.
  Volume~2.
\newblock Cold spring harbor laboratory press New York: (2001)

\bibitem{Thompson1994}
Thompson, J.D., Higgins, D.G., Gibson, T.J.:
\newblock {CLUSTAL W\NoAutoSpaceBeforeFDP: improving the sensitivity of
  progressive multiple sequence alignment through sequence weighting,
  position-specific gap penalties and weight matrix choice}.
\newblock Nucleic Acids Research \textbf{22}(22) (1994)  4673--4680

\bibitem{Morgenstren1996}
Morgenstern, B., Dress, A., Werner, T.:
\newblock {Multiple DNA and protein sequence alignment based on
  segment-to-segment comparison}.
\newblock Proceedings of the National Academy of Sciences \textbf{93}(October)
  (1996)  12098--12103

\bibitem{Notredame2000}
Notredame, C., Higgins, D.G., Heringa, J.:
\newblock {T-coffee\NoAutoSpaceBeforeFDP: a novel method for fast and accurate
  multiple sequence alignment}.
\newblock Journal of Molecular Biology \textbf{302}(1) (2000)  205 -- 217

\bibitem{Edgar2004}
Edgar, R.C.:
\newblock {MUSCLE\NoAutoSpaceBeforeFDP: multiple sequence alignment with high
  accuracy and high throughput}.
\newblock Nucleic Acids Research \textbf{32}(5) (2004)  1792--1797

\bibitem{plasma}
Derrien, V., Richer, J.M., Hao, J.K.:
\newblock {Plasma, un nouvel algorithme progressif pour l'alignement multiple
  de s\'{e}quences}.
\newblock In Solnon, C., ed.: {Premi\`{e}res Journ\'{e}es Francophones de
  Programmation par Contraintes}, Lens, CRIL - CNRS FRE 2499, Universit\'{e}
  d'Artois (June 2005)  39--48 http://www710.univ-lyon1.fr/\~csolnon.

\bibitem{FASTA_1988}
Pearson, W.R., Lipman, D.J.:
\newblock Improved tools for biological sequence comparison.
\newblock Proceedings of the National Academy of Sciences \textbf{85}(8) (1988)
   2444--2448

\bibitem{BLAST_1990basic}
Altschul, S.F., Gish, W., Miller, W., Myers, E.W., Lipman, D.J.:
\newblock Basic local alignment search tool.
\newblock Journal of molecular biology \textbf{215}(3) (1990)  403--410

\bibitem{Morgenstern1998b}
Morgenstern, B., Atchley, W., Hahn, K., Dress, A.:
\newblock {Segment-based scores for pairwise and multiple sequence alignments.}
\newblock Ismb (1998)

\bibitem{masters_ibra}
Seghier, A., Chegrane, I.:
\newblock {Am\'{e}lioration de l'algorithme d'alignement multiple des
  s\'{e}quences d'ADN DiAlign et mise en \oe{}uvre d'une solution repartie}.
\newblock Master's thesis, USTHB, Bab Ezzouar, Alger (6 2010) 17/20.

\bibitem{Morgenstern1998a}
Morgenstern, B., Frech, K., Dress, A., Werner, T.:
\newblock Dialign: finding local similarities by multiple sequence alignment.
\newblock Bioinformatics \textbf{14}(3) (1998)  290--294

\bibitem{Subramanian2008}
Subramanian, A.R., Kaufmann, M., Morgenstern, B.:
\newblock {DIALIGN-TX\NoAutoSpaceBeforeFDP: greedy and progressive approaches
  for segment-based multiple sequence alignment.}
\newblock Algorithms for molecular biology : AMB \textbf{3} (January 2008) ~6

\bibitem{Subramanian2005}
Subramanian, A.R., Weyer-Menkhoff, J., Kaufmann, M., Morgenstern, B.:
\newblock {DIALIGN-T\NoAutoSpaceBeforeFDP: an improved algorithm for
  segment-based multiple sequence alignment.}
\newblock BMC bioinformatics \textbf{6} (January 2005) ~66

\bibitem{graph1976}
Bondy, J., Murty, U.:
\newblock Graph Theory with Applications.
\newblock Macmillan (1976)

\bibitem{cormen2001}
Cormen, T., Leiserson, C., Rivest, R., Stein, C.:
\newblock Depth-first search.
\newblock In: Introduction To Algorithms.
\newblock MIT Press (2001)  540--549

\bibitem{BAliBASE_carroll2007}
Carroll, H., Beckstead, W., O'Connor, T., Ebbert, M., Clement, M., Snell, Q.,
  McClellan, D.:
\newblock {DNA reference alignment benchmarks based on tertiary structure of
  encoded proteins}.
\newblock Bioinformatics \textbf{23}(19) (2007)  2648--2649

\bibitem{Morgenstern1999}
Morgenstern, B.:
\newblock {DIALIGN 2\NoAutoSpaceBeforeFDP: improvement of the
  segment-to-segment approach to multiple sequence alignment.}
\newblock Bioinformatics \textbf{15}(3) (1999)  211--218

\end{thebibliography}

\end{document}